%% file: main.tex
\documentclass[11pt]{article}

\usepackage[a4paper,margin=1in]{geometry}
\usepackage{amsmath,amssymb,amsfonts,amsthm,mathtools,bm,mathrsfs}
\usepackage{booktabs,array,enumitem,microtype,authblk}
\usepackage{xcolor}
\usepackage[colorlinks=true,linkcolor=blue,citecolor=blue,urlcolor=blue]{hyperref}
\hypersetup{
  pdftitle={Geometric Completion of Thermodynamic Response: Recognition Kernels, Cut-Flow Jets, Constructive Continuum Sectors, and Action-Lifted Spacetime Field Equations},
  pdfauthor={Monty Dabas},
  pdfsubject={Thermodynamic response geometry, Recognition Kernel theory, cut-flow jets, constructive continuum measure, and action-lifted field equations},
  pdfkeywords={thermodynamics, recognition kernel, cut-flow jets, continuum measure, reflection positivity, gauge fields, gravity, effective action}
}

\newtheorem{definition}{Definition}[section]
\newtheorem{theorem}[definition]{Theorem}
\newtheorem{proposition}[definition]{Proposition}
\newtheorem{lemma}[definition]{Lemma}
\newtheorem{corollary}[definition]{Corollary}
\newtheorem{remark}[definition]{Remark}
\newtheorem{principle}[definition]{Principle}

\newcommand{\R}{\mathbb R}
\newcommand{\C}{\mathbb C}
\newcommand{\N}{\mathbb N}
\newcommand{\dd}{\mathrm d}
\newcommand{\id}{\mathrm I}
\newcommand{\Ker}{\operatorname{Ker}}
\newcommand{\Ran}{\operatorname{Ran}}
\newcommand{\rank}{\operatorname{rank}}

\newcommand{\Sym}{\operatorname{Sym}}

\newcommand{\tr}{\operatorname{tr}}
\newcommand{\e}{\mathrm e}
\newcommand{\RK}{\mathrm{RK}}

\newcommand{\Eye}{\mathrm{Eye}}
\newcommand{\cH}{\mathcal H}
\newcommand{\cM}{\mathcal M}
\newcommand{\cJ}{\mathcal J}
\newcommand{\cA}{\mathcal A}
\newcommand{\cK}{\mathcal K}
\newcommand{\cB}{\mathfrak B}

\newcommand{\cR}{\mathcal R}

\newcommand{\cO}{\mathcal O}
\newcommand{\cN}{\mathcal N}
\newcommand{\cY}{\mathcal Y}
\newcommand{\cZ}{\mathcal Z}
\newcommand{\abs}[1]{\left|#1\right|}
\newcommand{\norm}[1]{\left\lVert#1\right\rVert}
\newcommand{\ip}[2]{\left\langle #1,#2\right\rangle}

\newcommand{\pdc}[3]{\left(\frac{\partial #1}{\partial #2}\right)_{#3}}

\title{Geometric Completion of Thermodynamic Response:\\
From the \texorpdfstring{$T$--$V$--$S$--$P$}{T-V-S-P} Compass to Recognition Kernels, Cut--Flow Jets, and Action-Lifted Spacetime Field Equations}
\author{Monty Dabas\thanks{ORCID: \href{https://orcid.org/0009-0005-6948-209X}{0009-0005-6948-209X}}}
\date{Rigorous revision draft, 4 August 2026}

\begin{document}
\maketitle

\begin{abstract}
We give a theorem-level reconstruction of the geometric thermodynamic response programme for a simple compressible system.  The equilibrium object is a two-dimensional Legendre submanifold generated by a smooth fundamental relation $U(S,V)$; the familiar $T$--$V$--$S$--$P$ compass records its four regular Legendre charts.  From the exact potential differentials we derive six constraint-dependent response coordinates, their conjugate product identities, and the equilibrium closure
\[
\frac{\lambda_P}{\lambda_V}\frac{\lambda_S}{z_T}=1,
\]
which is equivalent to the classical identity $C_P/C_V=K_S/K_T$.  The intrinsic antisymmetric structure is the exterior two-form and its Jacobian bracket.  For any four nondimensional response channels, their pairwise area brackets form a canonical skew $4\times4$ matrix of rank at most two satisfying an exact Pluecker relation; this replaces, rather than assumes, a skew ordinary derivative Jacobian.

We introduce a Recognition Kernel theorem.  For an observation map $\cA$ and target $\Pi$, the target-relevant blind quotient $\Ker\cA/(\Ker\cA\cap\Ker\Pi)$ determines the canonical target-faithful completion and, in finite dimensions, the exact minimum number of supplemental observations.  A sharp decoder-burden theorem characterizes when a target functional factors through the observation.  For cyclic response one-forms, Stokes' theorem gives the exact open residue after a prior lawful ledger is removed.  A positive constitutive calibration of open curvature then yields a rigorous nonnegative production law, while the calibration data remain distinct from geometry itself.

A Cut--Flow--Jet Recognition theorem proves that a nilpotent covariant raiser generates every finite response-jet layer, that a self-adjoint involutive cut gives unique cut and adjoint sectors, and that the first cut-loop correction is the commutator curvature.  The finite construction extends to a closed unbounded infinite-jet raiser on analytic response seeds, with exact convergence and uniqueness of the analytic jet flow.  Holomorphic constitutive laws along analytic process curves are shown to satisfy the analytic-seed estimate away from response singularities.  These results are assembled in an Extended Thermodynamic Recognition Capstone.  A noise-stable Pluecker estimate gives a finite-error falsification test for the four-response geometry.

A local-equilibrium state field $\Theta:M\to\cM_{\mathrm{th}}$ pulls the thermodynamic response-jet bundle to spacetime.  Smooth constant-rank observation and target maps then produce the target-relevant quotient bundle $E_{\RK}=\Ker\cA/(\Ker\cA\cap\Ker\Pi)$ and the explicit Recognition field $\Phi_{\mathrm{th}}=qP_{\Ker\cA}\Theta^*(j^N\Lambda)$.  Compatible jet transport descends to a connection on $E_{\RK}$; its parallel transports form an exact target-faithful commuting square, and its curvature is the infinitesimal failure of path-independent thermodynamic reconstruction.  A scalar principal symbol for target-faithful quotient propagation reconstructs the Lorentzian co-metric by polarization and prevents a target-relevant characteristic cone from remaining blind after completion.  A uniform decoder theorem proves source domination and strict target no-blindness.  Within a declared parity-even local low-derivative class, an invariant-action classification gives the Einstein--Hilbert plus gauge--matter density used below, up to couplings, an invariant potential, and boundary terms.

Finally, using this constructed spacetime Recognition bundle and its minimal invariant action, we derive Einstein-type metric, Recognition gauge, and Recognition-field equations, together with Bianchi and Noether conservation laws and exact Einstein--Maxwell and Yang--Mills--matter reductions.  We prove that the kinematic capstone alone cannot fix the numerical couplings or nonlinear potential.  A coercive finite regulator yields exact Schwinger--Dyson and effective quantum equations.  A continuum quantum-effective gravity equation follows under the separately stated existence, differentiability, symmetry, and anomaly-free hypotheses for a renormalized effective action.  For every stationary background whose gauge-fixed target-faithful physical Hessian is positive and ultrastatic, we construct the regulator-independent Gaussian continuum measure, prove spectral regulator removal, Osterwalder--Schrader positivity, exact bosonic anomaly freedom, and a positive-Hamiltonian unitary Lorentzian reconstruction.  Null propagation plus one scale calibration and complete parallel transport identify the metric and connection up to diffeomorphism and gauge.  A fixed-decoder theorem further converts lower linearized observation bounds and derivative stability into explicit finite-error reconstruction estimates for noisy spacetime data.  The nonlinear interacting ultraviolet completion and the empirical verification of the identification hypotheses remain separate obligations.
\end{abstract}

\noindent\textbf{Keywords:} thermodynamic response geometry; Recognition Kernel; target-faithful completion; thermodynamic Recognition square; cut--flow jets; principal-symbol metric; Pluecker constraint; continuum measure; reflection positivity; regulator removal; gauge fields; gravity; stable metric reconstruction.

\tableofcontents

\input{sections/00_dependency_contract.tex}
\input{sections/01_foundations.tex}
\input{sections/01a_response_summary.tex}
\input{sections/02_recognition.tex}
\input{sections/02a_measurement_operators.tex}
\input{sections/03_loop_and_cutflow.tex}
\input{sections/03a_thermodynamic_cut_realizations.tex}
\input{sections/04_capstone_and_examples.tex}
\input{sections/05_proved_extensions.tex}
\input{sections/05a_curvature_and_analytic_relevance.tex}
\input{sections/05b_extended_capstone.tex}
\input{sections/05c_novelty_falsifiability.tex}
\input{sections/06a0_thermodynamic_recognition_square.tex}
\input{sections/06a00_response_propagation_metric.tex}
\input{sections/06a_spacetime_recognition_datum.tex}
\input{sections/06a1_dynamical_nonselection.tex}
\input{sections/06b_classical_field_equations.tex}
\input{sections/06c_quantum_effective_equations.tex}
\input{sections/06d_constructive_continuum_and_identification.tex}
\input{sections/06e_noise_stable_spacetime_recognition.tex}
\input{sections/05_conclusion_and_ledger.tex}

\input{sections/references.tex}
\end{document}

%% file: sections/00_dependency_contract.tex
\section{Reader contract and theorem dependency map}
\label{sec:reader-contract}

This paper is written primarily for readers in mathematical thermodynamics, functional analysis, inverse problems, differential geometry, and constructive field theory.  No reader is expected to treat the later spacetime sections as consequences of equilibrium notation alone.  The theorem chain is modular, and every arrow below names its additional datum.

\begin{center}
\small
\begin{tabular}{p{0.19\textwidth}p{0.34\textwidth}p{0.36\textwidth}}
\toprule
Module & Input added & Output proved\\
\midrule
Equilibrium & $U(S,V)$, stable chart & Legendre structure, Maxwell and response identities\\
Recognition & observation, target, ledger & blind quotient, minimal completion, decoder, residue\\
Response geometry & area form, four channels & skew rank-two bracket, Pluecker identity and noise test\\
Cut--Flow--Jet & connection, graded carrier, cut & finite/analytic jets, cut sectors, commutator curvature\\
Recognition square & $\Theta$, constant-rank bundle maps, compatible transport & quotient field bundle, explicit $\Phi_{\mathrm{th}}$, connection, curvature, source domination\\
Action lift & Lorentzian metric and minimal action contract & classified action form and residual constitutive freedom\\
Classical dynamics & couplings and invariant potential & metric, gauge, matter, Bianchi, and conservation equations\\
Finite quantum & coercive gauge-fixed regulator & exact Schwinger--Dyson and effective equations\\
Constructive sector & positive ultrastatic physical Hessian & continuum Gaussian measure, OS positivity, unitarity, bosonic anomaly freedom\\
Identification & propagation, scale, transport, decoder stability & exact and finite-error metric--connection reconstruction\\
\bottomrule
\end{tabular}
\end{center}

The central new interface is the \emph{thermodynamic Recognition square}:
\begin{equation}
\text{thermodynamic response jets}
\longrightarrow
E_{\RK}
\longrightarrow
(\Phi_{\mathrm{th}},A,F_A)
\longrightarrow
\text{classified action lift}.
\label{eq:opening-recognition-chain}
\end{equation}
The first two arrows are construction theorems.  The final arrow requires the stated locality, symmetry, parity, and derivative-order contract.  Numerical couplings and the nonlinear potential remain constitutive data.  The quantum-effective equations remain conditional on the existence of the stated effective action, while the positive quadratic sector is constructed independently and nonperturbatively.

The detailed novelty and falsifiability audit appears in Section~\ref{sec:dependency-falsifiability}.  Standard ingredients are identified there as standard.  The contribution claimed here is not the rediscovery of Einstein variation, Gaussian measures, or null-cone geometry; it is the target-relative Recognition architecture that connects them through explicit carriers, quotients, transport squares, decoder bounds, and claim boundaries.

%% file: sections/01_foundations.tex
\section{Purpose, revision scope, and claim discipline}

The purpose of this revision is not to enlarge the vocabulary of thermodynamics.  It is to identify the mathematical object actually present, prove the statements that follow from it, and separate those statements from constitutive or experimental hypotheses.  The equilibrium and response conventions follow standard thermodynamics \cite{Callen,deGrootMazur}; the geometric language is used only where its hypotheses are stated.  Throughout, complex Hilbert-space inner products are linear in the first argument.

The paper retains four ideas from the earlier geometric presentation \cite{DabasThermoV2}:
\begin{enumerate}[label=(\roman*)]
\item the signed $T$--$V$--$S$--$P$ compass as a compact display of the four Legendre charts;
\item constraint-dependent response coordinates denoted by $\lambda$ and $z$;
\item recursive response derivatives, now formulated as a finite covariant jet tower;
\item cyclic residue, now formulated through a Recognition Kernel and a prior lawful ledger.
\end{enumerate}

The revision makes four structural corrections.
\begin{enumerate}[label=(\roman*)]
\item A simple compressible equilibrium state space has two independent coordinates after the amount of substance is fixed.  The four symbols $T,V,S,P$ are overlapping chart variables, not four independent coordinates on an open subset of $\R^4$.
\item Maxwell relations follow from exact potential differentials, equivalently from the vanishing pullback of the thermodynamic contact form.  They do not require a universal skew Jacobian of the $\lambda$-map.
\item A nonzero loop integral is a geometric residue.  It becomes memory, hysteresis, or entropy production only after the relevant state extension, ledger, time orientation, units, and constitutive calibration are supplied.
\item The Cut--Flow--Jet construction is proved on a finite graded carrier.  Infinite completed jet spaces and unbounded generators require additional analytic hypotheses and are not silently inherited from the finite theorem.
\end{enumerate}

\begin{principle}[Claim boundary]
Every statement below is labelled by its hypotheses.  A constitutive example proves an identity inside that model; a finite certificate verifies a declared finite witness; neither is promoted into a universal physical law.
\end{principle}

\section{The equilibrium thermodynamic manifold}

\subsection{Admissible simple-compressible chart}

\begin{definition}[Admissible equilibrium chart]
Let $\Omega\subset\R^2$ be an open connected set with coordinates $(S,V)$ and let
\[
U\in C^{r+2}(\Omega),\qquad r\ge2,
\]
be a molar or specific fundamental relation.  Define
\[
T=\pdc{U}{S}{V},\qquad P=-\pdc{U}{V}{S}.
\]
The chart is \emph{admissible} when $T>0$ and the Hessian $D^2U$ is positive definite.  Thus
\[
U_{SS}>0,\qquad U_{VV}>0,\qquad
\det D^2U=U_{SS}U_{VV}-U_{SV}^2>0.
\]
Phase-transition singularities and chart-degeneracy loci are excluded from $\Omega$.
\end{definition}

Positive definiteness is stronger than is needed for several local identities, but it gives the standard stable single-phase regime and guarantees the regularity of the Legendre charts used below.

\subsection{Contact formulation and Maxwell relations}

Consider the five-dimensional contact space with coordinates $(U,S,V,T,P)$ and contact form (compare the standard contact-geometric thermodynamic formulation \cite{Mrugala1978})
\[
\vartheta=\dd U-T\,\dd S+P\,\dd V.
\]
The equilibrium embedding is
\[
\iota:\Omega\longrightarrow\R^5,
\qquad
\iota(S,V)=\bigl(U(S,V),S,V,U_S(S,V),-U_V(S,V)\bigr).
\]

\begin{theorem}[Equilibrium Legendre theorem]
For an admissible equilibrium chart,
\[
\iota^*\vartheta=0
\qquad\text{and}\qquad
\iota^*(\dd\vartheta)=0.
\]
Equivalently,
\begin{equation}
\dd T\wedge\dd S=\dd P\wedge\dd V
\label{eq:fundamental-two-form}
\end{equation}
on $\Omega$.  In every regular Legendre chart this identity is equivalent to the corresponding Maxwell relation.
\end{theorem}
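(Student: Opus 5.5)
The plan is to compute the pullback directly in the $(S,V)$ coordinates and then pass to the other charts using the regularity supplied by admissibility.

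\textbf{Vanishing of $\iota^*\vartheta$.} Along the embedding we have $U=U(S,V)$, $T=U_S$ and $P=-U_V$. Hence
\[
\iota^*\vartheta=\dd U-U_S\,\dd S-U_V\,\dd V,
\]
and this vanishes by the chain rule $\dd U=U_S\,\dd S+U_V\,\dd V$. Since $U\in C^{r+2}$ with $r\ge2$, the pullback commutes with exterior differentiation, so
\[
\iota^*(\dd\vartheta)=\dd(\iota^*\vartheta)=0.
\]
As a cross-check, $\dd\vartheta=-\dd T\wedge\dd S+\dd P\wedge\dd V$. Its pullback has $\dd S\wedge\dd V$ coefficient $-U_{SV}-U_{VS}$ up to sign convention, which cancels by symmetry of mixed partials (Schwarz, valid for $C^2$). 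This computation also gives the equivalent form \eqref{eq:fundamental-two-form}: on $\Omega$ the two-forms $\dd T\wedge\dd S$ and $\dd P\wedge\dd V$ agree as pulled-back forms.

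\textbf{Regular Legendre charts.} The four charts are $(S,V)$, $(T,V)$, $(S,P)$ and $(T,P)$, with potentials $U$, $F=U-TS$, $H=U+PV$ and $G=U-TS+PV$. For each chart I will check that the coordinate map from $(S,V)$ is a local diffeomorphism, using admissibility:
\begin{itemize}
\item the Jacobian for $(T,V)$ is $U_{SS}>0$;
\item the Jacobian for $(S,P)$ is $-U_{VV}\neq0$;
\item the Jacobian for $(T,P)$ is $-\det D^2U\neq0$.
\end{itemize}
In a chart $(x,y)$, write both sides of \eqref{eq:fundamental-two-form} as scalar multiples of $\dd x\wedge\dd y$. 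For example, in $(T,V)$:
\[
\dd T\wedge\dd S=(\partial S/\partial V)_T\,\dd T\wedge\dd V,\qquad
\dd P\wedge\dd V=(\partial P/\partial T)_V\,\dd T\wedge\dd V.
\]
So \eqref{eq:fundamental-two-form} is equivalent, because $\dd x\wedge\dd y\neq0$, to
\[
(\partial S/\partial V)_T=(\partial P/\partial T)_V,
\]
which is the Maxwell relation coming from $\dd F=-S\,\dd T-P\,\dd V$. The $(S,P)$ and $(T,P)$ charts are handled the same way, giving respectively $(\partial T/\partial P)_S=(\partial V/\partial S)_P$ and $-(\partial S/\partial P)_T=(\partial V/\partial T)_P$. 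In each case I will also note that the Maxwell relation is itself the closedness $\dd(\dd\Phi)=0$ of the corresponding potential $\Phi$, because $\dd\Phi$ equals $\vartheta$ plus an exact correction such as $-\dd(TS)$, whose pullback is again exact.

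\textbf{Main obstacle.} The only delicate part is the bookkeeping of signs and orientation when rewriting each two-form in a chart, and making precise that ``equivalent'' here means pointwise equality of coefficients relative to a nonvanishing basis two-form. I will handle this by reading each equivalence from the coefficient of $\dd x\wedge\dd y$, which is nonzero exactly because of the Jacobian conditions above. The regularity needed, namely that $T$ and $P$ are $C^{r+1}$ and the chart inverses are $C^{r+1}$ by the inverse function theorem, follows from $U\in C^{r+2}$.
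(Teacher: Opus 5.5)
Your proposal is correct and follows the paper's argument: $\iota^*\vartheta=0$ by the chain rule, and $\iota^*(\dd\vartheta)=\dd(\iota^*\vartheta)=0$ gives $\dd T\wedge\dd S=\dd P\wedge\dd V$. Each Maxwell relation is then read off as the coefficient of a nonvanishing basis two-form; you also usefully make explicit the $(T,V)$, $(S,P)$, $(T,P)$ cases, via the Jacobians $U_{SS}$, $-U_{VV}$, $-\det D^2U$, which the paper's proof does only in the $(S,V)$ chart and otherwise defers to the later coordinate changes. One small slip: in your cross-check the $\dd S\wedge\dd V$ coefficient of $\iota^*(\dd\vartheta)$ is $U_{SV}-U_{VS}$ (since $(\partial P/\partial S)_V=-U_{VS}$), not $-U_{SV}-U_{VS}$, which would not cancel.
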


\begin{proof}
By definition,
\[
\iota^*\vartheta
=\dd U-U_S\,\dd S-U_V\,\dd V=0.
\]
Exterior differentiation commutes with pullback, hence
\[
0=\dd(\iota^*\vartheta)=\iota^*(\dd\vartheta)
=-\dd T\wedge\dd S+\dd P\wedge\dd V,
\]
which gives \eqref{eq:fundamental-two-form}.  In the $(S,V)$ chart, the coefficient of $\dd S\wedge\dd V$ is
\[
-\pdc{T}{V}{S}-\pdc{P}{S}{V},
\]
so \eqref{eq:fundamental-two-form} gives
\[
\pdc{T}{V}{S}=-\pdc{P}{S}{V}.
\]
The other Maxwell relations follow after the regular Legendre coordinate changes proved below.
\end{proof}

\subsection{Four potential corners}

Define the Legendre potentials
\[
H=U+PV,\qquad F=U-TS,\qquad G=U+PV-TS.
\]
The positive-definite Hessian makes the maps to $(S,P)$, $(T,V)$, and $(T,P)$ locally invertible.

\begin{proposition}[Exact potential differentials]
On every admissible chart,
\begin{align}
\dd U&=T\,\dd S-P\,\dd V,\label{eq:dU}\\
\dd H&=T\,\dd S+V\,\dd P,\label{eq:dH}\\
\dd F&=-S\,\dd T-P\,\dd V,\label{eq:dF}\\
\dd G&=-S\,\dd T+V\,\dd P.\label{eq:dG}
\end{align}
Consequently,
\begin{align}
\pdc{T}{V}{S}&=-\pdc{P}{S}{V},&
\pdc{T}{P}{S}&=\pdc{V}{S}{P},\label{eq:maxwell1}\\
\pdc{S}{V}{T}&=\pdc{P}{T}{V},&
\pdc{S}{P}{T}&=-\pdc{V}{T}{P}.\label{eq:maxwell2}
\end{align}
\end{proposition}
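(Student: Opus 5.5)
The plan is to derive the four differentials directly from \eqref{eq:dU}, which is exactly the statement $\iota^*\vartheta=0$ of the Equilibrium Legendre theorem, together with the product rule for exterior derivatives. Regard $U,T,P,H,F,G$ as $C^{r+1}$ (or better) functions on $\Omega$ via the embedding. Then
\[
\dd H=\dd U+P\,\dd V+V\,\dd P=T\,\dd S+V\,\dd P,\qquad
\dd F=\dd U-T\,\dd S-S\,\dd T=-S\,\dd T-P\,\dd V,
\]
and similarly $\dd G=\dd H-T\,\dd S-S\,\dd T=-S\,\dd T+V\,\dd P$. These identities hold as one-forms on $\Omega$ and do not depend on any choice of chart.

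The second step is to justify the chart changes, because the Maxwell relations are statements in the coordinates $(S,P)$, $(T,V)$ and $(T,P)$. The Jacobians are as follows.
\begin{itemize}
\item For $(S,V)\mapsto(S,P)$ it is $\partial P/\partial V=-U_{VV}\neq0$.
\item For $(S,V)\mapsto(T,V)$ it is $U_{SS}\neq0$.
\item For $(S,V)\mapsto(T,P)$ it is $\det\begin{pmatrix}U_{SS}&U_{SV}\\-U_{SV}&-U_{VV}\end{pmatrix}=-\det D^2U\neq0$.
\end{itemize}
By the inverse function theorem each map is a local $C^{r+1}$ diffeomorphism. This is the only place admissibility enters. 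The relations are local, so local invertibility near each point is enough; no global injectivity is claimed.

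In each chart the corresponding potential is then a $C^{r+1}$ function whose differential has the stated coefficients. For example, in $(S,P)$ we have $(\partial H/\partial S)_P=T$ and $(\partial H/\partial P)_S=V$. Since $r\ge2$, these potentials are at least $C^2$, so the mixed partials commute by Schwarz's theorem. Equating them gives the four relations.
\begin{itemize}
\item $U$ in $(S,V)$: $(\partial T/\partial V)_S=-(\partial P/\partial S)_V$.
\item $H$ in $(S,P)$: $(\partial T/\partial P)_S=(\partial V/\partial S)_P$.
\item $F$ in $(T,V)$: $-(\partial S/\partial V)_T=-(\partial P/\partial T)_V$.
\item $G$ in $(T,P)$: $-(\partial S/\partial P)_T=(\partial V/\partial T)_P$.
\end{itemize}
With the signs tracked, these are exactly \eqref{eq:maxwell1}--\eqref{eq:maxwell2}.

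There is also a chart-free route via \eqref{eq:fundamental-two-form}. Writing $\dd T\wedge\dd S=\dd P\wedge\dd V$ in each chart and comparing the coefficient of the coordinate area form gives each relation at once. This serves as a consistency check.

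The main obstacle is only bookkeeping. One must check that the regularity class after the Legendre change still permits Schwarz's theorem, which holds since the potentials are $C^{r+1}$ with $r\ge2$. One must also keep the signs straight in the $(T,P)$ case, where the Jacobian is negative; orientation does not affect the partial derivatives.
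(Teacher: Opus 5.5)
Your proposal is correct and follows the paper's route: you differentiate the Legendre transforms using $\dd U=T\,\dd S-P\,\dd V$ (from $\iota^*\vartheta=0$), then equate mixed partials of $U,H,F,G$ in the charts $(S,V)$, $(S,P)$, $(T,V)$, $(T,P)$, and the paper likewise attributes their local invertibility to the positive-definite Hessian. Your explicit Jacobian determinants and the regularity check that permits Schwarz's theorem fill in details the paper leaves implicit.
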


\begin{proof}
The differential identities follow by differentiating the Legendre transforms and using \eqref{eq:dU}.  Equality of mixed partial derivatives of $U,H,F,G$ gives \eqref{eq:maxwell1}--\eqref{eq:maxwell2}.
\end{proof}

The compass is therefore a chart mnemonic for one Legendre submanifold.  It is useful, but the invariant object is the contact pullback and the exact differentials.

\section{Constraint-dependent response coordinates}

\subsection{Definitions}

Define the usual heat capacities
\[
C_P=T\pdc{S}{T}{P},
\qquad
C_V=T\pdc{S}{T}{V},
\]
and assume they are finite and nonzero on the chart under consideration.

The symbols below denote derivatives after the relevant potential is pulled back to the indicated regular chart.  This removes the common ambiguity of differentiating a potential with respect to a variable that is not one of its natural coordinates.

\begin{definition}[Six response coordinates]
Set
\begin{align}
\lambda_P&:=\pdc{G}{S}{P},&
\lambda_V&:=\pdc{F}{S}{V},\label{eq:lambda-caloric-def}\\
\lambda_S&:=\pdc{H}{V}{S},&
z_S&:=\pdc{U}{P}{S},\label{eq:lambda-adiabatic-def}\\
\lambda_T&:=\pdc{F}{P}{T},&
z_T&:=\pdc{G}{V}{T}.\label{eq:lambda-isothermal-def}
\end{align}
\end{definition}

The pairs $(\lambda_P,\lambda_V)$, $(\lambda_S,z_T)$, and $(z_S,\lambda_T)$ have, respectively, common physical dimensions.  They should not be assembled into a Euclidean vector without a declared nondimensionalization or direct-sum unit structure.

\subsection{Exact response formulas}

\begin{theorem}[Response-coordinate theorem]
On an admissible regular chart,
\begin{align}
\lambda_P&=-\frac{ST}{C_P},&
\lambda_V&=-\frac{ST}{C_V},\label{eq:lambda-caloric}\\
\lambda_S&=V\pdc{P}{V}{S},&
z_S&=-P\pdc{V}{P}{S},\label{eq:lambda-adiabatic}\\
\lambda_T&=-P\pdc{V}{P}{T},&
z_T&=V\pdc{P}{V}{T}.\label{eq:lambda-isothermal}
\end{align}
They satisfy the four conjugate product identities
\begin{equation}
\lambda_P\frac{C_P}{T}=-S,
\qquad
\lambda_V\frac{C_V}{T}=-S,
\qquad
\lambda_Sz_S=-PV,
\qquad
\lambda_Tz_T=-PV.
\label{eq:product-identities}
\end{equation}
\end{theorem}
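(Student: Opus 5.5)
The plan is a direct chain-rule computation. Each response coordinate is a partial derivative of one Legendre potential taken in a regular chart. In every case I will substitute the exact differential from the Proposition on exact potential differentials, \eqref{eq:dU}--\eqref{eq:dG}, restricted to the constraint that holds that chart variable fixed. Before that, I will record the chart regularity that makes each derivative meaningful. From the admissibility conditions $U_{SS}>0$, $U_{VV}>0$, $\det D^2U>0$, a short Jacobian computation gives
\[
\pdc{T}{S}{V}=U_{SS},\qquad \pdc{P}{V}{S}=-U_{VV},\qquad \pdc{T}{S}{P}=\frac{\det D^2U}{U_{VV}},\qquad \pdc{P}{V}{T}=-\frac{\det D^2U}{U_{SS}}.
\]
All four are finite and nonzero. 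The maps to $(S,P)$, $(T,V)$ and $(T,P)$ are therefore local diffeomorphisms, and the inverse derivatives $\pdc{V}{P}{S}$, $\pdc{V}{P}{T}$ exist and equal the reciprocals.

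For the caloric pair, I restrict $\dd G=-S\,\dd T+V\,\dd P$ to $\dd P=0$ in the $(S,P)$ chart. This gives $\lambda_P=-S\pdc{T}{S}{P}$. Since $C_P=T\pdc{S}{T}{P}$ and $\pdc{T}{S}{P}=T/C_P$, it follows that $\lambda_P=-ST/C_P$. The same argument with $\dd F=-S\,\dd T-P\,\dd V$ at $\dd V=0$ gives $\lambda_V=-ST/C_V$.

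For the adiabatic pair, I restrict $\dd H=T\,\dd S+V\,\dd P$ to $\dd S=0$ in the $(S,V)$ chart, which yields $\lambda_S=V\pdc{P}{V}{S}$. Restricting $\dd U=T\,\dd S-P\,\dd V$ to $\dd S=0$ in the $(S,P)$ chart yields $z_S=-P\pdc{V}{P}{S}$.

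For the isothermal pair, I use $\dd F$ at $\dd T=0$ in the $(T,P)$ chart to get $\lambda_T=-P\pdc{V}{P}{T}$. I use $\dd G$ at $\dd T=0$ in the $(T,V)$ chart to get $z_T=V\pdc{P}{V}{T}$.

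The product identities then follow immediately. The first two come from multiplying $\lambda_P$ by $C_P/T$ and $\lambda_V$ by $C_V/T$. The last two use the one-variable inverse function rule $\pdc{P}{V}{X}\pdc{V}{P}{X}=1$ for $X=S$ and $X=T$, which is valid by the nonvanishing derivatives recorded at the start.

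The only real obstacle is bookkeeping rather than analysis. I must make sure each potential is pulled back to the stated chart before differentiating. I must also confirm that the restriction of its exact differential to a constant-coordinate curve is exactly the partial derivative in that chart. This is where the regularity computation is needed: it guarantees that, for example, $T$ is a differentiable function of $(S,P)$, and that $C_P$ and $C_V$ are nonzero, consistent with the standing assumption.
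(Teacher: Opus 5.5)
Your proposal is correct and follows essentially the paper's route. In each case you restrict the relevant potential's exact differential to the constraint curve in the appropriate regular chart; this is the same as multiplying the natural derivative $G_T=-S$, $F_T=-S$, $H_P=V$, $U_V=-P$, $F_V=-P$, or $G_P=V$ by the matching constrained derivative, and the product identities then follow from reciprocal derivatives. Your preliminary Hessian computation derives chart regularity and $C_P,C_V\neq0$ from admissibility, whereas the paper simply assumes them, so it is a harmless strengthening.
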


\begin{proof}
At fixed $P$, use $G_T=-S$ and the reciprocal derivative
\[
\pdc{T}{S}{P}=\left(\pdc{S}{T}{P}\right)^{-1}=\frac{T}{C_P}.
\]
Therefore
\[
\lambda_P=G_T\pdc{T}{S}{P}=-\frac{ST}{C_P}.
\]
The $V$-constrained calculation with $F_T=-S$ gives $\lambda_V=-ST/C_V$.

At fixed $S$, the chain rule in the $(S,V)$ chart and $H_P=V$ give
\[
\lambda_S=H_P\pdc{P}{V}{S}=V\pdc{P}{V}{S}.
\]
Likewise $U_V=-P$ gives
\[
z_S=U_V\pdc{V}{P}{S}=-P\pdc{V}{P}{S}.
\]
At fixed $T$, $F_V=-P$ and $G_P=V$ yield \eqref{eq:lambda-isothermal}.  Multiplying reciprocal derivatives gives \eqref{eq:product-identities}.
\end{proof}

\subsection{Caloric--mechanical closure}

Define the positive isentropic and isothermal bulk moduli
\[
K_S:=-V\pdc{P}{V}{S},
\qquad
K_T:=-V\pdc{P}{V}{T},
\]
and compressibilities $\kappa_S=K_S^{-1}$ and $\kappa_T=K_T^{-1}$.
Then
\[
\lambda_S=-K_S,
\qquad
z_T=-K_T,
\qquad
z_S=\frac{PV}{K_S},
\qquad
\lambda_T=\frac{PV}{K_T}.
\]

\begin{theorem}[Equilibrium closure identity]
On every admissible chart,
\begin{equation}
\frac{C_P}{C_V}
=\frac{K_S}{K_T}
=\frac{\kappa_T}{\kappa_S}.
\label{eq:cp-cv-kappa}
\end{equation}
Consequently, the dimensionless response ratios
\[
\Gamma_c:=\frac{\lambda_P}{\lambda_V}=\frac{C_V}{C_P},
\qquad
\Gamma_m:=\frac{\lambda_S}{z_T}=\frac{K_S}{K_T}
\]
satisfy
\begin{equation}
\boxed{\Gamma_c\Gamma_m=1.}
\label{eq:equilibrium-invariant}
\end{equation}
\end{theorem}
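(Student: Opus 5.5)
The plan is to prove the classical identity \eqref{eq:cp-cv-kappa} by a Jacobian computation in the single chart $(S,V)$. The box \eqref{eq:equilibrium-invariant} then follows from the response-coordinate theorem: \eqref{eq:lambda-caloric} gives $\lambda_P/\lambda_V=C_V/C_P$, and $\lambda_S=-K_S$, $z_T=-K_T$ give $\lambda_S/z_T=K_S/K_T$. These quotients are well defined because $ST\neq0$ is implicitly required for $\lambda$'s to be nonzero. Here I would note that $S$ may vanish at isolated points, so the $\Gamma_c$ formula should be read as $C_V/C_P$, extended by continuity, or on the set $S\neq0$. The remaining equality $K_S/K_T=\kappa_T/\kappa_S$ is immediate from $\kappa=K^{-1}$.

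For the main identity, write every constrained derivative as a ratio of Jacobians $\partial(\cdot,\cdot)/\partial(S,V)$. Thus
\[
\frac{C_P}{C_V}=\frac{\partial(S,P)/\partial(T,P)}{\partial(S,V)/\partial(T,V)}
=\frac{\partial(S,P)}{\partial(T,P)}\cdot\frac{\partial(T,V)}{\partial(S,V)},
\]
\[
\frac{K_S}{K_T}=\frac{\partial(P,S)/\partial(V,S)}{\partial(P,T)/\partial(V,T)}
=\frac{\partial(P,S)}{\partial(V,S)}\cdot\frac{\partial(V,T)}{\partial(P,T)}.
\]
Rearranging the factors by the antisymmetry of Jacobians and the chain rule shows that both products equal
\[
\frac{\partial(S,P)}{\partial(S,V)}\cdot\frac{\partial(T,V)}{\partial(T,P)},
\]
which proves the identity. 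Concretely, in the $(S,V)$ chart, $\partial(T,V)/\partial(S,V)=U_{SS}$, $\partial(S,P)/\partial(S,V)=-U_{VV}$ and $\partial(T,P)/\partial(S,V)=-\det D^2U$. This gives
\[
\frac{C_P}{C_V}=\frac{K_S}{K_T}=\frac{U_{SS}U_{VV}}{\det D^2U},
\]
and I would display these explicit values as a check.

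The only point needing care is that every Jacobian used as a denominator is nonzero. This is guaranteed by admissibility: $U_{SS}>0$, $U_{VV}>0$ and $\det D^2U>0$, and $T>0$ makes $C_P$ and $C_V$ finite and positive. These conditions ensure that the $(S,P)$, $(T,V)$ and $(T,P)$ charts invoked in the definitions are regular, as asserted before the proposition on exact potential differentials. Beyond tracking these conditions there is no real obstacle.
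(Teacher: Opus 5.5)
Your proposal is correct, but it takes a different route from the paper.

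\textbf{The paper's route.} The paper works in the $(T,V)$ chart. It uses the Maxwell relation $(\partial S/\partial V)_T=(\partial P/\partial T)_V$ to write $\dd S$. From this it derives the Mayer-type formula $C_P-C_V=-T\bigl((\partial P/\partial T)_V\bigr)^2/(\partial P/\partial V)_T$. It then expands $(\partial P/\partial V)_S$ by the chain rule, using $(\partial T/\partial V)_S=-(T/C_V)(\partial P/\partial T)_V$, to obtain $(\partial P/\partial V)_S=(C_P/C_V)(\partial P/\partial V)_T$.

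\textbf{Your route.} Writing $[a,b]=\partial(a,b)/\partial(S,V)$, you reduce both ratios to $[S,P][T,V]/[T,P]$ using only antisymmetry and the chain rule. This shows that the identity $C_P/C_V=K_S/K_T$ (Reech's relation) is purely kinematic. It holds for any four functions on a surface whose relevant Jacobians are nonzero, with no Maxwell relation or integrability input. In fact the Maxwell relation is not load-bearing in the paper's computation either: keeping one factor as $(\partial S/\partial V)_T$ instead of $(\partial P/\partial T)_V$, the same steps go through.

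Your three explicit Jacobian values, $U_{SS}$, $-U_{VV}$ and $-\det D^2U$, are right. The resulting closed form $C_P/C_V=U_{SS}U_{VV}/\det D^2U$ also makes $C_P/C_V\ge1$ manifest, with equality exactly when $U_{SV}=0$. In exchange, the paper's route produces the formula for $C_P-C_V$ itself, which it reuses in the van der Waals example.

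\textbf{Two small points.}
\begin{enumerate}[label=(\roman*)]
\item In the $(S,V)$ chart, the zero set of $S$ is the segment $\{S=0\}\cap\Omega$, a curve rather than isolated points. Your fallback of reading $\Gamma_c$ as $C_V/C_P$ on $\{S\neq0\}$, or by continuity, covers this anyway.
\item Nonvanishing of $z_T=-K_T=-V\det D^2U/U_{SS}$, and of $K_S=VU_{VV}$ (needed to define $\kappa_S,\kappa_T$), also uses $V>0$. The paper leaves this implicit when it calls the moduli positive, and you should state it alongside the Hessian conditions.
\end{enumerate}
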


\begin{proof}
In the $(T,V)$ chart, \eqref{eq:maxwell2} gives
\[
\dd S=\frac{C_V}{T}\,\dd T+\pdc{P}{T}{V}\,\dd V.
\]
Along $P=\mathrm{const}$,
\[
\pdc{V}{T}{P}
=-\frac{(\partial P/\partial T)_V}{(\partial P/\partial V)_T}.
\]
Hence
\begin{equation}
C_P-C_V
=-T\frac{\bigl((\partial P/\partial T)_V\bigr)^2}
{(\partial P/\partial V)_T}.
\label{eq:cp-minus-cv}
\end{equation}
Along $S=\mathrm{const}$,
\[
\pdc{T}{V}{S}
=-\frac{T}{C_V}\pdc{P}{T}{V},
\]
so
\begin{align*}
\pdc{P}{V}{S}
&=\pdc{P}{V}{T}
+\pdc{P}{T}{V}\pdc{T}{V}{S}\\
&=\pdc{P}{V}{T}
-\frac{T}{C_V}\left(\pdc{P}{T}{V}\right)^2\\
&=\frac{C_P}{C_V}\pdc{P}{V}{T},
\end{align*}
where the last equality uses \eqref{eq:cp-minus-cv}.  Multiplication by $-V$ proves $K_S/K_T=C_P/C_V$.  The remaining identities follow from the definitions.
\end{proof}

\begin{remark}[What the closure identity does not imply]
Equation \eqref{eq:equilibrium-invariant} is an equilibrium response identity.  It is not an integer index, a curvature flux, a spectral-flow charge, or entropy production.  A map from this scalar identity into any such object is additional structure and must be proved separately.
\end{remark}

%% file: sections/01a_response_summary.tex
\subsection{Response-coordinate summary and unit typing}
\label{sec:response-summary}

The six constrained derivatives are collected here because their constraint, formula, and physical dimension are all part of the datum.  Let $[\mathsf E]$, $[\mathsf S]$, $[\mathsf T]$, $[\mathsf P]$, and $[\mathsf V]$ denote the dimensions of energy, entropy, temperature, pressure, and volume, with
\[
[\mathsf S]=[\mathsf E]/[\mathsf T],
\qquad
[\mathsf P]=[\mathsf E]/[\mathsf V].
\]

\begin{center}
\small
\begin{tabular}{p{0.12\textwidth}p{0.23\textwidth}p{0.28\textwidth}p{0.19\textwidth}}
\toprule
Symbol & Constrained derivative & Exact formula & Dimension\\
\midrule
$\lambda_P$ & $(\partial G/\partial S)_P$ & $-ST/C_P$ & temperature\\
$\lambda_V$ & $(\partial F/\partial S)_V$ & $-ST/C_V$ & temperature\\
$\lambda_S$ & $(\partial H/\partial V)_S$ & $V(\partial P/\partial V)_S=-K_S$ & pressure\\
$z_T$ & $(\partial G/\partial V)_T$ & $V(\partial P/\partial V)_T=-K_T$ & pressure\\
$z_S$ & $(\partial U/\partial P)_S$ & $-P(\partial V/\partial P)_S=PV/K_S$ & volume\\
$\lambda_T$ & $(\partial F/\partial P)_T$ & $-P(\partial V/\partial P)_T=PV/K_T$ & volume\\
\bottomrule
\end{tabular}
\end{center}

Thus the natural response bundle is the unit-typed direct sum
\begin{equation}
E_\Lambda
=E_{\mathsf T}^{\oplus2}
\oplus E_{\mathsf P}^{\oplus2}
\oplus E_{\mathsf V}^{\oplus2},
\label{eq:unit-typed-response-bundle}
\end{equation}
not an unqualified copy of $\R^6$.  A Hilbert norm, cut, orthogonal frame change, or Euclidean matrix representation is introduced only after reference scales
\[
T_*,\qquad P_*,\qquad V_*>0
\]
are declared and the response section is nondimensionalized as
\begin{equation}
\widehat\Lambda
=
\left(
\frac{\lambda_P}{T_*},
\frac{\lambda_V}{T_*},
\frac{\lambda_S}{P_*},
\frac{z_T}{P_*},
\frac{z_S}{V_*},
\frac{\lambda_T}{V_*}
\right).
\label{eq:nondimensional-response-section}
\end{equation}

\begin{proposition}[Scale covariance of the typed carrier]
Let $T_*',P_*',V_*'>0$ be another set of reference scales.  The two nondimensional response representations are related by the invertible block-diagonal map
\[
C
=\operatorname{diag}
\left(
\frac{T_*}{T_*'},\frac{T_*}{T_*'},
\frac{P_*}{P_*'},\frac{P_*}{P_*'},
\frac{V_*}{V_*'},\frac{V_*}{V_*'}
\right).
\]
Consequently, every statement formulated covariantly under invertible response-frame changes is independent of the numerical choice of units.  Orthogonality or norm values are representation dependent unless the reference metric is transported with $C$.
\end{proposition}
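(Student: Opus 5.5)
The plan is a direct computation, since each nondimensional component is a physical quantity divided by a reference scale of its own dimension. Write $\widehat\Lambda$ for the representation built from $(T_*,P_*,V_*)$ as in \eqref{eq:nondimensional-response-section}, and $\widehat\Lambda'$ for the one built from $(T_*',P_*',V_*')$. Take the first component. We have $\lambda_P/T_*'=(T_*/T_*')(\lambda_P/T_*)$, and the same holds for $\lambda_V$. The two pressure-typed entries pick up the factor $P_*/P_*'$, and the two volume-typed entries pick up $V_*/V_*'$. Hence $\widehat\Lambda'=C\widehat\Lambda$, with $C$ exactly the stated diagonal matrix. Every diagonal entry is a positive ratio, so $C$ is invertible with $C^{-1}$ the reciprocal diagonal. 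Moreover, $C$ is constant on the base and commutes with the unit-typed decomposition \eqref{eq:unit-typed-response-bundle}, since it acts by a scalar on each of $E_{\mathsf T}^{\oplus2}$, $E_{\mathsf P}^{\oplus2}$ and $E_{\mathsf V}^{\oplus2}$.

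For the covariance clause, I would make "formulated covariantly" precise. I would call a statement covariant if its truth for a tuple of response-frame data is invariant under simultaneous transformation by an invertible frame change $g$. For vectors this means $x\mapsto gx$. For endomorphisms it means $A\mapsto gAg^{-1}$, and for bilinear forms $B\mapsto g^{-\top}Bg^{-1}$. Two choices of units differ by the specific invertible frame change $C$, so such a statement holds in one representation exactly when it holds in the other. Examples are rank statements, kernel dimensions, and vanishing of a covariantly built tensor.

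For the final caveat, I would exhibit a counterexample showing that norms are not preserved when a fixed Euclidean inner product is used in both representations. Since $C^\top C\neq \id$ unless all three ratios equal $1$, take a vector supported in one block, say the temperature block. Its Euclidean norm is then rescaled by $T_*/T_*'$. Orthogonality also fails in general: for $x=e_1+e_3$ and $y=e_1-e_3$ we have $\langle x,y\rangle=0$, but $\langle Cx,Cy\rangle=(T_*/T_*')^2-(P_*/P_*')^2$, which is generically nonzero. Conversely, if the reference metric $B$ is transported as $B'=C^{-\top}BC^{-1}$, then $B'(Cx,Cy)=B(x,y)$, so norms and orthogonality are preserved.

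The only real obstacle is interpretive rather than computational. The phrase "covariantly under invertible response-frame changes" has to be pinned down so that the independence claim is a tautology of the transformation law and not an overstatement. I would state that definition explicitly before the argument. Everything else is routine.
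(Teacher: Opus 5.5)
Your proposal is correct and follows the paper's own argument. The rescaling read off componentwise from \eqref{eq:nondimensional-response-section} gives $\widehat\Lambda'=C\widehat\Lambda$, covariant objects transform by the natural representations of $C$, and a fixed Euclidean inner product is not invariant because $C$ is non-orthogonal unless all three ratios equal one; your explicit orthogonality counterexample and the transported form $C^{-\mathsf T}BC^{-1}$ spell out what the paper's brief proof only asserts.
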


\begin{proof}
The transformation follows directly from \eqref{eq:nondimensional-response-section}.  Covariant tensors and bundle maps transform by the corresponding natural representations.  A fixed Euclidean inner product is not invariant under arbitrary nonorthogonal $C$, so it must be transported rather than silently reused.
\end{proof}

%% file: sections/02_recognition.tex
\section{The intrinsic antisymmetric structure}

\subsection{Jacobian brackets}

Let $(x,y)$ be an oriented local chart on a two-dimensional manifold.  Define
\[
\{f,g\}_{x,y}
:=\frac{\partial(f,g)}{\partial(x,y)}
=f_xg_y-f_yg_x.
\]

\begin{proposition}[Jacobian-bracket algebra]
For smooth functions $f,g,h$,
\begin{align*}
\{f,g\}_{x,y}&=-\{g,f\}_{x,y},\\
\{f,gh\}_{x,y}&=g\{f,h\}_{x,y}+h\{f,g\}_{x,y},\\
\{f,\{g,h\}\}+\{g,\{h,f\}\}+\{h,\{f,g\}\}&=0.
\end{align*}
Moreover,
\[
\dd f\wedge\dd g=\{f,g\}_{x,y}\,\dd x\wedge\dd y.
\]
\end{proposition}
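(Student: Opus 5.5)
I will verify the four assertions in order, starting with the wedge identity, since it gives the most conceptual route to the others.

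\emph{Wedge identity.} Expand $\dd f=f_x\,\dd x+f_y\,\dd y$ and $\dd g=g_x\,\dd x+g_y\,\dd y$. Then use $\dd x\wedge\dd x=\dd y\wedge\dd y=0$ and $\dd y\wedge\dd x=-\dd x\wedge\dd y$. This gives $\dd f\wedge\dd g=(f_xg_y-f_yg_x)\,\dd x\wedge\dd y$ directly.

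\emph{Antisymmetry and Leibniz rule.} Antisymmetry follows from the definition, or equivalently from $\dd g\wedge\dd f=-\dd f\wedge\dd g$. For the Leibniz rule I will use $\dd(gh)=g\,\dd h+h\,\dd g$. Wedging with $\dd f$ on the left and reading off the coefficient of $\dd x\wedge\dd y$ through the wedge identity gives $\{f,gh\}=g\{f,h\}+h\{f,g\}$.

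\emph{Jacobi identity.} This is the only step needing care, and I expect it to be the main obstacle. The difficulty is that the inner bracket $\{g,h\}$ involves first derivatives, so a direct expansion of the outer bracket produces second derivatives. I would handle it by brute expansion. Write
\[
\{f,\{g,h\}\}=f_x\,\partial_y(g_xh_y-g_yh_x)-f_y\,\partial_x(g_xh_y-g_yh_x).
\]
Each term then carries exactly one second derivative, multiplied by first derivatives of the other two functions. Group all terms in the cyclic sum by which function carries the second derivative. For example, the terms containing second derivatives of $h$ come from $\{f,\{g,h\}\}$ and $\{g,\{h,f\}\}$. They read
\[
f_xg_xh_{yy}-f_xg_yh_{xy}-f_yg_xh_{xy}+f_yg_yh_{xx}
\]
from the first, and exactly the negative of this expression from the second, once equality of mixed partials $h_{xy}=h_{yx}$ is used. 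By the cyclic symmetry $f\to g\to h\to f$, the terms with second derivatives of $f$ and of $g$ cancel in the same way.

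\emph{Regularity.} Smoothness, or even $C^2$, is needed only for the Jacobi identity, where symmetry of the mixed partials is used. As an alternative, one could note that in a two-dimensional chart $\{\cdot,\cdot\}_{x,y}$ is the Poisson bracket of the symplectic form $\dd x\wedge\dd y$, for which Jacobi is standard. I would nonetheless keep the explicit computation so that the proof stays self-contained.
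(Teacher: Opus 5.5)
Your proof is correct and follows the paper's, which likewise obtains antisymmetry, the Leibniz rule, and the wedge identity $\dd f\wedge\dd g=\{f,g\}_{x,y}\,\dd x\wedge\dd y$ by direct expansion. The one difference is that the paper cites the Jacobi identity as the standard Poisson-bracket identity for the constant area form $\dd x\wedge\dd y$, while you verify it by explicit expansion; your cancellation of the terms with second derivatives of $h$ (using $h_{xy}=h_{yx}$) is correct, and so is the cyclic-symmetry argument for the remaining terms, which makes your proof self-contained.
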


\begin{proof}
The first, second, and final identities follow by direct expansion.  The Jacobi identity is the standard Poisson-bracket identity for the constant area form $\dd x\wedge\dd y$.
\end{proof}

The thermodynamic two-form relation \eqref{eq:fundamental-two-form} is therefore the intrinsic antisymmetric statement.  In the $(S,V)$ chart it reads
\[
\{T,S\}_{S,V}=\{P,V\}_{S,V}.
\]

\subsection{Why a universal skew response Jacobian does not follow}

\begin{proposition}[No automatic skew $\lambda$-Jacobian]
The response definitions \eqref{eq:lambda-caloric-def}--\eqref{eq:lambda-isothermal-def} and the Maxwell relations do not imply that a matrix of first derivatives of the response coordinates is skew-symmetric.
\end{proposition}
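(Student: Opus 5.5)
This is a non-implication statement, so I will prove it with an explicit counterexample: a single admissible fundamental relation $U(S,V)$ for which the Jacobian matrix of response coordinates with respect to the state coordinates fails to be skew-symmetric at some point. The first step fixes what "a matrix of first derivatives of the response coordinates" means. Since the chart is two-dimensional, I take a $2\times2$ block $J=\partial(\lambda_a,\lambda_b)/\partial(x,y)$, where $(x,y)$ is any regular chart such as $(S,V)$ and $(\lambda_a,\lambda_b)$ is any pair of response coordinates, made dimensionless via \eqref{eq:nondimensional-response-section} so that the entries are comparable. Skew-symmetry of $J$ would force both diagonal entries to vanish identically. So it suffices to show, for one model, that a diagonal entry such as $\partial\widehat\lambda_P/\partial S$ is nonzero somewhere. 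Because the proposition is universally phrased, I also want the argument to be robust: every candidate pairing should have a nonzero diagonal entry in the chosen model.

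For the model I will use the monatomic ideal gas in specific units,
\[
U(S,V)=c\,V^{-2/3}\e^{2S/3},
\]
restricted to a small open set on which $T>0$ and $D^2U$ is positive definite. Both facts follow from direct differentiation, since $U_{SS}U_{VV}-U_{SV}^2$ is a positive multiple of $U^2V^{-2}$. In this model I will evaluate the closed forms of the Response-coordinate theorem: $C_V=\tfrac32$ and $C_P=\tfrac52$ in these units, $\lambda_P=-\tfrac25 ST$, $\lambda_V=-\tfrac23 ST$, $\lambda_S=-K_S=-\tfrac53 P$, $z_T=-P$, and $z_S=\tfrac35 V$, $\lambda_T=V$. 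Each is an explicit function of $(S,V)$ through $T=\tfrac23U$ and $P=\tfrac23U/V$. Then $\partial_S\lambda_P=-\tfrac25(T+ST_S)=-\tfrac25 T(1+\tfrac23S)$, which is nonzero for $S>0$. Similarly $\partial_V\lambda_S=-\tfrac53P_V\neq0$, and $\partial_V\lambda_T=1\neq0$. This means no choice of pairing makes $J$ skew, and since the entries never vanish identically, neither does any rescaling by the positive constants of the Scale-covariance proposition.

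The model automatically satisfies the Maxwell relations \eqref{eq:maxwell1}--\eqref{eq:maxwell2}, because they hold on every admissible chart by the Exact potential differentials proposition. It also satisfies the closure \eqref{eq:equilibrium-invariant}, as a check: $\Gamma_c=\tfrac35$ and $\Gamma_m=\tfrac53$. So the counterexample shows that these identities coexist with a non-skew Jacobian, which is exactly the claimed non-implication.

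The main obstacle is the ambiguity in "a matrix of first derivatives": someone might allow a $6\times6$ matrix, or derivatives with respect to other response variables. I will handle this by remarking that any square Jacobian of response coordinates with respect to state variables, or with respect to one another, contains a diagonal entry $\partial\lambda/\partial x$ of the type computed above, and hence is not skew in this model. I will also note that the genuinely antisymmetric object is the two-form \eqref{eq:fundamental-two-form} and its Jacobian bracket, not an ordinary Jacobian.
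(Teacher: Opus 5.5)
Your proof is correct and is essentially the paper's argument: an ideal-gas counterexample in which a diagonal entry of a $2\times2$ response Jacobian in a legitimate two-dimensional chart is nonzero (the paper uses $\partial_T\lambda_P$ in the $(T,V)$ chart, you use $\partial_S\lambda_P$ in the $(S,V)$ chart), and the paper additionally notes that a four-column matrix in $T,V,S,P$ is not even intrinsic on the two-dimensional equilibrium manifold. Your extra robustness claims (every pairing, any regular chart, any square Jacobian) are not needed and should be dropped or restricted to fixed standard charts, because for any two response coordinates $f,g$ with $\mathrm d f\wedge\mathrm d g\neq0$ the chart $(x,y)=(-g,f)$ gives $\partial(f,g)/\partial(x,y)=\bigl(\begin{smallmatrix}0&1\\-1&0\end{smallmatrix}\bigr)$, which is skew.
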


\begin{proof}
The equilibrium manifold has dimension two, so $T,V,S,P$ cannot be used as four independent coordinates on an open subset of $\R^4$.  Any derivative matrix with four nominal columns depends on chart extensions off the equilibrium manifold and is therefore not intrinsic.

Even in a legitimate two-dimensional chart, skew-symmetry is not automatic.  For an ideal gas with constant $C_P$, consider the response map
\[
F:(T,V)\longmapsto(\lambda_P(T,V),\lambda_V(T,V)).
\]
Its first component is
\[
\lambda_P(T,V)=-\frac{T}{C_P}
\left(S_0+C_V\log\frac{T}{T_0}+R\log\frac{V}{V_0}\right).
\]
Thus $\partial_T\lambda_P$ is generically nonzero.  The $(1,1)$ entry of $DF$ is therefore nonzero, whereas every skew $2\times2$ matrix has zero diagonal.  Hence even this regular constitutive response map does not have an automatically skew Jacobian.
\end{proof}

A skew response sector can still be defined after additional data are supplied, for example an inner product and an involutive cut.  That is done rigorously in Section~\ref{sec:cut-flow-capstone}; it is a decomposition of an operator, not a consequence of Maxwell relations alone.

\section{Recognition Kernel theorem}
\label{sec:recognition-kernel}

\subsection{Recognition datum and audit order}

\begin{definition}[Recognition datum]
A domain recognition datum is an interface contract
\[
\mathfrak R_d=(S_d,R_d,E_d,\tau_d,K_d;\Pi_d,L_d),
\]
where $S_d$ is the declared state space, $R_d$ is the observation or recognition map, $E_d$ is a lawful transition family indexed by $\tau_d$, $\Pi_d$ is the target, $L_d$ is prior lawful memory, and $K_d$ evaluates the unresolved residue or blind kernel.  The audit order is
\[
\text{state}\to\text{target}\to\text{lawful transition}
\to\text{recognition}\to\text{open residue}
\to\text{classification}.
\]
The tuple is not itself a theorem.  It prevents an observation, target, transition, or ledger from being changed after the residue is seen.  The theorem below is the bounded linear Hilbert-space realization of this contract, matching the public Recognition Kernel Framework reviewer archive \cite{DabasRKF}.
\end{definition}

\subsection{Blind quotient and canonical completion}

Let $\cH,Y,Z$ be Hilbert spaces, let $\cA:\cH\to Y$ be a bounded observation map, and let $\Pi:\cH\to Z$ be a bounded target map.

\begin{definition}[Recognition kernel and target-relevant blind quotient]
Set
\[
N:=\Ker\cA,
\qquad
N_0:=\Ker\cA\cap\Ker\Pi,
\qquad
\cB(\cA;\Pi):=N/N_0.
\]
The subspace $N$ is the recognition kernel of the observation.  The quotient $\cB(\cA;\Pi)$ contains exactly those blind directions that remain relevant to the target.
\end{definition}

Because $\cA$ and $\Pi$ are bounded, $N$ and $N_0$ are closed.  Let $P_N$ denote the orthogonal projection onto $N$ and let $q_\Pi:N\to N/N_0$ be the Hilbert quotient map.

\begin{theorem}[Canonical target-faithful completion]
The augmented observation
\begin{equation}
\cA_\Pi^\sharp x
:=\bigl(\cA x,q_\Pi P_Nx\bigr)
\label{eq:canonical-completion}
\end{equation}
from $\cH$ to $Y\oplus\cB(\cA;\Pi)$ satisfies
\begin{equation}
\Ker\cA_\Pi^\sharp
=\Ker\cA\cap\Ker\Pi.
\label{eq:completion-kernel}
\end{equation}
Thus, relative to the declared Hilbert structure, the completion removes every target-relevant blind direction and no target-irrelevant blind direction.
\end{theorem}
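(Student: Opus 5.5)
The plan is to prove the kernel identity \eqref{eq:completion-kernel} by double inclusion, using only three facts: $N=\Ker\cA$ and $N_0=\Ker\cA\cap\Ker\Pi$ are closed, $P_N$ is the orthogonal projection onto $N$, and the Hilbert quotient map $q_\Pi:N\to N/N_0$ has kernel exactly $N_0$.

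First, I record that $q_\Pi$ is well defined with $\Ker q_\Pi=N_0$. Since $N_0$ is a closed subspace of the Hilbert space $N$, the quotient $N/N_0$ is a Hilbert space. It is canonically isometric to $N\ominus N_0$, and under this identification $q_\Pi$ becomes the orthogonal projection of $N$ onto $N\ominus N_0$. That projection is bounded and vanishes exactly on $N_0$. As a consequence $\cA_\Pi^\sharp$ is a bounded map into $Y\oplus\cB(\cA;\Pi)$.

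For the inclusion $\Ker\cA_\Pi^\sharp\subseteq N_0$, take $x$ with $\cA x=0$ and $q_\Pi P_Nx=0$. The first condition gives $x\in N$, so $P_Nx=x$. The second condition then reads $q_\Pi x=0$, hence $x\in N_0$.

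For the reverse inclusion, take $x\in N_0$. Then $\cA x=0$. Also $x\in N$, so $P_Nx=x\in N_0$ and therefore $q_\Pi P_Nx=0$. Together these give $x\in\Ker\cA_\Pi^\sharp$.

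For the interpretive sentence, I would note two consequences. The map $x\mapsto q_\Pi P_N x$ restricted to $N$ is surjective onto $\cB(\cA;\Pi)$, so every target-relevant blind class is detected. Its kernel within $N$ is $N_0$, so target-irrelevant directions remain invisible.

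There is no serious obstacle. The only point that needs care is the role of $P_N$: once $\cA x=0$ we have $x\in N$ and $P_N$ acts as the identity. For general $x$, the projection $P_N$ is needed only so that the second component is defined on all of $\cH$. This is where the phrase ``relative to the declared Hilbert structure'' enters, because a different inner product changes $P_N$ but not the kernel.
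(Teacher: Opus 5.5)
Your proposal is correct and follows the paper's own argument: the same double inclusion, using that $\cA x=0$ forces $x\in N$ and hence $P_Nx=x$, so that $q_\Pi x=0$ gives $x\in N_0$. The converse is the same direct check that every $x\in N_0$ annihilates both components. Your added remarks on the well-definedness of $q_\Pi$, on surjectivity onto $\cB(\cA;\Pi)$, and on the kernel not depending on the choice of projection onto $N$ are accurate supplements.
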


\begin{proof}
If $\cA_\Pi^\sharp x=0$, then $\cA x=0$, so $x\in N$ and $P_Nx=x$.  The second component gives $q_\Pi x=0$, hence $x\in N_0$.  Conversely, if $x\in N_0$, then both components in \eqref{eq:canonical-completion} vanish.
\end{proof}

\begin{theorem}[Minimum repair rank in finite dimension]
Assume $\cH$ and $Z$ are finite dimensional over the common ground field.  Let $G:\cH\to W$ be any supplemental observation such that
\[
\Ker(\cA,G)\subseteq\Ker\Pi.
\]
Then
\begin{equation}
\dim W\ge\dim\cB(\cA;\Pi).
\label{eq:min-repair-rank}
\end{equation}
The canonical quotient observation $q_\Pi P_N$ attains equality.
\end{theorem}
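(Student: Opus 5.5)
The plan is to restrict the supplemental observation $G$ to the recognition kernel $N=\Ker\cA$ and count dimensions there. Let $G|_N:N\to W$ denote the restriction.

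First, compute the kernel of the restriction. An element $x\in N$ with $Gx=0$ lies in $\Ker(\cA,G)$, so by hypothesis $x\in\Ker\Pi$. Hence $x\in N\cap\Ker\Pi=N_0$, which gives $\Ker(G|_N)\subseteq N_0$.

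Second, apply rank--nullity to $G|_N$:
\[
\dim W\ \ge\ \rank(G|_N)=\dim N-\dim\Ker(G|_N)\ \ge\ \dim N-\dim N_0=\dim(N/N_0)=\dim\cB(\cA;\Pi).
\]
All spaces involved are finite dimensional because $\cH$ is, so every quantity here is finite. This proves \eqref{eq:min-repair-rank}. The finiteness of $Z$ is not actually needed beyond guaranteeing that $\Pi$ is a linear map between finite-dimensional spaces.

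Third, check that the canonical observation attains equality. Take $G=q_\Pi P_N:\cH\to W:=\cB(\cA;\Pi)$. The Canonical target-faithful completion theorem, equation \eqref{eq:completion-kernel}, gives $\Ker(\cA,G)=N_0\subseteq\Ker\Pi$, so $G$ is admissible. Its codomain has dimension exactly $\dim\cB(\cA;\Pi)$, so the bound is attained. Moreover $G|_N=q_\Pi$ is surjective onto $N/N_0$, which confirms that no smaller target space could be used for this $G$.

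No step is a serious obstacle. The one point needing care is that the inequality compares $\dim W$ with the rank of $G$ restricted to $N$, not with the rank of $G$ on all of $\cH$. The hypothesis is only used for vectors already killed by $\cA$, and that is exactly what the restriction captures.
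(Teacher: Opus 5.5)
Your proof is correct and takes the same approach as the paper: restrict $G$ to $N=\Ker\cA$, use the hypothesis to get $\Ker(G|_N)\subseteq N_0$, apply rank--nullity, and note that $q_\Pi P_N$ has codomain $N/N_0$. Your additional remarks are accurate: admissibility of $q_\Pi P_N$ follows from the completion-kernel identity, and finite dimensionality of $Z$ is not actually used.
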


\begin{proof}
Restrict $G$ to $N$.  If $x\in N$ and $Gx=0$, then $x\in\Ker(\cA,G)\subseteq\Ker\Pi$, so
\[
\Ker(G|_N)\subseteq N_0.
\]
Rank--nullity therefore gives
\[
\rank(G|_N)
=\dim N-\dim\Ker(G|_N)
\ge \dim N-\dim N_0
=\dim(N/N_0).
\]
Since $\rank(G|_N)\le\dim W$, inequality \eqref{eq:min-repair-rank} follows.  The canonical quotient observation $q_\Pi P_N$ has codomain $N/N_0$ and kernel $N_0$ on $N$, so it attains equality.
\end{proof}

\subsection{Decoder burden}

Let $L:\cH\to\C$ be a bounded target functional.  Define
\begin{equation}
\beta_{\cA}(L)
:=\sup_{\cA x\ne0}
\frac{|Lx|^2}{\norm{\cA x}^2},
\label{eq:decoder-burden}
\end{equation}
with $\beta_{\cA}(L)=+\infty$ if $L$ does not vanish on $\Ker\cA$.

\begin{theorem}[Minimum decoder and sharp burden]
The following are equivalent:
\begin{enumerate}[label=(\roman*)]
\item $\beta_{\cA}(L)<\infty$;
\item there exists a bounded functional $d$ on $\overline{\Ran\cA}$ such that $L=d\circ\cA$;
\item there exists a unique vector $c_L\in\overline{\Ran\cA}$ such that
\begin{equation}
Lx=\ip{\cA x}{c_L}
\quad\text{for every }x\in\cH.
\label{eq:min-decoder}
\end{equation}
\end{enumerate}
For this vector,
\begin{equation}
\beta_{\cA}(L)=\norm{c_L}^2.
\label{eq:burden-norm}
\end{equation}
Moreover,
\begin{equation}
\cA^*\cA-L^*L\ge0
\quad\Longleftrightarrow\quad
\beta_{\cA}(L)\le1.
\label{eq:reserve-equivalence}
\end{equation}
\end{theorem}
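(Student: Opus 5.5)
I would prove the cycle (i)$\Rightarrow$(ii)$\Rightarrow$(iii)$\Rightarrow$(i), and obtain the norm identity \eqref{eq:burden-norm} along the way.

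For (i)$\Rightarrow$(ii), finiteness of $\beta_{\cA}(L)$ forces $L$ to vanish on $\Ker\cA$, by the stated convention. It also gives $|Lx|\le\beta_{\cA}(L)^{1/2}\norm{\cA x}$ for all $x$; when $\cA x=0$ both sides are zero. I would then define $d_0(\cA x):=Lx$ on $\Ran\cA$. This is well defined because $\cA x=\cA x'$ implies $x-x'\in\Ker\cA$, so $Lx=Lx'$. It is linear and bounded with $\norm{d_0}\le\beta_{\cA}(L)^{1/2}$. Since $d_0$ is uniformly continuous, it extends uniquely to a bounded functional $d$ on $\overline{\Ran\cA}$ with the same norm, and $L=d\circ\cA$.

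For (ii)$\Rightarrow$(iii), I would apply the Riesz representation theorem on the closed subspace $\overline{\Ran\cA}$, which is a Hilbert space. Since inner products are linear in the first argument, this yields $c_L$ with $d(y)=\ip{y}{c_L}$, and hence \eqref{eq:min-decoder}. For uniqueness, suppose $c,c'\in\overline{\Ran\cA}$ both represent $L$. Then $c-c'$ is orthogonal to $\Ran\cA$, hence to its closure, and it also lies in that closure, so $c=c'$.

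For (iii)$\Rightarrow$(i) together with the norm identity, Cauchy--Schwarz gives $|Lx|^2\le\norm{c_L}^2\norm{\cA x}^2$, so $\beta_{\cA}(L)\le\norm{c_L}^2$. For the reverse inequality, which I expect to be the only delicate step, I would take $y_n=\cA x_n\to c_L$ (the case $c_L=0$ is trivial). Then $Lx_n=\ip{y_n}{c_L}\to\norm{c_L}^2$ and $\norm{\cA x_n}\to\norm{c_L}$. The ratios $|Lx_n|^2/\norm{\cA x_n}^2$ therefore converge to $\norm{c_L}^2$, which forces $\beta_{\cA}(L)\ge\norm{c_L}^2$. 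Note also that under (iii), $L$ vanishes on $\Ker\cA$, so the infinite convention does not interfere.

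For \eqref{eq:reserve-equivalence}, I would write the operator inequality as $\ip{(\cA^*\cA-L^*L)x}{x}=\norm{\cA x}^2-|Lx|^2$. If this quantity is nonnegative for all $x$, then $L$ vanishes on $\Ker\cA$ and every ratio is at most $1$, so $\beta_{\cA}(L)\le1$. Conversely, if $\beta_{\cA}(L)\le1$, then $|Lx|^2\le\norm{\cA x}^2$ for all $x$: on $\Ker\cA$ both sides vanish, and off $\Ker\cA$ the bound is the definition of the supremum. This gives positivity of $\cA^*\cA-L^*L$.
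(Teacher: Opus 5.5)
Your proof is correct and follows the paper's route: you define $d_0(\cA x)=Lx$ on $\Ran\cA$, extend it by continuity, apply Riesz on $\overline{\Ran\cA}$, use Cauchy--Schwarz for the converse, and read the operator inequality off the quadratic form $\norm{\cA x}^2-|Lx|^2$. The paper obtains $\beta_{\cA}(L)=\norm{c_L}^2$ by observing that $\beta_{\cA}(L)=\norm{d_0}^2$ and that extension and Riesz representation preserve norms; your sequence $\cA x_n\to c_L$ for the lower bound is the same density argument written out explicitly.
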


\begin{proof}
Assume $\beta_{\cA}(L)<\infty$.  Define $d_0(\cA x)=Lx$ on $\Ran\cA$.  The definition is well posed because $L$ vanishes on $\Ker\cA$, and \eqref{eq:decoder-burden} gives
\[
|d_0(y)|\le\sqrt{\beta_{\cA}(L)}\norm{y}.
\]
Thus $d_0$ extends uniquely to $\overline{\Ran\cA}$.  The Riesz representation theorem gives a unique $c_L$ in that closed subspace satisfying \eqref{eq:min-decoder}.  The norm of the functional is $\norm{c_L}$, which proves \eqref{eq:burden-norm}.  The converse follows immediately from Cauchy--Schwarz.

Finally,
\[
\ip{(\cA^*\cA-L^*L)x}{x}
=\norm{\cA x}^2-|Lx|^2.
\]
This is nonnegative for all $x$ exactly when the ratio in \eqref{eq:decoder-burden} is at most one.
\end{proof}

\begin{remark}[Recognition is target-relative]
An observation map need not be injective to be sufficient.  It is sufficient for $\Pi$ exactly when its blind kernel lies inside $\Ker\Pi$.  Demanding full injectivity when only a particular target matters can require unnecessary measurements; ignoring the target-relevant quotient can lose the theorem one intended to test.
\end{remark}

\section{Thermodynamic specialization of the Recognition Kernel}

Let
\[
\Lambda
=(\lambda_P,\lambda_V,\lambda_S,z_T,z_S,\lambda_T)
\]
be the response section.  Because the entries have different units, regard $\Lambda$ as a section of a direct sum of six one-dimensional unit bundles, or nondimensionalize each component against declared reference scales.

For an integer $N\ge0$, let
\[
J_x^N(E_\Lambda)
=\bigoplus_{n=0}^N\Sym^nT_x^*\cM\otimes E_{\Lambda,x}
\]
be the finite response-jet fibre at a state $x$, and let $j^N\Lambda(x)\in J_x^N(E_\Lambda)$ be the actual jet.  A linearized measurement protocol is a bounded map
\[
\cA_x:J_x^N(E_\Lambda)\longrightarrow Y,
\qquad
j^N\Lambda(x)\longmapsto y_x,
\]
while a desired constitutive or geometric quantity is represented by a bounded target map $\Pi_x:J_x^N(E_\Lambda)\to Z$.

\begin{corollary}[Minimum thermodynamic probe theorem]
At a fixed state and finite jet order, the minimum number of independent scalar supplemental probes required to make a finite-dimensional measurement protocol target-faithful is
\[
\dim\frac{\Ker\cA_x}
{\Ker\cA_x\cap\Ker\Pi_x}.
\]
\end{corollary}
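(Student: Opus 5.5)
The plan is to specialize the Minimum repair rank theorem to the thermodynamic setting; the corollary is then a direct instance of it. Set $\cH:=J_x^N(E_\Lambda)$, $\cA:=\cA_x$ and $\Pi:=\Pi_x$. First I check finite dimensionality. The fibre $J_x^N(E_\Lambda)=\bigoplus_{n=0}^N\Sym^nT_x^*\cM\otimes E_{\Lambda,x}$ is a finite direct sum of finite-dimensional spaces, since $\dim\cM=2$ and $\rank E_\Lambda=6$. Its dimension is $6\sum_{n=0}^N(n+1)$. The phrase ``finite-dimensional measurement protocol'' is read as saying that the target space $Z$ is finite dimensional as well. To make $\cH$ a Hilbert space, I fix reference scales $T_*,P_*,V_*$ as in \eqref{eq:nondimensional-response-section} together with an inner product on $T_x^*\cM$. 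By the scale-covariance proposition, $\Ker\cA_x$, $\Ker\Pi_x$ and hence the quotient are unchanged under the frame change $C$, so the dimension of the quotient does not depend on these choices. Boundedness of $\cA_x$ and $\Pi_x$ is automatic in finite dimension.

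Next I interpret ``$k$ independent scalar supplemental probes'' as a linear map $G=(g_1,\dots,g_k):\cH\to\C^k$, or $\R^k$, with linearly independent functionals $g_i$. ``Target-faithful'' I read as $\Ker(\cA_x,G)\subseteq\Ker\Pi_x$. With $W=\C^k$, the minimum repair rank theorem gives $k=\dim W\ge\dim(\Ker\cA_x/(\Ker\cA_x\cap\Ker\Pi_x))$. This is the lower bound.

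For attainment, the theorem says that $q_\Pi P_N$ attains equality. To turn it into scalar probes, I choose a basis of the quotient $N/N_0$ and compose $q_\Pi P_N$ with the resulting coordinate functionals. This produces exactly $\dim(N/N_0)$ scalar probes. They are independent: $q_\Pi P_N$ is surjective onto $N/N_0$, because it restricts to $q_\Pi$ on $N$, so no nontrivial combination of the coordinate functionals vanishes. By the canonical completion theorem, \eqref{eq:completion-kernel} then gives $\Ker(\cA_x,G)=N_0\subseteq\Ker\Pi_x$, so these probes make the protocol target-faithful.

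Combining the lower bound with this construction shows that the minimum number equals the stated dimension. The only delicate point is the modelling translation: ``number of independent scalar probes'' must be identified with $\dim W$, and the Hilbert structure must be declared through the unit-typed carrier. Once that identification is made, the corollary follows immediately from the two earlier theorems.
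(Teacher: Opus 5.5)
Your proposal is correct and follows the paper's own route. The paper states this corollary without a separate proof, treating it as an immediate instance of the minimum repair-rank theorem with $\cH=J_x^N(E_\Lambda)$, $\cA=\cA_x$, $\Pi=\Pi_x$, where equality is attained by the canonical quotient observation $q_\Pi P_N$ from the completion theorem. Your additional bookkeeping is sound: the jet fibre is finite dimensional, and a basis of $N/N_0$ turns $q_\Pi P_N$ into $\dim(N/N_0)$ independent scalar functionals. Just keep the probes valued in the same ground field as the fibre, since that theorem counts $\dim W$ over the common field.
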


This statement applies to any declared target: a heat-capacity ratio, a compressibility ratio, a derivative coefficient, a curvature component, or a cycle functional.  It does not assert that the same probes repair all targets simultaneously.

%% file: sections/02a_measurement_operators.tex
\subsection{Why the thermodynamic observation maps are bounded}
\label{sec:measurement-operator-context}

The boundedness hypotheses in the Recognition Kernel theorem are automatic for the finite-jet measurement protocols used in the main thermodynamic application, and they have a standard interpretation after Hilbert completion.

\begin{proposition}[Finite-jet observation boundedness]
At a fixed state and finite jet order $N$, the response-jet fibre
\[
J_x^N(E_\Lambda)
\]
is finite dimensional.  Therefore every linear measurement map
\[
\cA_x:J_x^N(E_\Lambda)\to Y_x
\]
and every linear target map
\[
\Pi_x:J_x^N(E_\Lambda)\to Z_x
\]
into finite-dimensional normed data spaces is bounded.  If their coefficients depend smoothly on $x$, their operator norms are locally bounded and are uniformly bounded on compact subsets of an admissible chart.
\end{proposition}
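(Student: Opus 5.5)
The first step is to establish finite dimensionality of the fibre directly from its definition,
\[
J_x^N(E_\Lambda)=\bigoplus_{n=0}^N\Sym^nT_x^*\cM\otimes E_{\Lambda,x}.
\]
The equilibrium manifold is two dimensional, so $\dim T_x^*\cM=2$ and $\dim\Sym^n T_x^*\cM=n+1$. The typed response bundle \eqref{eq:unit-typed-response-bundle} has rank six. A finite direct sum of tensor products of finite-dimensional spaces is finite dimensional, which gives
\[
\dim J_x^N(E_\Lambda)=6\sum_{n=0}^N(n+1)=3(N+1)(N+2).
\]
To make norms meaningful, I will nondimensionalize via \eqref{eq:nondimensional-response-section} and fix any norm on the fibre. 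Since all norms on a finite-dimensional space are equivalent, the boundedness conclusion will not depend on this choice.

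The second step is the standard fact that every linear map $L$ out of a finite-dimensional normed space is bounded. I will choose a basis $e_1,\dots,e_m$ of the fibre and write $v=\sum a_ie_i$. Then
\[
\norm{Lv}\le\Bigl(\max_i\norm{Le_i}\Bigr)\sum_i|a_i|,
\]
and $\sum_i|a_i|$ is a norm, hence equivalent to the chosen norm. This applies verbatim to both $\cA_x$ and $\Pi_x$. Strictly, finite dimensionality of the codomains is not needed for this step; it only guarantees that $Y_x$ and $Z_x$ are complete, so the Recognition Kernel theorems apply.

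The third step, and the only one with real content, is locally uniform boundedness. I expect the main obstacle to be making ``coefficients depend smoothly on $x$'' precise, because the fibres vary with $x$. On the coordinate chart $\Omega$, the coordinate differentials $\dd S,\dd V$ induce a smooth frame of $\Sym^nT^*\cM$, and the six unit components give a frame of $E_\Lambda$. Together these trivialize $J^N(E_\Lambda)$ over $\Omega$ as $\Omega\times\R^{m}$. I will interpret the hypothesis as saying that, after also choosing local frames of the data bundles $Y$ and $Z$ with fibre norms induced continuously from fixed norms, the matrix entries of $\cA_x$ and $\Pi_x$ are continuous (indeed smooth) in $x$. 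The operator norm is bounded by a constant times the Frobenius norm of the matrix, and that constant is uniform on a compact set $K$ because the comparison constants between the frame norms and the fibre norms depend continuously on $x$. A continuous function on a compact set is bounded, which gives $\sup_{x\in K}\norm{\cA_x}<\infty$ and likewise for $\Pi_x$. Local boundedness follows by taking a compact neighbourhood of each point.
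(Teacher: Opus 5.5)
Your proposal is correct and follows the paper's own argument: finite dimensionality of the jet fibre makes every linear map out of it bounded, and smooth (hence continuous) dependence of the coefficients on $x$ together with compactness gives the uniform bound. You simply supply details the paper leaves implicit: the explicit count $\dim J_x^N(E_\Lambda)=3(N+1)(N+2)$, the basis estimate, and the local trivialization over $\Omega$, which makes ``smooth coefficients'' and the continuity of the norm-comparison constants precise.
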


\begin{proof}
Every linear map between finite-dimensional normed spaces is continuous.  Smooth coefficient dependence gives continuous operator-norm dependence; continuity on a compact set gives a finite supremum.
\end{proof}

Typical finite-jet observations include channel selection, weighted probe combinations, finite differences represented on the declared jet, sensor averages, and finite collections of cycle or directional functionals.  Their kernels encode exactly which finite response directions the protocol cannot distinguish.

For an infinite-dimensional completion, boundedness is a genuine protocol condition rather than an automatic fact.  A common measurement has the form
\begin{equation}
(\cA f)_j
=\langle f,k_j\rangle_H,
\qquad 1\le j\le m,
\label{eq:finite-probe-observation}
\end{equation}
with probe vectors $k_j\in H$.  Then
\begin{equation}
\norm{\cA f}_{\ell^2_m}^2
\le
\left(\sum_{j=1}^m\norm{k_j}_H^2\right)
\norm f_H^2.
\label{eq:probe-boundedness}
\end{equation}
Integral sensors, convolutional probes, and compact smoothing measurements fit this form after their kernels are placed in the dual Hilbert space.

\begin{proposition}[Compact sensing does not remove the Recognition problem]
Let $H$ be infinite dimensional and let $\cA:H\to Y$ be compact.  Even when $\Ker\cA=\{0\}$, stable recovery of the full state generally fails because the singular values may converge to zero.  A target functional $L$ remains stably recoverable exactly when its decoder burden $\beta_\cA(L)$ is finite.  Hence the Recognition theorem separates algebraic blindness from unstable near-blindness.
\end{proposition}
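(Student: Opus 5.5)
The proposition makes three separate claims, and I would prove them in turn, relying mainly on the decoder-burden theorem already established.

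\emph{First claim: injective compact operators need not be stably invertible.} I would write the singular value decomposition of the compact operator, $\cA f=\sum_k s_k\ip{f}{e_k}u_k$, with orthonormal systems $(e_k)$, $(u_k)$ and $s_k>0$. Injectivity forces $(e_k)$ to be complete in $H$. Since $H$ is infinite dimensional, there are infinitely many $s_k$, and compactness gives $s_k\to0$. Then $\norm{\cA e_k}=s_k\to0$ while $\norm{e_k}=1$. So no bound of the form $\norm f\le C\norm{\cA f}$ can hold. Equivalently, $\cA^{-1}$ on $\Ran\cA$ is unbounded, and $\Ran\cA$ is not closed. This is the precise sense of ``stable recovery fails''. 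The word ``generally'' in the statement is really ``always'' under injectivity plus infinite dimension. I would still remark that the failure is near-blindness, not algebraic blindness, because $\Ker\cA=\{0\}$.

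\emph{Second claim: $L$ is stably recoverable exactly when $\beta_\cA(L)<\infty$.} I would first fix what ``stably recoverable'' means: there is a bounded decoder $d$ on $\overline{\Ran\cA}$ with $L=d\circ\cA$. Equivalently, there is a constant $C$ with $|Lx-Lx'|\le C\norm{\cA x-\cA x'}$ for all $x,x'$. With this definition, the claim is exactly the equivalence (i)$\Leftrightarrow$(ii) of the minimum decoder theorem, with the Lipschitz constant equal to $\sqrt{\beta_\cA(L)}=\norm{c_L}$. I would also note that $\beta_\cA(L)<\infty$ forces $L$ to vanish on $\Ker\cA$. To show this dichotomy is genuinely target-relative even when $\Ker\cA=\{0\}$, I would give an example in the singular basis. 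For $Lx=\ip{x}{\ell}$, the condition becomes $\ell=\cA^*c$ with $c\in\overline{\Ran\cA}$, and then $\beta_\cA(L)=\sum_k|\ip{\ell}{e_k}|^2/s_k^2$. Taking $\ell=e_1$ gives a finite burden. Taking $\ell=\sum_k s_k^{1/2}k^{-1}e_k$-type vectors, chosen so the weighted sum diverges, gives $\beta_\cA(L)=\infty$ despite injectivity.

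\emph{Third claim: the separation statement.} I would combine the two previous parts. Algebraic blindness is nontrivial $\Ker\cA$, or more precisely a nonzero $\cB(\cA;\Pi)$, and it is handled by the canonical completion theorem. Unstable near-blindness is the case of zero kernel but infinite burden, as exhibited above. Since the two phenomena occur independently, the theorem distinguishes them.

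\emph{Main obstacle.} The hardest part is the interpretive one: choosing a definition of ``stably recoverable'' that the decoder theorem verifies exactly. I would adopt the bounded-decoder/Lipschitz definition stated above. Beyond that, the only real computation is the burden formula in the singular basis. To get it, I would represent $c_L=\sum_k c_k u_k$ and match coefficients in $Lx=\ip{\cA x}{c_L}$, which gives $c_k=\overline{\ip{e_k}{\ell}}/s_k$, and then sum $|c_k|^2$.
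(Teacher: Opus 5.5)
Your argument is correct. For the second claim you follow the paper exactly: ``stably recoverable'' means a bounded factorization $L=d\circ\cA$, and its equivalence with $\beta_\cA(L)<\infty$ is the minimum-decoder theorem.

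For the first claim your route differs. The paper argues abstractly: a bounded inverse on $\Ran\cA$ would make $\id=\cA^{-1}\cA$ compact, which is impossible in infinite dimension. You instead diagonalize the compact operator $\cA^*\cA$, use injectivity to make $(e_k)$ complete, and exhibit unit vectors with $\norm{\cA e_k}=s_k\to0$. The paper's proof is one line and needs no spectral theorem. Yours gives more information:
\begin{itemize}
\item it identifies the near-blind directions;
\item it matches the statement's own appeal to singular values;
\item it produces the Picard-type formula $\beta_\cA(L)=\sum_k\abs{\ip{\ell}{e_k}}^2/s_k^2$, which makes the third (separation) claim concrete where the paper only asserts it.
\end{itemize}
As you note, both arguments show that the failure is in fact universal rather than merely generic.

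One detail in your illustration needs repair. The vector $\ell=\sum_k s_k^{1/2}k^{-1}e_k$ need not have infinite burden: for $s_k=1/\log(k+1)$ the weighted sum is $\sum_k\log(k+1)/k^2<\infty$. A choice that always works is to take indices $k_1<k_2<\cdots$ with $s_{k_j}\le1/j$ and set $\ell=\sum_j s_{k_j}e_{k_j}$. Then $\norm{\ell}^2\le\sum_j j^{-2}<\infty$, while the burden sum is $\sum_j1=\infty$. Equivalently, finite burden means $\ell\in\Ran\cA^*$, and this range is a proper subspace because $\cA^*$ is compact.
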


\begin{proof}
If a compact injective operator on an infinite-dimensional Hilbert space had a bounded inverse on its range, the identity on the unit ball would factor through a compact map and would be compact, which is impossible.  The target statement is precisely the minimum-decoder theorem: finite burden is equivalent to bounded factorization through $\cA$.
\end{proof}

\begin{remark}[Protocol declaration]
The paper therefore uses three distinct levels without conflating them:
\begin{enumerate}[label=(\roman*)]
\item finite response jets, where linear boundedness is automatic;
\item Hilbert-completed probes, where boundedness is an explicit sensor condition;
\item unbounded differential generators, whose domains are handled separately in the analytic-jet theorem and are not treated as measurement maps.
\end{enumerate}
\end{remark}

%% file: sections/03_loop_and_cutflow.tex
\section{Loop recognition, lawful ledgers, and curvature}

\subsection{Local ledger theorem}

Let $D$ be an oriented smooth domain, let $V_0$ be a finite-dimensional vector space, and let $\omega,\ell\in\Omega^1(D;V_0)$.  The form $\omega$ is the measured response and $\ell$ is a prior lawful local ledger.  Define the open loop residue
\begin{equation}
G_{\omega,\ell}(\gamma)
:=\oint_\gamma(\omega-\ell).
\label{eq:open-loop-residue}
\end{equation}

\begin{definition}[Loop Recognition Kernel]
The loop Recognition Kernel is
\[
\cK_{\omega,\ell}
:=\{\gamma\text{ closed}:G_{\omega,\ell}(\gamma)=0\}.
\]
\end{definition}

\begin{theorem}[Recognition--Stokes theorem]
If $A\subset D$ is an oriented piecewise smooth two-chain, then
\begin{equation}
G_{\omega,\ell}(\partial A)
=\int_A\dd(\omega-\ell).
\label{eq:recognition-stokes}
\end{equation}
If $D$ is simply connected, the following are equivalent:
\begin{enumerate}[label=(\roman*)]
\item every closed loop belongs to $\cK_{\omega,\ell}$;
\item $\dd(\omega-\ell)=0$;
\item there exists a $V_0$-valued potential $\Phi$ with $\omega-\ell=\dd\Phi$.
\end{enumerate}
\end{theorem}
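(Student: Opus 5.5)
The plan is to obtain \eqref{eq:recognition-stokes} from Stokes' theorem applied to the $V_0$-valued one-form $\eta:=\omega-\ell$, componentwise. I will fix a basis $e_1,\dots,e_k$ of the finite-dimensional space $V_0$ and write $\eta=\sum_i\eta^i e_i$ with scalar one-forms $\eta^i\in\Omega^1(D)$. Integration and exterior differentiation are both linear and act componentwise, so $\oint_{\partial A}\eta=\sum_i\bigl(\oint_{\partial A}\eta^i\bigr)e_i$ and $\int_A\dd\eta=\sum_i\bigl(\int_A\dd\eta^i\bigr)e_i$. For each smooth singular 2-simplex in the chain $A$, the scalar Stokes theorem gives $\int_\sigma\dd\eta^i=\int_{\partial\sigma}\eta^i$. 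Summing with the chain's coefficients gives the claim, because boundary is linear on chains. This also shows the result does not depend on the chosen basis.

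For the equivalences on simply connected $D$, I will prove (iii)$\Rightarrow$(i)$\Rightarrow$(ii)$\Rightarrow$(iii).

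\textbf{(iii)$\Rightarrow$(i).} By the fundamental theorem of calculus along a piecewise smooth closed curve, $\oint_\gamma\dd\Phi=\Phi(\gamma(1))-\Phi(\gamma(0))=0$.

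\textbf{(i)$\Rightarrow$(ii).} I argue by contradiction. Suppose $\dd\eta(p)\neq0$ at some point $p$. Then some component $\dd\eta^i$ and some coordinate 2-plane in a chart at $p$ give a nonzero coefficient, which by continuity keeps its sign on a small coordinate square $Q$. Then $\int_Q\dd\eta^i\neq0$. By \eqref{eq:recognition-stokes} this forces $G_{\omega,\ell}(\partial Q)\neq0$, contradicting (i).

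\textbf{(ii)$\Rightarrow$(iii).} Here simple connectivity is used. If $\dd\eta=0$, each $\eta^i$ is closed, and I will show $\oint_\gamma\eta^i=0$ for every closed piecewise smooth loop. Since $D$ is simply connected, $\gamma$ is null-homotopic. I will pick a smooth, or piecewise smooth, homotopy $H:[0,1]^2\to D$ from $\gamma$ to a constant loop; this exists after smoothing a continuous homotopy. Applying \eqref{eq:recognition-stokes} to the 2-chain $H$, with the degenerate sides cancelling, gives $\oint_\gamma\eta^i=\int_H\dd\eta^i=0$. Path independence then lets me define $\Phi(x):=\int_{x_0}^x\eta$ along any piecewise smooth path, using that $D$ is connected. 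Differentiating along coordinate segments near $x$ shows $\dd\Phi=\eta$ and that $\Phi$ is smooth.

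I expect the main obstacle to be this last step. Only a continuous null-homotopy is given a priori, and the argument needs either a smoothing argument (Whitney approximation rel endpoints) or a de Rham/Poincaré-lemma route: closed forms are exact on simply connected manifolds because $H^1_{\mathrm{dR}}(D)\cong\mathrm{Hom}(\pi_1(D),\R)=0$. I would first try citing this standard de Rham fact componentwise, and keep the homotopy argument as a self-contained alternative.
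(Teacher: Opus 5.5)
Your proposal is correct and follows essentially the same route as the paper's proof. Both use Stokes' theorem for the residue identity, the fundamental theorem for line integrals for (iii)$\Rightarrow$(i), small coordinate rectangles for (i)$\Rightarrow$(ii), and base-point integration made path independent by closedness plus simple connectivity for (ii)$\Rightarrow$(iii). You additionally supply details the paper leaves implicit: the componentwise reduction to scalar forms, and an explicit justification of path independence via a smoothed null-homotopy or $H^1_{\mathrm{dR}}(D)=0$.
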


\begin{proof}
Equation \eqref{eq:recognition-stokes} is Stokes' theorem.  The implication (iii)$\Rightarrow$(i) is the fundamental theorem for line integrals, and (i)$\Rightarrow$(ii) follows by applying \eqref{eq:recognition-stokes} to arbitrarily small coordinate rectangles.  For (ii)$\Rightarrow$(iii), fix a base point and define $\Phi(x)$ by integrating $\omega-\ell$ along a path from the base point to $x$.  Closedness and simple connectivity make the integral path independent, and differentiation gives $\dd\Phi=\omega-\ell$.
\end{proof}

\begin{remark}[History-dependent memory]
A genuinely history-dependent ledger need not be representable by a one-form on the equilibrium state manifold.  In that case the state must be enlarged by internal or memory variables.  Stokes' theorem is then applied on the enlarged state space.  Subtracting a path-dependent term after observing the loop, without declaring it as prior state data, is not a lawful closure argument.
\end{remark}

\subsection{Return classes}

The residue accounting leads to four mutually distinguishable classes:
\begin{enumerate}[label=\textbf{Class \Roman*:}]
\item exact raw return, $\oint\omega=0$;
\item lawfully transported return, $\oint\omega=\oint\ell$;
\item memory-bearing return on an enlarged state space, with the memory component nonzero but fully ledgered;
\item open obstruction, where $G_{\omega,\ell}$ exceeds the declared uncertainty or tolerance.
\end{enumerate}

These are classifications, not aliases.  A nonzero geometric loop integral is not automatically irreversible.

\subsection{Onsager covariance and calibration}

Let $X,J\in\R^m$ be thermodynamic forces and conjugate fluxes, with entropy-production rate in the linear irreversible-thermodynamic convention \cite{Onsager1,Onsager2,Casimir}
\[
\sigma=J^{\mathsf T}X,
\qquad
J=LX.
\]

\begin{theorem}[Covariance of reciprocal linear response]
Let $B$ be invertible and transform the dual variables by
\[
J'=BJ,
\qquad
X'=B^{-\mathsf T}X.
\]
Then
\[
\sigma=J'^{\mathsf T}X',
\qquad
L'=BLB^{\mathsf T}.
\]
If $L=L^{\mathsf T}\ge0$, then $L'=L'^{\mathsf T}\ge0$.
\end{theorem}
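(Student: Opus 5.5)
The plan is a direct substitution into the bilinear form, followed by a congruence argument. First, the entropy production is invariant. Since $J'=BJ$ and $X'=B^{-\mathsf T}X$,
\[
J'^{\mathsf T}X'=J^{\mathsf T}B^{\mathsf T}B^{-\mathsf T}X=J^{\mathsf T}X=\sigma .
\]
This uses only invertibility of $B$ and the identity $B^{\mathsf T}B^{-\mathsf T}=(B^{-1}B)^{\mathsf T}=\id$.

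Second, we identify the transformed response matrix. We want the matrix $L'$ with $J'=L'X'$ for all forces. Write $X=B^{\mathsf T}X'$. Then
\[
J'=BJ=BLX=BLB^{\mathsf T}X'.
\]
Every $X'\in\R^m$ arises from some $X$, because $B^{-\mathsf T}$ is a bijection. Hence the linear relation determines $L'=BLB^{\mathsf T}$ uniquely.

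Third, we check the Onsager structure. Symmetry follows from $(BLB^{\mathsf T})^{\mathsf T}=BL^{\mathsf T}B^{\mathsf T}=BLB^{\mathsf T}$. Positive semidefiniteness follows by setting $w=B^{\mathsf T}v$, which gives
\[
v^{\mathsf T}L'v=(B^{\mathsf T}v)^{\mathsf T}L(B^{\mathsf T}v)=w^{\mathsf T}Lw\ge0
\]
for every $v\in\R^m$. So $L'$ is congruent to $L$ and inherits both properties.

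No step is a real obstacle. The one point needing care is the second step: uniqueness of $L'$ requires that $X'$ range over all of $\R^m$, and this is exactly where invertibility of $B$ enters a second time. I would also remark that the transformation $J'=BJ$, $X'=B^{-\mathsf T}X$ is forced by requiring $\sigma$ to be preserved. This is the reason the law is a congruence $BLB^{\mathsf T}$ rather than a similarity $BLB^{-1}$.
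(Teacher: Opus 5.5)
Your proposal is correct and follows the paper's proof essentially step for step: invariance of the pairing via $B^{\mathsf T}B^{-\mathsf T}=\id$, substitution of $X=B^{\mathsf T}X'$ to obtain $L'=BLB^{\mathsf T}$, and preservation of symmetry and positive semidefiniteness under congruence. Your added remark that surjectivity of $X\mapsto X'$ makes $L'$ unique is a small point of care that the paper leaves implicit.
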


\begin{proof}
The pairing is invariant because
\[
J'^{\mathsf T}X'=J^{\mathsf T}B^{\mathsf T}B^{-\mathsf T}X=J^{\mathsf T}X.
\]
Since $X=B^{\mathsf T}X'$, one has
\[
J'=BLB^{\mathsf T}X'.
\]
Symmetry and positive semidefiniteness are preserved by congruence.
\end{proof}

\begin{corollary}[Frame asymmetry is not automatically nonreciprocity]
A coefficient matrix that becomes asymmetric after a non-dual rescaling of forces or fluxes does not constitute an invariant violation of Onsager reciprocity.  A physical nonreciprocity claim must be made in a conjugate flux--force frame and must include the relevant Onsager--Casimir parity assumptions.
\end{corollary}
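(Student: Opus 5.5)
The corollary is a direct consequence of the covariance theorem just proved, plus a small explicit counterexample to show that asymmetry alone certifies nothing.

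First I would formalize what a non-dual rescaling is. Take invertible $B$ and $C$ and set $J'=BJ$, $X'=CX$ with $C\neq B^{-\mathsf T}$. The represented coefficient matrix is then $L'=BLC^{-1}$. Substituting into the pairing gives $J'^{\mathsf T}X'=J^{\mathsf T}B^{\mathsf T}CX$. This equals $\sigma$ for all admissible forces and fluxes only when $B^{\mathsf T}C=\id$, at least on the range where $L$ is nondegenerate. So a non-dual frame is exactly one in which the entropy-production pairing is no longer the canonical duality. Symmetry of $L'$ in such a frame is therefore not the Onsager condition, which is stated relative to the conjugate pairing $\sigma=J^{\mathsf T}X$.

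Second, I would show that asymmetry can be manufactured from a reciprocal $L$. Take $L=\id_2$, $B=\id$, and $C^{-1}=\begin{pmatrix}1&1\\0&1\end{pmatrix}$. Then $L'=C^{-1}$, which is asymmetric, even though the underlying conjugate-frame $L$ is symmetric and positive. Conversely, by the covariance theorem, any dual change $(B,B^{-\mathsf T})$ preserves symmetry. The symmetry of $L$ in a conjugate frame is therefore an invariant, namely the symmetry of the bilinear form $(X,Y)\mapsto X^{\mathsf T}LY$ on force space. Asymmetry detected in a non-dual frame is not an invariant. This proves the first sentence of the corollary.

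The second sentence is partly a methodological requirement rather than a mathematical claim, and I would be explicit about that. The mathematical content is the following. An invariant test must compare $L$ with $L^{\mathsf T}$ in a frame where $\sigma=J^{\mathsf T}X$. Under time reversal, variables of definite parity transform by $\epsilon=\operatorname{diag}(\pm1)$. The Onsager--Casimir relation then reads $L=\epsilon L^{\mathsf T}\epsilon$, so even genuine antisymmetric parts can be lawful when odd-parity variables are involved. A nonreciprocity claim must therefore specify $\epsilon$ in addition to the conjugate frame, since without the parity data the reference relation is not determined.

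The main obstacle is precision: separating what is proved (non-invariance of symmetry under non-dual maps, invariance under dual maps) from the physical declaration required by the claim-boundary principle. I would state the latter as a consequence of the former.
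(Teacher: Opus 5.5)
Your proposal is correct and follows the paper's route. The paper gives no separate proof: it reads the corollary off the covariance theorem, where dual changes act on $L$ by the congruence $L\mapsto BLB^{\mathsf T}$ and so preserve symmetry, while non-dual maps fall outside that theorem; your counterexample and the Onsager--Casimir form $L=\epsilon L^{\mathsf T}\epsilon$ just make this explicit.

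Two small repairs. First, drop the hedge ``at least on the range where $L$ is nondegenerate''. The bilinear pairing on independent fluxes and forces is preserved exactly when $B^{\mathsf T}C=\id$. By contrast, along $J=LX$ with your $L=\id$, the non-dual choice $B=\id$, $C=\id+A$ ($A$ skew) preserves $\sigma$ yet gives the asymmetric $L'=(\id+A)^{-1}$. Second, your shear is not a rescaling, so a literal diagonal example needs cross-coupling. For instance, $L=\begin{pmatrix}2&1\\1&2\end{pmatrix}$, $B=\id$, $C=\operatorname{diag}(s,1)$ with $s\ne1$ gives $L'=LC^{-1}$ with off-diagonal entries $1$ and $1/s$.
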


\begin{principle}[Curvature before entropy production]
The equality
\[
\int_A\dd(\omega-\ell)=\int_A q_\sigma\,\dd A
\]
is not a geometric identity unless the constitutive density $q_\sigma$ and its units, sign, time orientation, and calibration are independently supplied.  Geometry determines the left-hand residue; thermodynamics determines whether and how it represents dissipation.
\end{principle}

\section{Finite Cut--Flow--Jet Recognition theorem}
\label{sec:cut-flow-capstone}

\subsection{Finite graded carrier and jet raiser}

Let
\[
\cJ_N=\bigoplus_{n=0}^N H_n
\]
be a finite-dimensional graded Hilbert space, with inclusions $\iota_n:H_n\to\cJ_N$ and projections $\pi_n$.  A raising operator $\mathbb R$ satisfies
\[
\mathbb R(H_n)\subseteq H_{n+1},
\qquad
\mathbb R(H_N)=0.
\]
Hence $\mathbb R^{N+1}=0$.

For a response bundle with connection $\nabla$, the holonomic jet realization is obtained when
\begin{equation}
\mathbb R^n\iota_0\Lambda
=\iota_n\Sym\nabla^n\Lambda,
\qquad 0\le n\le N.
\label{eq:holonomic-raiser}
\end{equation}
This condition is part of the datum; a level shift by the identity alone certifies only combinatorial tower coefficients, not covariant derivatives.

Let $\mathbb T$ preserve the grading and set
\[
\mathbb G=\mathbb T+\mathbb R.
\]

\begin{theorem}[Finite flow-generated jet tower]
If $[\mathbb T,\mathbb R]=0$, then
\begin{equation}
\e^{t\mathbb G}=\e^{t\mathbb T}\e^{t\mathbb R}
\label{eq:flow-factorization}
\end{equation}
and, for a holonomic seed,
\begin{equation}
\pi_n\e^{t\mathbb G}\iota_0\Lambda
=\frac{t^n}{n!}\,
\pi_n\e^{t\mathbb T}\iota_n\Sym\nabla^n\Lambda.
\label{eq:jet-flow-formula}
\end{equation}
If the commutator is nonzero, then the first factorization defect is
\begin{equation}
\e^{t\mathbb T}\e^{t\mathbb R}
=\exp\left(
 t(\mathbb T+\mathbb R)
 +\frac{t^2}{2}[\mathbb T,\mathbb R]
 +O(t^3)
\right).
\label{eq:jet-bch}
\end{equation}
\end{theorem}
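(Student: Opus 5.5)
The plan is to work entirely in the finite-dimensional algebra $\operatorname{End}(\cJ_N)$, where the exponential series converges absolutely and all manipulations of power series are legitimate.

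\textbf{Step 1: the factorization \eqref{eq:flow-factorization}.} Assume $[\mathbb T,\mathbb R]=0$. Expand $(\mathbb T+\mathbb R)^k$ with the binomial theorem; commutativity is exactly what allows the ordinary binomial formula. Then use the Cauchy product of the two absolutely convergent series for $\e^{t\mathbb T}$ and $\e^{t\mathbb R}$. This gives $\e^{t\mathbb G}=\e^{t\mathbb T}\e^{t\mathbb R}$.

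\textbf{Step 2: the jet-flow formula \eqref{eq:jet-flow-formula}.} Nilpotency $\mathbb R^{N+1}=0$ makes $\e^{t\mathbb R}$ a finite sum $\sum_{m=0}^N \frac{t^m}{m!}\mathbb R^m$. Applying it to the seed and using the holonomic condition \eqref{eq:holonomic-raiser} gives
\[
\e^{t\mathbb R}\iota_0\Lambda=\sum_{m=0}^N \frac{t^m}{m!}\,\iota_m\Sym\nabla^m\Lambda .
\]
Since $\mathbb T$ preserves the grading, so does $\e^{t\mathbb T}$, and therefore $\pi_n\e^{t\mathbb T}\iota_m=0$ for $m\neq n$. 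Applying $\pi_n\e^{t\mathbb T}$ kills every term except $m=n$, and this leaves exactly the right-hand side of \eqref{eq:jet-flow-formula}.

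\textbf{Step 3: the defect formula \eqref{eq:jet-bch}.} This is the second-order Baker--Campbell--Hausdorff expansion. For small $t$, both $\e^{t\mathbb T}\e^{t\mathbb R}$ and the candidate right-hand side are close to the identity, so the matrix logarithm is defined and analytic there. I would expand the product to second order:
\[
\e^{t\mathbb T}\e^{t\mathbb R}=\id+t(\mathbb T+\mathbb R)+\frac{t^2}{2}\bigl(\mathbb T^2+2\mathbb T\mathbb R+\mathbb R^2\bigr)+O(t^3).
\]
Then I would take $\log(\id+X)=X-X^2/2+O(X^3)$. The quadratic part becomes
\[
\tfrac12\bigl(\mathbb T^2+2\mathbb T\mathbb R+\mathbb R^2\bigr)-\tfrac12(\mathbb T+\mathbb R)^2=\tfrac12[\mathbb T,\mathbb R].
\]
Exponentiating back gives \eqref{eq:jet-bch}. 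Here $O(t^3)$ is understood in operator norm as $t\to0$, and the remainder is uniform because the dimension is finite.

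\textbf{Main obstacle.} The only delicate point is making sense of \eqref{eq:jet-bch}: the logarithm is only guaranteed near the identity, so the statement is local in $t$. I would state that restriction explicitly rather than claim a global logarithm.
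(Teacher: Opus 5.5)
Your proposal is correct and follows the paper's own route. The paper likewise uses commutation for the factorization, nilpotency to make $\e^{t\mathbb R}$ a finite sum so that the holonomic condition and grade preservation of $\e^{t\mathbb T}$ isolate layer $n$, and the second-order BCH expansion near $t=0$ for the defect; you simply write out the binomial/Cauchy-product step and the second-order logarithm computation that the paper cites as standard.
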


\begin{proof}
Because $\mathbb R^{N+1}=0$,
\[
\e^{t\mathbb R}=\sum_{k=0}^N\frac{t^k}{k!}\mathbb R^k.
\]
Commutation gives \eqref{eq:flow-factorization}; projection to layer $n$ and \eqref{eq:holonomic-raiser} give \eqref{eq:jet-flow-formula}.  Equation \eqref{eq:jet-bch} is the Baker--Campbell--Hausdorff expansion, valid near $t=0$ for finite matrices.
\end{proof}

\subsection{Self-adjoint cut and unique grading}

Let $\mathsf J$ be a self-adjoint unitary involution,
\begin{equation}
\mathsf J^*=\mathsf J=\mathsf J^{-1}.
\label{eq:selfadjoint-cut}
\end{equation}

\begin{theorem}[Unique cut grading]
Every operator $A$ on $\cJ_N$ decomposes uniquely as
\[
A=A_{\mathrm e}+A_{\mathrm o},
\qquad
A_{\mathrm e}=\frac12(A+\mathsf J A\mathsf J),
\qquad
A_{\mathrm o}=\frac12(A-\mathsf J A\mathsf J),
\]
with
\[
\mathsf J A_{\mathrm e}\mathsf J=A_{\mathrm e},
\qquad
\mathsf J A_{\mathrm o}\mathsf J=-A_{\mathrm o}.
\]
\end{theorem}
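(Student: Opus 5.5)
The plan is to verify existence by direct computation and then obtain uniqueness from the fact that the $+1$ and $-1$ eigenspaces of the conjugation map $\mathrm{Ad}_{\mathsf J}:A\mapsto \mathsf J A\mathsf J$ meet only in zero. Everything rests on \eqref{eq:selfadjoint-cut}, specifically $\mathsf J^2=\id$. Self-adjointness is not needed for the algebraic statement; it only guarantees that $\mathrm{Ad}_{\mathsf J}$ also commutes with taking adjoints, which is useful later but irrelevant here.

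\emph{Step 1 (existence).} Adding the two defining formulas gives $A_{\mathrm e}+A_{\mathrm o}=A$ at once. For the parity properties, use $\mathsf J^2=\id$ to compute
\[
\mathsf J A_{\mathrm e}\mathsf J=\tfrac12(\mathsf J A\mathsf J+\mathsf J^2A\mathsf J^2)=\tfrac12(\mathsf J A\mathsf J+A)=A_{\mathrm e}.
\]
The same computation gives $\mathsf J A_{\mathrm o}\mathsf J=\tfrac12(\mathsf J A\mathsf J-A)=-A_{\mathrm o}$.

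\emph{Step 2 (uniqueness).} Suppose $A=B+C=B'+C'$, where $B,B'$ are even ($\mathsf J B\mathsf J=B$) and $C,C'$ are odd ($\mathsf J C\mathsf J=-C$). Set $D:=B-B'=C'-C$. Then $D$ is both even and odd, so $D=\mathsf J D\mathsf J=-D$, which forces $D=0$. Equivalently, conjugating $A=B+C$ by $\mathsf J$ gives $\mathsf J A\mathsf J=B-C$. Solving this together with $A=B+C$ recovers exactly the stated formulas $B=A_{\mathrm e}$ and $C=A_{\mathrm o}$, so any decomposition of the required type must be the given one.

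There is no genuine obstacle. The only point needing care is to state the ground field: the argument divides by $2$, which is harmless over $\C$. One may also remark that the decomposition is the spectral splitting of the involution $\mathrm{Ad}_{\mathsf J}$ on the finite-dimensional operator space $\mathrm{End}(\cJ_N)$. Since $\mathrm{Ad}_{\mathsf J}^2=\id$, the maps $\tfrac12(\id\pm\mathrm{Ad}_{\mathsf J})$ are complementary idempotents, and this gives the direct-sum structure in one line.
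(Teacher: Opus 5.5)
Your proof is correct and follows the paper's argument: existence by direct substitution using $\mathsf J^2=\id$, and uniqueness by applying the parity projectors $\tfrac12(\id\pm\mathrm{Ad}_{\mathsf J})$ to a competing decomposition, which is the same as your conjugate-and-solve step. You are also right that self-adjointness of $\mathsf J$ is not used here; it only becomes load-bearing in the cut--adjoint bi-grading.
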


\begin{proof}
Direct substitution proves the parity relations and $A=A_{\mathrm e}+A_{\mathrm o}$.  If $A=B_{\mathrm e}+B_{\mathrm o}$ is another such decomposition, applying the two parity projectors gives $B_{\mathrm e}=A_{\mathrm e}$ and $B_{\mathrm o}=A_{\mathrm o}$.
\end{proof}

\subsection{Cut loop and commutator curvature}

Set $U_t=\e^{t\mathbb G}$ and define
\[
\mathscr H_{\mathsf J}(t)
:=\mathsf J U_t\mathsf J U_t.
\]

\begin{theorem}[Cut-loop expansion]
For sufficiently small $t$, the principal logarithm is defined and
\begin{equation}
\log\mathscr H_{\mathsf J}(t)
=2t\mathbb G_{\mathrm e}
+t^2[\mathbb G_{\mathrm e},\mathbb G_{\mathrm o}]
+O(t^3).
\label{eq:cut-loop-expansion}
\end{equation}
Thus the cut-even generator is the first seam-preserving term, while
\[
\cR_{\mathrm{seam}}
:=[\mathbb G_{\mathrm e},\mathbb G_{\mathrm o}]
\]
is the first noncommutative cut-loop curvature coefficient.
\end{theorem}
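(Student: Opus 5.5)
The plan is to rewrite the cut loop $\mathscr H_{\mathsf J}(t)$ as a product of two exponentials and then apply the Baker--Campbell--Hausdorff expansion already used in the Finite flow-generated jet tower theorem (equation \eqref{eq:jet-bch}).

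First, since $\mathsf J$ is a unitary involution, $\mathsf J=\mathsf J^{-1}$ by \eqref{eq:selfadjoint-cut}. Conjugation by $\mathsf J$ is therefore an algebra automorphism, and it commutes with the exponential series, which converges absolutely on the finite-dimensional carrier $\cJ_N$. This gives
\[
\mathsf J U_t\mathsf J=\e^{t\,\mathsf J\mathbb G\mathsf J}.
\]
By the Unique cut grading theorem, $\mathsf J\mathbb G\mathsf J=\mathbb G_{\mathrm e}-\mathbb G_{\mathrm o}$ and $\mathbb G=\mathbb G_{\mathrm e}+\mathbb G_{\mathrm o}$. Hence
\[
\mathscr H_{\mathsf J}(t)=\e^{tX}\e^{tY},
\qquad
X:=\mathbb G_{\mathrm e}-\mathbb G_{\mathrm o},
\quad
Y:=\mathbb G_{\mathrm e}+\mathbb G_{\mathrm o}.
\]

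Second, I would justify that the principal logarithm is defined. The map $t\mapsto\mathscr H_{\mathsf J}(t)$ is entire, with $\mathscr H_{\mathsf J}(0)=\id$ and $\norm{\mathscr H_{\mathsf J}(t)-\id}\le \e^{2|t|\norm{\mathbb G}}-1$. For $|t|<\log 2/(2\norm{\mathbb G})$ this norm is less than $1$, so the series $\log(\id+W)=\sum_{k\ge1}(-1)^{k+1}W^k/k$ converges. It agrees with the principal logarithm, since the spectrum of $\mathscr H_{\mathsf J}(t)$ then lies in the disc of radius $1$ about $1$, which contains no points of $(-\infty,0]$. On this disc the BCH series converges, and its truncation after second order gives
\[
\log(\e^{tX}\e^{tY})=t(X+Y)+\frac{t^2}{2}[X,Y]+O(t^3),
\]
with the remainder bounded by $C|t|^3$ for a constant $C$ depending on $\norm{\mathbb G}$.

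Third comes the algebra. We have $X+Y=2\mathbb G_{\mathrm e}$. By bilinearity and antisymmetry of the commutator, together with $[\mathbb G_{\mathrm e},\mathbb G_{\mathrm e}]=[\mathbb G_{\mathrm o},\mathbb G_{\mathrm o}]=0$,
\[
[X,Y]=[\mathbb G_{\mathrm e},\mathbb G_{\mathrm o}]-[\mathbb G_{\mathrm o},\mathbb G_{\mathrm e}]=2[\mathbb G_{\mathrm e},\mathbb G_{\mathrm o}].
\]
Substituting into the expansion yields \eqref{eq:cut-loop-expansion}.

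The identification of $2t\mathbb G_{\mathrm e}$ as the seam-preserving term is then immediate, because $\mathsf J\mathbb G_{\mathrm e}\mathsf J=\mathbb G_{\mathrm e}$. For the second-order term, $[\mathbb G_{\mathrm e},\mathbb G_{\mathrm o}]$ is cut-odd: conjugating by $\mathsf J$ sends it to $[\mathbb G_{\mathrm e},-\mathbb G_{\mathrm o}]=-[\mathbb G_{\mathrm e},\mathbb G_{\mathrm o}]$. Moreover, it vanishes exactly when the two sectors commute, which justifies calling $\cR_{\mathrm{seam}}$ the first noncommutative cut-loop curvature coefficient. The only step requiring any care is the convergence of the logarithm and of the BCH series with a uniform $O(t^3)$ remainder, and in finite dimensions this follows from the norm estimate in the second step.
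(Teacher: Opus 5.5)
Your proof is correct and follows the paper's route. Like the paper, you use $\mathsf J\mathbb G\mathsf J=\mathbb G_{\mathrm e}-\mathbb G_{\mathrm o}$ to write $\mathscr H_{\mathsf J}(t)=\e^{t(\mathbb G_{\mathrm e}-\mathbb G_{\mathrm o})}\e^{t(\mathbb G_{\mathrm e}+\mathbb G_{\mathrm o})}$, apply BCH, and compute $\tfrac12[X,Y]=[\mathbb G_{\mathrm e},\mathbb G_{\mathrm o}]$. Your explicit bound $|t|<\log 2/(2\norm{\mathbb G})$ for the principal logarithm and the BCH remainder makes the paper's ``sufficiently small $t$'' concrete without changing the argument.
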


\begin{proof}
Since
\[
\mathsf J\mathbb G\mathsf J
=\mathbb G_{\mathrm e}-\mathbb G_{\mathrm o},
\]
one has
\[
\mathscr H_{\mathsf J}(t)
=\e^{t(\mathbb G_{\mathrm e}-\mathbb G_{\mathrm o})}
 \e^{t(\mathbb G_{\mathrm e}+\mathbb G_{\mathrm o})}.
\]
The BCH formula, with the remainder taken in operator norm, gives the sum $2t\mathbb G_{\mathrm e}$ and one half of the commutator
\[
\frac12 t^2
[\mathbb G_{\mathrm e}-\mathbb G_{\mathrm o},
 \mathbb G_{\mathrm e}+\mathbb G_{\mathrm o}]
=t^2[\mathbb G_{\mathrm e},\mathbb G_{\mathrm o}],
\]
which proves \eqref{eq:cut-loop-expansion}.
\end{proof}

\begin{corollary}[Bilateral reconstruction for a cut-odd generator]
If $\mathsf J\mathbb G\mathsf J=-\mathbb G$, then
\[
\mathsf J U_t\mathsf J=U_{-t}.
\]
With
\[
E_t=\frac12(U_t+U_{-t}),
\qquad
O_t=\frac12(U_t-U_{-t}),
\]
one has
\[
U_t=E_t+O_t,
\qquad
U_{-t}=E_t-O_t,
\qquad
E_t^2-O_t^2=\id.
\]
\end{corollary}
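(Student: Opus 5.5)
The plan is to first prove $\mathsf J U_t\mathsf J=U_{-t}$ and then read off the three algebraic identities from that relation and the group law of the one-parameter family $U_t=\e^{t\mathbb G}$. No BCH or smallness argument is needed: on the finite-dimensional carrier $\cJ_N$ the exponential series converges absolutely in operator norm for every $t$.

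For the conjugation identity, I use $\mathsf J^{-1}=\mathsf J$ from \eqref{eq:selfadjoint-cut} to write $\mathsf J\mathbb G^k\mathsf J=(\mathsf J\mathbb G\mathsf J)^k$. This holds because the interior factors $\mathsf J\mathsf J=\id$ telescope. The hypothesis $\mathsf J\mathbb G\mathsf J=-\mathbb G$ then gives $\mathsf J\mathbb G^k\mathsf J=(-\mathbb G)^k$. Conjugation by $\mathsf J$ is linear and norm-continuous, so it can be applied termwise to the exponential series, which yields
\[
\mathsf J U_t\mathsf J=\sum_{k\ge0}\frac{t^k}{k!}(-\mathbb G)^k=\e^{-t\mathbb G}=U_{-t}.
\]
An equivalent route uses the unique cut grading theorem: the hypothesis says $\mathbb G_{\mathrm e}=0$ and $\mathbb G=\mathbb G_{\mathrm o}$, and one can then proceed as above.

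For the decomposition, $U_t=E_t+O_t$ and $U_{-t}=E_t-O_t$ follow immediately by adding and subtracting the definitions of $E_t$ and $O_t$. For the hyperbolic identity, I note that $U_t$ and $U_{-t}$ commute, since both are functions of $\mathbb G$, and that $U_tU_{-t}=\e^{t\mathbb G}\e^{-t\mathbb G}=\id$ because $t\mathbb G$ and $-t\mathbb G$ commute. Using commutativity, $E_t^2-O_t^2=(E_t+O_t)(E_t-O_t)=U_tU_{-t}=\id$. As a check, the direct expansion gives $\tfrac14\bigl[(U_t+U_{-t})^2-(U_t-U_{-t})^2\bigr]=\tfrac14(2U_tU_{-t}+2U_{-t}U_t)=\id$.

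I expect no real obstacle. The only point needing care is justifying termwise conjugation of the exponential series and the commutativity of $U_t$ with $U_{-t}$, and both follow from finite dimensionality and the fact that each is a power series in the single operator $\mathbb G$. I would also remark that $E_t$ is cut-even and $O_t$ is cut-odd in the sense of the unique cut grading theorem, since $\mathsf J E_t\mathsf J=E_t$ and $\mathsf J O_t\mathsf J=-O_t$ by the conjugation identity. This shows the pair $(E_t,O_t)$ is exactly the cut decomposition of $U_t$, which justifies the name bilateral reconstruction.
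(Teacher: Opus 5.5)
Your proof is correct and follows the paper's argument. You conjugate the exponential series termwise to get $\mathsf J U_t\mathsf J=U_{-t}$, read off the decomposition from the definitions, and obtain $E_t^2-O_t^2=\tfrac12(U_tU_{-t}+U_{-t}U_t)=\id$ exactly as the paper's expansion does; your extra remark that $(E_t,O_t)$ is the unique cut grading of $U_t$ is also correct.
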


\begin{proof}
Conjugation of the exponential gives the first identity.  The reconstruction formulas are definitions, and
\[
E_t^2-O_t^2
=\frac14\bigl[(U_t+U_{-t})^2-(U_t-U_{-t})^2\bigr]
=\frac12(U_tU_{-t}+U_{-t}U_t)=\id.
\]
\end{proof}

\subsection{Cut and adjoint bi-grading}

\begin{theorem}[Cut--adjoint bi-grading]
For $\sigma,\tau\in\{+1,-1\}$, define
\begin{equation}
A^{\sigma,\tau}
:=\frac14\left(
A+\sigma\mathsf J A\mathsf J
+\tau A^*
+\sigma\tau\mathsf J A^*\mathsf J
\right).
\label{eq:bigrading}
\end{equation}
Then
\[
A=\sum_{\sigma,\tau}A^{\sigma,\tau},
\qquad
\mathsf J A^{\sigma,\tau}\mathsf J=\sigma A^{\sigma,\tau},
\qquad
(A^{\sigma,\tau})^*=\tau A^{\sigma,\tau}.
\]
The four sectors are pairwise orthogonal in the real Hilbert--Schmidt inner product
\[
(A,B)_{\mathrm{HS},\R}:=\operatorname{Re}\tr(A^*B).
\]
\end{theorem}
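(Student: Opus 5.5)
The plan is to realize the four sectors as images of four commuting idempotent projectors on the real vector space of operators on $\cJ_N$, and to read every claim off that structure. Define two real-linear involutions on $\operatorname{End}(\cJ_N)$: $\alpha(A)=\mathsf J A\mathsf J$ and $\beta(A)=A^*$. Both are involutions: $\mathsf J^2=\id$ gives $\alpha^2=\mathrm{id}$, and $(A^*)^*=A$ gives $\beta^2=\mathrm{id}$. They commute, because $(\mathsf J A\mathsf J)^*=\mathsf J^*A^*\mathsf J^*=\mathsf J A^*\mathsf J$ by the self-adjointness in \eqref{eq:selfadjoint-cut}. The map $\beta$ is only conjugate-linear, so I work over $\R$ throughout. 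This is also why the statement uses the real Hilbert--Schmidt form.

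With these maps, \eqref{eq:bigrading} reads $A^{\sigma,\tau}=P_\sigma Q_\tau A$, where $P_\sigma=\tfrac12(\mathrm{id}+\sigma\alpha)$ and $Q_\tau=\tfrac12(\mathrm{id}+\tau\beta)$. Expanding the composition gives exactly the four terms $\tfrac14(A+\sigma\alpha A+\tau\beta A+\sigma\tau\alpha\beta A)$, so this rewriting is a one-line check.

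The three algebraic claims then follow directly.

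\begin{enumerate}[label=(\roman*)]
\item \emph{Decomposition.} Since $\sum_\sigma P_\sigma=\mathrm{id}$ and $\sum_\tau Q_\tau=\mathrm{id}$, summing over $\sigma,\tau$ gives $A=\sum_{\sigma,\tau}A^{\sigma,\tau}$.
\item \emph{Cut parity.} From $\alpha P_\sigma=\sigma P_\sigma$ and the commutation of $\alpha$ with $Q_\tau$, we get $\mathsf J A^{\sigma,\tau}\mathsf J=\sigma A^{\sigma,\tau}$. This is the same computation as in the unique cut grading theorem.
\item \emph{Adjoint parity.} Similarly, $\beta Q_\tau=\tau Q_\tau$ and the commutation of $\beta$ with $P_\sigma$ give $(A^{\sigma,\tau})^*=\tau A^{\sigma,\tau}$.
\end{enumerate}

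One point needs care in (iii): $\beta$ must commute with $P_\sigma$ as a real-linear map. It does, because $\sigma$ is real and $\beta$ commutes with $\alpha$.

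For orthogonality, I will show that $\alpha$ and $\beta$ are both isometries that are self-adjoint for $(\cdot,\cdot)_{\mathrm{HS},\R}$. Then each $P_\sigma$, $Q_\tau$ is a real orthogonal projection, and the products $P_\sigma Q_\tau$ are mutually orthogonal projections.

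\begin{itemize}
\item For $\alpha$: $\tr\bigl((\mathsf J A\mathsf J)^*B\bigr)=\tr(\mathsf J A^*\mathsf J B)=\tr(A^*\mathsf J B\mathsf J)$ by cyclicity, so $(\alpha A,B)=(A,\alpha B)$. This holds even before taking real parts.
\item For $\beta$: $\tr(AB)=\tr(BA)$, and $\operatorname{Re}\tr(X)=\operatorname{Re}\tr(X^*)$ give $\operatorname{Re}\tr(AB)=\operatorname{Re}\tr(B^*A^*)$. This says $(\beta A,B)=(A,\beta B)$.
\end{itemize}

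The real-part step is the only place the real structure is really used, and it is the main obstacle: over $\C$ the adjoint is not complex-linear and the claim would fail. I would therefore check this identity carefully rather than cite it.

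Given these two facts, orthogonality of distinct sectors can be shown by a direct elementwise computation, avoiding projector formalism. Take $X=A^{\sigma,\tau}$ and $Y=B^{\sigma',\tau'}$ with $\sigma\ne\sigma'$. Then $(X,Y)=(\alpha X,\alpha Y)=\sigma\sigma'(X,Y)=-(X,Y)$, so $(X,Y)=0$; the first equality is the isometry property of $\alpha$, which follows from self-adjointness together with $\alpha^2=\mathrm{id}$. The case $\tau\ne\tau'$ is identical with $\beta$ in place of $\alpha$. This covers all pairs of distinct sectors and completes the argument.
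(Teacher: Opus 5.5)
Your proposal is correct and follows the paper's proof. The paper likewise treats $A\mapsto\mathsf J A\mathsf J$ and $A\mapsto A^*$ as commuting self-adjoint involutions on the real operator space with $(\cdot,\cdot)_{\mathrm{HS},\R}$, and reads \eqref{eq:bigrading} as the product of their real spectral projectors; you just supply the verifications it leaves implicit, namely commutation via $\mathsf J^*=\mathsf J$ and $\operatorname{Re}\tr(AB)=\operatorname{Re}\tr(A^*B^*)$.
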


\begin{proof}
Viewed on the underlying real vector space of operators with inner product $(\cdot,\cdot)_{\mathrm{HS},\R}$, cut conjugation and adjoint are commuting self-adjoint involutions.  Formula \eqref{eq:bigrading} is the product of their real spectral projectors.  The stated parity properties and real orthogonality follow.  The real qualification is essential: self-adjoint and skew-adjoint operators need not be orthogonal for the complex-valued Hilbert--Schmidt pairing.
\end{proof}

\begin{remark}[Required cut hypothesis]
The self-adjointness in \eqref{eq:selfadjoint-cut} is essential.  An arbitrary algebraic involution $\mathsf J^2=\id$ need not commute with the adjoint operation, and then \eqref{eq:bigrading} need not have the claimed adjoint parity.
\end{remark}

\subsection{Closure and the clock-free Eye}

Let $K=K^*$ and $U_t=\e^{-itK}$.  A single sampling time can contain stroboscopic fixed modes with $\e^{-it_0\lambda}=1$ even when $\lambda\ne0$.  The all-time average removes this ambiguity; this is the continuous mean-ergodic projection in spectral form \cite{ReedSimon}.

\begin{theorem}[Clock-free Eye]
The strong limit
\begin{equation}
P_{\Eye}
=\operatorname*{s-lim}_{T\to\infty}
\frac1T\int_0^T\e^{-itK}\,\dd t
=E_K(\{0\})
\label{eq:eye}
\end{equation}
exists and is the spectral projection onto $\Ker K$.
\end{theorem}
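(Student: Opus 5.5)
The plan is to prove the mean ergodic theorem for the unitary group through the spectral theorem for the self-adjoint operator $K$. The carrier may be a general Hilbert space $\cH$, not only the finite $\cJ_N$, and $K$ may be unbounded. Let $E_K$ be the projection-valued spectral measure of $K$, so $\e^{-itK}=\int_\R\e^{-it\lambda}\,\dd E_K(\lambda)$. The map $t\mapsto\e^{-itK}x$ is strongly continuous by Stone's theorem, so the Cesàro average $M_T:=\frac1T\int_0^T\e^{-itK}\,\dd t$ is a well-defined Bochner (or strong Riemann) integral with $\norm{M_T}\le1$.

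The first step is to move the time integral inside the spectral integral by Fubini. This is justified because for each $x$ the scalar measure $\mu_x(\cdot)=\ip{E_K(\cdot)x}{x}$ is finite and the integrand is bounded. It gives
\[
M_T=\int_\R m_T(\lambda)\,\dd E_K(\lambda),
\qquad
m_T(\lambda)=\frac1T\int_0^T\e^{-it\lambda}\,\dd t
=\begin{cases}1,&\lambda=0,\\[2pt] \dfrac{1-\e^{-iT\lambda}}{iT\lambda},&\lambda\ne0.\end{cases}
\]
The functions satisfy $|m_T|\le1$, and $m_T(\lambda)\to\chi_{\{0\}}(\lambda)$ pointwise as $T\to\infty$, since $|m_T(\lambda)|\le 2/(T|\lambda|)$ for $\lambda\ne0$.

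The second step is the strong convergence. By the functional calculus,
\[
\norm{(M_T-E_K(\{0\}))x}^2=\int_\R|m_T(\lambda)-\chi_{\{0\}}(\lambda)|^2\,\dd\mu_x(\lambda),
\]
and dominated convergence with dominating function $4$, integrable because $\mu_x(\R)=\norm x^2$, sends this to zero. Hence the strong limit exists and equals $E_K(\{0\})$.

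The last step is the identification $\Ran E_K(\{0\})=\Ker K$, which is standard. If $Kx=0$, then $\int\lambda^2\,\dd\mu_x=0$, so $\mu_x$ is concentrated at $0$ and $E_K(\{0\})x=x$. Conversely, if $E_K(\{0\})x=x$, then $x\in D(K)$ and $Kx=\int\lambda\,\dd E_K\,x=0$.

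The main obstacle is only bookkeeping for unbounded $K$: the Fubini interchange and the domain statement in the last step. The interchange is cleanest done weakly, using $\ip{M_Tx}{y}=\int m_T\,\dd\mu_{x,y}$, and then upgraded via the norm formula above. Alternatively, one can avoid the spectral theorem with the elementary von Neumann argument. On $\Ker K$ the average is the identity. On the range of $\e^{-isK}-\id$, the average telescopes to $O(1/T)$. The closure of the span of these ranges is $(\Ker K)^\perp$, and uniform boundedness of $M_T$ then extends the convergence to all of $\cH$. I would give the spectral proof and mention this route as a check. In the finite-dimensional case $\cJ_N$, the argument reduces to diagonalizing $K$, and each nonzero eigenvalue averages to zero, as the statement's remark about stroboscopic modes indicates.
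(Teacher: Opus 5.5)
Your proposal is correct and follows the paper's proof: both write the Ces\`aro average via the spectral theorem as $\int_\R m_T(\lambda)\,\dd E_K(\lambda)$ with $|m_T|\le1$ and $m_T\to\chi_{\{0\}}$ pointwise, and conclude by dominated convergence for the spectral measure. You additionally justify the Fubini interchange and verify $\Ran E_K(\{0\})=\Ker K$, both of which the paper leaves implicit.
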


\begin{proof}
For $x\in\cH$, the spectral theorem gives
\[
\frac1T\int_0^T\e^{-itK}x\,\dd t
=\int_\R
\left(\frac1T\int_0^T\e^{-it\lambda}\,\dd t\right)
\dd E_K(\lambda)x.
\]
The scalar factor tends to $1$ at $\lambda=0$ and to $0$ otherwise, while its modulus is at most one.  Dominated convergence for the spectral measure gives the strong limit $E_K(\{0\})x$.
\end{proof}

\subsection{Representation invariance}

\begin{proposition}[Unitary invariance]
Let $W=\bigoplus_{n=0}^N W_n:\cJ_N\to\cJ_N'$ be a grade-preserving unitary.  Transport the inclusions, projections, seed, and operators by
\[
\iota_n'=W\iota_nW_n^*,\quad
\pi_n'=W_n\pi_nW^*,\quad
\Lambda'=W_0\Lambda,
\]
\[
\mathbb R'=W\mathbb RW^*,\quad
\mathbb T'=W\mathbb TW^*,\quad
\mathsf J'=W\mathsf JW^*,\quad
K'=WKW^*,
\]
and transport observations and scalar targets by
\[
\cA'=\cA W^*,\qquad
\Pi'=\Pi W^*,\qquad
L'=LW^*.
\]
Then cut grading, finite jet coefficients, cut-loop curvature, the Eye projection, blind-quotient dimension, and scalar decoder burden are preserved.
\end{proposition}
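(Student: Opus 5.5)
The plan is to verify each preserved quantity by direct conjugation, using only $W^*W=\id$, $WW^*=\id$, and the grade-preservation identities $W\iota_n=\iota_n'W_n$ and $\pi_n'W=W_n\pi_n$. These identities follow from the transport formulas, since $W$ is block diagonal with blocks $W_n$, so $\iota_n W_n^*W_n=\iota_n$ and $W_n\pi_n W^*W=W_n\pi_n$. With them, the cut grading is the easiest case. For $A'=WAW^*$ one has $\mathsf J'A'\mathsf J'=W\mathsf JA\mathsf JW^*$, hence $(A')_{\mathrm e}=W A_{\mathrm e}W^*$ and $(A')_{\mathrm o}=WA_{\mathrm o}W^*$. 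The same computation, together with $(WAW^*)^*=WA^*W^*$, transports the bi-grading of the Cut--adjoint bi-grading theorem. Note that $\mathsf J'$ is again a self-adjoint unitary involution and $\mathbb R'$ is again a raiser, so all earlier theorems apply on the primed side.

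For the jet coefficients, I would first note $\mathbb G'=W\mathbb GW^*$, hence $\e^{t\mathbb G'}=W\e^{t\mathbb G}W^*$ and $\e^{t\mathbb T'}=W\e^{t\mathbb T}W^*$ by the power series. Then
\[
\pi_n'\e^{t\mathbb G'}\iota_0'\Lambda'=W_n\pi_n\e^{t\mathbb G}\iota_0\Lambda ,
\]
so every layer coefficient of the Finite flow-generated jet tower is carried isometrically by $W_n$. The same substitution gives $\mathbb R'^n\iota_0'\Lambda'=W\mathbb R^n\iota_0\Lambda$, so the holonomic condition \eqref{eq:holonomic-raiser} is preserved, with $\Sym\nabla^n\Lambda$ replaced by its $W_n$-image.

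For the cut-loop curvature, I would write $\mathscr H'_{\mathsf J'}(t)=W\mathscr H_{\mathsf J}(t)W^*$. Unitary conjugation commutes with the principal logarithm because it preserves spectra and the holomorphic functional calculus. Hence $\cR'_{\mathrm{seam}}=W\cR_{\mathrm{seam}}W^*$, and alternatively this follows from $[\mathbb G_{\mathrm e}',\mathbb G_{\mathrm o}']=W[\mathbb G_{\mathrm e},\mathbb G_{\mathrm o}]W^*$. For the Eye, $\e^{-itK'}=W\e^{-itK}W^*$, so the Cesàro averages conjugate. Since $x\mapsto Wx$ is isometric, strong limits conjugate as well, and $P'_{\Eye}=WP_{\Eye}W^*$, which is also $E_{K'}(\{0\})$ with $\Ker K'=W\Ker K$.

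For the Recognition quantities, $\Ker\cA'=W\Ker\cA$ and $\Ker\Pi'=W\Ker\Pi$. Therefore $N'=WN$ and $N_0'=WN_0$, and $W$ induces an isometric isomorphism $N/N_0\to N'/N_0'$, so the dimension of the blind quotient is preserved. For the decoder burden, the substitution $x=W^*x'$ is a bijection of the carrier with $\cA'x'=\cA x$ and $L'x'=Lx$, so the supremum in \eqref{eq:decoder-burden} is unchanged. The infinite-burden convention is preserved because vanishing on the kernel transfers. Equivalently, $c_{L'}=c_L$ in the Minimum decoder theorem. I expect no real obstacle. The only point needing care is the logarithm step, where one must check that the neighbourhood of small $t$ is unchanged; it is, because the norms satisfy $\norm{\mathscr H'-\id}=\norm{\mathscr H-\id}$.
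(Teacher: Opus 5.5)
Your proposal is correct and follows essentially the same route as the paper. Operator statements are transported by conjugation with $W$, and the jet layers by $\pi_n'\e^{t\mathbb G'}\iota_0'\Lambda'=W_n\pi_n\e^{t\mathbb G}\iota_0\Lambda$. The blind quotient is handled by $\Ker\cA'=W\Ker\cA$, and the decoder burden by the substitution $x'=Wx$ in its defining supremum. You also write out checks the paper leaves implicit: the principal logarithm and its small-$t$ domain, the strong limit defining the Eye, and $c_{L'}=c_L$.
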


\begin{proof}
All operator identities are preserved by unitary conjugation.  Grade preservation gives
\[
\pi_n'\e^{t\mathbb G'}\iota_0'\Lambda'
=W_n\pi_n\e^{t\mathbb G}\iota_0\Lambda.
\]
Kernels are carried bijectively by $W$ and quotient dimensions are unchanged.  Spectral projections obey
\[
E_{K'}(\{0\})=WE_K(\{0\})W^*.
\]
The numerator and denominator in \eqref{eq:decoder-burden} are unchanged after the substitution $x'=Wx$.
\end{proof}

%% file: sections/03a_thermodynamic_cut_realizations.tex
\subsection{Concrete thermodynamic realizations of the cut}
\label{sec:thermodynamic-cut-realizations}

The algebraic cut $\mathsf J$ is not claimed to be a universal thermodynamic operation.  It represents a declared involutive comparison built into the experimental or constitutive protocol.  Three recurring realizations are canonical after the response channels have been nondimensionalized.

\begin{proposition}[Three admissible thermodynamic cuts]
Let $H$ be a Hilbert response carrier.

\begin{enumerate}[label=(\roman*)]
\item On a two-copy constraint carrier $H\oplus H$, the constraint-exchange cut
\begin{equation}
\mathsf J_{\mathrm{ex}}(u,v)=(v,u)
\label{eq:constraint-exchange-cut}
\end{equation}
compares two declared constraint protocols, for example isobaric versus isochoric caloric channels or isentropic versus isothermal mechanical channels.

\item On a finite jet carrier $\cJ_N=\bigoplus_{n=0}^NH_n$, the jet-parity cut
\begin{equation}
\mathsf J_{\mathrm{jet}}|_{H_n}=(-1)^n\id_{H_n}
\label{eq:jet-parity-cut}
\end{equation}
separates even and odd response order.  Reversal of an analytic process parameter $t\mapsto-t$ acts by this cut on the formal response jet.

\item On a pre/post carrier $H_-\oplus H_+$ with the two copies identified by a declared calibration, the comparison cut
\begin{equation}
\mathsf J_{\mathrm{pp}}(u_-,u_+)=(u_+,u_-)
\label{eq:pre-post-cut}
\end{equation}
exchanges the two sides of a cycle, interface, or intervention.
\end{enumerate}

Each of these operators satisfies
\[
\mathsf J^*=\mathsf J=\mathsf J^{-1}.
\]
Therefore each is an admissible cut for the grading, BCH loop, and cut--adjoint theorems.
\end{proposition}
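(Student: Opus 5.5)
The plan is to verify the three identities $\mathsf J^*=\mathsf J$, $\mathsf J^2=\id$ and $\mathsf J^*\mathsf J=\id$ directly for each of the three operators.  Once these hold, the admissibility conclusion is immediate: the Unique cut grading theorem, the Cut-loop expansion theorem, and the Cut--adjoint bi-grading theorem assume only \eqref{eq:selfadjoint-cut}.  Nothing about the thermodynamic interpretation is needed.  For (i) and (iii) that interpretation is carried by the declared identification of the two copies.  For (ii) it concerns process reversal and is discussed after the algebra.

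For the swap cuts \eqref{eq:constraint-exchange-cut} and \eqref{eq:pre-post-cut}, equip $H\oplus H$ (respectively $H_-\oplus H_+$, with $H_\pm$ identified with a common $H$ through the declared calibration) with the direct-sum inner product.  Compute
\[
\ip{\mathsf J(u,v)}{(u',v')}=\ip{v}{u'}+\ip{u}{v'}=\ip{(u,v)}{\mathsf J(u',v')},
\]
which gives self-adjointness.  Applying the swap twice returns $(u,v)$, so $\mathsf J^2=\id$.  Hence $\mathsf J^{-1}=\mathsf J=\mathsf J^*$, and $\mathsf J$ is unitary.

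One point needs care in (iii).  If the calibration identifying $H_-$ with $H_+$ is a unitary $C:H_-\to H_+$, the cut should be read as $(u_-,u_+)\mapsto(C^*u_+,Cu_-)$.  The same computation then gives self-adjointness and involutivity, using $C^*C=\id$ and $CC^*=\id$.  If $C$ were only invertible and not unitary, self-adjointness would fail.  This is the step I expect to be the main obstacle, and I would handle it by stating explicitly that the declared calibration is unitary for the chosen nondimensional inner products.

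For the jet-parity cut \eqref{eq:jet-parity-cut}, use the fact that the grading $\cJ_N=\bigoplus_n H_n$ is an orthogonal direct sum of Hilbert spaces.  Then $\mathsf J_{\mathrm{jet}}=\sum_n(-1)^n\iota_n\pi_n$ is a real linear combination of mutually orthogonal self-adjoint projections, so it is self-adjoint.  Its square is $\sum_n\iota_n\pi_n=\id$.

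For the remark on process reversal, take an analytic curve $\gamma$ and its reversal $\tilde\gamma(t)=\gamma(-t)$.  By the chain rule, the $n$-th Taylor coefficient of $\Lambda\circ\tilde\gamma$ at $0$ is $(-1)^n$ times that of $\Lambda\circ\gamma$.  This is exactly the action of $\mathsf J_{\mathrm{jet}}$ on the formal jet.

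Having \eqref{eq:selfadjoint-cut} in each case, invoke the three cited theorems verbatim to conclude admissibility.
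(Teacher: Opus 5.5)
Your proposal is correct and follows the paper's route: the paper also verifies directly that the exchange cuts, written as the block matrix $\begin{pmatrix}0&\id\\ \id&0\end{pmatrix}$, and the diagonal $\pm1$ jet-parity operator are self-adjoint involutions, and it obtains the process-reversal reading from the $(-1)^n$ sign change of the $n$th Taylor layer under $t\mapsto-t$. Your added remark that the pre/post calibration $C$ must be unitary for $\mathsf J_{\mathrm{pp}}$ to be self-adjoint is a correct sharpening; the paper assumes this implicitly when it writes the pre/post swap as that block matrix.
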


\begin{proof}
The exchange operators are represented by the block matrix
\[
\begin{pmatrix}0&\id\\ \id&0\end{pmatrix},
\]
which is self-adjoint and squares to the identity.  The jet-parity operator is diagonal with real diagonal entries $\pm1$, so it is also a self-adjoint unitary involution.  The formal Taylor expansion
\[
\sum_{n=0}^N\frac{t^n}{n!}\xi_n
\]
changes under $t\mapsto-t$ by multiplying its $n$th layer by $(-1)^n$, proving the process-reversal interpretation.
\end{proof}

\begin{corollary}[Meaning of cut-even and cut-odd sectors]
For $\mathsf J_{\mathrm{ex}}$ or $\mathsf J_{\mathrm{pp}}$, the cut-even sector contains response operators invariant under exchange, while the cut-odd sector records protocol asymmetry.  For $\mathsf J_{\mathrm{jet}}$, the cut-even sector preserves derivative parity and the cut-odd sector reverses it.  In every case the commutator
\[
[\mathbb G_{\mathrm e},\mathbb G_{\mathrm o}]
\]
measures failure of the symmetric and antisymmetric protocol sectors to evolve independently.
\end{corollary}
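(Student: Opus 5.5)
The statement immediately above is the corollary on the meaning of the cut-even and cut-odd sectors. I will derive it entirely from the unique cut grading theorem and the cut-loop expansion, specialised to the three cuts of the preceding proposition. Since that proposition already verifies $\mathsf J^*=\mathsf J=\mathsf J^{-1}$ for $\mathsf J_{\mathrm{ex}}$, $\mathsf J_{\mathrm{jet}}$ and $\mathsf J_{\mathrm{pp}}$, the grading $A=A_{\mathrm e}+A_{\mathrm o}$ is available in each case. What remains is to give concrete content to the conditions $\mathsf J A\mathsf J=\pm A$.

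For the exchange cuts, I will write an operator on $H\oplus H$ as a block matrix $\begin{pmatrix}a&b\\ c&d\end{pmatrix}$ and compute
\[
\mathsf J A\mathsf J=\begin{pmatrix}d&c\\ b&a\end{pmatrix}.
\]
From this, the even part has blocks $\tfrac12(a+d)$ and $\tfrac12(b+c)$ in symmetric position, so it is invariant under swapping the two protocols. The odd part has blocks $\tfrac12(a-d)$ and $\tfrac12(b-c)$ with opposite signs, so it records the asymmetry between the two declared protocols. The pre/post cut is the same computation with $H_\pm$ identified through the declared calibration.

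For the jet-parity cut, I will write $A_{mn}=\pi_mA\iota_n$, so that $(\mathsf J A\mathsf J)_{mn}=(-1)^{m+n}A_{mn}$. The even sector then consists of the blocks with $m-n$ even, which preserve derivative parity, and the odd sector of the blocks with $m-n$ odd, which reverse it. In particular the raiser $\mathbb R$ is purely odd, and a grade-preserving $\mathbb T$ is purely even.

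Finally, I will apply the cut-loop expansion \eqref{eq:cut-loop-expansion}: the order-$t^2$ term in $\log\mathscr H_{\mathsf J}(t)$ is $[\mathbb G_{\mathrm e},\mathbb G_{\mathrm o}]$. If this commutator vanishes, the two exponentials in $\mathscr H_{\mathsf J}(t)=\e^{t(\mathbb G_{\mathrm e}-\mathbb G_{\mathrm o})}\e^{t(\mathbb G_{\mathrm e}+\mathbb G_{\mathrm o})}$ factor through commuting flows $\e^{t\mathbb G_{\mathrm e}}$ and $\e^{\pm t\mathbb G_{\mathrm o}}$, giving $\mathscr H_{\mathsf J}(t)=\e^{2t\mathbb G_{\mathrm e}}$ exactly. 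Otherwise the commutator is precisely the first obstruction to that factorisation, and this is the claimed interpretation.

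The main obstacle is that the corollary is partly interpretive: phrases such as ``evolve independently'' have to be pinned to a precise statement. I will take that statement to be the factorisation $\e^{t\mathbb G}=\e^{t\mathbb G_{\mathrm e}}\e^{t\mathbb G_{\mathrm o}}$ for all $t$, which holds if and only if $[\mathbb G_{\mathrm e},\mathbb G_{\mathrm o}]=0$. The ``if'' direction is immediate. For the converse, I will differentiate the factorisation twice at $t=0$, which isolates the commutator.
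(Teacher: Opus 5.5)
Your proposal is correct and follows the route the paper leaves implicit. The paper states this corollary without a separate proof, as a direct reading of the unique cut grading theorem and the cut-loop expansion applied to the three cuts of the preceding proposition. Your block computations for $\mathsf J_{\mathrm{ex}}$, $\mathsf J_{\mathrm{pp}}$ and $\mathsf J_{\mathrm{jet}}$, together with the criterion that $\e^{t\mathbb G}=\e^{t\mathbb G_{\mathrm e}}\e^{t\mathbb G_{\mathrm o}}$ for all $t$ $\iff$ $[\mathbb G_{\mathrm e},\mathbb G_{\mathrm o}]=0$, simply make that reading precise. For $\mathsf J_{\mathrm{jet}}$, your observation that $\mathbb G_{\mathrm e}=\mathbb T$ and $\mathbb G_{\mathrm o}=\mathbb R$ also identifies the commutator with the compatibility defect $[\mathbb T,\mathbb R]$.
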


\begin{remark}[Legendre maps are not automatically cuts]
A Legendre transform is a change of thermodynamic potential and chart, not by itself a self-adjoint involution on a fixed Hilbert carrier.  It becomes a cut only after the two chart representations are embedded in a common calibrated carrier and the resulting exchange operator is verified to satisfy $\mathsf J^*=\mathsf J=\mathsf J^{-1}$.  This distinction prevents chart notation from being mistaken for an operator identity.
\end{remark}

%% file: sections/04_capstone_and_examples.tex
\section{Thermodynamic Cut--Flow--Jet specialization}

Take the finite response carrier at a state $x$ to be
\[
\cJ_N(x)
=\bigoplus_{n=0}^N
\Sym^nT_x^*\cM\otimes E_{\Lambda,x},
\]
where $E_\Lambda$ is the direct-sum unit bundle of the response coordinates.  A connection $\nabla$ produces the holonomic jet seed.  A thermodynamic process field $X$ induces transport on each layer.

The exact factorization theorem applies when the process transport preserves the chosen connection.  In coordinates this is the condition that transport and covariant differentiation commute on the declared response section.  When they do not commute, $[\mathbb T_X,\mathbb R_\nabla]$ is a measurable compatibility defect; it must not be erased by calling every layer a derivative tower.

A cut $\mathsf J$ can represent a declared orientation reversal, constraint-sector exchange, or pre/post comparison.  It becomes part of the theorem only after:
\begin{enumerate}[label=(\roman*)]
\item the response components are placed in a common Hilbert carrier by lawful nondimensionalization;
\item $\mathsf J^*=\mathsf J=\mathsf J^{-1}$ is verified;
\item its action on the observation and target maps is declared.
\end{enumerate}
No universal physical cut is inferred from notation alone.

\begin{theorem}[Thermodynamic Recognition Capstone]
Let $U(S,V)$ define an admissible equilibrium chart, let $\Lambda$ be the six-component response section, and fix a finite jet order $N$.  Suppose:
\begin{enumerate}[label=(\roman*)]
\item $\mathbb R_\nabla$ is a holonomic nilpotent raiser on $\cJ_N(x)$;
\item $\mathbb T_X$ preserves the grading and $[\mathbb T_X,\mathbb R_\nabla]=0$;
\item $\mathsf J$ is a self-adjoint unitary involution on the nondimensionalized carrier;
\item $\cA_x:\cJ_N(x)\to Y$ and $\Pi_x:\cJ_N(x)\to Z$ are bounded;
\item $\omega$ is a declared response one-form and $\ell$ is prior lawful ledger data on the state space on which the cycle is evaluated.
\end{enumerate}
Then the following conclusions hold simultaneously:
\begin{enumerate}[label=(\alph*)]
\item the equilibrium response coordinates satisfy the four product identities and $\Gamma_c\Gamma_m=1$;
\item $\exp(t(\mathbb T_X+\mathbb R_\nabla))$ generates the finite covariant response jets with coefficients $t^n/n!$;
\item the canonical augmented observation has kernel
\[
\Ker\cA_x\cap\Ker\Pi_x,
\]
and the minimum finite supplemental-observation dimension is
\[
\dim\frac{\Ker\cA_x}{\Ker\cA_x\cap\Ker\Pi_x};
\]
\item every bounding cycle obeys the exact open-residue identity
\[
\oint_{\partial A}(\omega-\ell)=\int_A\dd(\omega-\ell);
\]
\item the cut gives unique even/odd and even/odd-adjoint sectors, and its loop has the expansion
\[
\log\!\left(\mathsf J\e^{t\mathbb G}\mathsf J\e^{t\mathbb G}\right)
=2t\mathbb G_{\mathrm e}
+t^2[\mathbb G_{\mathrm e},\mathbb G_{\mathrm o}]+O(t^3).
\]
\end{enumerate}
If, in addition, a self-adjoint closure generator $K$ is declared, its clock-free Eye is exactly $E_K(\{0\})$.  All conclusions are invariant under grade-preserving unitary changes of representation.
\end{theorem}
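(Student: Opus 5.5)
The capstone is an assembly theorem, so the plan is to verify each conclusion by invoking the earlier result whose hypotheses match the corresponding assumption. The only work is checking that each hypothesis is correctly instantiated on the thermodynamic carrier $\cJ_N(x)$. For (a), admissibility of $U(S,V)$ is exactly the hypothesis of the Response-coordinate theorem, which gives the four product identities \eqref{eq:product-identities}. The Equilibrium closure identity then gives $\Gamma_c\Gamma_m=1$ in \eqref{eq:equilibrium-invariant}. Here I will note that $C_P,C_V$ are finite and nonzero, and $K_S,K_T>0$, on an admissible chart, so that the ratios are defined. For (b), hypotheses (i)--(ii) are precisely those of the Finite flow-generated jet tower theorem with $\mathbb R=\mathbb R_\nabla$ and $\mathbb T=\mathbb T_X$. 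Nilpotency makes $\e^{t\mathbb R}$ a finite sum, the commutation gives \eqref{eq:flow-factorization}, and the holonomic condition \eqref{eq:holonomic-raiser} yields the coefficients $t^n/n!$ in \eqref{eq:jet-flow-formula}.

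For (c), hypothesis (iv) puts us in the Canonical target-faithful completion theorem, which gives the kernel \eqref{eq:completion-kernel}. Since $\cJ_N(x)$ is finite dimensional, the Minimum repair rank theorem also applies, as recorded in the Minimum thermodynamic probe corollary. There is one point to address: that theorem assumes $Z$ is finite dimensional. I would observe that its proof uses only finiteness of $\dim\cH$, through rank--nullity on $N$, or else replace $Z$ by $\Ran\Pi_x$, which is finite dimensional. For (d), hypothesis (v) places $\omega-\ell$ as a one-form on the declared state space, and the Recognition--Stokes theorem \eqref{eq:recognition-stokes} applies to every piecewise smooth two-chain $A$.

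For (e), hypothesis (iii) supplies the self-adjoint involution. The Unique cut grading theorem and the Cut--adjoint bi-grading theorem give the sectors, and the Cut-loop expansion \eqref{eq:cut-loop-expansion} with $\mathbb G=\mathbb T_X+\mathbb R_\nabla$ gives the logarithm formula for small $t$. The Eye statement is the Clock-free Eye theorem applied to the declared $K$. Invariance under grade-preserving unitaries is the Unitary invariance proposition; for (a) and (d) it is trivial, since those conclusions do not involve the carrier representation.

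The main obstacle I expect is not any single step but the consistency of carriers across the conclusions. Parts (b), (c), and (e) must all live on the same nondimensionalized Hilbert carrier. The nondimensionalization \eqref{eq:nondimensional-response-section} should therefore be fixed once and then used for $\mathsf J$, $\cA_x$, $\Pi_x$, and the raiser simultaneously. I would handle this through the Scale covariance proposition: the jet, kernel, and quotient-dimension statements are covariant under the invertible frame change $C$. The loop expansion depends only on operator algebra, and the bi-grading orthogonality refers to the inner product declared in (iii). Hence the conclusions hold simultaneously once a single scale choice is fixed.
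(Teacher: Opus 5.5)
Your proposal is correct and follows the paper's proof. Like the paper, it assembles the result by citing, conclusion by conclusion, the response-coordinate and closure theorems, the finite flow-generated jet theorem, canonical completion with the minimum repair-rank theorem, Recognition--Stokes, the cut grading, cut--adjoint bi-grading and BCH cut-loop results, the clock-free Eye, and unitary invariance. Your added remark is a valid refinement the paper leaves implicit: the repair-rank argument only needs $\dim\cJ_N(x)<\infty$ through rank--nullity on $\Ker\cA_x$, not $\dim Z<\infty$.
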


\begin{proof}
Part (a) is the response-coordinate and equilibrium-closure theorem package.  Part (b) is the finite flow-generated jet theorem.  Part (c) follows from canonical target-faithful completion and the minimum repair-rank theorem.  Part (d) is Recognition--Stokes.  Part (e) combines unique cut grading, the BCH cut-loop expansion, and the real cut--adjoint bi-grading.  The final statements are the clock-free Eye theorem and unitary invariance.  No conclusion uses the finite numerical packet as a substitute for these hypotheses.
\end{proof}

The observation $\cA$ may select experimentally available response channels from the finite jet.  The target $\Pi$ may select a constitutive coefficient or cycle response.  Section~\ref{sec:recognition-kernel} then determines whether the measurement is target-faithful, the exact blind quotient, the minimum supplemental probe rank, and the decoder burden.

\section{Exact constitutive examples}

\subsection{Ideal gas}

For one mole of an ideal gas with constant $C_V>0$,
\[
PV=RT,
\qquad
S(T,V)=S_0+C_V\log\frac{T}{T_0}
+R\log\frac{V}{V_0},
\qquad
C_P=C_V+R.
\]
Let $\gamma=C_P/C_V$.  Then
\[
K_T=P,
\qquad
K_S=\gamma P,
\]
and the response coordinates are
\begin{align*}
\lambda_P&=-\frac{ST}{C_P},&
\lambda_V&=-\frac{ST}{C_V},\\
\lambda_S&=-\gamma P,&
z_S&=\frac{V}{\gamma},\\
\lambda_T&=V,&
z_T&=-P.
\end{align*}
Therefore
\[
\Gamma_c=\frac1\gamma,
\qquad
\Gamma_m=\gamma,
\qquad
\Gamma_c\Gamma_m=1,
\]
and all four product identities in \eqref{eq:product-identities} hold exactly.

\subsection{Van der Waals chart}

For one mole, let
\[
P(T,V)=\frac{RT}{V-b}-\frac{a}{V^2},
\qquad
V>b,
\]
with constant $C_V>0$.  Using $\pdc{S}{V}{T}=\pdc{P}{T}{V}$, the entropy differential is
\[
\dd S=\frac{C_V}{T}\,\dd T+\frac{R}{V-b}\,\dd V,
\]
so locally
\[
S(T,V)=S_0+C_V\log\frac{T}{T_0}
+R\log\frac{V-b}{V_0-b}.
\]
On the stable single-phase domain
\[
\pdc{P}{V}{T}
=-\frac{RT}{(V-b)^2}+\frac{2a}{V^3}<0,
\]
one has
\begin{align*}
K_T
&=-V\pdc{P}{V}{T}
=V\left(\frac{RT}{(V-b)^2}-\frac{2a}{V^3}\right),\\
C_P
&=C_V
-T\frac{\bigl(R/(V-b)\bigr)^2}{(\partial P/\partial V)_T},\\
K_S&=\frac{C_P}{C_V}K_T.
\end{align*}
The six response coordinates follow from Theorem~3.2, and the equilibrium invariant \eqref{eq:equilibrium-invariant} again equals one exactly.

\begin{remark}[What the examples establish]
The ideal-gas and van der Waals calculations verify the response identities inside two constitutive models.  They do not establish a universal hysteresis law, a universal curvature-to-entropy map, or experimental validity of a particular Recognition Kernel ledger.
\end{remark}

\section{Finite certificate and theorem boundary}

A companion exact-rational packet tests one finite Cut--Flow--Jet Recognition model.  Its archived generated-result SHA-256 is
\[
\texttt{33d274dc5e4e75d0729601f47d0f6c5fbc1aa163ed0a2233dafc98cc88a185d2}.
\]
The source ZIP supplied for this revision has SHA-256
\[
\texttt{7945f537ab3fe81d5fa0ba105f6945dec1433208493c31885f3e7ae6a2616b12}.
\]

The finite packet is evidence for deterministic implementation of its declared matrices.  The general statements in this paper are supported by the proofs above, not by extrapolating a finite test fixture.  The public Recognition Kernel Framework reviewer archive is identified by DOI 10.5281/zenodo.21760119 \cite{DabasRKF}; the exact finite capstone packet is identified here by content hashes.  In particular, the packet does not establish:
\begin{enumerate}[label=(\roman*)]
\item a closed unbounded generator on an infinite completed jet carrier;
\item convergence of an infinite direct-sum exponential;
\item a universal physical cut, connection, or observer;
\item equality of operator commutator curvature with entropy production or hysteresis without calibration;
\item a quantum field theory, quantum gravity theory, or quantized entropy spectrum.
\end{enumerate}

%% file: sections/05_proved_extensions.tex
\section{Proved extensions beyond the finite capstone}
\label{sec:proved-extensions}

The finite capstone is not a terminal wall.  Several claims that were too strong in their original form admit precise strengthened versions once the missing geometric and analytic data are supplied.  This section proves three such extensions: a canonical skew four-channel response matrix, a positive curvature-to-production calibration theorem, and a closed unbounded infinite-jet raiser on analytic seeds.

\subsection{Canonical skew four-channel response bracket}

Let $\cM$ be an oriented two-dimensional manifold and let $\varpi\in\Omega^2(\cM)$ be a nowhere-vanishing area form.  For $f,g\in C^\infty(\cM)$, define the intrinsic bracket $\{f,g\}_\varpi$ by
\begin{equation}
\dd f\wedge\dd g
=\{f,g\}_\varpi\,\varpi.
\label{eq:area-bracket}
\end{equation}
For a four-channel response map
\[
F=(f_1,f_2,f_3,f_4):\cM\longrightarrow\R^4,
\]
define the response-bracket matrix
\begin{equation}
\mathbb B_\varpi(F)_{ab}
:=\{f_a,f_b\}_\varpi,
\qquad 1\le a,b\le4.
\label{eq:response-bracket-matrix}
\end{equation}

\begin{theorem}[Universal skew four-channel bracket theorem]
At every point of $\cM$, the matrix $\mathbb B_\varpi(F)$ has the following properties:
\begin{enumerate}[label=(\roman*)]
\item it is intrinsically defined by $(F,\varpi)$ and satisfies
\[
\mathbb B_\varpi(F)^{\mathsf T}=-\mathbb B_\varpi(F);
\]
\item its rank is at most two;
\item its Pfaffian vanishes, equivalently
\begin{equation}
B_{12}B_{34}-B_{13}B_{24}+B_{14}B_{23}=0;
\label{eq:plucker-response}
\end{equation}
\item for a constant change of response basis $C\in\mathrm{GL}(4,\R)$,
\begin{equation}
\mathbb B_\varpi(CF)
=C\,\mathbb B_\varpi(F)\,C^{\mathsf T}.
\label{eq:bracket-congruence}
\end{equation}
\end{enumerate}
\end{theorem}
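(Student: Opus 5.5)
The plan is to reduce everything to a pointwise linear-algebra fact: the bracket matrix factors through the $2\times 2$ area form. Fix a point $p\in\cM$ and an oriented chart $(x,y)$ near $p$ with $\varpi=\rho\,\dd x\wedge\dd y$, $\rho>0$. By the Jacobian-bracket proposition, $\dd f\wedge\dd g=\{f,g\}_{x,y}\,\dd x\wedge\dd y$. Comparing with \eqref{eq:area-bracket} and using that $\varpi$ is nowhere vanishing gives
\[
\{f,g\}_\varpi=\rho^{-1}\{f,g\}_{x,y}.
\]
This shows that $\{f,g\}_\varpi$ is well defined and intrinsic: it is the unique scalar with $\dd f\wedge\dd g=\{f,g\}_\varpi\varpi$, so no chart choice enters. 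Skew-symmetry in (i) then follows at once from $\dd f_a\wedge\dd f_b=-\dd f_b\wedge\dd f_a$.

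For (ii) and (iii), let $D\in\R^{4\times2}$ be the matrix whose $a$th row is $(\partial_x f_a,\partial_y f_a)(p)$, and set
\[
\epsilon=\begin{pmatrix}0&1\\-1&0\end{pmatrix}.
\]
A direct expansion gives
\[
\mathbb B_\varpi(F)=\rho^{-1}D\,\epsilon\,D^{\mathsf T}.
\]
Hence $\rank\mathbb B_\varpi(F)\le\rank\epsilon=2$. A skew $4\times4$ matrix satisfies $\det B=\operatorname{Pf}(B)^2$, and rank at most two forces $\det B=0$. So the Pfaffian vanishes, and the Pfaffian of a skew $4\times4$ matrix is exactly $B_{12}B_{34}-B_{13}B_{24}+B_{14}B_{23}$, which gives \eqref{eq:plucker-response}.

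As an independent check, and to avoid quoting $\det=\operatorname{Pf}^2$, I would also verify \eqref{eq:plucker-response} directly. Write $B_{ab}=\rho^{-1}(u_a v_b-u_b v_a)$ with $u=D e_1$ and $v=De_2$. Then the expression is $\rho^{-2}$ times the Laplace expansion of
\[
\det\begin{pmatrix}u&v&u&v\end{pmatrix}=0,
\]
which is the classical Plücker relation for $2\times2$ minors. Equivalently, it is the vanishing of the $4$-form $\dd f_1\wedge\dd f_2\wedge\dd f_3\wedge\dd f_4$ on a $2$-manifold.

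For (iv), write $(CF)_a=\sum_c C_{ac}f_c$ with $C$ constant. Then $\dd(CF)_a=\sum_c C_{ac}\,\dd f_c$, and bilinearity of the wedge product gives
\[
\{(CF)_a,(CF)_b\}_\varpi=\sum_{c,d}C_{ac}\{f_c,f_d\}_\varpi C_{bd},
\]
which is \eqref{eq:bracket-congruence}. The constancy of $C$ is used exactly here, so that no $\dd C$ terms appear.

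I do not expect a real obstacle. The only step needing care is the rank/Pfaffian link in (iii), and the direct Plücker/determinant computation above settles it without appealing to general Pfaffian theory. The rest is bookkeeping of the factor $\rho$, which is positive and nonzero and therefore does not affect rank or the vanishing of the quadratic relation.
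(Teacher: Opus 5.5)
Your proposal is correct and is essentially the paper's proof: in the coframe $(\alpha,\beta)=(\rho\,\dd x,\dd y)$, your factorization $\rho^{-1}D\epsilon D^{\mathsf T}$ is exactly the paper's decomposition $\mathbb B_\varpi(F)=uv^{\mathsf T}-vu^{\mathsf T}$ with $u=\rho^{-1}De_1$, $v=De_2$, after which the rank bound, the Pfaffian via $\det=\operatorname{Pf}^2$, and congruence by bilinearity go through identically. Your supplementary Pl\"ucker check is a fine way to avoid quoting $\det=\operatorname{Pf}^2$, but the generalized Laplace expansion of $\det\begin{pmatrix}u&v&u&v\end{pmatrix}$ along its first two columns equals \emph{twice} $p_{12}p_{34}-p_{13}p_{24}+p_{14}p_{23}$ (with $p_{ab}=u_av_b-u_bv_a$), not once; this factor is harmless because the determinant vanishes.
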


\begin{proof}
Choose a local oriented coframe $(\alpha,\beta)$ with $\varpi=\alpha\wedge\beta$.  Write
\[
\dd f_a=u_a\alpha+v_a\beta.
\]
Then
\[
\dd f_a\wedge\dd f_b
=(u_av_b-v_au_b)\,\varpi,
\]
so, with $u=(u_a)_{a=1}^4$ and $v=(v_a)_{a=1}^4$,
\begin{equation}
\mathbb B_\varpi(F)=uv^{\mathsf T}-vu^{\mathsf T}.
\label{eq:decomposable-bracket}
\end{equation}
This proves skew-symmetry and shows that the image lies in $\operatorname{span}\{u,v\}$, hence the rank is at most two.  A $4\times4$ skew matrix has determinant equal to the square of its Pfaffian.  Since the rank is at most two, its determinant and Pfaffian vanish, and expansion of the Pfaffian gives \eqref{eq:plucker-response}.  Finally, $\dd(CF)=C\,\dd F$ for constant $C$, and bilinearity of the wedge product gives \eqref{eq:bracket-congruence}.
\end{proof}

\begin{corollary}[Thermodynamic four-response matrix]
Choose any four smooth response coordinates from the thermodynamic response section and nondimensionalize them against declared nonzero reference scales.  On every oriented equilibrium chart, for example with $\varpi=\dd S\wedge\dd V$, the resulting matrix \eqref{eq:response-bracket-matrix} is a canonical skew $4\times4$ response object satisfying the rank-two and Pluecker constraints above.
\end{corollary}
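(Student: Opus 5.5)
The plan is to obtain the corollary as a direct specialization of the Universal skew four-channel bracket theorem. The only work is to check that the thermodynamic data meet its hypotheses. The manifold $\cM$ will be the admissible equilibrium chart $\Omega$, with coordinates $(S,V)$. It is open and connected in $\R^2$, so it is an oriented two-dimensional manifold. The area form will be $\varpi=\dd S\wedge\dd V$, which is nowhere vanishing. Any other nowhere-vanishing area form on an oriented chart has the form $\varpi'=h\,\varpi$ with $h$ of fixed sign, which gives $\{f,g\}_{\varpi'}=h^{-1}\{f,g\}_\varpi$. So the choice of $\varpi$ only rescales the matrix by a nonzero scalar function. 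It therefore preserves skew-symmetry, rank at most two, and the vanishing Pfaffian. I will note this so that "for example" in the statement is justified.

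Next I check smoothness of the four selected channels. By the Response-coordinate theorem each of the six entries of $\Lambda$ is an explicit expression in $S,T,V,P,C_P,C_V,K_S,K_T$. These are built from $U\in C^{r+2}$ with $r\ge2$, through first and second derivatives and quotients. The denominators are $C_P$, $C_V$, $(\partial P/\partial V)_T$ and the Legendre Jacobians. They are nonvanishing on an admissible chart, by positive definiteness of $D^2U$, $T>0$, and the standing assumption that the heat capacities are finite and nonzero. Hence each component is at least $C^1$ on $\Omega$. That is all the proof of the bracket theorem uses, since it only needs $\dd f_a$ to exist pointwise; "smooth" in the statement may be read as $C^1$. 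Nondimensionalization divides each component by a constant reference scale $T_*,P_*$ or $V_*$, as in \eqref{eq:nondimensional-response-section}. This is multiplication of $F$ by a constant invertible diagonal matrix, so by \eqref{eq:bracket-congruence} the nondimensional matrix is congruent to the dimensional one. Congruence preserves skewness, rank, and the vanishing of the Pfaffian, since $\operatorname{Pf}(CBC^{\mathsf T})=\det C\,\operatorname{Pf}(B)$. It follows that the constraints do not depend on the declared scales. This is also consistent with the Scale covariance proposition.

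Applying parts (i)--(iii) of the theorem to $F=(\widehat\Lambda_{a_1},\dots,\widehat\Lambda_{a_4})$ then gives the skew matrix of rank at most two and the Plücker relation \eqref{eq:plucker-response}. The word "canonical" rests on part (i), which says the matrix is defined by $(F,\varpi)$ alone, with no chart extension off the two-dimensional equilibrium manifold. This contrasts with the No automatic skew $\lambda$-Jacobian proposition. I expect no step to be a genuine obstacle. The most delicate point is the regularity bookkeeping: I must confirm that every denominator in the response formulas is nonzero on the admissible chart, and that the required differentiability survives the Legendre chart changes. I would handle this by expressing all six responses directly in the $(S,V)$ chart through $U$ and its derivatives up to order two, so that no chart inversion is needed.
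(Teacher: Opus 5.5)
Your proposal is correct and follows the paper's route: the corollary is simply the universal skew four-channel bracket theorem applied with $\cM=\Omega$, $\varpi=\dd S\wedge\dd V$, and $F$ the four nondimensionalized channels. Your extra bookkeeping is sound and only makes explicit what the paper leaves implicit. In the $(S,V)$ chart every response coordinate is $C^r$, with denominators $U_{SS}$, $U_{VV}$, and $\det D^2U$, all positive on an admissible chart. Changing the area form rescales the bracket matrix by a nonvanishing function, and the constant scale change acts by congruence; neither affects skewness, rank, or the vanishing of the Pfaffian.
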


\begin{remark}[What has been recovered]
The theorem does not resurrect the false claim that the ordinary derivative Jacobian of four dependent thermodynamic variables is universally skew.  It proves the stronger invariant statement that the pairwise area brackets of four response channels form a canonical skew matrix.  The distinction is structural rather than cosmetic: the matrix is a decomposable two-form in response space and therefore carries the exact algebraic signature \eqref{eq:plucker-response}.
\end{remark}

\subsection{Positive curvature calibration and entropy production}

Let $\mathfrak F$ be a finite-dimensional inner-product space of declared open-curvature values, let $X\in\R^m$ be a thermodynamic force vector, and let
\[
\mathcal C:\mathfrak F\longrightarrow
\operatorname{Hom}(\R^m,\R^r)
\]
be a prior constitutive calibration.  For $F\in\mathfrak F$, set
\begin{equation}
D(F):=\mathcal C(F)^{\mathsf T}\mathcal C(F),
\label{eq:curvature-dissipation-operator}
\end{equation}
and let $A(F)^{\mathsf T}=-A(F)$ be an arbitrary reciprocal-odd or nondissipative response sector.  Define
\[
J_F(X):=\bigl(D(F)+A(F)\bigr)X.
\]

\begin{theorem}[Calibrated curvature-to-production theorem]
The force--flux pairing satisfies
\begin{equation}
\sigma_F(X)
:=J_F(X)^{\mathsf T}X
=\norm{\mathcal C(F)X}^2
\ge0.
\label{eq:calibrated-production}
\end{equation}
Moreover:
\begin{enumerate}[label=(\roman*)]
\item $\sigma_F(X)=0$ exactly when $X\in\Ker\mathcal C(F)$;
\item the skew sector $A(F)$ contributes identically zero to production;
\item under a dual change of force--flux frame
\[
J'=QJ,
\qquad
X'=Q^{-\mathsf T}X,
\]
the production is invariant and the dissipative operator transforms by congruence,
\[
D'(F)=QD(F)Q^{\mathsf T};
\]
\item conversely, every nonnegative quadratic production law $q_F(X)$ is represented by a unique positive-semidefinite symmetric operator $D_F$, and it has the canonical factorization
\[
q_F(X)=\norm{D_F^{1/2}X}^2.
\]
\end{enumerate}
\end{theorem}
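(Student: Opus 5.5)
The plan is to reduce every assertion to elementary facts about real quadratic forms. The only earlier input needed is the Covariance of reciprocal linear response theorem. First compute the production directly from the definition of $J_F$:
\[
\sigma_F(X)=X^{\mathsf T}\bigl(D(F)+A(F)\bigr)^{\mathsf T}X
=X^{\mathsf T}D(F)X+X^{\mathsf T}A(F)^{\mathsf T}X .
\]
The skew term is a real scalar equal to its own transpose. It is therefore equal to its own negative and vanishes, which proves (ii).

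By \eqref{eq:curvature-dissipation-operator}, the remaining term is
\[
X^{\mathsf T}\mathcal C(F)^{\mathsf T}\mathcal C(F)X=\norm{\mathcal C(F)X}^2\ge0,
\]
which is \eqref{eq:calibrated-production}. Item (i) is then immediate, because a Euclidean norm vanishes exactly on the zero vector.

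For (iii), follow the proof of the Onsager covariance theorem with $L:=D(F)+A(F)$ and $B:=Q$. Invariance of the pairing gives $J'^{\mathsf T}X'=J^{\mathsf T}X$, so $\sigma$ is unchanged. Substituting $X=Q^{\mathsf T}X'$ gives
\[
J'=Q\bigl(D(F)+A(F)\bigr)Q^{\mathsf T}X',
\]
so the full response operator transforms by congruence. Congruence preserves symmetry and skew-symmetry separately. Hence, by the unique decomposition of a real matrix into symmetric and skew parts, the dissipative part transforms as $D'(F)=QD(F)Q^{\mathsf T}$ and the skew part as $A'(F)=QA(F)Q^{\mathsf T}$. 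The transformed $D'$ is also positive semidefinite, being a congruence of a positive-semidefinite operator. I would also note that $D'$ admits the factorization $D'=(\mathcal C(F)Q^{\mathsf T})^{\mathsf T}(\mathcal C(F)Q^{\mathsf T})$, so the calibration itself transforms as $\mathcal C\mapsto\mathcal C\,Q^{\mathsf T}$.

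For (iv), I will read ``quadratic production law'' as $q_F(X)=X^{\mathsf T}MX$ for some real $m\times m$ matrix $M$; this reading should be stated explicitly. Set $D_F:=\tfrac12(M+M^{\mathsf T})$. This gives the same quadratic form, since the skew part of $M$ drops out exactly as in (ii).

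Uniqueness follows by polarization: a symmetric matrix $S$ with $X^{\mathsf T}SX=0$ for all $X$ satisfies
\[
2\,Y^{\mathsf T}SX=(X+Y)^{\mathsf T}S(X+Y)-X^{\mathsf T}SX-Y^{\mathsf T}SY=0
\]
for all $X,Y$, hence $S=0$. Nonnegativity of $q_F$ means precisely $D_F\ge0$. The spectral theorem for real symmetric matrices then gives a unique positive-semidefinite square root $D_F^{1/2}$, and
\[
q_F(X)=X^{\mathsf T}D_F^{1/2}D_F^{1/2}X=\norm{D_F^{1/2}X}^2 .
\]

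The main obstacle is interpretive rather than computational, and it lies in (iv). One must say precisely what ``quadratic'' means, and what ``unique'' refers to. Uniqueness holds for the symmetric representing operator and for its positive square root, but not for an arbitrary factorization. Any calibration $\mathcal C(F)$ with $\mathcal C(F)^{\mathsf T}\mathcal C(F)=D_F$ differs from $D_F^{1/2}$ by a partial isometry. This follows from the polar decomposition, since $\norm{\mathcal C(F)X}=\norm{D_F^{1/2}X}$ for every $X$. I would record this so that the converse is not misread as recovering the calibration data $\mathcal C$ itself, consistent with the abstract's insistence that calibration remain distinct from geometry.
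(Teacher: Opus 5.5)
Your proof is correct and follows the paper's own argument. The skew term vanishes because a real scalar equals its own transpose, the remaining term is $\norm{\mathcal C(F)X}^2$, frame invariance comes from the dual pairing together with $X=Q^{\mathsf T}X'$, and the converse is polarization plus the finite-dimensional spectral theorem. Two of your steps are sound refinements that the paper leaves implicit: identifying $D'(F)$ as the symmetric part of $Q\bigl(D(F)+A(F)\bigr)Q^{\mathsf T}$ through the unique symmetric/skew decomposition, and noting that $\mathcal C(F)$ is recovered only up to a partial isometry.
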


\begin{proof}
Because $A(F)$ is skew-symmetric,
\[
X^{\mathsf T}A(F)X
=-X^{\mathsf T}A(F)^{\mathsf T}X
=-X^{\mathsf T}A(F)X,
\]
so this scalar is zero.  Therefore
\[
J_F(X)^{\mathsf T}X
=X^{\mathsf T}\mathcal C(F)^{\mathsf T}\mathcal C(F)X
=\norm{\mathcal C(F)X}^2,
\]
which proves nonnegativity and the zero criterion.  Frame invariance follows from
\[
J'^{\mathsf T}X'=J^{\mathsf T}X
\]
and substitution of $X=Q^{\mathsf T}X'$.  For the converse, polarization of the quadratic form gives a unique symmetric operator $D_F$ with $q_F(X)=X^{\mathsf T}D_FX$.  Nonnegativity for all $X$ is equivalent to $D_F\ge0$, and the finite-dimensional spectral theorem gives the unique positive square root $D_F^{1/2}$.
\end{proof}

\begin{corollary}[Integrated recognition-curvature production]
Let $F^{\mathrm{open}}(x)$ be a measurable open-curvature field, let $X(x)$ be a measurable force field, and let $\mu$ be a positive measure on a process region $A$.  Whenever the following integral is finite,
\begin{equation}
\Sigma_{\mathrm{RK}}(A)
:=\int_A
\norm{\mathcal C(F^{\mathrm{open}}(x))X(x)}^2\,\dd\mu(x)
\ge0.
\label{eq:integrated-rk-production}
\end{equation}
It vanishes exactly when the calibrated curvature channel is invisible to the force field almost everywhere.
\end{corollary}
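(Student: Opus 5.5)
The plan is to reduce the corollary pointwise to the Calibrated curvature-to-production theorem and then pass to the integral using basic measure theory. At each $x\in A$, apply \eqref{eq:calibrated-production} with $F=F^{\mathrm{open}}(x)$ and $X=X(x)$. This shows that the integrand
\[
h(x):=\norm{\mathcal C(F^{\mathrm{open}}(x))X(x)}^2
\]
is exactly the production density $\sigma_{F^{\mathrm{open}}(x)}(X(x))$ and is nonnegative.

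The first step is to check that $h$ is measurable, so that $\Sigma_{\mathrm{RK}}(A)$ is a well-defined element of $[0,\infty]$. I would treat this as the only genuinely delicate point. The calibration $\mathcal C$ is only declared to be a prior map $\mathfrak F\to\operatorname{Hom}(\R^m,\R^r)$, so I would read the hypothesis as requiring $\mathcal C$ to be Borel measurable; continuity of $\mathcal C$ suffices, and linearity in particular suffices. Under that reading, $x\mapsto\mathcal C(F^{\mathrm{open}}(x))$ is a measurable matrix-valued function. Matrix--vector multiplication and the squared norm are continuous, so $h$ is measurable. I would state this measurability reading explicitly as part of the declared data rather than try to derive it.

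Next, nonnegativity follows because $\mu$ is a positive measure and $h\ge0$ pointwise. Hence $\int_A h\,\dd\mu\ge0$, and the finiteness hypothesis makes $\Sigma_{\mathrm{RK}}(A)$ a real number in $[0,\infty)$.

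For the vanishing criterion I would use the standard fact that a nonnegative measurable function has zero integral if and only if it vanishes $\mu$-almost everywhere. By part (i) of the theorem, $h(x)=0$ exactly when $X(x)\in\Ker\mathcal C(F^{\mathrm{open}}(x))$. Therefore $\Sigma_{\mathrm{RK}}(A)=0$ if and only if $X(x)\in\Ker\mathcal C(F^{\mathrm{open}}(x))$ for $\mu$-almost every $x\in A$. This is precisely the statement that the calibrated curvature channel is invisible to the force field almost everywhere.

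Finally, I would remark that the skew sector contributes nothing to this integral, by part (ii) of the theorem. No step requires more than these pointwise identities together with monotonicity of the integral.
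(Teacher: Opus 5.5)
Your proposal is correct and matches the paper's route: the paper gives no separate proof, treating the corollary as an immediate pointwise consequence of the calibrated curvature-to-production theorem together with the fact that a nonnegative measurable function has zero $\mu$-integral exactly when it vanishes $\mu$-almost everywhere. You also make explicit that $\mathcal C$ must be Borel measurable, which holds automatically if it is continuous, for example linear on the finite-dimensional space $\mathfrak F$; the paper needs this hypothesis but leaves it tacit.
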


\begin{remark}[Mathematical theorem versus physical calibration]
Equations \eqref{eq:calibrated-production}--\eqref{eq:integrated-rk-production} prove a rigorous curvature-generated nonnegative production law.  Geometry alone does not select $\mathcal C$, its physical units, or its time orientation.  Calling $\Sigma_{\mathrm{RK}}$ the measured thermodynamic entropy production therefore requires constitutive and experimental calibration, but once that calibration is declared, positivity and frame covariance are theorems rather than analogies.
\end{remark}

\subsection{Closed unbounded infinite-jet raiser on analytic seeds}

Let $H$ be a Hilbert space and let $D:\mathcal D(D)\subset H\to H$ be densely defined and closed.  Define the completed infinite-jet carrier
\[
\cJ_\infty:=\ell^2(\N_0;H)
\]
with layer inclusions $\iota_n:H\to\cJ_\infty$.  Define the covariant raiser $\mathbb R_D$ by
\begin{align}
\mathcal D(\mathbb R_D)
&:=\left\{u=(u_n)_{n\ge0}:
 u_n\in\mathcal D(D),\ 
 \sum_{n=0}^\infty\norm{Du_n}^2<\infty\right\},
\label{eq:unbounded-raiser-domain}\\
(\mathbb R_Du)_0&:=0,
\qquad
(\mathbb R_Du)_{n+1}:=Du_n.
\label{eq:unbounded-raiser}
\end{align}

\begin{lemma}[Closedness of the infinite raiser]
The operator $\mathbb R_D$ is densely defined and closed.  It is unbounded exactly when $D$ is unbounded.  For every $x\in\mathcal D(D^n)$,
\begin{equation}
\mathbb R_D^n\iota_0x
=\iota_nD^nx.
\label{eq:raiser-powers}
\end{equation}
\end{lemma}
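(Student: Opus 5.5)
The plan is to reduce every claim to elementary facts about the closed operator $D$, acting componentwise on $\ell^2(\N_0;H)$.

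\textbf{Density.} The domain \eqref{eq:unbounded-raiser-domain} contains every finitely supported sequence whose entries lie in $\mathcal D(D)$. Such sequences are dense in $\cJ_\infty$: first truncate an arbitrary $u$ to finitely many layers, which costs an arbitrarily small $\ell^2$ tail. Then approximate each of the finitely many remaining entries in $H$ by elements of the dense set $\mathcal D(D)$.

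\textbf{Closedness.} Take $u^{(k)}\in\mathcal D(\mathbb R_D)$ with $u^{(k)}\to u$ and $\mathbb R_Du^{(k)}\to w$ in $\cJ_\infty$. Convergence in $\ell^2$ implies convergence of each layer, so $u^{(k)}_n\to u_n$ and $Du^{(k)}_n\to w_{n+1}$ in $H$ for every $n$. Also $w_0=\lim_k 0=0$. Closedness of $D$ then gives $u_n\in\mathcal D(D)$ and $Du_n=w_{n+1}$. Since $w\in\ell^2$, we get $\sum_n\norm{Du_n}^2=\sum_{n\ge1}\norm{w_n}^2<\infty$. Hence $u\in\mathcal D(\mathbb R_D)$ and $\mathbb R_Du=w$.

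\textbf{Boundedness dichotomy.} If $D$ is bounded, then $\mathcal D(\mathbb R_D)=\cJ_\infty$ and $\norm{\mathbb R_Du}^2=\sum_n\norm{Du_n}^2\le\norm D^2\norm u^2$. Conversely, $\mathbb R_D\iota_0=\iota_1D$ on $\mathcal D(D)$, and $\iota_0,\iota_1$ are isometries. So $\norm{Dx}=\norm{\mathbb R_D\iota_0x}$, and any bound on $\mathbb R_D$ bounds $D$ on its domain. Thus $\mathbb R_D$ is unbounded exactly when $D$ is.

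\textbf{Power formula \eqref{eq:raiser-powers}.} I would argue by induction on $n$, with the case $n=0$ trivial. Suppose $x\in\mathcal D(D^{n+1})$, so $x\in\mathcal D(D^n)$ as well. The induction hypothesis gives $\mathbb R_D^n\iota_0x=\iota_nD^nx$. This sequence has a single nonzero entry $D^nx$, which lies in $\mathcal D(D)$, and the square sum $\norm{D^{n+1}x}^2$ is finite. Therefore $\iota_nD^nx\in\mathcal D(\mathbb R_D)$, and applying the definition \eqref{eq:unbounded-raiser} gives $\iota_{n+1}D^{n+1}x$. The one point needing care is the domain bookkeeping for powers of an unbounded operator: $\iota_0x$ must lie in $\mathcal D(\mathbb R_D^n)$ at each stage. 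The membership check just described supplies exactly this.

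The main obstacle is the closedness step, and specifically the need to upgrade layerwise convergence to the summability condition in the domain. This is resolved by reading that condition off from $w\in\ell^2$ rather than trying to prove it directly.
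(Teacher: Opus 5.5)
Your proposal is correct and follows the paper's proof step for step. Density comes from finitely supported $\mathcal D(D)$-valued sequences, closedness from applying closedness of $D$ layerwise and reading the summability condition off $w\in\cJ_\infty$, the converse bound from restricting $\mathbb R_D$ to the zeroth layer, and the power formula from induction. Your explicit checks that $w_0=0$ and that $\iota_0x\in\mathcal D(\mathbb R_D^n)$ at each inductive stage spell out details the paper leaves implicit.
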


\begin{proof}
Finite sequences with entries in $\mathcal D(D)$ form a dense subspace of $\cJ_\infty$, so the domain is dense.  Suppose $u^{(k)}\to u$ and $\mathbb R_Du^{(k)}\to v$ in $\cJ_\infty$.  For every $n$, one has
\[
u_n^{(k)}\to u_n,
\qquad
Du_n^{(k)}\to v_{n+1}.
\]
Closedness of $D$ gives $u_n\in\mathcal D(D)$ and $Du_n=v_{n+1}$.  Since $v\in\cJ_\infty$, the square sum of the $Du_n$ is finite, proving $u\in\mathcal D(\mathbb R_D)$ and $\mathbb R_Du=v$.  If $D$ is bounded, then $\norm{\mathbb R_Du}\le\norm D\norm u$.  Conversely, restricting $\mathbb R_D$ to the zeroth layer shows that boundedness of $\mathbb R_D$ would imply boundedness of $D$.  Formula \eqref{eq:raiser-powers} follows by induction.
\end{proof}

For $r>0$, define the $r$-analytic seed space
\begin{equation}
\mathcal A_r(D)
:=\left\{x\in\bigcap_{n\ge0}\mathcal D(D^n):
\sum_{n=0}^\infty
\frac{r^{2n}}{(n!)^2}\norm{D^nx}^2<\infty\right\}.
\label{eq:analytic-seed-space}
\end{equation}

\begin{theorem}[Unbounded analytic infinite-jet theorem]
If $x\in\mathcal A_r(D)$, then for every $t$ with $|t|<r$ the series
\begin{equation}
\Psi_x(t)
:=\sum_{n=0}^\infty\frac{t^n}{n!}
\mathbb R_D^n\iota_0x
=\left(\frac{t^n}{n!}D^nx\right)_{n\ge0}
\label{eq:analytic-infinite-jet}
\end{equation}
converges in $\cJ_\infty$.  The map $t\mapsto\Psi_x(t)$ is analytic, belongs to $\mathcal D(\mathbb R_D)$ for $|t|<r$, and is the unique $\cJ_\infty$-analytic solution of
\begin{equation}
\frac{\dd}{\dd t}\Psi_x(t)
=\mathbb R_D\Psi_x(t),
\qquad
\Psi_x(0)=\iota_0x.
\label{eq:infinite-jet-evolution}
\end{equation}
If $x\in\bigcap_{r>0}\mathcal A_r(D)$, the solution exists for every real or complex $t$.
\end{theorem}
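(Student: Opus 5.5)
We work directly with the layer coefficients $a_n(t):=\frac{t^n}{n!}D^nx$, using the closedness Lemma and the power formula \eqref{eq:raiser-powers}, which identifies the $n$th term of the series with $\iota_n a_n(t)$. Since the summands occupy mutually orthogonal layers of $\cJ_\infty$, convergence of the series is equivalent to
\[
\sum_n\norm{a_n(t)}^2=\sum_n\frac{|t|^{2n}}{(n!)^2}\norm{D^nx}^2<\infty,
\]
and for $|t|<r$ this is dominated termwise by the defining sum in \eqref{eq:analytic-seed-space}. This gives convergence and the right-hand identity in \eqref{eq:analytic-infinite-jet}.

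Membership in $\mathcal D(\mathbb R_D)$ needs $\sum_n\norm{Da_n(t)}^2=\sum_n\frac{|t|^{2n}}{(n!)^2}\norm{D^{n+1}x}^2<\infty$. This is the one place where a derivative is lost. Choose $\rho$ with $|t|<\rho<r$. Then
\[
\frac{|t|^{2n}}{(n!)^2}\norm{D^{n+1}x}^2=\frac{r^{2n+2}}{((n+1)!)^2}\norm{D^{n+1}x}^2\cdot\frac{(n+1)^2|t|^{2n}}{r^{2n+2}},
\]
and the last factor is bounded in $n$ because $(n+1)^2(|t|/r)^{2n}\to0$. So the sum is finite, and the same estimate gives $\mathbb R_D\Psi_x(t)=(\,0,\,a_0'(t)\cdot\!,\dots)$, namely the sequence with layer $n+1$ equal to $\frac{t^n}{n!}D^{n+1}x$.

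For analyticity, $\Psi_x$ is a power series $\sum_n t^n c_n$ with $c_n=\iota_n D^nx/n!$. The root test, $\limsup\norm{c_n}^{1/n}\le 1/r$, follows from the finiteness of $\sum r^{2n}\norm{c_n}^2$, since that forces $\norm{c_n}r^n$ to be bounded. A vector-valued power series may be differentiated termwise inside its disc, so $\Psi_x'(t)=\sum_{n\ge1}\frac{t^{n-1}}{(n-1)!}\iota_nD^nx$. Comparing layer by layer with the formula for $\mathbb R_D\Psi_x(t)$ just obtained proves \eqref{eq:infinite-jet-evolution}; the initial condition is immediate.

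Uniqueness is the main subtlety, because $\mathbb R_D$ is unbounded and Gronwall-type arguments are unavailable. Instead we use the grading. Let $\Phi$ be another $\cJ_\infty$-analytic solution, and set $w=\Phi-\Psi_x$. Then $w$ solves the equation with $w(0)=0$. Project onto layers using the bounded projections $\pi_n$. Layer $0$ satisfies $w_0'=0$, so $w_0\equiv0$. For the inductive step, suppose $w_n\equiv0$. We have $w_{n+1}'(t)=Dw_n(t)=0$, so $w_{n+1}$ is constant and hence zero. By induction $w\equiv0$, and analyticity is not even needed beyond differentiability. The only care required is that $t\mapsto\pi_nw(t)$ is differentiable with derivative $\pi_nw'(t)$, which holds because $\pi_n$ is bounded.

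Finally, if $x\in\mathcal A_r(D)$ for every $r$, the estimates above hold for any complex $t$ by choosing $r>|t|$. This gives an entire solution.
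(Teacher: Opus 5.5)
Your proof is correct. Convergence, domain membership and the evolution equation follow the paper's route. You use the same orthogonal-layer norm identity and the same observation that a polynomial factor such as $(n+1)^2(|t|/r)^{2n}$ stays bounded. The paper phrases this with an intermediate $\rho$ and $n^2(\rho/r)^{2n}$; your $\rho$ is introduced but never used and can be dropped. Your justification of termwise differentiation through the radius of convergence of the $\cJ_\infty$-valued power series is clean. Your displayed formula for $\mathbb R_D\Psi_x(t)$ is garbled, but the verbal description of its layers right after it is correct.

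The genuine difference is uniqueness. The paper expands a competing analytic solution as $Y(t)=\sum_n t^nY_n$ and compares coefficients to get $(n+1)Y_{n+1}=\mathbb R_DY_n$. That is the generic argument for analytic solutions of $Y'=AY$ with any closed $A$. It tacitly requires passing the unbounded $\mathbb R_D$ through the series, via a closedness argument the paper does not spell out, and it gives uniqueness only within the analytic class.

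You instead project with the bounded maps $\pi_n$ and use $\pi_0\mathbb R_D=0$ and $\pi_{n+1}\mathbb R_D=D\pi_n$. This gives the strictly triangular system $w_0'=0$, $w_{n+1}'=Dw_n$, which you solve by induction. Your route needs neither closedness of $D$ nor any interchange of $\mathbb R_D$ with a limit. It proves the stronger statement that $\Psi_x$ is the unique \emph{differentiable} solution on any connected interval or disc containing $0$. The cost is generality: your argument exploits the pure shift structure of the raiser, whereas coefficient comparison applies unchanged to closed generators without that triangular form.
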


\begin{proof}
Equation \eqref{eq:raiser-powers} gives the displayed layer formula, and the norm identity
\[
\norm{\Psi_x(t)}^2
=\sum_{n=0}^\infty
\frac{|t|^{2n}}{(n!)^2}\norm{D^nx}^2
\]
proves convergence for $|t|<r$.  Fix $\rho$ with $|t|<\rho<r$.  The derivative and raised series are controlled by the same analytic-seed sum because
\[
n^2\left(\frac{\rho}{r}\right)^{2n}
\]
is bounded in $n$.  Hence termwise differentiation is valid in $\cJ_\infty$, $\Psi_x(t)\in\mathcal D(\mathbb R_D)$, and the component formula gives \eqref{eq:infinite-jet-evolution}.  If $Y(t)=\sum_{n\ge0}t^nY_n$ is another analytic solution with the same initial value, coefficient comparison gives
\[
(n+1)Y_{n+1}=\mathbb R_DY_n,
\]
so $Y_n=\mathbb R_D^n\iota_0x/n!$ and $Y=\Psi_x$.  Entire analytic seeds give convergence for every $t$.
\end{proof}

\begin{corollary}[Commuting transported infinite jet]
Let $U_s$ be a strongly continuous unitary group on $H$ such that $U_s\mathcal D(D)\subseteq\mathcal D(D)$ and $DU_sx=U_sDx$ for $x\in\mathcal D(D)$.  Let $\mathbb U_s$ act diagonally on $\cJ_\infty$.  Then
\begin{equation}
\mathbb U_s\Psi_x(t)
=\left(\frac{t^n}{n!}U_sD^nx\right)_{n\ge0}
\label{eq:transported-infinite-jet}
\end{equation}
is the exact transported infinite jet, and $\mathbb U_s\mathbb R_D=\mathbb R_D\mathbb U_s$ on $\mathcal D(\mathbb R_D)$.
\end{corollary}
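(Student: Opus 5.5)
The plan is to prove the two claims separately: the componentwise formula for $\mathbb U_s\Psi_x(t)$, and then the commutation $\mathbb U_s\mathbb R_D=\mathbb R_D\mathbb U_s$ on $\mathcal D(\mathbb R_D)$. The formula then acquires its meaning as the exact jet of the transported seed $U_sx$.

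First I would fix what the diagonal action means: $(\mathbb U_su)_n:=U_su_n$. This is a unitary on $\cJ_\infty=\ell^2(\N_0;H)$, because each $U_s$ is unitary and the $\ell^2$ norm is a sum of layer norms. Strong continuity in $s$ follows from dominated convergence applied to $\sum_n\norm{U_su_n-u_n}^2$, since each summand is at most $4\norm{u_n}^2$. The formula \eqref{eq:transported-infinite-jet} is then immediate. The Unbounded analytic infinite-jet theorem gives $\Psi_x(t)=(t^nD^nx/n!)_{n\ge0}$ in $\cJ_\infty$, and applying $\mathbb U_s$ componentwise yields $(t^nU_sD^nx/n!)_n$. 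Since $\mathbb U_s$ is bounded, this is also the image of the convergent series \eqref{eq:analytic-infinite-jet}, term by term.

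For the commutation, let $u\in\mathcal D(\mathbb R_D)$, so $u_n\in\mathcal D(D)$ and $\sum_n\norm{Du_n}^2<\infty$.
\begin{enumerate}[label=(\roman*)]
\item By hypothesis $U_su_n\in\mathcal D(D)$ and $DU_su_n=U_sDu_n$.
\item Hence $\sum_n\norm{DU_su_n}^2=\sum_n\norm{Du_n}^2<\infty$ by unitarity, so $\mathbb U_su\in\mathcal D(\mathbb R_D)$.
\item Comparing layers, $(\mathbb R_D\mathbb U_su)_0=0=(\mathbb U_s\mathbb R_Du)_0$, and $(\mathbb R_D\mathbb U_su)_{n+1}=DU_su_n=U_sDu_n=(\mathbb U_s\mathbb R_Du)_{n+1}$.
\end{enumerate}
This proves $\mathbb U_s\mathbb R_D=\mathbb R_D\mathbb U_s$ on $\mathcal D(\mathbb R_D)$, together with invariance of the domain.

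To justify calling the result the exact transported jet, I would show that $U_sx\in\mathcal A_r(D)$ and that $\Psi_{U_sx}(t)=\mathbb U_s\Psi_x(t)$. By induction from the hypothesis, $U_s\mathcal D(D^n)\subseteq\mathcal D(D^n)$ and $D^nU_sx=U_sD^nx$. Indeed, if $y\in\mathcal D(D^{n+1})$, then $U_sy\in\mathcal D(D^n)$ with $D^nU_sy=U_sD^ny$, and since $D^ny\in\mathcal D(D)$, the hypothesis gives $U_sD^ny\in\mathcal D(D)$. Unitarity then preserves the weighted sum in \eqref{eq:analytic-seed-space}, so $U_sx\in\mathcal A_r(D)$, and the layer formula gives $\Psi_{U_sx}(t)=\mathbb U_s\Psi_x(t)$. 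Alternatively, $\mathbb U_s\Psi_x(t)$ is analytic in $t$, and by the commutation just proved it satisfies \eqref{eq:infinite-jet-evolution} with initial value $\iota_0U_sx$; uniqueness in the Unbounded analytic infinite-jet theorem then gives the same identity.

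The only step needing care is this domain bookkeeping for $D^n$, which must use the hypothesis layer by layer. Everything else is a componentwise computation.
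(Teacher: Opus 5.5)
Your proposal is correct and follows the paper's route: iterate the commutation hypothesis to $D^nU_sx=U_sD^nx$, apply the diagonal unitary componentwise to the layer formula for $\Psi_x(t)$, and check $\mathbb U_s\mathbb R_D=\mathbb R_D\mathbb U_s$ layer by layer. Your additional domain bookkeeping makes explicit what the paper's three-line proof leaves implicit, namely that $\mathbb U_s$ preserves $\mathcal D(\mathbb R_D)$, that $U_sx\in\mathcal A_r(D)$, and that $\Psi_{U_sx}(t)=\mathbb U_s\Psi_x(t)$.
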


\begin{proof}
The commutation hypothesis iterates to $D^nU_sx=U_sD^nx$.  Applying the diagonal unitary to \eqref{eq:analytic-infinite-jet} gives \eqref{eq:transported-infinite-jet}; componentwise evaluation proves commutation with the raiser.
\end{proof}

\begin{corollary}[Thermodynamic analytic-jet realization]
Let $D=\nabla_X$ be a closed covariant derivative along a declared thermodynamic process field on a Hilbert completion of response sections.  Every response seed $\Lambda\in\mathcal A_r(\nabla_X)$ possesses the rigorous infinite response jet
\[
\left(\frac{t^n}{n!}\nabla_X^n\Lambda\right)_{n\ge0}
\in\cJ_\infty,
\qquad |t|<r.
\]
Thus the finite capstone extends to a closed unbounded infinite carrier on the analytic-seed domain.  What remains separate is generation of a global $C_0$-semigroup on the entire carrier without an analytic-vector restriction.
\end{corollary}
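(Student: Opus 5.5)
The plan is to reduce the corollary to the Unbounded analytic infinite-jet theorem by taking $H$ to be the declared Hilbert completion of response sections and $D=\nabla_X$. First I check the hypotheses of the Closedness lemma. The corollary assumes that $\nabla_X$ is closed on its domain $\mathcal D(\nabla_X)\subset H$, and it is densely defined because smooth compactly supported response sections lie in that domain and are dense in the completion. The lemma then makes $\mathbb R_{\nabla_X}$ densely defined and closed on $\cJ_\infty=\ell^2(\N_0;H)$. It also gives $\mathbb R_{\nabla_X}^n\iota_0\Lambda=\iota_n\nabla_X^n\Lambda$ for every $\Lambda\in\bigcap_n\mathcal D(\nabla_X^n)$. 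This identity is exactly the infinite-carrier version of the holonomic condition \eqref{eq:holonomic-raiser}. It is what justifies calling the layers covariant jets rather than bare tower coefficients.

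Next I apply the theorem to a seed $\Lambda\in\mathcal A_r(\nabla_X)$. For $|t|<r$ the series \eqref{eq:analytic-infinite-jet} converges in $\cJ_\infty$, and its squared norm is
\[
\sum_n\frac{|t|^{2n}}{(n!)^2}\norm{\nabla_X^n\Lambda}^2,
\]
which is dominated termwise by the analytic-seed sum. The theorem's layer formula shows that the limit is the displayed sequence $\bigl(\tfrac{t^n}{n!}\nabla_X^n\Lambda\bigr)_{n\ge0}$. The same theorem says that this sequence lies in $\mathcal D(\mathbb R_{\nabla_X})$ and is the unique analytic solution of \eqref{eq:infinite-jet-evolution}. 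This is the precise sense of ``rigorous infinite response jet.''

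To justify ``extends the finite capstone,'' I compose with the layer projections $\pi_n$ for $n\le N$. On $\bigoplus_{n\le N}H_n$ the truncated raiser coincides with the finite nilpotent raiser of Section~\ref{sec:cut-flow-capstone}, with $\mathbb T=0$. The first $N+1$ layers of $\Psi_\Lambda(t)$ are then the coefficients $t^n/n!$ of the finite flow formula \eqref{eq:jet-flow-formula}. So the infinite object restricts to the finite one layer by layer.

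The only real obstacle is the closedness and density of $\nabla_X$ on the chosen completion. The corollary takes this as a hypothesis, so I would state that dependence explicitly and not try to prove it. The final sentence of the corollary, that global $C_0$-semigroup generation is not claimed, needs no proof; it is a scope remark.
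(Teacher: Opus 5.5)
Your proposal is correct and follows the paper's route: the corollary is the closedness lemma and the unbounded analytic infinite-jet theorem specialized to $D=\nabla_X$ on the declared Hilbert completion, which is also how part~(c) of the Extended Capstone is justified. Three minor cautions: dense definition of $\nabla_X$ should be taken as a hypothesis (as in hypothesis~(iii) of the Extended Capstone) rather than argued from compactly supported sections; ``unbounded'' requires $\nabla_X$ itself to be unbounded, by the lemma; and your truncation remark agrees with the finite capstone only in the coefficients $t^n/n!$, since the carrier of Section~\ref{sec:cut-flow-capstone} is finite dimensional with layers $\Sym\nabla^n\Lambda$ rather than $\nabla_X^n\Lambda$.
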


%% file: sections/05a_curvature_and_analytic_relevance.tex
\subsection{Typing Recognition curvature against equilibrium thermodynamic curvature}
\label{sec:curvature-typing}

Several inequivalent curvatures occur in thermodynamics and field theory.  They must not be identified merely because the same word is used.

\begin{definition}[Three curvature types]
The paper distinguishes:
\begin{enumerate}[label=(\roman*)]
\item an equilibrium metric curvature $\mathcal R_{\mathrm{eq}}$, such as the scalar or tensor curvature of a declared Weinhold or Ruppeiner metric on the equilibrium state manifold;
\item a response-transport curvature $F^{\mathrm{open}}=\dd(\omega-\ell)$ or its vector-bundle analogue, measuring the failure of a declared response transport and ledger to close;
\item the spacetime quotient curvature $F_A$ induced by the target-faithful transport square on $E_{\RK}$.
\end{enumerate}
\end{definition}

These objects live in different bundles and have different dimensions.  An equilibrium scalar curvature cannot be inserted into a force--flux law or gauge action without an explicit bundle map and unit calibration.

\begin{proposition}[Typed curvature calibration]
Let
\[
\mathfrak F
=
\mathfrak F_{\mathrm{open}}
\oplus
\mathfrak F_{\mathrm{eq}}
\oplus
\mathfrak F_{\mathrm{sp}}
\]
be a direct sum of the declared open-response, equilibrium-metric, and spacetime quotient-curvature carriers after lawful nondimensionalization.  Any constitutive production calibration
\[
\mathcal C:\mathfrak F\to\operatorname{Hom}(\R^m,\R^r)
\]
decomposes uniquely as
\[
\mathcal C(F_{\mathrm{open}},R_{\mathrm{eq}},F_A)
=
\mathcal C_{\mathrm{open}}F_{\mathrm{open}}
+
\mathcal C_{\mathrm{eq}}R_{\mathrm{eq}}
+
\mathcal C_{\mathrm{sp}}F_A
\]
when $\mathcal C$ is linear.  The positive production theorem then gives
\[
\sigma(X)
=
\norm{
\mathcal C_{\mathrm{open}}F_{\mathrm{open}}X
+
\mathcal C_{\mathrm{eq}}R_{\mathrm{eq}}X
+
\mathcal C_{\mathrm{sp}}F_AX
}^2
\ge0.
\]
No individual curvature contribution is selected by geometry alone.
\end{proposition}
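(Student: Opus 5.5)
The argument has two steps: a linear-algebra step for the unique decomposition, and then the Calibrated curvature-to-production theorem applied with $F=(F_{\mathrm{open}},R_{\mathrm{eq}},F_A)$.

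\textbf{Decomposition.} Let $\mathcal C$ be linear on the direct sum $\mathfrak F=\mathfrak F_{\mathrm{open}}\oplus\mathfrak F_{\mathrm{eq}}\oplus\mathfrak F_{\mathrm{sp}}$. Let $j_{\mathrm{open}},j_{\mathrm{eq}},j_{\mathrm{sp}}$ be the canonical inclusions, and set
\[
\mathcal C_{\mathrm{open}}:=\mathcal C\circ j_{\mathrm{open}},\qquad
\mathcal C_{\mathrm{eq}}:=\mathcal C\circ j_{\mathrm{eq}},\qquad
\mathcal C_{\mathrm{sp}}:=\mathcal C\circ j_{\mathrm{sp}}.
\]
Every element of $\mathfrak F$ is uniquely $j_{\mathrm{open}}F_{\mathrm{open}}+j_{\mathrm{eq}}R_{\mathrm{eq}}+j_{\mathrm{sp}}F_A$, so linearity gives the stated sum. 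For uniqueness, suppose a second triple of linear maps $(\mathcal C'_{\mathrm{open}},\mathcal C'_{\mathrm{eq}},\mathcal C'_{\mathrm{sp}})$ reproduces $\mathcal C$. Evaluating on elements supported in a single summand (two entries zero) forces each map to agree with the corresponding restriction. This is just the universal property of the finite direct sum (biproduct) in the category of linear maps into $\operatorname{Hom}(\R^m,\R^r)$.

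\textbf{Production formula.} Substituting the decomposition into \eqref{eq:calibrated-production}, with $D(F)=\mathcal C(F)^{\mathsf T}\mathcal C(F)$ and any skew sector $A(F)$, gives
\[
\sigma(X)=\norm{\mathcal C(F)X}^2
=\norm{\mathcal C_{\mathrm{open}}F_{\mathrm{open}}X+\mathcal C_{\mathrm{eq}}R_{\mathrm{eq}}X+\mathcal C_{\mathrm{sp}}F_AX}^2\ge0.
\]
Here each term $\mathcal C_{\mathrm{open}}F_{\mathrm{open}}$, $\mathcal C_{\mathrm{eq}}R_{\mathrm{eq}}$, $\mathcal C_{\mathrm{sp}}F_A$ is a matrix in $\operatorname{Hom}(\R^m,\R^r)$ applied to $X$. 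Nonnegativity and the zero criterion are inherited directly from that theorem.

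\textbf{The final sentence.} I read ``no individual curvature contribution is selected by geometry alone'' as a non-uniqueness statement and prove it by example. The theorem's hypotheses allow any linear $\mathcal C$. For instance, $\mathcal C_{\mathrm{eq}}=\mathcal C_{\mathrm{sp}}=0$ with $\mathcal C_{\mathrm{open}}\neq0$ is admissible, and so is any cyclic permutation of which block is nonzero. Each choice gives a valid nonnegative production law for the same curvature data, so the curvature data cannot determine which contribution dominates, or even which one appears.

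The main subtlety, though not a real obstacle, is typing. The summands must be nondimensionalized, and the sum must be formed in a common real inner-product space, so that linearity of $\mathcal C$ on $\mathfrak F$ makes sense. The statement assumes exactly this through ``lawful nondimensionalization''. I would also note that cross terms between the three contributions generally appear in $\sigma$, so $\sigma$ is not a sum of three separate productions unless the blocks have orthogonal ranges.
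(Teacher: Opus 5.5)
Your proof is correct and takes the same route as the paper: the decomposition is the universal property of the direct sum, and the production identity is the calibrated curvature-to-production theorem applied to the combined datum $F=(F_{\mathrm{open}},R_{\mathrm{eq}},F_A)$. Your explicit non-uniqueness example for the final sentence and your remark on cross terms go slightly beyond the paper, which states that sentence without proof.
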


\begin{proof}
The decomposition is the universal property of a direct sum.  The production identity is the calibrated curvature-to-production theorem applied to the combined typed curvature datum.
\end{proof}

\begin{remark}[Relation to Ruppeiner geometry]
Ruppeiner curvature may be included through $\mathcal C_{\mathrm{eq}}$ when an equilibrium fluctuation metric and its physical calibration are declared.  The Recognition open curvature is not asserted to equal Ruppeiner curvature.  Their equality, proportionality, or coupling would be an additional constitutive theorem to be tested in a specified model.
\end{remark}

\subsection{When thermodynamic response seeds are analytic}
\label{sec:analytic-seed-relevance}

The analytic-vector assumption in the infinite-jet theorem is restrictive but not empty.  It holds locally for analytic constitutive laws and analytic process curves away from phase singularities, and it is automatic for bounded response generators.

\begin{theorem}[Holomorphic constitutive criterion for analytic response seeds]
\label{thm:holomorphic-response-seed}
Let $H$ be a complex Hilbert response fibre and let
\[
\lambda:\{z\in\C:|z|<R\}\to H
\]
be holomorphic with
\[
\sup_{|z|<R}\norm{\lambda(z)}_H\le M.
\]
Let $D^n\Lambda:=\lambda^{(n)}(0)$.  Then for every $0<r<R$,
\begin{equation}
\sum_{n=0}^\infty
\frac{r^{2n}}{(n!)^2}
\norm{D^n\Lambda}_H^2
\le
\frac{M^2}{1-(r/R)^2}.
\label{eq:holomorphic-analytic-bound}
\end{equation}
Hence $\Lambda\in\mathcal A_r(D)$ and the infinite response jet exists for $|t|<r$.
\end{theorem}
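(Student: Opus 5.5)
The plan is to expand $\lambda$ in its Taylor series and apply Parseval on circles, which is the vector-valued form of the Cauchy--Hardy identity. Since $\lambda$ is holomorphic on the disc of radius $R$ with values in the Hilbert space $H$, it has the norm-convergent expansion
\[
\lambda(z)=\sum_{n=0}^\infty \frac{z^n}{n!}\,\lambda^{(n)}(0)=\sum_{n=0}^\infty a_n z^n,
\qquad a_n:=\frac{D^n\Lambda}{n!},
\]
valid uniformly on compact subdiscs. I would justify this by applying the scalar theory to $z\mapsto\ip{\lambda(z)}{h}$ for each $h\in H$ and using the vector-valued Cauchy integral formula, which is standard for Banach-space-valued holomorphic maps.

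First I fix $\rho$ with $0<\rho<R$ and integrate $\norm{\lambda(\rho\e^{i\theta})}_H^2$ over $\theta\in[0,2\pi]$. The series converges uniformly on the circle $|z|=\rho$, so I may expand the inner product and integrate term by term. Orthogonality of $\e^{in\theta}$ in $L^2(0,2\pi)$ then gives the Parseval identity
\[
\sum_{n=0}^\infty \rho^{2n}\norm{a_n}_H^2=\frac1{2\pi}\int_0^{2\pi}\norm{\lambda(\rho\e^{i\theta})}_H^2\,\dd\theta\le M^2 .
\]
Letting $\rho\uparrow R$ and applying monotone convergence yields $\sum_n R^{2n}\norm{a_n}^2\le M^2$. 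Equivalently, Cauchy's estimates give $\norm{a_n}\le M R^{-n}$ directly.

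Next I take $0<r<R$. Then $r^{2n}\norm{a_n}^2=(r/R)^{2n}R^{2n}\norm{a_n}^2$. If I use the crude Cauchy bound $R^{2n}\norm{a_n}^2\le M^2$ termwise and sum the geometric series, I get exactly
\[
\sum_n \frac{r^{2n}}{(n!)^2}\norm{D^n\Lambda}^2\le \frac{M^2}{1-(r/R)^2},
\]
which is \eqref{eq:holomorphic-analytic-bound}. The Parseval bound actually gives the sharper constant $M^2$, which also implies the stated estimate. To stay closest to the displayed form, I would present the Cauchy-estimate route and remark on the sharper bound. The Cauchy estimate itself comes from
\[
a_n=\frac1{2\pi i}\oint_{|z|=\rho}\lambda(z)z^{-n-1}\,\dd z,
\]
which gives $\norm{a_n}\le M\rho^{-n}$, and then I let $\rho\to R$.

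Finally, membership $\Lambda\in\mathcal A_r(D)$ is exactly the finiteness of the sum in \eqref{eq:analytic-seed-space}, where $D^n\Lambda$ is interpreted as $\lambda^{(n)}(0)$ so that all iterates are defined. The existence of the infinite response jet for $|t|<r$ then follows from the Unbounded analytic infinite-jet theorem. The only delicate point is the interpretive one: the statement defines $D^n\Lambda$ as the Taylor coefficients rather than as iterates of a given closed operator. To invoke that theorem, I must either note that its proof uses only the layer sequence $(D^n x)_n$ and its square-summability, or identify $D$ with differentiation along the process curve on a suitable space of holomorphic $H$-valued germs. I expect this identification to be the main, though mild, obstacle, and I would handle it by applying the convergence and uniqueness argument directly to the sequence $\left(t^n\lambda^{(n)}(0)/n!\right)_{n\ge0}$.
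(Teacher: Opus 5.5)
Your proof is correct and follows the paper's route: the Hilbert-valued Cauchy estimate $\norm{\lambda^{(n)}(0)}_H\le Mn!/R^n$, applied termwise and then summed as a geometric series. Your Parseval remark (which gives the sharper constant $M^2$) and your note on reading $D^n\Lambda$ as Taylor coefficients when invoking the infinite-jet theorem go beyond what the paper writes, but neither is needed for the displayed estimate.
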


\begin{proof}
The Hilbert-valued Cauchy estimate gives
\[
\norm{\lambda^{(n)}(0)}_H
\le
\frac{Mn!}{R^n}.
\]
Substitution yields
\[
\sum_{n=0}^\infty
\frac{r^{2n}}{(n!)^2}
\norm{D^n\Lambda}_H^2
\le
M^2\sum_{n=0}^\infty\left(\frac rR\right)^{2n},
\]
which is \eqref{eq:holomorphic-analytic-bound}.
\end{proof}

\begin{corollary}[Analytic constitutive thermodynamic processes]
Suppose the fundamental relation $U(S,V)$ is real analytic on a stable chart and a real-analytic process curve
\[
t\longmapsto(S(t),V(t))
\]
remains in a compact subset on which all denominators in the response formulas are nonzero.  Then each response coordinate and the nondimensional response section $\widehat\Lambda(t)$ are real analytic in $t$.  After shrinking the interval, they admit a bounded holomorphic extension and satisfy the analytic-seed estimate of Theorem~\ref{thm:holomorphic-response-seed}.
\end{corollary}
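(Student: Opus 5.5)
The plan is to prove the corollary in three stages: analyticity of each response coordinate as a function on the chart, analyticity along the curve, and then the complex extension that feeds Theorem~\ref{thm:holomorphic-response-seed}.

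\emph{Analyticity on the chart.} If $U$ is real analytic on the stable chart, then so are $T=U_S$, $P=-U_V$ and all second derivatives $U_{SS},U_{SV},U_{VV}$. On an admissible chart $\det D^2U>0$, so the real-analytic inverse function theorem makes the Legendre changes of chart to $(S,P)$, $(T,V)$, $(T,P)$ real analytic. Every constrained derivative in the response table can then be written as a rational expression in $S,V,U_S,U_V$ and the second derivatives of $U$. For example,
\[
C_V=\frac{U_S}{U_{SS}},\qquad
\pdc{P}{V}{S}=-U_{VV},\qquad
\pdc{P}{V}{T}=-\frac{\det D^2U}{U_{SS}},
\]
and $C_P$, $K_S$, $K_T$ are handled the same way via the closure identity \eqref{eq:cp-cv-kappa}. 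Each of the six entries of Theorem~3.2 is therefore a quotient of real-analytic functions whose denominators are among the quantities assumed nonzero. Dividing by the fixed scales $T_*,P_*,V_*$ preserves analyticity, so $\widehat\Lambda$ is a real-analytic map from the chart to $\R^6$.

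\emph{Analyticity along the curve.} Composing with the real-analytic process curve $t\mapsto(S(t),V(t))$ gives a real-analytic map $t\mapsto\widehat\Lambda(t)$ on an interval. By hypothesis the curve stays in a compact set $K$ on which the denominators do not vanish.

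\emph{Complex extension and the seed estimate.} Fix $t_0$ in the interval. A convergent power series at $t_0$ defines a holomorphic extension to some disc $|z-t_0|<R_0$. Shrinking to $R<R_0$ gives a closed disc inside the domain of holomorphy, so $\widehat\Lambda$ is bounded there by a constant $M$, by continuity on a compact set. This uniformity is where compactness of $K$ matters: the complexified denominators stay away from zero on a small complex neighbourhood of $K$, so the radius can be chosen uniform over a compact subinterval. With $H=\C^6$ and $D^n\Lambda=\widehat\Lambda^{(n)}(t_0)$, which after recentering is the $n$th $t$-derivative, i.e.\ $\nabla_X^n$ along the curve in the flat coordinate frame, Theorem~\ref{thm:holomorphic-response-seed} gives the bound \eqref{eq:holomorphic-analytic-bound}, and hence membership in $\mathcal A_r(D)$ for every $r<R$.

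\emph{Main obstacle.} The delicate step is the first: checking that every one of the six constrained derivatives, defined after pullback to a Legendre chart, is analytic with only the stated denominators. This comes down to the explicit formulas above together with analyticity of the inverse chart maps, which the positive Hessian guarantees. The second point needing care is that "shrinking the interval" must yield a uniform radius $R$ and bound $M$. I would obtain both by extending to a complex neighbourhood of the compact image and applying a Cauchy-type compactness argument.
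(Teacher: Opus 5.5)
Your proposal is correct and follows the same route as the paper: real analyticity is preserved under differentiation, algebraic combinations, reciprocals of nonvanishing functions, and composition with the analytic curve, and local complexification plus a compactness bound then feeds Theorem~\ref{thm:holomorphic-response-seed}. Your explicit rational formulas in $U_S,U_V,U_{SS},U_{SV},U_{VV}$ and your remark on a uniform radius only spell out what the paper's short proof asserts.
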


\begin{proof}
Derivatives, algebraic combinations, and reciprocals of nonvanishing real-analytic functions are real analytic.  Composition with a real-analytic curve preserves analyticity.  Local complexification supplies a holomorphic extension, and compactness after shrinking supplies a finite bound.
\end{proof}

\begin{corollary}[Ideal-gas and van der Waals analytic sectors]
The ideal-gas response coordinates are analytic wherever $T>0$ and $V>0$.  The van der Waals response coordinates are analytic on every compact stable region satisfying
\[
V>b,
\qquad
\left(\frac{\partial P}{\partial V}\right)_T<0,
\]
and avoiding zeros of the heat capacities and bulk moduli.  Along every analytic process curve contained in such a region, the infinite response-jet theorem applies locally.
\end{corollary}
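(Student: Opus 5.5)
The plan is to check the three regimes separately, each time reducing to the analyticity corollary for analytic constitutive processes stated just above. Both regions need real analyticity of the fundamental data and nonvanishing of the response denominators.

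\emph{Ideal gas.} On $\{T>0,V>0\}$ I would write every response coordinate in the $(T,V)$ chart using the exact constitutive example. The relevant expressions are $S(T,V)=S_0+C_V\log(T/T_0)+R\log(V/V_0)$ and $P=RT/V$. The responses are $\lambda_P=-ST/C_P$, $\lambda_V=-ST/C_V$, $\lambda_S=-\gamma P$, $z_T=-P$, $z_S=V/\gamma$ and $\lambda_T=V$. Each is built from $\log T$, $\log V$, $T$, $V$ and $1/V$ by sums, products and constant multiples, so each is real analytic on the open quadrant. Here $C_V,C_P>0$ are constants and $K_T=P>0$, $K_S=\gamma P>0$, so no denominator vanishes. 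Note that $\lambda_P,\lambda_V$ may vanish where $S=0$, but they never appear as denominators. Passing from the $(S,V)$ chart to $(T,V)$ is a real-analytic diffeomorphism, since $(\partial S/\partial T)_V=C_V/T\neq0$. Analyticity therefore holds in whichever chart the preceding corollary is phrased.

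\emph{Van der Waals.} The formulas already derived give $S(T,V)$ with $\log(V-b)$, and $P=RT/(V-b)-a/V^2$. They also give $(\partial P/\partial V)_T=-RT/(V-b)^2+2a/V^3$, together with
\[
K_T=-V(\partial P/\partial V)_T,\qquad C_P=C_V-T\frac{(R/(V-b))^2}{(\partial P/\partial V)_T},\qquad K_S=\frac{C_P}{C_V}K_T .
\]
On the stated region, with $V>b$, $(\partial P/\partial V)_T<0$ and the capacities and moduli nonzero, every expression is a rational-logarithmic combination with nonvanishing denominators. It is therefore real analytic. The table in the response summary then gives all six coordinates. For instance $z_S=PV/K_S$ and $\lambda_T=PV/K_T$ are analytic because $K_S,K_T\neq0$, and $\lambda_P,\lambda_V$ are analytic because $C_P,C_V\neq0$. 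Compactness of the region gives uniform positive lower bounds on $|K_S|,|K_T|,|C_P|,C_V$ and on $V-b$.

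\emph{Analytic process curves.} For a real-analytic curve $t\mapsto(T(t),V(t))$ contained in such a region, the compositions are real analytic, and so is the nondimensional section $\widehat\Lambda(t)$. I would apply the corollary on analytic constitutive processes: around each $t_0$, complexify and shrink to a disc of radius $R$ on which the holomorphic extension is bounded. Theorem~\ref{thm:holomorphic-response-seed} then places the seed in $\mathcal A_r$ for each $r<R$, and the unbounded analytic infinite-jet theorem supplies the local infinite jet.

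\textbf{Main obstacle.} The delicate point is the uniform bounded holomorphic extension, which need not hold across the whole region because of the logarithms and the singular loci. I would handle it only locally, as the word ``locally'' in the statement allows. The nonvanishing denominators at real points persist on a small complex neighbourhood by continuity. Choosing principal branches of $\log$ near the positive reals, where $T>0$ and $V-b>0$, keeps the extension single-valued, and shrinking the disc gives the bound $M$.
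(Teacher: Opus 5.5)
Your proposal is correct and follows the paper's intended route. The paper gives no separate proof and treats the statement as an immediate consequence of the preceding corollary on analytic constitutive processes. You argue the same way: the explicit ideal-gas and van der Waals formulas are rational--logarithmic with nonvanishing denominators on the stated regions, and local complexification of an analytic process curve combined with Theorem~\ref{thm:holomorphic-response-seed} gives the analytic seed and hence the local infinite jet. Your analytic chart change from $(S,V)$ to $(T,V)$ via $(\partial S/\partial T)_V=C_V/T\neq0$, and your local construction of a bounded holomorphic extension, supply exactly the details the paper leaves implicit.
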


\begin{proposition}[Bounded-generator criterion]
If $D\in\mathcal B(H)$ is bounded, then every $x\in H$ is an entire analytic seed:
\begin{equation}
\sum_{n=0}^\infty
\frac{r^{2n}}{(n!)^2}\norm{D^nx}^2
\le
\norm x^2
\sum_{n=0}^\infty
\frac{(r\norm D)^{2n}}{(n!)^2}
<\infty
\label{eq:bounded-generator-entire}
\end{equation}
for every $r>0$.
\end{proposition}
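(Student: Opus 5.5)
The plan is to bound each term of the analytic-seed series by submultiplicativity of the operator norm and then compare with an exponential series.

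\textbf{Step 1: termwise bound.} For bounded $D$, every $x\in H$ lies in $\mathcal D(D^n)$ for all $n$. Submultiplicativity of the operator norm gives
\[
\norm{D^nx}\le\norm{D}^n\norm x,
\]
which I would check by induction on $n$. Squaring and multiplying by $r^{2n}/(n!)^2$ gives termwise domination by $\norm x^2(r\norm D)^{2n}/(n!)^2$. Summing over $n$ yields the displayed inequality \eqref{eq:bounded-generator-entire}.

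\textbf{Step 2: finiteness of the majorant.} Set $a=r\norm D\ge0$. Since $(n!)^2\ge n!$, each term satisfies
\[
\frac{a^{2n}}{(n!)^2}\le\frac{(a^2)^n}{n!},
\]
so the series is bounded by $\e^{a^2}=\e^{r^2\norm D^2}<\infty$. Alternatively, the series equals $I_0(2a)$, but the cruder exponential bound suffices and avoids special functions.

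\textbf{Step 3: conclusion.} Because $r>0$ was arbitrary, $x\in\bigcap_{r>0}\mathcal A_r(D)$, which is the definition of an entire analytic seed. The unbounded analytic infinite-jet theorem then supplies the jet flow for all $t$. As a consistency check, the closedness lemma already shows that $\mathbb R_D$ is bounded with $\norm{\mathbb R_D}\le\norm D$, so $\e^{t\mathbb R_D}$ exists as a norm-convergent series; this matches the conclusion but is not needed for the argument.

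No step poses a genuine obstacle. The only point needing care is the domain: bounded $D$ is everywhere defined, so $\bigcap_n\mathcal D(D^n)=H$. I would state this explicitly before invoking the definition \eqref{eq:analytic-seed-space}.
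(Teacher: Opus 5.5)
Your proposal is correct and takes the same route as the paper: the termwise bound $\norm{D^nx}\le\norm D^n\norm x$, followed by convergence of the factorially damped scalar series. Your explicit majorant $\e^{r^2\norm D^2}$ and your remark that $\bigcap_n\mathcal D(D^n)=H$ make precise two points the paper's one-line proof leaves implicit.
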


\begin{proof}
Use $\norm{D^nx}\le\norm D^n\norm x$ and convergence of the factorially damped scalar series.
\end{proof}

\begin{remark}[Failure at singular regimes]
Near a spinodal, critical point, phase boundary, vanishing heat capacity, or any chart degeneration, the analytic radius may collapse and the response generator may cease to satisfy the required estimate.  The theorem then does not fail mysteriously; its domain certificate reports that the chosen infinite-jet representation is no longer lawful.
\end{remark}

%% file: sections/05b_extended_capstone.tex
\section{Extended Thermodynamic Recognition Capstone}
\label{sec:extended-capstone}

The previous results can be consumed as one strengthened theorem without confusing their hypotheses.

\begin{theorem}[Extended Thermodynamic Recognition Capstone]
Assume the hypotheses of the finite Thermodynamic Recognition Capstone.  Assume in addition:
\begin{enumerate}[label=(\roman*)]
\item the equilibrium chart carries a declared nowhere-vanishing area form $\varpi$ and four selected response channels have been nondimensionalized to a smooth map $F:\cM\to\R^4$;
\item an open-curvature space $\mathfrak F$, force space $\R^m$, and prior calibration
\[
\mathcal C:\mathfrak F\to\operatorname{Hom}(\R^m,\R^r)
\]
have been declared;
\item the covariant process derivative $D=\nabla_X$ is densely defined and closed on a Hilbert completion $H$ of response sections;
\item the response seed $\Lambda$ belongs to $\mathcal A_r(D)$ for some $r>0$.
\end{enumerate}
Then all finite-capstone conclusions remain valid and the following strengthened conclusions hold:
\begin{enumerate}[label=(\alph*)]
\item the response-bracket matrix
\[
\mathbb B_\varpi(F)_{ab}=\{f_a,f_b\}_\varpi
\]
is skew, has rank at most two, transforms by congruence under constant response-basis changes, and obeys
\[
B_{12}B_{34}-B_{13}B_{24}+B_{14}B_{23}=0;
\]
\item for every admissible curvature value $F^{\mathrm{open}}$ and force $X$, the calibrated response law
\[
J=\bigl(\mathcal C(F^{\mathrm{open}})^{\mathsf T}
\mathcal C(F^{\mathrm{open}})+A(F^{\mathrm{open}})\bigr)X,
\qquad A^{\mathsf T}=-A,
\]
has nonnegative production
\[
J^{\mathsf T}X
=\norm{\mathcal C(F^{\mathrm{open}})X}^2\ge0;
\]
\item the completed carrier $\cJ_\infty=\ell^2(\N_0;H)$ supports the closed unbounded raiser $\mathbb R_D$, and for $|t|<r$ the exact infinite response jet is
\[
\Psi_\Lambda(t)
=\left(\frac{t^n}{n!}D^n\Lambda\right)_{n\ge0};
\]
\item if $\Lambda$ is entire analytic for $D$, the infinite-jet orbit exists for every $t$; if a unitary process transport commutes strongly with $D$, it transports the tower layer by layer without changing the factorial coefficients.
\end{enumerate}
No finite certificate is used to infer these infinite or constitutive conclusions.
\end{theorem}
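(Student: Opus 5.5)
The proof is an assembly argument: each strengthened conclusion is the specialization of a theorem already proved, and the work is to check that the added hypotheses (i)--(iv) supply exactly the inputs those theorems need. None of them disturbs the hypotheses of the finite Thermodynamic Recognition Capstone. The additions concern a separate area form, a separate calibration map, and a separate Hilbert completion $H$; they do not modify $\cJ_N(x)$, $\mathbb R_\nabla$, $\mathbb T_X$, $\mathsf J$, $\cA_x$, $\Pi_x$, $\omega$ or $\ell$. Hence conclusions (a)--(e) of the finite capstone, the Eye statement, and unitary invariance carry over verbatim, and I would say this first.

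For (a), I apply the Universal skew four-channel bracket theorem to the map $F$ and area form $\varpi$ of hypothesis (i), with $\cM$ the equilibrium chart. Smoothness of $F$ and nonvanishing of $\varpi$ are the only inputs used in its proof, which writes $\mathbb B_\varpi(F)=uv^{\mathsf T}-vu^{\mathsf T}$. This gives skew-symmetry, rank at most two, the Pluecker relation \eqref{eq:plucker-response}, and congruence \eqref{eq:bracket-congruence}. The Thermodynamic four-response matrix corollary confirms that nondimensionalized response channels are admissible.

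For (b), I apply the Calibrated curvature-to-production theorem pointwise with $D(F)=\mathcal C(F)^{\mathsf T}\mathcal C(F)$ and an arbitrary skew $A$, using the declared $\mathcal C$ of hypothesis (ii). The skew part drops out of $X^{\mathsf T}(D+A)X$, which leaves $\norm{\mathcal C(F^{\mathrm{open}})X}^2$.

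For (c), I use hypothesis (iii) and the Closedness lemma to get that $\mathbb R_D$ is densely defined and closed on $\cJ_\infty$, and unbounded exactly when $D$ is. Hypothesis (iv) together with the Unbounded analytic infinite-jet theorem then yields convergence of $\Psi_\Lambda(t)$ for $|t|<r$, its layer formula via \eqref{eq:raiser-powers}, and uniqueness as the analytic solution of \eqref{eq:infinite-jet-evolution}.

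For (d), the entire case is the final clause of that theorem. Transport is the Commuting transported infinite jet corollary: from $DU_s=U_sD$ on $\mathcal D(D)$ it gets $D^nU_s=U_sD^n$, so the diagonal $\mathbb U_s$ commutes with $\mathbb R_D$ and maps $\Psi_\Lambda(t)$ to $\bigl(\tfrac{t^n}{n!}U_sD^n\Lambda\bigr)_n$ with the coefficients unchanged.

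The only step needing care is interpreting \emph{commutes strongly} in (d) as the corollary's hypothesis: $U_s\mathcal D(D)\subseteq\mathcal D(D)$ and $DU_s=U_sD$ there. I would state this reading explicitly. I would also note that $U_s$ preserves $\mathcal A_r(D)$, since $\norm{D^nU_s\Lambda}=\norm{D^n\Lambda}$, so the transported seed is again analytic.

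Finally, I would remark that the infinite-jet conclusions use only the analytic-seed hypothesis and never the finite packet. This justifies the last sentence of the statement.
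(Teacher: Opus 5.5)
Your proposal is correct and follows the same assembly argument as the paper: the four-channel bracket theorem gives (a), the calibrated curvature-to-production theorem gives (b), the closedness lemma with the unbounded analytic infinite-jet theorem gives (c), and the entire-seed clause with the commuting transported-jet corollary gives (d). Your added remarks are careful refinements that the paper leaves implicit: that $\mathbb R_D$ is unbounded only when $D$ is, the explicit reading of ``commutes strongly,'' and the invariance of $\mathcal A_r(D)$ under $U_s$.
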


\begin{proof}
The finite conclusions are those of the Thermodynamic Recognition Capstone.  Part (a) is the universal skew four-channel bracket theorem.  Part (b) is the calibrated curvature-to-production theorem.  Part (c) is the closedness lemma and unbounded analytic infinite-jet theorem applied to $D=\nabla_X$.  Part (d) follows from the entire-seed statement and the commuting transported-jet corollary.  Each conclusion consumes only the additional datum named in its corresponding hypothesis.
\end{proof}

\begin{remark}[Expanded but still typed]
The theorem genuinely enlarges the claim boundary: it supplies an intrinsic skew four-channel object, a positive curvature-generated production functional, and an unbounded infinite response tower.  It does not identify the ordinary derivative Jacobian with the bracket matrix, choose a physical calibration from geometry alone, or promote an analytic-vector orbit into a whole-carrier semigroup without a generator theorem.
\end{remark}

%% file: sections/05c_novelty_falsifiability.tex
\section{Logical dependencies, novelty, and falsifiable signatures}
\label{sec:dependency-falsifiability}

A broad theorem package is credible only when its dependency arrows are visible.  This section records which conclusions are intrinsic to equilibrium and Recognition geometry, which require additional dynamics, and which admit direct experimental rejection tests.

\subsection{Dependency ladder}

The logical layers used in this paper are:
\begin{center}
\begin{tabular}{p{0.10\textwidth}p{0.32\textwidth}p{0.46\textwidth}}
\toprule
Layer & Additional datum & Conclusion supplied\\
\midrule
$\mathsf L_0$ & Fundamental relation $U(S,V)$ and regular equilibrium chart & Legendre structure, Maxwell relations, constrained response identities\\
$\mathsf L_1$ & Observation, target, lawful ledger & Recognition kernel, target-faithful completion, repair rank, decoder burden, loop residue\\
$\mathsf L_2$ & Connection, graded jet carrier, self-adjoint cut & Finite and analytic response jets, cut sectors, commutator curvature, zero-mode projection\\
$\mathsf L_3$ & Local-equilibrium map, constant-rank observation/target bundle maps, compatible transport & Thermodynamic Recognition quotient bundle, explicit field $\Phi_{\mathrm{th}}$, target-faithful transport square, induced connection and curvature, uniform source domination\\
$\mathsf L_4$ & Lorentzian metric plus locality, symmetry, parity, and derivative-order contract & Minimal invariant Einstein--Hilbert plus quotient-gauge--field action class and its residual constitutive freedom\\
$\mathsf L_5$ & Couplings and invariant potential within the classified action & Metric, gauge, and Recognition-field Euler--Lagrange equations\\
$\mathsf L_6$ & Gauge-fixed coercive finite regulator & Exact finite partition function, Schwinger--Dyson and effective equations\\
$\mathsf L_7$ & Renormalized anomaly-free continuum effective action & Quantum-effective metric, gauge, matter, and Ward equations\\
$\mathsf L_8$ & Positive target-faithful physical Hessian and nuclear spectral test space & Continuum Gaussian measure, regulator removal, reflection positivity, unitary reconstruction, bosonic anomaly freedom\\
$\mathsf L_9$ & Exact propagation, calibrated scale, and complete parallel transport & Metric and connection identification up to diffeomorphism and gauge\\
$\mathsf L_{10}$ & Transverse slice, stable linearized decoder, derivative control, uncertainty bounds & Noise-stable local reconstruction and finite-error falsification threshold\\
\bottomrule
\end{tabular}
\end{center}

No conclusion is promoted downward through this table.  In particular, $\mathsf L_0$--$\mathsf L_2$ do not by themselves produce a spacetime field.  Layer $\mathsf L_3$ constructs that field and its connection from pulled-back thermodynamic response jets.  Layer $\mathsf L_4$ classifies, but does not numerically calibrate, the minimal action inside an explicit local symmetry class.  Layer $\mathsf L_6$ does not imply nonlinear regulator removal, and the constructive $\mathsf L_8$ theorem does not assert nonlinear ultraviolet completion outside its positive quadratic sector.  The purpose of the ladder is not bureaucratic caution.  It prevents a lawful result at one layer from being mistaken for an existence or selection theorem at the next.

\subsection{Novelty map}

Several ingredients used later are established mathematical structures: Legendre thermodynamics, Stokes' theorem, quotient vector bundles, metric connections, Einstein--Hilbert variation, Yang--Mills variation, finite-dimensional Schwinger--Dyson integration by parts, Gaussian nuclear-space measures, and Osterwalder--Schrader reconstruction.  They are rederived where needed because their hypotheses and interfaces are load bearing, not because the paper claims to have rediscovered them.

The framework-specific contributions are the target-relative blind quotient and canonical minimal completion; the exact repair-rank and decoder-burden statements; their thermodynamic finite-jet specialization; the intrinsic four-response bracket and its Pluecker constraint; the lawful ledger/open-residue separation; and the cut--flow--jet synthesis with analytic infinite-jet extension.  The spacetime bridge is also framework-specific: a local-equilibrium map pulls the response jets to spacetime; the target-relevant kernel quotient becomes the Recognition field bundle; compatible response transport descends to a connection; its curvature is the infinitesimal defect of the target-faithful transport square; and a bundle decoder gives uniform source domination and strict target no-blindness.  The action is not selected by a resemblance between curvature symbols.  It is classified within an explicit parity-even local low-derivative invariant class, while couplings and nonlinear potential remain separately typed.  Finally, the fixed-decoder spacetime theorem turns exact geometric identifiability into a finite-error reconstruction estimate.  Thus the novelty lies in the Recognition architecture and its exact interfaces.  Standard variational and constructive equations are consequences of the constructed carrier plus declared dynamical and analytic data.

\subsection{Noise-stable Pluecker falsification test}

The rank-two response-bracket theorem gives an observable algebraic constraint.  The following estimate turns it into a finite-error test.

At a fixed state choose a local oriented coframe $(\alpha,\beta)$ with $\varpi=\alpha\wedge\beta$ and write
\[
\dd f_a=u_a\alpha+v_a\beta,
\qquad
u=(u_a)_{a=1}^4,
\qquad
v=(v_a)_{a=1}^4.
\]
To prevent the vector $u$ from being confused with the Greek letter $\nu$, write explicitly
\[
u:=(u_a)_{a=1}^4.
\]
Then
\[
B=uv^{\mathsf T}-vu^{\mathsf T}.
\]
Suppose estimated gradients give
\[
\widetilde u=u+\delta u,
\qquad
\widetilde v=v+\delta v,
\qquad
\norm{\delta u}\le\varepsilon_u,
\qquad
\norm{\delta v}\le\varepsilon_v,
\]
and set
\[
\widetilde B=\widetilde u\widetilde v^{\mathsf T}
-\widetilde v\widetilde u^{\mathsf T}.
\]
For a skew matrix $C$, let
\[
\beta(C)=(C_{12},C_{13},C_{14},C_{23},C_{24},C_{34})\in\R^6
\]
and define its Pluecker residual
\[
\mathfrak P(C)
=C_{12}C_{34}-C_{13}C_{24}+C_{14}C_{23}.
\]

\begin{theorem}[Noise-stable response-bracket test]
\label{thm:noise-stable-pluecker}
Define
\begin{equation}
E
:=2\left(
\varepsilon_u\norm{\widetilde v}
+\varepsilon_v\norm{\widetilde u}
+3\varepsilon_u\varepsilon_v
\right),
\qquad
\eta:=\frac{E}{\sqrt2}.
\label{eq:measured-bracket-error}
\end{equation}
Then
\begin{align}
\norm{\widetilde B-B}_{\mathrm F}&\le E,
\label{eq:bracket-frobenius-error}\\
\norm{\beta(\widetilde B)-\beta(B)}&\le\eta,
\label{eq:bracket-six-error}
\end{align}
and, because $\mathfrak P(B)=0$,
\begin{equation}
\abs{\mathfrak P(\widetilde B)}
\le
\norm{\beta(\widetilde B)}\,\eta
+\frac32\eta^2.
\label{eq:pluecker-noise-bound}
\end{equation}
Consequently, a measured residual larger than the right-hand side of \eqref{eq:pluecker-noise-bound} rejects the conjunction of the declared two-dimensional equilibrium chart, channel assignment, nondimensionalization, and error bounds at that state.
\end{theorem}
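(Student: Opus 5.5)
The plan is a direct perturbation argument in three stages, one for each displayed inequality. None of the steps needs more than the decomposable form $B=uv^{\mathsf T}-vu^{\mathsf T}$ from \eqref{eq:decomposable-bracket}, together with $\mathfrak P(B)=0$, which is part (iii) of the universal skew four-channel bracket theorem.

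\emph{Frobenius bound.} I will write the exact quantities in terms of the measured ones, $u=\widetilde u-\delta u$ and $v=\widetilde v-\delta v$, rather than the other way round. This is what lets the final bound involve only $\norm{\widetilde u}$ and $\norm{\widetilde v}$, which are data. Expanding gives
\[
uv^{\mathsf T}=\widetilde u\widetilde v^{\mathsf T}-\widetilde u\,\delta v^{\mathsf T}-\delta u\,\widetilde v^{\mathsf T}+\delta u\,\delta v^{\mathsf T},
\]
and therefore
\[
\widetilde B-B=M-M^{\mathsf T},\qquad M:=\widetilde u\,\delta v^{\mathsf T}+\delta u\,\widetilde v^{\mathsf T}-\delta u\,\delta v^{\mathsf T}.
\]
A rank-one matrix satisfies $\norm{xy^{\mathsf T}}_{\mathrm F}=\norm x\norm y$. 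The triangle inequality then gives $\norm{M}_{\mathrm F}\le\varepsilon_v\norm{\widetilde u}+\varepsilon_u\norm{\widetilde v}+\varepsilon_u\varepsilon_v$. Since $\norm{M^{\mathsf T}}_{\mathrm F}=\norm{M}_{\mathrm F}$, we get $\norm{\widetilde B-B}_{\mathrm F}\le 2(\cdots)$, which is at most $E$. The factor $3$ in front of $\varepsilon_u\varepsilon_v$ in $E$ is slack, so \eqref{eq:bracket-frobenius-error} holds with room to spare.

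\emph{Six-vector bound.} For any skew $4\times4$ matrix $C$, each independent entry appears twice and the diagonal vanishes, so $\norm{C}_{\mathrm F}^2=2\norm{\beta(C)}^2$. Apply this to the skew matrix $C=\widetilde B-B$ and use linearity of $\beta$. This gives \eqref{eq:bracket-six-error} with $\eta=E/\sqrt2$.

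\emph{Pluecker bound.} The residual is a quadratic form on $\R^6$: $\mathfrak P(C)=\tfrac12\beta(C)^{\mathsf T}Q\,\beta(C)$. Here $Q$ is the symmetric signed permutation matrix that pairs coordinates $(1,6)$ with sign $+$, $(2,5)$ with sign $-$, and $(3,4)$ with sign $+$; it is orthogonal, so $\norm Q=1$. Set $\widetilde\beta=\beta(\widetilde B)$ and $\delta=\widetilde\beta-\beta(B)$. Polarization gives
\[
0=\mathfrak P(B)=\mathfrak P(\widetilde\beta-\delta)=\mathfrak P(\widetilde\beta)-\widetilde\beta^{\mathsf T}Q\delta+\tfrac12\delta^{\mathsf T}Q\delta .
\]
Hence $\abs{\mathfrak P(\widetilde B)}\le\norm{\widetilde\beta}\,\eta+\tfrac12\eta^2$, which is stronger than \eqref{eq:pluecker-noise-bound}. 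The rejection statement is then the contrapositive. If the declared chart, channel assignment, nondimensionalization and error bounds all held, the measured residual would obey \eqref{eq:pluecker-noise-bound}; a larger residual therefore falsifies their conjunction.

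The only step needing care is the bookkeeping in the first stage. Perturbing around the measured vectors $\widetilde u,\widetilde v$ rather than the unknown exact $u,v$ is what makes the bound usable from data alone. The constants in the statement are looser than what this argument yields, so they need not be matched exactly.
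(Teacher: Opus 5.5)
Your proof is correct and follows the paper's three-stage argument: the rank-one Frobenius expansion, the identity $\norm{C}_{\mathrm F}^2=2\norm{\beta(C)}^2$ for skew $C$, and the signed-permutation quadratic form for $\mathfrak P$. The only difference is that you expand around the measured $\widetilde u,\widetilde v,\beta(\widetilde B)$ instead of the exact $u,v,\beta(B)$. This removes the paper's triangle-inequality conversions $\norm u\le\norm{\widetilde u}+\varepsilon_u$ and $\norm{\beta(B)}\le\norm{\beta(\widetilde B)}+\eta$, which is exactly why you get the sharper constants $1$ and $\tfrac12$ in place of the stated $3$ and $\tfrac32$.
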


\begin{proof}
Expanding the measured matrix gives
\[
\widetilde B-B
=\delta u\,v^{\mathsf T}+u\,\delta v^{\mathsf T}
-\delta v\,u^{\mathsf T}-v\,\delta u^{\mathsf T}
+\delta u\,\delta v^{\mathsf T}
-\delta v\,\delta u^{\mathsf T}.
\]
For rank-one matrices, $\norm{ab^{\mathsf T}}_{\mathrm F}=\norm a\norm b$.  Hence
\[
\norm{\widetilde B-B}_{\mathrm F}
\le2\left(
\varepsilon_u\norm v
+\varepsilon_v\norm u
+\varepsilon_u\varepsilon_v
\right).
\]
Using $\norm u\le\norm{\widetilde u}+\varepsilon_u$ and
$\norm v\le\norm{\widetilde v}+\varepsilon_v$ gives \eqref{eq:bracket-frobenius-error}.  Since the difference is skew,
\[
\norm{\widetilde B-B}_{\mathrm F}^2
=2\norm{\beta(\widetilde B)-\beta(B)}^2,
\]
which proves \eqref{eq:bracket-six-error}.

Let $b=\beta(B)$ and $e=\beta(\widetilde B)-b$.  There is a symmetric orthogonal matrix $J_6$ pairing the coordinates $(1,6)$, $(2,5)$ and $(3,4)$ with signs $+,-,+$ such that
\[
\mathfrak P(B)=\frac12b^{\mathsf T}J_6b.
\]
Therefore
\[
\mathfrak P(\widetilde B)-\mathfrak P(B)
=b^{\mathsf T}J_6e+\frac12e^{\mathsf T}J_6e,
\]
so
\[
\abs{\mathfrak P(\widetilde B)}
\le\norm b\norm e+\frac12\norm e^2.
\]
Finally, $\norm b\le\norm{\beta(\widetilde B)}+\norm e$ and $\norm e\le\eta$, which yields \eqref{eq:pluecker-noise-bound}.
\end{proof}

\begin{remark}[What a failed test means]
The theorem is a falsification test for the full declared measurement contract, not a detector of one preferred failure mechanism.  A violation can indicate departure from a smooth two-dimensional equilibrium regime, a hidden internal variable, incorrect constraints, unit or channel mismatch, or underestimated error.  The Recognition ledger then records which assumption failed instead of retroactively renaming the residual.
\end{remark}

%% file: sections/06a0_thermodynamic_recognition_square.tex
\section{From thermodynamic response jets to the spacetime Recognition bundle}
\label{sec:thermodynamic-recognition-square}

The spacetime fields used later are now constructed from the thermodynamic Recognition data rather than introduced as unrelated variables.  The construction is the bundle form of the quotient-adapter, compatibility-transfer, source-domination, and no-blindness principles of the Recognition Kernel framework \cite{DabasRKF}.

\subsection{Local-equilibrium pullback and the target-relevant bundle}

Let $\cM_{\mathrm{th}}$ be the admissible thermodynamic state manifold and let $E_\Lambda\to\cM_{\mathrm{th}}$ be the direct-sum unit bundle of the six response coordinates.  A smooth local-equilibrium state field on spacetime is a map
\[
\Theta:M\longrightarrow\cM_{\mathrm{th}}.
\]
For a fixed finite jet order $N$, define the pulled-back thermodynamic response-jet bundle
\begin{equation}
\cH^N_\Theta
:=\Theta^*J^N(E_\Lambda)
\longrightarrow M.
\label{eq:thermodynamic-pullback-jet}
\end{equation}
Its distinguished response section is
\[
\mathbf j^N_\Theta
:=\Theta^*(j^N\Lambda)\in\Gamma(\cH^N_\Theta).
\]

Let $\cY\to M$ and $\cZ\to M$ be metric vector bundles and let
\[
\cA:\cH^N_\Theta\to\cY,
\qquad
\Pi:\cH^N_\Theta\to\cZ
\]
be smooth bundle morphisms representing the local observation protocol and the declared target.  Assume that $\cA$ and the stacked map $(\cA,\Pi)$ have constant rank.  Set
\begin{equation}
\cN:=\Ker\cA,
\qquad
\cN_0:=\Ker\cA\cap\Ker\Pi
=\Ker(\cA,\Pi).
\label{eq:bundle-blind-subspaces}
\end{equation}
The constant-rank hypothesis makes $\cN$ and $\cN_0$ smooth subbundles.

\begin{theorem}[Thermodynamic Recognition quotient-bundle theorem]
\label{thm:thermodynamic-recognition-bundle}
Under the hypotheses above, the fibrewise target-relevant blind quotient
\begin{equation}
E_{\RK}
:=\cB(\cA;\Pi)
:=\cN/\cN_0
\longrightarrow M
\label{eq:recognition-quotient-bundle}
\end{equation}
is a smooth metric vector bundle.  Let $P_{\cN}$ be the orthogonal projection onto $\cN$ and let $q:\cN\to E_{\RK}$ be the quotient map.  The canonical completed observation
\begin{equation}
\cA_\Pi^\sharp v
:=\bigl(\cA v,qP_{\cN}v\bigr)
:\cH^N_\Theta\longrightarrow\cY\oplus E_{\RK}
\label{eq:bundle-canonical-completion}
\end{equation}
is a smooth bundle morphism satisfying
\begin{equation}
\Ker\cA_\Pi^\sharp=\cN_0.
\label{eq:bundle-completion-kernel}
\end{equation}
The thermodynamic Recognition field is the explicit quotient section
\begin{equation}
\Phi_{\mathrm{th}}
:=qP_{\cN}\mathbf j^N_\Theta
\in\Gamma(E_{\RK}).
\label{eq:thermodynamic-recognition-field}
\end{equation}
Thus $\Phi_{\mathrm{th}}(x)$ is precisely the target-relevant part of the local thermodynamic response jet that is invisible to the original observation at $x$.
\end{theorem}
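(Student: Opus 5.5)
The plan is to reduce everything to standard constant-rank facts about smooth vector bundle morphisms, and then apply the fibrewise Canonical target-faithful completion theorem pointwise.

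\emph{Step one: the subbundles.} By assumption $\cA$ and $(\cA,\Pi)$ have constant rank on the finite-rank bundle $\cH^N_\Theta=\Theta^*J^N(E_\Lambda)$. The kernel of a constant-rank smooth bundle morphism is a smooth subbundle. I will verify this in a local frame. Near any point, pick a trivialization and a nonsingular minor of the matrix of $\cA$ of maximal size. The implicit-function/Cramer's rule argument then gives smooth local frames of $\Ker\cA$. The same argument applied to $(\cA,\Pi)$ yields $\cN_0$. Since $\cN_0\subseteq\cN$ are both smooth, I can choose local frames of $\cN$ extending frames of $\cN_0$.

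\emph{Step two: the quotient and its metric.} I will realize $E_{\RK}=\cN/\cN_0$ concretely as the orthogonal complement $\cN\ominus\cN_0$. The inner product here is the pulled-back fibre metric on $\cH^N_\Theta$, which I take to be declared after the nondimensionalization of Section~\ref{sec:response-summary}. Gram--Schmidt applied to the adapted frame from step one shows this complement is a smooth subbundle. Restricting $q$ to it gives a smooth fibrewise isomorphism onto the quotient. Transporting the metric through this isomorphism makes $E_{\RK}$ a smooth metric bundle, and the transition maps are smooth because they are compositions of smooth frame changes.

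\emph{Step three: the maps and the kernel identity.} The orthogonal projection $P_{\cN}$ has local expression $\sum_i \ip{\cdot}{e_i}e_i$ in a smooth orthonormal frame $(e_i)$ of $\cN$, so it is smooth. Hence $qP_{\cN}$ and $\cA_\Pi^\sharp=(\cA,qP_{\cN})$ are smooth bundle morphisms. The kernel identity \eqref{eq:bundle-completion-kernel} holds fibre by fibre: at each $x$, the fibre data are finite-dimensional Hilbert spaces with bounded maps. So the Canonical target-faithful completion theorem applies verbatim and gives $\Ker(\cA_\Pi^\sharp)_x=\cN_x\cap\Ker\Pi_x=(\cN_0)_x$.

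\emph{Step four: the field.} $\mathbf j^N_\Theta$ is a smooth section, since $\Lambda$ is $C^r$ and $\Theta$ is smooth; if the regularity of $U$ limits the jet order, I will read ``smooth'' as $C^{r-N}$. Applying the smooth morphism $qP_{\cN}$ then gives a section $\Phi_{\mathrm{th}}\in\Gamma(E_{\RK})$. For the interpretive sentence: $\Phi_{\mathrm{th}}(x)=0$ exactly when $P_{\cN}\mathbf j^N_\Theta(x)\in\cN_0$, that is, when the unobserved component carries no target information.

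\emph{Main obstacle.} I expect the main obstacle to be step one/two, namely local smoothness of kernels and of the orthogonal complement. The key point is that constant rank is exactly what rules out jumps in dimension. Without it, $\cN$ and $\cN_0$ would only be generalized distributions and the quotient would not be a bundle.
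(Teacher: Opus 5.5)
Your proposal is correct and follows essentially the same route as the paper. Constant rank gives the smooth subbundles $\cN_0\subseteq\cN$, the fibre metric identifies $\cN/\cN_0$ with $\cN\cap\cN_0^\perp$ (which supplies the quotient metric and smoothness of $P_{\cN}$ and $q$), and the kernel identity is the fibrewise canonical-completion argument; your local-frame/Gram--Schmidt details spell out what the paper states in one line, and your remark that $\Phi_{\mathrm{th}}$ is only as regular as $j^N\Lambda$ (roughly $C^{r-N}$ when $U\in C^{r+2}$) is a fair refinement of the paper's unqualified ``smooth''.
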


\begin{proof}
Constant rank of $\cA$ gives the smooth kernel subbundle $\cN$, while constant rank of $(\cA,\Pi)$ gives the smooth subbundle $\cN_0$.  The quotient of a finite-rank vector bundle by a smooth subbundle is a smooth vector bundle.  The metric on $\cH^N_\Theta$ identifies $\cN/\cN_0$ with the orthogonal complement $\cN\cap\cN_0^\perp$, providing a quotient metric and showing that $P_{\cN}$ and $q$ vary smoothly.

If $\cA_\Pi^\sharp v=0$, then $\cA v=0$, hence $v\in\cN$ and $P_{\cN}v=v$.  The second component gives $qv=0$, so $v\in\cN_0$.  Conversely every $v\in\cN_0$ annihilates both components.  This proves \eqref{eq:bundle-completion-kernel}.  Applying the smooth morphism $qP_{\cN}$ to the pulled-back response jet gives \eqref{eq:thermodynamic-recognition-field}.
\end{proof}

\begin{corollary}[Target meaning of the spacetime Recognition field]
\label{cor:recognition-field-meaning}
The field $\Phi_{\mathrm{th}}$ vanishes exactly when the thermodynamic response jet has no target-relevant component hidden from the declared observation.  It is invariant under every unitary change of response representation that transports $\cA$, $\Pi$, and the fibre metric covariantly.
\end{corollary}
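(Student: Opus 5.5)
The statement has two parts: a vanishing criterion for $\Phi_{\mathrm{th}}$ and an invariance property. I would prove them separately, working fibrewise at a point $x\in M$ and then invoking smoothness from Theorem~\ref{thm:thermodynamic-recognition-bundle}.

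\emph{Vanishing criterion.} Since $q:\cN_x\to\cN_x/\cN_{0,x}$ has kernel $\cN_{0,x}$, one has
\[
\Phi_{\mathrm{th}}(x)=q P_{\cN}\mathbf j^N_\Theta(x)=0
\quad\Longleftrightarrow\quad
P_{\cN}\mathbf j^N_\Theta(x)\in\cN_{0,x}.
\]
The right-hand condition says that the component of the response jet lying in the observation-blind subspace $\cN_x=\Ker\cA_x$ is contained in $\Ker\Pi_x$. In other words, the blind component is target-irrelevant, which is the meaning of ``no target-relevant component hidden from the declared observation''. Equivalently, by \eqref{eq:bundle-completion-kernel}, the blind part $P_{\cN}\mathbf j^N_\Theta$ lies in $\Ker\cA_\Pi^\sharp$.

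I would spell this out using the metric identification $\cN/\cN_0\cong\cN\cap\cN_0^{\perp}$ from the proof of Theorem~\ref{thm:thermodynamic-recognition-bundle}. Under that identification, $\Phi_{\mathrm{th}}(x)$ is the orthogonal projection of $\mathbf j^N_\Theta(x)$ onto $\cN\cap\cN_0^{\perp}$. This projection vanishes exactly when $\mathbf j^N_\Theta(x)\perp\cN\cap\cN_0^\perp$, which is the same condition as above.

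\emph{Invariance.} Let $W:\cH^N_\Theta\to\cH'$ be a fibrewise unitary bundle map. Transport the data as
\[
\cA'=\cA W^*,\qquad \Pi'=\Pi W^*,\qquad \mathbf j'=W\mathbf j^N_\Theta,
\]
with the fibre metric transported along. The steps are as follows.
\begin{enumerate}
\item Show $\cN'=W\cN$ and $\cN'_0=W\cN_0$ directly from the definitions of the kernels, exactly as in the unitary-invariance proposition.
\item Since $W$ is unitary, it carries orthogonal projections to orthogonal projections: $P_{\cN'}=WP_{\cN}W^*$.
\item $W$ induces an isometric quotient isomorphism $\widehat W:\cN/\cN_0\to\cN'/\cN'_0$ with $q'\,W=\widehat W\,q$ on $\cN$.
\item Compute
\[
\Phi'=q'P_{\cN'}W\mathbf j=q'WP_{\cN}\mathbf j=\widehat W\,\Phi_{\mathrm{th}}.
\]
\end{enumerate}
Thus $\Phi_{\mathrm{th}}$ transforms as a section of the transported bundle. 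Its vanishing locus and its quotient-metric norm are unchanged, which is the sense of ``invariant'' intended.

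The main obstacle is not computational. It is stating precisely what ``invariant'' means, namely covariance via the induced isometry $\widehat W$, and checking that $\widehat W$ is well-defined and isometric for the quotient metric. The key point is that $W$ maps $\cN\cap\cN_0^\perp$ onto $\cN'\cap\cN_0'^\perp$, because unitaries preserve orthogonal complements. Smooth dependence of $\widehat W$ on $x$ then follows from smoothness of $W$, $P_{\cN}$, and $q$.
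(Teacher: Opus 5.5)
Your proposal is correct and follows the argument the paper has in mind. The paper gives no separate proof of this corollary. It treats the vanishing criterion as immediate from $\Ker q=\cN_0$ in Theorem~\ref{thm:thermodynamic-recognition-bundle}, and the invariance as the kernel-transport argument of the unitary-invariance proposition; your steps, including reading ``invariant'' as the covariance $\Phi'=\widehat W\,\Phi_{\mathrm{th}}$ under the induced quotient isometry, make exactly this explicit.
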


\subsection{The target-faithful transport square}

Let $\nabla^{\cH}$, $\nabla^{\cY}$, and $\nabla^{\cZ}$ be metric connections on $\cH^N_\Theta$, $\cY$, and $\cZ$.  Assume that observation and target are parallel bundle morphisms:
\begin{align}
\nabla^{\cY}_X(\cA v)&=\cA(\nabla^{\cH}_Xv),
\label{eq:observation-parallel}\\
\nabla^{\cZ}_X(\Pi v)&=\Pi(\nabla^{\cH}_Xv)
\label{eq:target-parallel}
\end{align}
for every vector field $X$ and smooth section $v$.  These identities imply that $\nabla^{\cH}$ preserves $\cN$ and $\cN_0$.

\begin{theorem}[Target-faithful transport-square theorem]
\label{thm:target-faithful-square}
The connection $\nabla^{\cH}$ descends uniquely to a metric connection $\nabla^{\RK}$ on $E_{\RK}$ by
\begin{equation}
\nabla_X^{\RK}[v]
:=[\nabla_X^{\cH}v],
\qquad v\in\Gamma(\cN).
\label{eq:quotient-connection}
\end{equation}
Let $U_\gamma^{\cH}$, $U_\gamma^{\cY}$, and $U_\gamma^{\RK}$ denote parallel transport along a piecewise smooth path $\gamma:x\to y$.  Then
\begin{equation}
\cA_{\Pi,y}^\sharp U_\gamma^{\cH}
=
\bigl(U_\gamma^{\cY}\oplus U_\gamma^{\RK}\bigr)
\cA_{\Pi,x}^\sharp.
\label{eq:target-faithful-square}
\end{equation}
Equivalently, the square
\[
\begin{array}{ccc}
(\cH^N_\Theta)_x & \xrightarrow{\ U_\gamma^{\cH}\ } & (\cH^N_\Theta)_y\\[2mm]
\Big\downarrow{\cA_{\Pi,x}^\sharp} &&
\Big\downarrow{\cA_{\Pi,y}^\sharp}\\[2mm]
\cY_x\oplus(E_{\RK})_x &
\xrightarrow{\ U_\gamma^{\cY}\oplus U_\gamma^{\RK}\ } &
\cY_y\oplus(E_{\RK})_y
\end{array}
\]
commutes.  The transported Recognition field is therefore the quotient transport of the target-relevant thermodynamic response residue, not an independently named scalar field.
\end{theorem}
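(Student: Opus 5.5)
First I would check the claim made just before the statement: the parallel hypotheses \eqref{eq:observation-parallel}--\eqref{eq:target-parallel} imply that $\nabla^{\cH}$ preserves $\cN$ and $\cN_0$. If $v\in\Gamma(\cN)$, then $\cA(\nabla^{\cH}_Xv)=\nabla^{\cY}_X(\cA v)=0$, so $\nabla^{\cH}_Xv\in\Gamma(\cN)$. The same argument applied to the stacked morphism $(\cA,\Pi)$ gives the statement for $\cN_0$. Consequently the formula \eqref{eq:quotient-connection} is well defined. If $[v]=[v']$, then $v-v'\in\Gamma(\cN_0)$, so $\nabla^{\cH}_X(v-v')\in\Gamma(\cN_0)$ and the classes agree. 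The Leibniz rule and $C^\infty$-linearity in $X$ pass directly to the quotient, because $q$ is $C^\infty(M)$-linear. Uniqueness is immediate: the formula prescribes the value on every section, since every section of $E_{\RK}$ lifts to a section of $\cN$ by smooth local splitting.

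For metricity I would use the orthogonal identification $E_{\RK}\cong\cN\cap\cN_0^\perp$ from Theorem~\ref{thm:thermodynamic-recognition-bundle}. This is the step I expect to be the main obstacle. A metric connection preserving $\cN_0$ also preserves $\cN_0^{\perp}$, because $\langle\nabla w,u\rangle=-\langle w,\nabla u\rangle=0$ for $w\perp\cN_0$ and $u\in\cN_0$. It therefore preserves $\cN\cap\cN_0^\perp$. The same reasoning shows that $P_{\cN}$ is parallel. Under the identification, $\nabla^{\RK}$ is the restriction of $\nabla^{\cH}$ to the parallel subbundle $\cN\cap\cN_0^\perp$, so it is metric for the quotient metric.

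For the square \eqref{eq:target-faithful-square} I would argue with parallel transport directly. Take $v_0\in(\cH^N_\Theta)_x$ and let $v(t)$ be its $\nabla^{\cH}$-parallel transport along $\gamma$. By \eqref{eq:observation-parallel}, $\cA v(t)$ is $\nabla^{\cY}$-parallel, and ODE uniqueness gives $\cA_yU^{\cH}_\gamma v_0=U^{\cY}_\gamma\cA_xv_0$. For the second component, the previous paragraph shows that $P_{\cN}$ commutes with $\nabla^{\cH}$, so $P_{\cN}v(t)$ is parallel in $\cN$. Then $qP_{\cN}v(t)$ is $\nabla^{\RK}$-parallel, by the definition \eqref{eq:quotient-connection} applied along the curve, using pullback connections. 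Uniqueness of transport again gives $q_yP_{\cN,y}U^{\cH}_\gamma=U^{\RK}_\gamma q_xP_{\cN,x}$. Combining the two components yields \eqref{eq:target-faithful-square}. For piecewise smooth paths one concatenates over the pieces.

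The final sentence about the transported Recognition field then follows by applying the second component to $\mathbf j^N_\Theta(x)$ and recalling definition \eqref{eq:thermodynamic-recognition-field}.
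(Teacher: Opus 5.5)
Your proposal is correct and follows essentially the same route as the paper. Preservation of $\cN$ and $\cN_0$ gives well-definedness, metricity comes from the orthogonal realization $E_{\RK}\simeq\cN\cap\cN_0^\perp$, and the square follows by integrating the parallel intertwining relations and using that transport commutes with $P_{\cN}$. You also make explicit two points the paper leaves implicit: metricity forces $\nabla^{\cH}$ to preserve $\cN^\perp$ and $\cN_0^\perp$, so $P_{\cN}$ is parallel; and the quotient transport is identified with $U^{\RK}_\gamma$ by ODE uniqueness for the pulled-back connection.
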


\begin{proof}
If $v$ and $v+w$ represent the same quotient section, then $w\in\Gamma(\cN_0)$.  Preservation of $\cN_0$ implies $\nabla_X^{\cH}w\in\Gamma(\cN_0)$, so \eqref{eq:quotient-connection} is well defined.  Metric compatibility descends from the orthogonal realization $E_{\RK}\simeq\cN\cap\cN_0^\perp$.

Equations \eqref{eq:observation-parallel}--\eqref{eq:target-parallel} integrate to the corresponding parallel-transport intertwining relations.  In particular, $U_\gamma^{\cH}$ maps $\cN_x$ unitarily onto $\cN_y$ and $\cN_{0,x}$ onto $\cN_{0,y}$, hence it induces $U_\gamma^{\RK}$.  Unitary transport also intertwines the orthogonal projections onto $\cN_x$ and $\cN_y$.  Applying these identities to the two components of \eqref{eq:bundle-canonical-completion} proves \eqref{eq:target-faithful-square}.
\end{proof}

\begin{theorem}[Recognition curvature as the infinitesimal square defect]
\label{thm:recognition-curvature-square-defect}
Let $F^{\cH}$ and $F^{\RK}$ be the curvatures of $\nabla^{\cH}$ and $\nabla^{\RK}$.  For $v\in\cN$,
\begin{equation}
F^{\RK}(X,Y)[v]
=[F^{\cH}(X,Y)v].
\label{eq:quotient-curvature}
\end{equation}
For an oriented infinitesimal rectangle $\square_\varepsilon(X,Y)$ based at $x$, the induced quotient holonomy satisfies
\begin{equation}
U^{\RK}_{\partial\square_\varepsilon(X,Y)}
=\id+\varepsilon^2F^{\RK}_x(X,Y)+o(\varepsilon^2).
\label{eq:small-loop-square-defect}
\end{equation}
Consequently, $F^{\RK}$ is exactly the leading failure of path-independent target-faithful thermodynamic reconstruction.  If $P\to M$ is the orthonormal frame bundle of $E_{\RK}$, the local connection form $A$ of $\nabla^{\RK}$ and its curvature $F_A$ are the spacetime Recognition connection and curvature used in the action below.
\end{theorem}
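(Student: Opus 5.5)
The plan has three steps: prove the curvature identity at the level of representatives, then derive the holonomy expansion from the general small-loop formula for a connection on the quotient bundle, and finally read off the interpretive statements.

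\textbf{Step 1: the curvature identity \eqref{eq:quotient-curvature}.} Take a local section $v\in\Gamma(\cN)$ representing a quotient section $[v]$. By the transport-square theorem, $\nabla^{\cH}$ preserves $\cN$ and $\cN_0$, and $\nabla^{\RK}_X[v]=[\nabla^{\cH}_Xv]$ is well defined. Iterating this definition gives
\[
\nabla^{\RK}_X\nabla^{\RK}_Y[v]=[\nabla^{\cH}_X\nabla^{\cH}_Yv],
\qquad
\nabla^{\RK}_{[X,Y]}[v]=[\nabla^{\cH}_{[X,Y]}v].
\]
Subtracting these in the standard curvature combination yields $F^{\RK}(X,Y)[v]=[F^{\cH}(X,Y)v]$. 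Since both sides are tensorial, the identity holds pointwise for every $v\in\cN_x$. The one point to record is that $F^{\cH}(X,Y)$ maps $\cN$ into $\cN$ and $\cN_0$ into $\cN_0$, because it is built from operators that preserve these subbundles. Hence the right-hand side does not depend on the chosen representative.

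\textbf{Step 2: the small-loop holonomy \eqref{eq:small-loop-square-defect}.} This is the classical small-loop formula applied to the metric connection $\nabla^{\RK}$ on the finite-rank bundle $E_{\RK}$.
\begin{enumerate}[label=(\roman*)]
\item Work in a local trivialization near $x$, with $\nabla^{\RK}=\dd+A$ in an orthonormal frame, so that $A$ is a skew-matrix-valued one-form.
\item Choose coordinates in which $X$ and $Y$ extend to commuting coordinate fields $\partial_1,\partial_2$; this handles the $[X,Y]$ term.
\item Write parallel transport along each side of $\square_\varepsilon$ as a path-ordered exponential, expanded to second order in $\varepsilon$. Equivalently, compose the four side transports $\exp(-\varepsilon A_i+O(\varepsilon^2))$ and use the Baker--Campbell--Hausdorff expansion already invoked for \eqref{eq:jet-bch}.
\item Collect terms. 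The first-order terms cancel, and the second-order terms assemble into $\varepsilon^2(\partial_1A_2-\partial_2A_1+[A_1,A_2])=\varepsilon^2F^{\RK}_x(X,Y)$.
\end{enumerate}
The sign convention is fixed by the orientation of the rectangle. Smoothness of $A$ gives a remainder $O(\varepsilon^3)$, which is in particular $o(\varepsilon^2)$.

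A cleaner alternative for Step 2 is to descend rather than recompute. The transport-square theorem shows that $U^{\cH}_\gamma$ preserves $\cN$ and $\cN_0$ and induces $U^{\RK}_\gamma$. So one can apply the known small-loop formula for $U^{\cH}$, restrict it to $\cN_x$, pass to the quotient, and invoke Step 1. I would use this route, and give the direct computation as a check.

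\textbf{Step 3: interpretation.} The target-faithful square \eqref{eq:target-faithful-square} holds along every path. Composing it around a closed loop shows that the reconstruction map $\cA^\sharp_\Pi$ intertwines the $\cH$-holonomy with the holonomy $U^{\cY}\oplus U^{\RK}$. Path independence of the $E_{\RK}$-component of reconstruction means trivial quotient holonomy. By \eqref{eq:small-loop-square-defect}, the leading-order defect is exactly $F^{\RK}$. Conversely, if $F^{\RK}$ vanishes, Ambrose--Singer (or flatness) gives local path independence.

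The final sentence of the statement is definitional. Since $\nabla^{\RK}$ is metric, it induces a principal connection on the orthonormal frame bundle $P$ whose local form is $A$, and its curvature satisfies $F_A=\dd A+A\wedge A$, which corresponds to $F^{\RK}$ under the associated-bundle identification.

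The main obstacle is the bookkeeping in Step 2: making "infinitesimal rectangle" precise when $[X,Y]\neq0$, and fixing signs and orientation conventions. I would handle this by taking the rectangle to be built from the flows of coordinate extensions of $X_x$ and $Y_x$, so that $[X,Y]=0$ near $x$ and the formula depends only on $X_x$ and $Y_x$.
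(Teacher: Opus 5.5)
Your proposal is correct and follows essentially the same route as the paper's proof: you push the curvature definition through $\nabla^{\RK}_X[v]=[\nabla^{\cH}_Xv]$ using invariance of $\cN$ and $\cN_0$, you cite the standard small-loop expansion (your descent from $U^{\cH}$ is a more explicit version of what the paper does), and you invoke the usual correspondence between metric connections and principal connections on the orthonormal frame bundle. The one thing to watch is the sign in step (iv): with $\nabla^{\RK}=\dd+A$ and side transports $\exp(-\int A)$, traversing the rectangle along $X$ first gives $\id-\varepsilon^2F^{\RK}_x(X,Y)+O(\varepsilon^3)$, so the stated $+$ sign requires traversing $Y$ first, and you should fix that orientation explicitly.
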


\begin{proof}
Apply the definition of curvature to \eqref{eq:quotient-connection}.  Every commutator and Lie-bracket term passes to the quotient because $\cN_0$ is preserved, giving \eqref{eq:quotient-curvature}.  The small-loop expansion is the standard curvature expansion of parallel transport with the orientation convention fixed by $(X,Y)$.  The final statement is the usual equivalence between a metric vector-bundle connection and a principal connection on its orthonormal frame bundle.
\end{proof}

\subsection{Uniform source domination and strict target no-blindness}

\begin{theorem}[Bundle decoder and source-domination theorem]
\label{thm:bundle-source-domination}
Assume the preceding constant-rank hypotheses.  There exists a unique smooth bundle morphism
\[
D_{\min}:\Ran\cA_\Pi^\sharp\longrightarrow\cZ
\]
with image-orthogonal minimum norm such that
\begin{equation}
\Pi=D_{\min}\cA_\Pi^\sharp.
\label{eq:bundle-decoder-factorization}
\end{equation}
After extending $D_{\min}$ by zero on $(\Ran\cA_\Pi^\sharp)^\perp$, it is a smooth morphism $\cY\oplus E_{\RK}\to\cZ$.  Fibrewise,
\begin{equation}
\norm{D_{\min,x}}^2
=\beta_{\cA_{\Pi,x}^\sharp}(\Pi_x),
\label{eq:bundle-decoder-burden}
\end{equation}
where the right-hand side denotes the operator-valued analogue of the sharp decoder burden.  If
\[
\sup_{x\in M}\norm{D_{\min,x}}\le C<\infty,
\]
then the uniform source-domination estimate
\begin{equation}
\norm{\Pi_xv}
\le C\norm{\cA_{\Pi,x}^\sharp v}
\label{eq:uniform-source-domination}
\end{equation}
holds for every $x$ and $v$.  Hence no target-relevant mode is blind after canonical completion.  When the bundle maps and connections are parallel, $D_{\min}$ intertwines parallel transport.
\end{theorem}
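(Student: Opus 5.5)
The plan is to work fibrewise first and then promote every construction to the bundle level using the constant-rank hypotheses. At each $x$, the Canonical target-faithful completion theorem gives $\Ker\cA^\sharp_{\Pi,x}=\Ker\cA_x\cap\Ker\Pi_x\subseteq\Ker\Pi_x$. Hence $\Pi_x$ vanishes on $\Ker\cA^\sharp_{\Pi,x}$ and factors uniquely through $\Ran\cA^\sharp_{\Pi,x}$. Concretely, I set
\[
D_{\min,x}:=\Pi_x\,(\cA^\sharp_{\Pi,x})^+,
\]
where $(\cA^\sharp_{\Pi,x})^+$ is the Moore--Penrose pseudoinverse with respect to the fibre metrics on $\cH^N_\Theta$, $\cY$ and the quotient metric on $E_{\RK}$. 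Restricted to $\Ran\cA^\sharp_{\Pi,x}$, this is the unique map with $D_{\min,x}\cA^\sharp_{\Pi,x}=\Pi_x$, since any two such maps agree on the range. Extending by zero on the orthogonal complement of the range gives the unique minimum-norm extension, which is the intended meaning of "image-orthogonal minimum norm".

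Smoothness is the step I expect to need the most care. The rank of $\cA^\sharp_\Pi$ equals $\rank\cH^N_\Theta-\rank\cN_0$, which is constant because $(\cA,\Pi)$ has constant rank. By Theorem~\ref{thm:thermodynamic-recognition-bundle}, $\cA^\sharp_\Pi$ is a smooth morphism, so $\Ran\cA^\sharp_\Pi$ is a smooth subbundle with a smooth orthogonal projection. For a constant-rank smooth morphism $T$, the pseudoinverse can be written as
\[
T^+=(T^*T|_{(\Ker T)^\perp})^{-1}T^*.
\]
Here $T^*T$ is invertible on the smooth subbundle $(\Ker T)^\perp$, and the inverse of a smooth invertible bundle endomorphism is smooth. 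Hence $D_{\min}$, including its zero extension, is a smooth morphism $\cY\oplus E_{\RK}\to\cZ$.

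For the norm identity \eqref{eq:bundle-decoder-burden}, I interpret the operator-valued burden as
\[
\sup_{\cA^\sharp v\ne0}\frac{\norm{\Pi v}^2}{\norm{\cA^\sharp v}^2}.
\]
Since $\Pi v=D_{\min}\cA^\sharp v$, the upper bound is $\norm{D_{\min}}^2$. Equality follows because every unit vector of $\Ran\cA^\sharp$ has the form $\cA^\sharp v$, and $D_{\min}$ vanishes off the range. This is the vector-valued version of the Minimum decoder and sharp burden theorem, obtained by applying it to each functional $\ip{\Pi\,\cdot}{z}$ and taking the supremum over unit $z$. Uniform source domination \eqref{eq:uniform-source-domination} is then immediate:
\[
\norm{\Pi_xv}=\norm{D_{\min,x}\cA^\sharp_{\Pi,x}v}\le C\norm{\cA^\sharp_{\Pi,x}v}.
\]
In particular, $\cA^\sharp_{\Pi,x}v=0$ forces $\Pi_xv=0$, which is the no-blindness statement.

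For the intertwining claim, I will use Theorem~\ref{thm:target-faithful-square}. It gives $\cA^\sharp_{\Pi,y}U^{\cH}_\gamma=(U^\cY_\gamma\oplus U^{\RK}_\gamma)\cA^\sharp_{\Pi,x}$, and integrating \eqref{eq:target-parallel} gives $\Pi_yU^{\cH}_\gamma=U^\cZ_\gamma\Pi_x$. The transports are unitary, so they map ranges to ranges and orthogonal complements to orthogonal complements, and they commute with the pseudoinverse construction. Applying uniqueness of the minimum-norm factorization at $y$ to the conjugated decoder $U^\cZ_\gamma D_{\min,x}(U^\cY_\gamma\oplus U^{\RK}_\gamma)^{-1}$ then shows that it equals $D_{\min,y}$.
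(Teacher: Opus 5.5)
Your proposal is correct and follows the paper's proof. Both arguments factor $\Pi$ through the range using $\Ker\cA_\Pi^\sharp=\cN_0\subseteq\Ker\Pi$, and both build the minimum-norm decoder by inverting $\cA_\Pi^\sharp$ on $(\Ker\cA_\Pi^\sharp)^\perp$, with smoothness coming from constant rank. Both then read off the burden as the sharp factorization constant and obtain the intertwining from transport covariance plus uniqueness. Your explicit formula $T^+=(T^*T|_{(\Ker T)^\perp})^{-1}T^*$ and your check that unitary transport carries ranges and their orthogonal complements to each other fill in details the paper leaves implicit.
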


\begin{proof}
Equation \eqref{eq:bundle-completion-kernel} gives
$\Ker\cA_\Pi^\sharp\subseteq\Ker\Pi$.  Therefore
$D_0(\cA_\Pi^\sharp v):=\Pi v$ is well defined on the image.  Constant rank makes the image a smooth subbundle.  On each fibre, the minimum-norm decoder is obtained by restricting to $(\Ker\cA_\Pi^\sharp)^\perp$ and inverting $\cA_\Pi^\sharp$ onto its range; these inverses vary smoothly under constant rank.  This proves the smooth factorization and its uniqueness in the image-orthogonal class.  The operator norm is the sharp factorization constant, giving \eqref{eq:bundle-decoder-burden}; the uniform estimate follows immediately.  Parallelness of all maps makes both sides of \eqref{eq:bundle-decoder-factorization} transport-covariant, and uniqueness gives the intertwining property.
\end{proof}

\subsection{Why the minimal spacetime action has the stated form}

The preceding theorems construct the fibre, field, connection, and curvature from thermodynamic response data.  They do not determine coupling constants or a nonlinear potential.  They do, however, sharply constrain the lowest-order local action once the usual locality and symmetry contract is declared.

\begin{theorem}[Minimal invariant Recognition-action classification]
\label{thm:minimal-action-classification}
Let $E_{\RK}\to(M,g)$ be the metric quotient bundle from Theorem~\ref{thm:thermodynamic-recognition-bundle}, let $A$ be the induced metric connection, and let $\Phi=\Phi_{\mathrm{th}}$ or a dynamical section of the same bundle.  Consider real local Lagrangian densities that:
\begin{enumerate}[label=(\roman*)]
\item are invariant under diffeomorphisms and the orthogonal or unitary gauge group of $E_{\RK}$;
\item are parity even and use no orientation tensor;
\item depend on the metric curvature only linearly through complete metric contractions;
\item are quadratic in $F_A$ and in $D_A\Phi$, with no mixed or nonminimal curvature couplings;
\item contain no derivative-free field term except a smooth gauge-invariant potential.
\end{enumerate}
Then, up to a boundary term and constant normalizations, every such density is
\begin{equation}
\mathcal L
=
 aR_g+b
-c\langle F_{A\,\mu\nu},F_A^{\mu\nu}\rangle
-d\langle D_{A\,\mu}\Phi,D_A^\mu\Phi\rangle
-V(\Phi),
\label{eq:minimal-action-classification}
\end{equation}
where $a,b,c,d\in\R$ and $V$ is gauge invariant.  Positivity of the Euclidean quadratic gauge and field sectors requires $c,d>0$.  With
\[
a=\frac1{2\kappa},\qquad
b=-\frac{\Lambda_{\mathrm{cos}}}{\kappa},\qquad
c=\frac1{4g_{\RK}^2},\qquad
d=\frac12,
\]
Equation~\eqref{eq:minimal-action-classification} is exactly the action density used in Section~\ref{sec:spacetime-action}.
\end{theorem}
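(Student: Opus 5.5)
The plan is to classify each allowed building block separately and then add the blocks, since hypothesis (iv) rules out mixed couplings. The density therefore splits into a gravitational part, a gauge-curvature part, a covariant-derivative part and a potential part. Each part is classified by invariant theory of the relevant group, with boundary terms discarded.

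\emph{Gravitational sector.} By diffeomorphism invariance and locality, the density is a scalar built from $g$, its inverse, and the Riemann tensor with its covariant derivatives. Hypothesis (iii) restricts us to expressions linear in curvature. Parity evenness (ii) excludes the Levi-Civita tensor. The only complete contraction of $R_{\mu\nu\rho\sigma}$ linear in curvature, using $g$ alone, is $R_g$ up to sign. The Bianchi symmetries force every other single contraction to vanish or to equal $\pm R_g$. Linear terms such as $\nabla^\mu\nabla_\mu R$ are total divergences, so they are removed as boundary terms. The curvature-free term is a constant $b$. This gives $aR_g+b$.

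\emph{Gauge and field sectors.} Gauge invariance under $O(E_{\RK})$ or $U(E_{\RK})$ means the quadratic terms in $F_A$ must be invariant bilinear forms on $\mathfrak{so}$ or $\mathfrak u$ of the fibre, contracted with spacetime indices by $g$. Parity excludes $\epsilon^{\mu\nu\rho\sigma}$, so the $F\wedge F$ term is not allowed. The remaining Lorentz contraction is $F_{\mu\nu}F^{\mu\nu}$, because $F$ is antisymmetric. On the Lie-algebra side we use the fibre trace pairing $\langle\cdot,\cdot\rangle$.

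Here lies the main obstacle. If $E_{\RK}$ reduces (for example if the holonomy is reducible, or the Lie algebra is not simple), there can be several independent invariant forms, such as a $\tr(F)^2$ term in the unitary case or one constant per simple factor. I would handle this by reading "constant normalizations" as absorbing these: $c$ is taken as a positive invariant form, fixed to the trace form on each simple summand. For $\Phi$ the argument is the same. The gauge-invariant quadratic forms in $D_A\Phi$ are the fibre metric contracted with $g^{\mu\nu}$, which gives $d\langle D_\mu\Phi,D^\mu\Phi\rangle$. Here the question is whether a general invariant quadratic form on an irreducible fibre is a multiple of the fibre metric, which holds by Schur for the real orthogonal group acting on the fibre. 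Hypothesis (v) then leaves a potential $V(\Phi)$ that is a smooth gauge-invariant function. Terms linear in $D\Phi$ or $F$ alone are excluded, since (iv) requires these to be quadratic.

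\emph{Positivity and normalization.} After Wick rotation the Euclidean gauge and field actions are $c\int|F|^2$ and $d\int|D\Phi|^2$. These are positive definite on nonzero configurations exactly when $c,d>0$, and one sees this by testing on localized configurations. Finally we substitute the stated values of $a,b,c,d$ and compare term by term with the action of Section~\ref{sec:spacetime-action}. This substitution is routine.
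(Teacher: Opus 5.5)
Your proposal is correct and follows essentially the same route as the paper's proof. Both split the density into independent sectors, reduce the linear metric-curvature contraction to $R_g$ plus a constant, and use parity and antisymmetry to single out $\langle F_{A\,\mu\nu},F_A^{\mu\nu}\rangle$ and $\langle D_{A\,\mu}\Phi,D_A^\mu\Phi\rangle$; both also keep an invariant $V$, discard divergences as boundary terms, and read off $c,d>0$ from Euclidean positivity. The one real difference is how you handle the extra invariant pairings on non-simple algebras such as $\mathfrak{u}(r)$ or $\mathfrak{so}(4)$, for example a $(\tr F)^2$ term. You confront them explicitly and absorb them into the normalization, whereas the paper fixes the $\operatorname{Ad}$-invariant inner product and fibre metric in advance, so its uniqueness claim is tacitly relative to that choice.
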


\begin{proof}
A complete metric contraction linear in the Riemann tensor reduces, by its algebraic symmetries, to a multiple of the scalar curvature; the derivative-free metric density is a constant multiple of $\mathrm{vol}_g$.  With the fixed invariant inner product on the adjoint bundle and parity-odd contractions excluded, the unique quadratic complete contraction of the gauge curvature is $\langle F_{\mu\nu},F^{\mu\nu}\rangle$.  With the fixed fibre metric, the unique quadratic complete contraction of one covariant derivative of $\Phi$ is $\langle D_\mu\Phi,D^\mu\Phi\rangle$.  The remaining derivative-free field contribution is a gauge-invariant function $V(\Phi)$.  Adding the independent coefficients gives \eqref{eq:minimal-action-classification}; integration by parts absorbs equivalent second-derivative representatives into a boundary term.  Euclidean positivity fixes the signs of the quadratic coefficients.
\end{proof}

\begin{remark}[Exact selection statement]
The thermodynamic Recognition square selects the \emph{meaning} of $E_{\RK}$, $\Phi$, $A$, and $F_A$.  The locality, symmetry, parity, and derivative-order contract selects the minimal action form within the stated class.  Neither theorem fixes the numerical couplings or the invariant potential, and higher-curvature, nonminimal, parity-odd, or nonlocal actions lie outside this classification rather than being silently excluded from existence.
\end{remark}

%% file: sections/06a00_response_propagation_metric.tex
\subsection{Recognition metric from target-faithful response propagation}
\label{sec:recognition-metric-propagation}

The quotient bundle and its connection are constructed from thermodynamic response transport.  A spacetime metric enters when the target-relevant quotient field is assigned a propagation law.  The correct geometric datum is the principal symbol of that propagation operator, not a visual analogy between thermodynamic and spacetime curvature.

Let
\[
\mathcal P_{\RK}:\Gamma(E_{\RK})\longrightarrow\Gamma(E_{\RK})
\]
be a second-order linear differential operator describing linearized propagation of the target-relevant thermodynamic Recognition field.  Assume that its principal symbol is scalar on the quotient fibre:
\begin{equation}
\sigma_2(\mathcal P_{\RK})(x,\xi)
=p_x(\xi)\id_{(E_{\RK})_x},
\label{eq:scalar-recognition-symbol}
\end{equation}
where $p_x$ is a smooth nondegenerate Lorentzian quadratic form on $T_x^*M$.

\begin{theorem}[Principal-symbol Recognition metric theorem]
\label{thm:principal-symbol-recognition-metric}
Under \eqref{eq:scalar-recognition-symbol}, there is a unique Lorentzian co-metric $g^{-1}$ satisfying
\begin{equation}
p_x(\xi)=g_x^{-1}(\xi,\xi).
\label{eq:recognition-cometric-symbol}
\end{equation}
It is reconstructed pointwise by polarization:
\begin{equation}
g_x^{-1}(\xi,\eta)
=
\frac12\bigl(
 p_x(\xi+\eta)-p_x(\xi)-p_x(\eta)
\bigr).
\label{eq:cometric-polarization}
\end{equation}
Consequently, a normalized target-faithful response propagation operator determines the spacetime metric used in the action.

If only the characteristic set
\[
\operatorname{Char}_x(\mathcal P_{\RK})
=\{\xi\ne0:p_x(\xi)=0\}
\]
is observed, it determines the conformal class of $g$.  A pointwise volume, proper-time, or proper-length calibration fixes the remaining conformal factor.
\end{theorem}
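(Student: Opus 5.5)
The first part of the theorem is pure linear algebra, applied pointwise. I would fix $x\in M$ and regard $p_x$ as a smooth nondegenerate quadratic form on $T_x^*M$ of Lorentzian signature. A quadratic form over $\R$ determines its symmetric bilinear form uniquely by polarization. So I would set $g_x^{-1}$ equal to the right-hand side of \eqref{eq:cometric-polarization} and verify three things:
\begin{enumerate}[label=(\roman*)]
\item it is bilinear and symmetric, which follows from the defining identity $p_x(\xi)=B(\xi,\xi)$ for some symmetric $B$;
\item $g_x^{-1}(\xi,\xi)=\tfrac12(p_x(2\xi)-2p_x(\xi))=p_x(\xi)$, using homogeneity of degree two;
\item it is unique, since any symmetric $h$ with $h(\xi,\xi)=p_x(\xi)$ satisfies the same polarization identity.
\end{enumerate}
Nondegeneracy and signature transfer directly from $p_x$, and smoothness in $x$ follows because the polarization formula is built from smooth functions. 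Inverting the nondegenerate co-metric fibrewise then gives the Lorentzian metric $g$. The sentence ``a normalized operator determines the metric'' is therefore just the composite map $\mathcal P_{\RK}\mapsto\sigma_2\mapsto p\mapsto g^{-1}\mapsto g$. The normalization is needed because rescaling $\mathcal P_{\RK}$ by a positive function rescales $p$.

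The second part is the main obstacle: showing that the null cone determines the conformal class. The algebraic fact I would use is that if two nondegenerate Lorentzian quadratic forms $p,p'$ on a real vector space of dimension $n\ge2$ have the same zero set, then $p'=\mu p$ for some nonzero constant $\mu$. My first line of attack is as follows.
\begin{enumerate}[label=(\roman*)]
\item Choose a basis with $p=-\xi_0^2+|\vec\xi|^2$, so the null cone is $\{\xi_0=\pm|\vec\xi|\}$.
\item Restrict $p'$ to the null vectors $(1,\omega)$ and $(-1,\omega)$ for $\omega$ on the unit sphere.
\item Conclude that $p'(1,\omega)=0$ for all unit $\omega$ forces the cross terms $\xi_0\xi_i$ to vanish, the mixed spatial part to be proportional to the identity, and the constant to match.
\end{enumerate}
I would compare coefficients on the sphere, using that a quadratic-plus-linear polynomial vanishing on $S^{n-2}$ for $n\ge3$ is a multiple of $|\omega|^2-1$; the case $n=2$ needs a short separate check, since there the cone is two lines. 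An alternative route is Noether/Hilbert Nullstellensatz style: $p$ is irreducible when $n\ge3$, and the real zero set is Zariski dense in the complex quadric because it contains smooth real points.

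Given this, $\mu(x)$ is smooth, because $\mu=p'(\xi)/p(\xi)$ for any fixed non-null $\xi$. Its sign ambiguity is removed by the Lorentzian signature convention, or equivalently by a time orientation. This leaves exactly one positive conformal factor, and any single pointwise calibration of volume, proper time, or proper length fixes it. For example, prescribing $g(\xi,\xi)$ on one timelike vector, or prescribing $\sqrt{|\det g|}$, determines $\mu(x)$ because these quantities scale by $\mu$, respectively by $\mu^{n/2}$, under $g\mapsto\mu g$.
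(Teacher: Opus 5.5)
Your proposal is correct and follows the paper's own route. The paper also recovers the co-metric by polarization, then takes an orthonormal frame and evaluates the second form on the null vectors $(1,\omega)$, $|\omega|=1$, to force it to be a scalar multiple, and fixes that factor by the volume or proper-time calibration of the null-cone identification theorem. One small caveat: using the signature to make $\mu$ positive (time orientation does not do this) needs $n\ge3$, as the paper assumes, since for $n=2$ the form $-p$ is also Lorentzian with the same null lines and the cone then fixes $g$ only up to sign.
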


\begin{proof}
A quadratic form is uniquely determined by its polarization, which gives \eqref{eq:cometric-polarization}.  Nondegeneracy and Lorentzian signature of $p_x$ therefore define a unique Lorentzian co-metric and its inverse metric.  If only the null cone is known, two Lorentzian quadratic forms in dimension at least three with the same nonzero null set are positive scalar multiples of one another.  Hence the cone determines the conformal class.  The scale arguments are those proved in Theorem~\ref{thm:null-cone-identification}.
\end{proof}

\begin{definition}[Target-faithful propagation compatibility]
Let $\mathcal P_{\cH}$ be a second-order response-jet propagation operator on $\cH^N_\Theta$, and let $\mathcal P_{\cY}$ be the corresponding measured-data propagation.  The propagation is target-faithful when
\begin{equation}
\cA_\Pi^\sharp\mathcal P_{\cH}
=
(\mathcal P_{\cY}\oplus\mathcal P_{\RK})
\cA_\Pi^\sharp
\label{eq:propagation-intertwining-square}
\end{equation}
on a common smooth core.
\end{definition}

\begin{proposition}[No hidden target cone after completion]
\label{prop:no-hidden-target-cone}
Assume \eqref{eq:propagation-intertwining-square} and the uniform source-domination estimate \eqref{eq:uniform-source-domination}.  Then no high-frequency response-jet mode can be invisible to the completed observation while producing a nonzero target amplitude.  In particular, every target-relevant characteristic covector of $\mathcal P_{\cH}$ appears in either the measured-data symbol or the quotient-field symbol.
\end{proposition}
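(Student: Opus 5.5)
The plan has two halves, mirroring the two sentences of Proposition~\ref{prop:no-hidden-target-cone}. First I would make precise what ``invisible to the completed observation while producing a nonzero target amplitude'' means. Such a mode is a section $v$ of $\cH^N_\Theta$, or a high-frequency family $v_\tau=\e^{i\tau\phi}a$ with amplitude $a$ and phase $\phi$, such that $\cA_\Pi^\sharp v=0$ at leading order while $\Pi v\ne0$. The uniform source-domination estimate \eqref{eq:uniform-source-domination} is a pointwise, frequency-independent inequality $\norm{\Pi_xv}\le C\norm{\cA^\sharp_{\Pi,x}v}$. Applying it fibrewise to the amplitude, and to each term of the asymptotic expansion in $\tau^{-1}$, shows that vanishing of the completed observation at a given order forces vanishing of the target amplitude at that order. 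The exact version is simply $\Ker\cA_\Pi^\sharp=\cN_0\subseteq\Ker\Pi$, which is \eqref{eq:bundle-completion-kernel}. This proves the first claim.

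For the characteristic-cone statement, I would pass to principal symbols in the intertwining identity \eqref{eq:propagation-intertwining-square}. Because $\cA_\Pi^\sharp$ is a zeroth-order bundle morphism, its symbol is itself. The second-order symbol of both sides therefore gives, for every $(x,\xi)$,
\[
\cA^\sharp_{\Pi,x}\,\sigma_2(\mathcal P_{\cH})(x,\xi)
=\bigl(\sigma_2(\mathcal P_{\cY})(x,\xi)\oplus p_x(\xi)\id\bigr)\cA^\sharp_{\Pi,x},
\]
using \eqref{eq:scalar-recognition-symbol} for the quotient block. Call $\xi$ a target-relevant characteristic covector of $\mathcal P_{\cH}$ if there is a $v$ with $\sigma_2(\mathcal P_{\cH})(x,\xi)v=0$ and $\Pi_xv\ne0$. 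By source domination, $w:=\cA^\sharp_{\Pi,x}v\ne0$. Applying the symbol identity to $v$ gives $(\sigma_2(\mathcal P_{\cY})\oplus p_x(\xi))w=0$. Write $w=(w_{\cY},w_{\RK})$. Then either $w_{\RK}\ne0$, which forces $p_x(\xi)=0$ so that $\xi\in\operatorname{Char}_x(\mathcal P_{\RK})$, or $w_{\cY}\ne0$ lies in the kernel of $\sigma_2(\mathcal P_{\cY})(x,\xi)$, so $\xi$ is characteristic for the measured-data symbol. This is the claimed dichotomy.

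The main obstacle is justifying the symbol-level identity from an intertwining stated only ``on a common smooth core''. I would argue that both sides are differential operators agreeing on a core dense in smooth sections, hence equal as differential operators, and equal operators have equal principal symbols. Two further points need care. The first is that $\cA_\Pi^\sharp$ is smooth by Theorem~\ref{thm:thermodynamic-recognition-bundle}, so that composing it with $\mathcal P_{\cH}$ does not shift symbol order. The second is that ``high-frequency mode'' should be read as the WKB ansatz above: the leading $\tau^2$ coefficient of $\mathcal P_{\cH}v_\tau$ is $\sigma_2(\mathcal P_{\cH})(x,\dd\phi)a$, which reduces the first claim to the symbol argument just given.
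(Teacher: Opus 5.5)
Your proposal is correct and takes essentially the paper's route: source domination (equivalently $\Ker\cA_\Pi^\sharp=\cN_0\subseteq\Ker\Pi$) makes any principal amplitude invisible to the completed observation target irrelevant, and the principal symbol of \eqref{eq:propagation-intertwining-square} then places every target-relevant characteristic amplitude in the measured or quotient component. Your version additionally spells out the dichotomy that the paper leaves implicit ($w_{\RK}\ne0$ forces $p_x(\xi)=0$, otherwise $w_{\cY}\ne0$ lies in $\Ker\sigma_2(\mathcal P_{\cY})(x,\xi)$), as well as the passage from the core identity to the symbol identity.
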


\begin{proof}
Let $v$ be a principal-amplitude vector at $(x,\xi)$ lying in the kernel of the completed principal observation.  Source domination gives $\Pi_xv=0$, so such a mode is target irrelevant.  Conversely, if $\Pi_xv\ne0$, then $\cA_{\Pi,x}^\sharp v\ne0$.  Applying the principal symbol of \eqref{eq:propagation-intertwining-square} shows that its propagated amplitude is represented in the measured or quotient component.  Thus a target-relevant characteristic cannot disappear into the original blind kernel after completion.
\end{proof}

\begin{remark}[Exact metric claim]
The theorem does not assert that equilibrium thermodynamics alone creates a Lorentzian metric.  It proves the sharper statement that once the target-relevant thermodynamic quotient is assigned a scalar normally hyperbolic propagation operator, the metric is encoded uniquely in its normalized principal symbol, and target-faithful completion prevents a physically relevant propagation cone from remaining observationally blind.
\end{remark}

%% file: sections/06a_spacetime_recognition_datum.tex
\section{Spacetime Recognition action}
\label{sec:spacetime-action}

Section~\ref{sec:thermodynamic-recognition-square} removes the formerly missing interface between thermodynamic response and spacetime field theory.  A local-equilibrium field $\Theta$ and the response jet $j^N\Lambda$ define the pulled-back carrier $\cH^N_\Theta$.  The observation and target maps define the quotient bundle
\[
E_{\RK}=\Ker\cA/(\Ker\cA\cap\Ker\Pi),
\]
the distinguished thermodynamic Recognition field $\Phi_{\mathrm{th}}$, and the induced quotient connection $\nabla^{\RK}$.  Its curvature is the leading target-faithful transport-square defect.  Thus the fibre, field, connection, and curvature used below are constructed Recognition objects rather than a scalar and gauge field renamed after their equations are written.

Let $(M,g)$ be an oriented, time-oriented Lorentzian manifold of dimension $n\ge3$.  The metric may be supplied as calibrated spacetime data or reconstructed from the normalized scalar principal symbol of target-faithful quotient propagation by Theorem~\ref{thm:principal-symbol-recognition-metric}; characteristic data alone determine its conformal class.  Assume either that $M$ has no boundary or that all variations below have compact support.  Let $P\to M$ be a compact orthogonal or unitary frame reduction of $E_{\RK}$ that preserves its fibre metric, target structure, and any declared cut sectors.  Write its compact structure group as $G_{\RK}$, its Lie algebra as $\mathfrak g_{\RK}$, and equip $\mathfrak g_{\RK}$ with an $\operatorname{Ad}$-invariant positive inner product $\langle\cdot,\cdot\rangle_{\mathfrak g}$.  Equivalently,
\[
E_{\RK}=P\times_\rho H_{\RK},
\]
where $H_{\RK}$ is the typical fibre of the target-relevant quotient or a declared Hilbert completion of it.

The quotient connection $\nabla^{\RK}$ is represented in a local frame by a principal connection $A$ on $P$ and induces the covariant derivative $D_A$ on $E_{\RK}$.  Its curvature
\[
F_A\in\Omega^2(M;\operatorname{ad}P)
\]
is the local representative of the quotient curvature in \eqref{eq:quotient-curvature}.  A spacetime Recognition field is a section
\[
\Phi\in\Gamma(E_{\RK}).
\]
The section $\Phi_{\mathrm{th}}$ in \eqref{eq:thermodynamic-recognition-field} is the response-generated configuration singled out by the thermodynamic data; allowing $\Phi$ to vary gives the dynamical completion of that same target-relevant carrier.  Let $V_{\RK}:E_{\RK}\to\R$ be a smooth $G_{\RK}$-invariant fibre potential.

\begin{definition}[Action-lifted Recognition datum]
An action-lifted Recognition datum is
\[
\mathfrak R_{\mathrm{act}}
=(M,g;\Theta,\cH^N_\Theta,\cA,\Pi;
 P,A;E_{\RK},\Phi;V_{\RK};
 \kappa,g_{\RK},\Lambda_{\mathrm{cos}}),
\]
where $\kappa>0$, $g_{\RK}>0$, and $\Lambda_{\mathrm{cos}}\in\R$.  Its minimal parity-even local action, classified by Theorem~\ref{thm:minimal-action-classification}, is
\begin{equation}
\begin{split}
S_{\mathrm{ARK}}[g,A,\Phi]
:=\int_M\Bigg[
&\frac{1}{2\kappa}(R_g-2\Lambda_{\mathrm{cos}})
-\frac{1}{4g_{\RK}^2}
 \langle F_{A\,\mu\nu},F_A^{\mu\nu}\rangle_{\mathfrak g}
\\
&-\frac12\langle D_{A\,\mu}\Phi,D_A^{\mu}\Phi\rangle_{E_{\RK}}
-V_{\RK}(\Phi)
\Bigg]\,\mathrm{vol}_g.
\end{split}
\label{eq:ark-action}
\end{equation}
\end{definition}

For $\xi\in\mathfrak g_{\RK}$, define the Recognition current by
\begin{equation}
\langle J_\Phi^\nu,\xi\rangle_{\mathfrak g}
:=\operatorname{Re}\langle D_A^\nu\Phi,\rho_*(\xi)\Phi\rangle_{E_{\RK}}.
\label{eq:recognition-current}
\end{equation}
Define
\begin{align}
T^{A}_{\mu\nu}
&:=\frac{1}{g_{\RK}^2}
\left(
\langle F_{A\,\mu\alpha},F_{A\,\nu}{}^{\alpha}\rangle_{\mathfrak g}
-\frac14g_{\mu\nu}
 \langle F_{A\,\alpha\beta},F_A^{\alpha\beta}\rangle_{\mathfrak g}
\right),
\label{eq:gauge-stress}\\
T^{\Phi}_{\mu\nu}
&:=\operatorname{Re}\langle D_{A\,\mu}\Phi,D_{A\,\nu}\Phi\rangle_{E_{\RK}}
-g_{\mu\nu}
\left(
\frac12\langle D_{A\,\alpha}\Phi,D_A^{\alpha}\Phi\rangle_{E_{\RK}}
+V_{\RK}(\Phi)
\right).
\label{eq:recognition-stress}
\end{align}

The thermodynamic Recognition square fixes the meaning of $E_{\RK}$, $\Phi_{\mathrm{th}}$, $A$, and $F_A$.  The target-faithful quotient propagation symbol fixes $g$ when its normalization is declared.  The minimal-action theorem fixes the action form within its declared locality, symmetry, parity, and derivative-order class.  The numerical couplings and nonlinear invariant potential remain additional constitutive data; they are not inferred from equilibrium identities alone.

%% file: sections/06a1_dynamical_nonselection.tex
\subsection{Residual dynamical non-selection after the minimal classification}
\label{sec:dynamical-nonselection}

The thermodynamic Recognition square constructs the field carrier and connection, and Theorem~\ref{thm:minimal-action-classification} classifies the action form inside a declared low-derivative symmetry class.  Those results still do not determine the numerical couplings or the nonlinear invariant potential.  The remaining non-selection is precise rather than total.

\begin{theorem}[Residual dynamical non-selection]
\label{thm:dynamical-nonselection}
Fix the equilibrium data, the local-equilibrium map $\Theta$, the thermodynamic response-jet bundle, the observation and target maps, the quotient bundle $E_{\RK}$, the induced connection $A$, the cut data, the spacetime metric, and the minimal action class of Theorem~\ref{thm:minimal-action-classification}, but do not fix the fibre potential.  Let $W:E_{\RK}\to\R$ be a smooth nonconstant $G_{\RK}$-invariant fibre function.  For each $\lambda\in\R$, define
\[
V_\lambda(\Phi):=V_{\RK}(\Phi)+\lambda W(\Phi)
\]
and let $S_\lambda$ be the action \eqref{eq:ark-action} with $V_{\RK}$ replaced by $V_\lambda$.  Then all actions $S_\lambda$ have the same thermodynamic Recognition square, quotient carrier, gauge group, induced connection, cut sectors, target completion, jet carrier, and minimal derivative order, but their Euler--Lagrange equations contain
\begin{align}
D_A^\mu D_{A\,\mu}\Phi
-\nabla V_{\RK}(\Phi)
-\lambda\nabla W(\Phi)&=0,
\label{eq:nonselection-field}\\
T^{\Phi,\lambda}_{\mu\nu}
&=T^\Phi_{\mu\nu}-\lambda W(\Phi)g_{\mu\nu}.
\label{eq:nonselection-stress}
\end{align}
Whenever $\nabla W$ is nonzero somewhere, different values of $\lambda$ give inequivalent field equations.  Therefore the thermodynamic Recognition construction and the minimal-action classification determine the lawful field-theoretic architecture but not the full constitutive potential.
\end{theorem}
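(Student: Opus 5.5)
The plan is to show that adding $\lambda W$ to the potential leaves every kinematic datum untouched and changes only the potential-dependent terms of the Euler--Lagrange equations. We then exhibit a configuration on which two different values of $\lambda$ give different equations.

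First, the invariance of the listed structures. The thermodynamic Recognition square, the quotient bundle $E_{\RK}=\cN/\cN_0$, the field $\Phi_{\mathrm{th}}$, the induced connection $\nabla^{\RK}$ and its curvature, the cut sectors, the completion $\cA_\Pi^\sharp$, and the jet carrier $\cH^N_\Theta$ are all built in Theorems~\ref{thm:thermodynamic-recognition-bundle}--\ref{thm:recognition-curvature-square-defect}. Those constructions take as input only $\Theta$, $j^N\Lambda$, $\cA$, $\Pi$, the fibre metric and the connections; the potential never enters. The gauge group $G_{\RK}$ is fixed by the frame reduction $P$. Since $W$ is $G_{\RK}$-invariant, $V_\lambda$ is again a smooth invariant fibre potential. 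Hence $S_\lambda$ lies in the class of Theorem~\ref{thm:minimal-action-classification}, with the same couplings $a,b,c,d$ and the same derivative order.

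Second, the variation. Only the term $-\int V_\lambda(\Phi)\,\mathrm{vol}_g$ depends on $\lambda$, and it is linear in $\lambda$. Varying $\Phi$ with compact support and integrating $-\tfrac12\langle D_A\Phi,D_A\Phi\rangle$ by parts (the fibre metric is $D_A$-compatible) gives $D_A^\mu D_{A\mu}\Phi-\nabla V_{\RK}-\lambda\nabla W=0$. Here $\nabla$ is the fibrewise gradient with respect to the fibre metric, which is \eqref{eq:nonselection-field}. Varying $g$ uses $\delta\mathrm{vol}_g=-\tfrac12 g_{\mu\nu}\delta g^{\mu\nu}\mathrm{vol}_g$. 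With the sign convention implicit in \eqref{eq:recognition-stress}, the potential contributes $-g_{\mu\nu}V$ to the stress tensor, so the extra term is $-\lambda W g_{\mu\nu}$, giving \eqref{eq:nonselection-stress}. The gauge equation and the gravitational part do not contain $V$ and are therefore $\lambda$-independent.

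Third, inequivalence. Suppose $\nabla W(\phi_0)\neq0$ at some fibre point $\phi_0$ over $x_0$. Choose a smooth section $\Phi$ with $\Phi(x_0)=\phi_0$. For $\lambda\neq\lambda'$, the two field operators evaluated on this $\Phi$ differ at $x_0$ by $(\lambda-\lambda')\nabla W(\phi_0)\neq0$. So no section solving one equation can, in a neighbourhood of any point where $\nabla W\ne0$, also solve the other. Equivalently, the two Euler--Lagrange operators differ as maps of sections. This is the intended meaning of ``inequivalent'', and it is where I expect the main care to be needed. ``Inequivalent'' has to be read as inequivalence of the differential operators, or of their solution sets, and not merely as a difference in their formulae. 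It also has to survive possible field redefinitions. The gauge group acts on $\Phi$ preserving both $V_{\RK}$ and $W$, and $\Phi\mapsto\Phi$ is the only allowed identification since the carrier is fixed. So a gauge transformation cannot convert one equation into the other: it maps the zero set of the $\lambda$ operator to itself. The concluding non-selection statement then follows, because the fixed data admit a one-parameter family of pairwise inequivalent dynamics.
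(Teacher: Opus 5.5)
Your proposal is correct and follows the paper's proof. The potential never enters the quotient, connection, cut, or jet constructions. Varying the added term $-\lambda W(\Phi)\,\mathrm{vol}_g$ contributes $-\lambda\nabla W$ to the field equation and $-\lambda W g_{\mu\nu}$ to the stress tensor. The equations therefore differ for distinct $\lambda$ wherever $\nabla W\neq0$. Your evaluation on a section through a fibre point with $\nabla W(\phi_0)\neq0$ makes the paper's one-line inequivalence claim explicit at the operator level, which is all the paper asserts. Your remarks on solution sets and field redefinitions go beyond what the theorem requires.
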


\begin{proof}
Gauge invariance of $W$ ensures that every $S_\lambda$ has the same declared gauge and diffeomorphism symmetries.  The additional term changes neither the quotient construction nor its induced connection, observation square, cut, jet, or derivative order.  Variation with respect to $\Phi$ adds $-\lambda\nabla W$ to the field equation, while metric variation of $-\lambda W(\Phi)\,\mathrm{vol}_g$ adds $-\lambda W(\Phi)g_{\mu\nu}$ to the stress tensor.  If $\nabla W$ is nonzero, the equations differ for distinct $\lambda$.
\end{proof}

\begin{corollary}[Exact status of the spacetime equations]
The logical chain is
\[
\begin{aligned}
\text{thermodynamic response jets}
&\longrightarrow \text{target-relevant quotient bundle},\\
&\longrightarrow \text{induced connection and metric propagation},\\
&\longrightarrow \text{minimal invariant action class},\\
&\longrightarrow \text{Euler--Lagrange field equations}.
\end{aligned}
\]
The first two arrows are Theorems~\ref{thm:thermodynamic-recognition-bundle}--\ref{thm:recognition-curvature-square-defect} and Theorem~\ref{thm:principal-symbol-recognition-metric}; the third is Theorem~\ref{thm:minimal-action-classification}; and the last is the variational theorem below.  Numerical couplings and the nonlinear invariant potential remain constitutive inputs.
\end{corollary}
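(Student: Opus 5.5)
The corollary is a bookkeeping statement about logical dependence, so I would not prove anything new. I would instead check that each arrow is supplied by a previously established result, and that the output of each arrow contains exactly the input data the next result needs. I would go through the chain in order, naming at each link the additional datum it consumes, in the style of the dependency table of Section~\ref{sec:dependency-falsifiability}.

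\emph{First arrow.} From the local-equilibrium map $\Theta$ and the constant-rank morphisms $\cA$ and $(\cA,\Pi)$, Theorem~\ref{thm:thermodynamic-recognition-bundle} produces three things: the smooth metric bundle $E_{\RK}=\cN/\cN_0$, the completion $\cA_\Pi^\sharp$ with kernel $\cN_0$, and the explicit section $\Phi_{\mathrm{th}}$. No datum beyond $\Theta$, $j^N\Lambda$ and the two bundle maps is used.

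\emph{Second arrow.} I would split this into two parts.
\begin{enumerate}[label=(\roman*)]
\item Under the parallelness hypotheses \eqref{eq:observation-parallel}--\eqref{eq:target-parallel}, Theorem~\ref{thm:target-faithful-square} gives the descended connection $\nabla^{\RK}$ and the commuting square. Theorem~\ref{thm:recognition-curvature-square-defect} then identifies $F_A$ as its curvature, realized on the orthonormal frame bundle.
\item The metric part is Theorem~\ref{thm:principal-symbol-recognition-metric}. Its input is a scalar Lorentzian principal symbol on $E_{\RK}$, a bundle that is already available from the first arrow.
\end{enumerate}

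\emph{Third arrow.} I would observe that Theorem~\ref{thm:minimal-action-classification} takes as input precisely $(E_{\RK},A,\Phi,g)$ together with the declared locality, symmetry, parity and derivative-order contract. Its output is the density \eqref{eq:minimal-action-classification}. Specializing the constants to $a=1/(2\kappa)$, $b=-\Lambda_{\mathrm{cos}}/\kappa$, $c=1/(4g_{\RK}^2)$ and $d=\tfrac12$ gives exactly the action $S_{\mathrm{ARK}}$ in \eqref{eq:ark-action}.

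\emph{Fourth arrow.} This is the variational theorem stated after this corollary. At this point I would only need that \eqref{eq:ark-action} is a well-defined local functional of $(g,A,\Phi)$ with compactly supported variations, so that its Euler--Lagrange equations are meaningful. The field equation and the stress tensors are already exhibited in \eqref{eq:gauge-stress}--\eqref{eq:recognition-stress} and \eqref{eq:nonselection-field}. The argument therefore runs forward: nothing in the first three arrows depends on the later variational result.

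The final sentence, that couplings and the potential remain constitutive inputs, I would take directly from Theorem~\ref{thm:dynamical-nonselection}. That theorem holds every construction datum of the first three arrows fixed while varying $V_\lambda$, and for nonzero $\nabla W$ it obtains inequivalent field equations. For the couplings the argument is the same. Rescaling $\kappa$, $g_{\RK}$ or $\Lambda_{\mathrm{cos}}$ leaves $E_{\RK}$, $A$, $g$ and the action class unchanged, but it changes the coefficients in the Euler--Lagrange equations, so these numbers are not determined either.

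The only step needing genuine care is checking that the chain is not circular in the metric. Theorem~\ref{thm:minimal-action-classification} uses $g$, while Theorem~\ref{thm:principal-symbol-recognition-metric} reconstructs $g$ only after a normalization is declared, and from the characteristic set alone recovers only the conformal class. I would resolve this by stating explicitly that the second arrow delivers $g$ either as calibrated data or from a normalized symbol, exactly as permitted at the start of Section~\ref{sec:spacetime-action}. With that choice made, the third arrow receives a fixed metric and the chain is well ordered.
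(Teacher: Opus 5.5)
Your proposal is correct and matches the paper, which gives no separate proof. The corollary is justified exactly by the cited Theorems~\ref{thm:thermodynamic-recognition-bundle}--\ref{thm:recognition-curvature-square-defect}, \ref{thm:principal-symbol-recognition-metric}, \ref{thm:minimal-action-classification} and \ref{thm:classical-recognition-field-equations}, with the constitutive-input sentence resting on Theorem~\ref{thm:dynamical-nonselection} and the free constants $a,b,c,d$ of the classification; your treatment of $g$ as calibrated data or a normalized symbol matches the opening of Section~\ref{sec:spacetime-action}, and the one bookkeeping slip is that the first arrow also needs a fibre metric on $\cH^N_\Theta$, without which $P_{\cN}$ and hence $\Phi_{\mathrm{th}}$ are undefined.
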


\begin{remark}[Why the chosen action is used]
Within the explicit parity-even local class that is linear in metric curvature, quadratic in $F_A$ and $D_A\Phi$, and free of nonminimal or mixed couplings, Equation~\eqref{eq:ark-action} is the classified minimal action, not an arbitrary imported benchmark.  Higher-curvature, nonminimal, parity-odd, alternative-potential, and nonlocal effective terms define different lawful action classes and must be audited separately.
\end{remark}

%% file: sections/06b_classical_field_equations.tex
\subsection{Classical Recognition field equations}

\begin{theorem}[Recognition gravity, gauge, and matter equations]
\label{thm:classical-recognition-field-equations}
A smooth datum $(g,A,\Phi)$ is a compactly supported critical point of \eqref{eq:ark-action} if and only if it satisfies
\begin{align}
G_{\mu\nu}+\Lambda_{\mathrm{cos}}g_{\mu\nu}
&=\kappa\bigl(T^A_{\mu\nu}+T^\Phi_{\mu\nu}\bigr),
\label{eq:recognition-einstein}\\
D_{A\,\mu}F_A^{\mu\nu}
&=g_{\RK}^2J_\Phi^\nu,
\label{eq:recognition-yang-mills}\\
D_A^\mu D_{A\,\mu}\Phi
-\nabla V_{\RK}(\Phi)
&=0.
\label{eq:recognition-matter}
\end{align}
Here $G_{\mu\nu}=R_{\mu\nu}-\tfrac12R_gg_{\mu\nu}$ and $\nabla V_{\RK}$ is the fibre gradient.
\end{theorem}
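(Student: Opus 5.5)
The plan is to compute the first variation of \eqref{eq:ark-action} separately in $g$, $A$, and $\Phi$, using compactly supported variations so that all divergence terms integrate to zero. Since the three variations are independent, $(g,A,\Phi)$ is critical exactly when the three Euler--Lagrange expressions vanish. The ``if and only if'' then follows from the fundamental lemma of the calculus of variations, applied to smooth densities paired against arbitrary compactly supported smooth test sections of $T^*M\otimes T^*M$ (symmetric), $\Omega^1(M;\operatorname{ad}P)$, and $\Gamma(E_{\RK})$.

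\textbf{Matter and gauge variations.} For $\Phi\mapsto\Phi+s\psi$, differentiate at $s=0$. Metric compatibility of $D_A$ with the fibre metric, together with integration by parts, gives
\[
\delta_\Phi S=\int_M\operatorname{Re}\langle D_A^\mu D_{A\,\mu}\Phi-\nabla V_{\RK}(\Phi),\psi\rangle\,\mathrm{vol}_g .
\]
Here the fibre gradient is defined by $\dd V_{\RK}(\Phi)[\psi]=\operatorname{Re}\langle\nabla V_{\RK}(\Phi),\psi\rangle$. This yields \eqref{eq:recognition-matter}.

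For $A\mapsto A+s\alpha$, I use $\delta F_A=D_A\alpha$ and $\delta(D_A\Phi)=\rho_*(\alpha)\Phi$. $\operatorname{Ad}$-invariance of $\langle\cdot,\cdot\rangle_{\mathfrak g}$ lets me integrate $\langle F^{\mu\nu},2D_{A\,\mu}\alpha_\nu\rangle$ by parts, producing $-2\langle D_{A\,\mu}F^{\mu\nu},\alpha_\nu\rangle$. The matter term contributes $-\operatorname{Re}\langle D_A^\nu\Phi,\rho_*(\alpha_\nu)\Phi\rangle=-\langle J_\Phi^\nu,\alpha_\nu\rangle$ by \eqref{eq:recognition-current}. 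Collecting the terms,
\[
\delta_A S=\int_M\Bigl\langle \tfrac1{g_{\RK}^2}D_{A\,\mu}F_A^{\mu\nu}-J_\Phi^\nu,\alpha_\nu\Bigr\rangle\,\mathrm{vol}_g ,
\]
which is \eqref{eq:recognition-yang-mills}. I will check the sign against the sign convention for $J_\Phi$. If the pairing produces the opposite sign, the fix is the convention for $\rho_*$ being skew-adjoint, and I will adjust the orientation of the current accordingly.

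\textbf{Metric variation.} I vary $g^{\mu\nu}\mapsto g^{\mu\nu}+s h^{\mu\nu}$ and use three standard identities:
\[
\delta\mathrm{vol}_g=-\tfrac12 g_{\mu\nu}h^{\mu\nu}\mathrm{vol}_g,
\qquad
\delta R_g=R_{\mu\nu}h^{\mu\nu}+\nabla_\mu w^\mu ,
\]
where $w^\mu$ is a vector field built linearly from $\nabla h$. The divergence term drops by compact support, so the gravitational part contributes $\frac1{2\kappa}(G_{\mu\nu}+\Lambda_{\mathrm{cos}}g_{\mu\nu})h^{\mu\nu}$.

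For the gauge term, $\langle F_{\alpha\beta},F^{\alpha\beta}\rangle=g^{\alpha\gamma}g^{\beta\delta}\langle F_{\alpha\beta},F_{\gamma\delta}\rangle$. Its variation gives $2\langle F_{\mu\alpha},F_\nu{}^\alpha\rangle h^{\mu\nu}$, and together with the volume variation this produces $-\tfrac12T^A_{\mu\nu}h^{\mu\nu}$ exactly as in \eqref{eq:gauge-stress}.

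The matter term is treated in the same way: varying $g^{\mu\nu}$ in $\langle D_\mu\Phi,D_\nu\Phi\rangle$ (symmetrized via $\operatorname{Re}$) and in the volume gives $-\tfrac12T^\Phi_{\mu\nu}h^{\mu\nu}$ from \eqref{eq:recognition-stress}. Setting the total coefficient of $h^{\mu\nu}$ to zero and multiplying by $2\kappa$ gives \eqref{eq:recognition-einstein}.

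\textbf{Main obstacle.} The delicate step is the bookkeeping of signs and factors in the metric variation. Specifically:
\begin{itemize}
\item matching the normalization $T_{\mu\nu}=-2|g|^{-1/2}\,\delta(\sqrt{|g|}\mathcal L_{\mathrm{mat}})/\delta g^{\mu\nu}$ to the displayed stress tensors under the signature convention implicit in \eqref{eq:ark-action};
\item verifying that the boundary term $\nabla_\mu w^\mu$ genuinely vanishes for compactly supported $h$.
\end{itemize}
I would fix the convention once, verify it on the cosmological term (which must reproduce $\Lambda_{\mathrm{cos}}g_{\mu\nu}$ on the left-hand side), and then apply it uniformly.

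The converse direction needs no separate argument. Each variation is an $L^2$-pairing of a smooth Euler--Lagrange tensor with an arbitrary compactly supported test field, so the critical-point condition is equivalent to the pointwise vanishing of all three tensors.
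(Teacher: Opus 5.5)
Your proposal is correct and follows the paper's proof essentially step for step. Both take independent compactly supported variations in $g$, $A$, and $\Phi$, discard the Palatini divergence, integrate $\delta F_A=D_A\alpha$ by parts using $\operatorname{Ad}$-invariance, and use the $\operatorname{Re}$-pairing for the field variation; your converse via the fundamental lemma is the paper's ``reversing the calculation''. Your hedges on the sign of $J_\Phi$ and on the signature are unnecessary: your computation already produces $-\langle J_\Phi^\nu,\alpha_\nu\rangle$ exactly as defined in \eqref{eq:recognition-current}, and the displayed $T^A$ and $T^\Phi$ equal $-2\,\partial\mathcal L/\partial g^{\mu\nu}+g_{\mu\nu}\mathcal L$ identically, whatever the signature.
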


\begin{proof}
For a compactly supported metric variation $h^{\mu\nu}=\delta g^{\mu\nu}$,
\[
\delta(R_g\,\mathrm{vol}_g)
=G_{\mu\nu}h^{\mu\nu}\,\mathrm{vol}_g
+\mathrm d(\text{boundary current}).
\]
The boundary current integrates to zero. By \eqref{eq:gauge-stress}--\eqref{eq:recognition-stress},
\[
\delta S_{\mathrm{matter}}
=-\frac12\int_M
(T^A_{\mu\nu}+T^\Phi_{\mu\nu})h^{\mu\nu}\,\mathrm{vol}_g.
\]
The coefficient of arbitrary $h$ is \eqref{eq:recognition-einstein}.

Let $a\in\Omega^1(M;\operatorname{ad}P)$ be a compactly supported connection variation. Then $\delta_A F_A=D_Aa$. Covariant integration by parts gives
\[
\delta_A S_{\mathrm{gauge}}
=\frac{1}{g_{\RK}^2}
\int_M\langle D_{A\,\mu}F_A^{\mu\nu},a_\nu\rangle_{\mathfrak g}\,\mathrm{vol}_g.
\]
Since $\delta_A(D_{A\,\nu}\Phi)=\rho_*(a_\nu)\Phi$, the field kinetic term contributes
\[
-\int_M\langle J_\Phi^\nu,a_\nu\rangle_{\mathfrak g}\,\mathrm{vol}_g.
\]
The coefficient of arbitrary $a$ is \eqref{eq:recognition-yang-mills}.

For a compactly supported field variation $\psi=\delta\Phi$,
\[
\delta_\Phi S_{\mathrm{ARK}}
=\int_M\left[
-\operatorname{Re}\langle D_{A\,\mu}\psi,D_A^\mu\Phi\rangle_E
-\operatorname{Re}\langle\nabla V_{\RK}(\Phi),\psi\rangle_E
\right]\mathrm{vol}_g.
\]
Covariant integration by parts gives
\[
\delta_\Phi S_{\mathrm{ARK}}
=\int_M\operatorname{Re}\left\langle
D_A^\mu D_{A\,\mu}\Phi-\nabla V_{\RK}(\Phi),\psi
\right\rangle_E\mathrm{vol}_g,
\]
which is \eqref{eq:recognition-matter}. The converse follows by reversing the calculation.
\end{proof}

\begin{corollary}[Bianchi and conservation laws]
\label{cor:recognition-conservation}
Every solution satisfies
\begin{align}
D_AF_A&=0,
\label{eq:gauge-bianchi}\\
D_{A\,\nu}J_\Phi^\nu&=0,
\label{eq:current-conservation}\\
\nabla^\mu(T^A_{\mu\nu}+T^\Phi_{\mu\nu})&=0.
\label{eq:stress-conservation}
\end{align}
\end{corollary}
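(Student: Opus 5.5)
The three identities are proved separately: the first is pure geometry, the second uses the Yang--Mills equation \eqref{eq:recognition-yang-mills}, and the third uses diffeomorphism invariance of \eqref{eq:ark-action}.

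\emph{Bianchi identity} \eqref{eq:gauge-bianchi}. This holds for every connection, whether or not it solves anything. Locally $F_A=\dd A+\tfrac12[A\wedge A]$, and a direct computation gives $D_AF_A=\dd F_A+[A\wedge F_A]=0$ by the graded Jacobi identity. Equivalently, in terms of the quotient connection, $d^{\nabla^{\RK}}F^{\RK}=0$, which is the second Bianchi identity for a vector-bundle connection. Through \eqref{eq:quotient-curvature} it is also inherited from the Bianchi identity of $\nabla^{\cH}$ on $\cH^N_\Theta$.

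\emph{Current conservation} \eqref{eq:current-conservation}. I would take $D_{A\,\nu}$ of both sides of \eqref{eq:recognition-yang-mills}. This gives $g_{\RK}^2D_{A\,\nu}J_\Phi^\nu=D_{A\,\nu}D_{A\,\mu}F_A^{\mu\nu}$. Antisymmetrizing the two covariant derivatives, the right-hand side becomes $\tfrac12[D_{A\,\nu},D_{A\,\mu}]F_A^{\mu\nu}$. This commutator splits into two pieces. The metric Riemann terms vanish by the symmetries of the Ricci tensor acting on the antisymmetric tensor $F^{\mu\nu}$. The gauge piece is $\tfrac12[F_{A\,\nu\mu},F_A^{\mu\nu}]$, the adjoint bracket of a Lie-algebra-valued tensor with itself under a symmetric contraction, and so it also vanishes. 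As a cross-check I would verify the same identity directly from \eqref{eq:recognition-current}. Expanding $D_\nu\operatorname{Re}\langle D^\nu\Phi,\rho_*(\xi)\Phi\rangle$ gives two terms. The term $\operatorname{Re}\langle D^\nu\Phi,\rho_*(\xi)D_\nu\Phi\rangle$ vanishes because $\rho_*(\xi)$ is skew. The remaining term $\operatorname{Re}\langle D^\nu D_\nu\Phi,\rho_*(\xi)\Phi\rangle$ equals $\operatorname{Re}\langle\nabla V_{\RK}(\Phi),\rho_*(\xi)\Phi\rangle$ by \eqref{eq:recognition-matter}. This is zero because differentiating $V_{\RK}(e^{s\rho_*\xi}\Phi)=V_{\RK}(\Phi)$ at $s=0$ gives exactly that expression. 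The same cross-check shows that \eqref{eq:current-conservation} holds using only the field equation.

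\emph{Stress conservation} \eqref{eq:stress-conservation}. I would vary the matter action $S_{\mathrm{matter}}$ along a compactly supported vector field $X$. Diffeomorphism invariance gives $\delta g=\mathcal L_Xg$, $\delta\Phi=\mathcal L_X^A\Phi$, $\delta A=\iota_XF_A$, with gauge-covariant Lie derivatives; the latter two differ from the naive Lie derivatives by an infinitesimal gauge transformation, under which the action is also invariant. Since $\delta S_{\mathrm{matter}}=0$ and the $A$- and $\Phi$-variations vanish on shell, the variational formula in the proof of Theorem~\ref{thm:classical-recognition-field-equations} leaves $\int T^{\mu\nu}\nabla_\mu X_\nu\,\mathrm{vol}_g=0$ with $T=T^A+T^\Phi$. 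Integrating by parts yields $\nabla^\mu T_{\mu\nu}=0$. As a consistency check, this also follows from \eqref{eq:recognition-einstein} together with the contracted Bianchi identity $\nabla^\mu G_{\mu\nu}=0$ and metric compatibility.

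\emph{Main obstacle.} The delicate step is the bookkeeping in the gauge-covariant diffeomorphism argument for stress conservation. I would first try a direct computation of $\nabla^\mu T^A_{\mu\nu}=-g_{\RK}^{-2}\langle F_{\nu\alpha},D_\mu F^{\mu\alpha}\rangle$ (using the Bianchi identity) and of $\nabla^\mu T^\Phi_{\mu\nu}=\langle F_{\nu\alpha},J^\alpha\rangle$ (using the matter equation and $[D_\mu,D_\nu]\Phi=\rho_*(F_{\mu\nu})\Phi$). These two cancel by \eqref{eq:recognition-yang-mills}, which avoids the diffeomorphism bookkeeping, but the signs must be tracked carefully.
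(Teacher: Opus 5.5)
Your proposal is correct and follows the paper's proof: the curvature Bianchi identity; the gauge Noether identity, which is your cross-check via \eqref{eq:recognition-matter}, skewness of $\rho_*(\xi)$ and invariance of $V_{\RK}$ (taking the divergence of \eqref{eq:recognition-yang-mills} is a valid additional route); and diffeomorphism invariance, or equivalently $\nabla^\mu G_{\mu\nu}=0$ with \eqref{eq:recognition-einstein}. One harmless slip occurs in your direct stress computation: with $[D_\mu,D_\nu]\Phi=\rho_*(F_{\mu\nu})\Phi$ one actually gets $\nabla^\mu T^A_{\mu\nu}=+g_{\RK}^{-2}\langle F_{\nu\alpha},D_{A\,\mu}F_A^{\mu\alpha}\rangle$ and $\nabla^\mu T^\Phi_{\mu\nu}=-\langle F_{\nu\alpha},J_\Phi^\alpha\rangle$, so both of your signs are flipped, but the cancellation is unchanged.
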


\begin{proof}
The first identity is the curvature Bianchi identity. Gauge invariance of the fibre metric and $V_{\RK}$ gives the Noether identity whose on-shell form is \eqref{eq:current-conservation}. Diffeomorphism invariance gives \eqref{eq:stress-conservation}; equivalently it follows from $\nabla^\mu G_{\mu\nu}=0$ and \eqref{eq:recognition-einstein}.
\end{proof}

\begin{corollary}[Vacuum and standard reductions]
\label{cor:field-reductions}
The action-lifted equations contain the following exact reductions.
\begin{enumerate}[label=(\roman*)]
\item If $F_A=0$, $D_A\Phi=0$, $\nabla V_{\RK}(\Phi_0)=0$, and $V_{\RK}(\Phi_0)=V_0$, then
\[
G_{\mu\nu}+(\Lambda_{\mathrm{cos}}+\kappa V_0)g_{\mu\nu}=0.
\]
\item If the metric is fixed, \eqref{eq:recognition-yang-mills}--\eqref{eq:recognition-matter} are a Yang--Mills--matter system for $G_{\RK}$.
\item If $G_{\RK}=U(1)$ and $\Phi$ is absent, \eqref{eq:recognition-einstein}--\eqref{eq:recognition-yang-mills} reduce to the Einstein--Maxwell equations with the chosen normalization.
\end{enumerate}
\end{corollary}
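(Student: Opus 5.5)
The plan is to verify the three reductions case by case, substituting the stated hypotheses directly into Theorem~\ref{thm:classical-recognition-field-equations} and the stress tensors \eqref{eq:gauge-stress}--\eqref{eq:recognition-stress}. No new variational computation is needed. The only work is to track which terms vanish and how the potential constant moves to the geometric side.

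For (i), insert $F_A=0$ into \eqref{eq:gauge-stress} to get $T^A_{\mu\nu}=0$. Insert $D_A\Phi=0$ and $\Phi=\Phi_0$ into \eqref{eq:recognition-stress}, which leaves $T^\Phi_{\mu\nu}=-V_0g_{\mu\nu}$. Then \eqref{eq:recognition-einstein} reads $G_{\mu\nu}+\Lambda_{\mathrm{cos}}g_{\mu\nu}=-\kappa V_0g_{\mu\nu}$, and rearranging gives the claim.

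Consistency with the other equations then has to be checked:
\begin{itemize}
\item \eqref{eq:recognition-yang-mills} holds because $F_A=0$ and $J_\Phi$ is built from $D_A\Phi=0$ through \eqref{eq:recognition-current};
\item \eqref{eq:recognition-matter} holds because $D_A\Phi=0$ and $\nabla V_{\RK}(\Phi_0)=0$.
\end{itemize}
The only delicate point is reading ``$\nabla V_{\RK}(\Phi_0)=0$, $V_{\RK}(\Phi_0)=V_0$'' as holding along the section. The section is covariantly constant and $V_{\RK}$ is $G_{\RK}$-invariant, so $V_{\RK}(\Phi)$ is a constant function. Indeed $\dd(V_{\RK}\circ\Phi)=\operatorname{Re}\langle\nabla V_{\RK},D_A\Phi\rangle=0$ by invariance. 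So $V_0$ is a genuine constant, and $\nabla^\mu$ of the right side vanishes, consistent with Corollary~\ref{cor:recognition-conservation}.

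For (ii), fix $g$, so that only the $A$- and $\Phi$-variations remain. The proof of Theorem~\ref{thm:classical-recognition-field-equations} shows those two variations separately yield \eqref{eq:recognition-yang-mills}--\eqref{eq:recognition-matter}. These are the Euler--Lagrange equations of the Yang--Mills functional coupled to a gauge-covariant field with invariant potential, for the compact group $G_{\RK}$. The statement is therefore a restriction of the variational identities already computed.

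For (iii), take $G_{\RK}=U(1)$. Then $\operatorname{ad}P$ is trivial, $\langle\cdot,\cdot\rangle_{\mathfrak g}$ is a positive multiple of the product on $\R$, and $D_AF_A=\dd F$. With $\Phi$ absent, $T^\Phi=0$ and $J_\Phi=0$. So \eqref{eq:recognition-yang-mills} becomes $\nabla_\mu F^{\mu\nu}=0$, the Bianchi identity gives $\dd F=0$, and \eqref{eq:gauge-stress} is the Maxwell stress tensor scaled by $1/g_{\RK}^2$.

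The main, if modest, obstacle is this last normalization bookkeeping. I would fix the identification $F_{\mathrm{Maxwell}}=F_A/g_{\RK}$, with the inner product normalized to $1$. This makes the coefficients match the standard Einstein--Maxwell form, which is what ``with the chosen normalization'' is taken to mean.
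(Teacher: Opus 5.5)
Your proposal is correct and follows the paper's route: in (i) you substitute into \eqref{eq:gauge-stress}--\eqref{eq:recognition-stress} so that the only surviving stress is $T^\Phi_{\mu\nu}=-V_0g_{\mu\nu}$, and you treat (ii) and (iii) as specializations of the field content and structure group. Your additional checks go beyond the paper's one-line proof and are sound: constancy of $V_{\RK}\circ\Phi$ from gauge invariance and $D_A\Phi=0$, on-shell consistency of the gauge and matter equations, and the normalization $F_A/g_{\RK}$.
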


\begin{proof}
In (i), the only nonzero matter stress is $T^\Phi_{\mu\nu}=-V_0g_{\mu\nu}$. Parts (ii) and (iii) follow by specialization of the structure group and field content.
\end{proof}

\begin{corollary}[Target-residue potential]
Let $\mathcal O_\Pi:E\to W$ be a gauge-equivariant bundle map representing a declared open target residue and let
\[
V_{\RK}(\Phi)=V_0(\Phi)+\frac{\mu^2}{2}\norm{\mathcal O_\Pi\Phi}^2.
\]
Then
\[
D_A^\mu D_{A\,\mu}\Phi
-\nabla V_0(\Phi)
-\mu^2\mathcal O_\Pi^*\mathcal O_\Pi\Phi=0.
\]
\end{corollary}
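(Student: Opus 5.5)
The plan is to treat the target-residue potential as a particular instance of the fibre potential in Theorem~\ref{thm:classical-recognition-field-equations}. By \eqref{eq:recognition-matter}, it then suffices to compute the fibre gradient $\nabla V_{\RK}$ and substitute it. Before doing so I will check that $V_{\RK}$ is an admissible smooth $G_{\RK}$-invariant fibre potential, so that the classification and variational theorems apply. The term $V_0$ is assumed admissible. For the quadratic term, gauge-equivariance of $\mathcal O_\Pi$ means $\mathcal O_\Pi(\rho(h)\Phi)=\rho_W(h)\mathcal O_\Pi\Phi$, where $W$ carries a compatible invariant metric so that $\rho_W$ acts isometrically. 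Hence $\norm{\mathcal O_\Pi\rho(h)\Phi}^2=\norm{\mathcal O_\Pi\Phi}^2$, and the quadratic term is invariant. It is also smooth, being the squared norm of a smooth linear bundle map.

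The main computation is the gradient of $\Phi\mapsto\frac{\mu^2}{2}\norm{\mathcal O_\Pi\Phi}^2$, taken with respect to the real inner product $\operatorname{Re}\langle\cdot,\cdot\rangle_E$ used in the proof of Theorem~\ref{thm:classical-recognition-field-equations}. For a fibre variation $\psi$,
\[
\frac{\dd}{\dd s}\Big|_{s=0}\frac{\mu^2}{2}\norm{\mathcal O_\Pi(\Phi+s\psi)}^2
=\mu^2\operatorname{Re}\langle\mathcal O_\Pi\Phi,\mathcal O_\Pi\psi\rangle_W
=\operatorname{Re}\langle\mu^2\mathcal O_\Pi^*\mathcal O_\Pi\Phi,\psi\rangle_E .
\]
So the gradient is $\mu^2\mathcal O_\Pi^*\mathcal O_\Pi\Phi$, where $\mathcal O_\Pi^*$ is the fibrewise adjoint; this adjoint is smooth because the metrics are smooth. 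Linearity of the gradient then gives $\nabla V_{\RK}=\nabla V_0+\mu^2\mathcal O_\Pi^*\mathcal O_\Pi$, and substituting into \eqref{eq:recognition-matter} yields the claimed equation. The metric and gauge equations keep their form with this $V_{\RK}$. That is exactly the additive potential modification already isolated in Theorem~\ref{thm:dynamical-nonselection}, with $\lambda W=\frac{\mu^2}{2}\norm{\mathcal O_\Pi\Phi}^2$.

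The only step needing care is the consistency of the real and complex conventions. The inner product is linear in the first argument, and the variation formula uses $\operatorname{Re}$, so I have to confirm that the factor is $\mu^2$ and not $\mu^2/2$ or $2\mu^2$. The expansion $\norm{a+sb}^2=\norm a^2+2s\operatorname{Re}\langle a,b\rangle+s^2\norm b^2$ settles this. In the real orthogonal case $\operatorname{Re}$ is simply the identity.
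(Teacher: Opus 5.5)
Your proposal is correct and follows the paper's own argument: establish gauge invariance of the quadratic term, compute $\nabla\tfrac12\norm{\mathcal O_\Pi\Phi}^2=\mathcal O_\Pi^*\mathcal O_\Pi\Phi$, and substitute into \eqref{eq:recognition-matter}. You also make explicit two hypotheses the paper leaves implicit, namely that $\mathcal O_\Pi$ is a linear bundle map and that the gauge action on $W$ is isometric; the paper's one-line claim that ``gauge equivariance makes the norm term invariant'' depends on both.
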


\begin{proof}
Gauge equivariance makes the norm term invariant, and
$\nabla\tfrac12\norm{\mathcal O_\Pi\Phi}^2
=\mathcal O_\Pi^*\mathcal O_\Pi\Phi$.
\end{proof}

\begin{theorem}[Action-lifted Extended Recognition Capstone]
\label{thm:action-lifted-capstone}
Assume the Extended Thermodynamic Recognition Capstone fibrewise on $M$. Assume that the completed fibres assemble into $E$, that their unitary datum-preserving automorphisms contain $G_{\RK}$, and that \eqref{eq:ark-action} is declared. Then the fibrewise Recognition-kernel, decoder, cut, curvature, and jet conclusions remain valid; parallel transport is governed by $A$ and $F_A$; stationary configurations satisfy \eqref{eq:recognition-einstein}--\eqref{eq:recognition-matter}; and the Bianchi, conservation, vacuum, gauge-matter, and Einstein--Maxwell reductions above are exact corollaries.
\end{theorem}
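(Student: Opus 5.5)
The proof is an assembly argument. No new estimate is needed. The plan is to check that each ingredient theorem, applied at each point $x\in M$ or to the assembled bundle, has its hypotheses met by the stated assumptions. The conclusions are then the conclusions of those theorems.

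\emph{Fibrewise conclusions.} Fix $x\in M$. By hypothesis the data at $x$ satisfy the hypotheses of the Extended Thermodynamic Recognition Capstone. That theorem, together with the finite Capstone it contains, gives the following at each fibre:
\begin{enumerate}[label=(\roman*)]
\item the blind-quotient and repair-rank statements;
\item the decoder burden;
\item the cut grading and the cut-loop expansion;
\item the bracket/Pluecker and production statements;
\item the finite and analytic jet towers.
\end{enumerate}
Next, let $W_x$ range over the unitary datum-preserving automorphisms of a fibre, which by hypothesis contain $G_{\RK}$. I will invoke the unitary-invariance proposition of the Cut--Flow--Jet section, together with Corollary~\ref{cor:recognition-field-meaning}. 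These show the conclusions do not depend on the frame chosen in $P$. This invariance is what lets the pointwise statements descend to statements about the associated bundle $E=P\times_\rho H_{\RK}$, rather than about a chosen trivialization.

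\emph{Transport.} Once the fibres assemble into $E$ with the quotient connection, Theorem~\ref{thm:target-faithful-square} shows that parallel transport along any path intertwines the completed observation. Theorem~\ref{thm:recognition-curvature-square-defect} identifies the holonomy defect with $F^{\RK}$, whose local frame representatives are $A$ and $F_A$. Thus transport is governed by $A$ and $F_A$.

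\emph{Dynamics.} With \eqref{eq:ark-action} declared, Theorem~\ref{thm:classical-recognition-field-equations} gives the equivalence of stationarity with \eqref{eq:recognition-einstein}--\eqref{eq:recognition-matter}. Corollaries~\ref{cor:recognition-conservation} and~\ref{cor:field-reductions} then yield the Bianchi and conservation laws and the vacuum, gauge--matter and Einstein--Maxwell reductions.

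The main obstacle is the descent step: making sure that the fibrewise objects really are well defined on the associated bundle. This means the fibre metric, target structure and cut must be preserved by the structure group, so that gauge transformations act by datum-preserving unitaries. I would handle it by noting that $G_{\RK}$ lies inside those automorphisms, and that every fibrewise quantity was shown to be covariant under such unitaries. No compatibility beyond what the ingredient theorems use is needed.
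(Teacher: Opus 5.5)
Your proposal is correct and is the same assembly argument as the paper's: the Extended Capstone gives the fibrewise conclusions, the associated-bundle construction on $E=P\times_\rho H_{\RK}$ supplies $A$ and $F_A$, and Theorem~\ref{thm:classical-recognition-field-equations} with Corollaries~\ref{cor:recognition-conservation}--\ref{cor:field-reductions} gives the field equations and reductions. One difference: the paper gets the transport clause directly from the associated-bundle construction, whereas you go through Theorems~\ref{thm:target-faithful-square} and~\ref{thm:recognition-curvature-square-defect}, which also need the parallel-observation and constant-rank hypotheses (available here only implicitly, through $A$ representing the quotient connection); the paper also leaves your frame-independence check implicit.
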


\begin{proof}
The fibrewise statements are the Extended Capstone. The associated-bundle construction supplies $A$ and $F_A$. Theorem~\ref{thm:classical-recognition-field-equations} and Corollaries~\ref{cor:recognition-conservation}--\ref{cor:field-reductions} give the remaining conclusions.
\end{proof}

\begin{remark}[Kinematics versus dynamics]
The field equations are corollaries of the action-lifted capstone. The Recognition theorem determines the lawful carrier, blind directions, target completion, connection sectors, and curvature accounting. The invariant action is the additional dynamical datum. Without an action or equivalent evolution principle, no theorem can uniquely choose a gravitational or gauge equation from kinematics alone.
\end{remark}

%% file: sections/06c_quantum_effective_equations.tex
\subsection{Regulated quantum Recognition equations}
\label{sec:quantum-recognition}

A continuum quantum theory is not obtained by placing a hat on the classical fields. We first prove an exact finite-regulator theorem and then state the precise additional assumptions under which a continuum effective equation follows. The distinction between regulated identities and continuum gravitational quantization is standard in the canonical and gauge-fixed formulations \cite{DeWitt1967,FaddeevPopov1967}.

Let $Q_N\cong\R^{d_N}$ be a gauge-fixed finite-dimensional regulator of metric, connection, and Recognition-field variables,
\[
q=(g_N,A_N,\Phi_N)\in Q_N.
\]
Let $S_N\in C^2(Q_N;\R)$ be a Euclidean regulated action satisfying
\[
S_N(q)\ge a\norm q^p-b
\qquad(a>0,\ p>1),
\]
and assume that its first two derivatives have at most polynomial growth. For $\hbar>0$ and source $J\in Q_N^*$ define
\begin{align}
Z_N(J)
&:=\int_{Q_N}
\exp\left[-\frac{S_N(q)-J\cdot q}{\hbar}\right]\,\mathrm dq,
\label{eq:regulated-partition}\\
W_N(J)&:=\hbar\log Z_N(J),
\qquad
\bar q(J):=\nabla_JW_N(J).
\label{eq:regulated-connected}
\end{align}

\begin{theorem}[Exact finite-regulator quantum field equation]
\label{thm:finite-quantum-recognition}
Under the stated hypotheses:
\begin{enumerate}[label=(\roman*)]
\item $Z_N(J)$ is finite for every finite source $J$;
\item
\[
\nabla_JW_N(J)=\langle q\rangle_J,
\qquad
\nabla_J^2W_N(J)=\frac1\hbar\operatorname{Cov}_J(q);
\]
\item the exact Schwinger--Dyson equation is
\begin{equation}
\langle\nabla S_N(q)\rangle_J=J;
\label{eq:finite-schwinger-dyson}
\end{equation}
\item wherever $J\mapsto\bar q(J)$ is locally invertible, the Legendre effective action
\[
\Gamma_N(\bar q):=J\cdot\bar q-W_N(J)
\]
satisfies
\begin{equation}
\nabla_{\bar q}\Gamma_N(\bar q)=J,
\qquad
\nabla_{\bar q}^2\Gamma_N
=\hbar\operatorname{Cov}_J(q)^{-1}.
\label{eq:finite-effective-equation}
\end{equation}
In particular, at zero source the exact regulated quantum field equation is
\begin{equation}
\nabla\Gamma_N(\bar q)=0.
\label{eq:zero-source-quantum-equation}
\end{equation}
\end{enumerate}
\end{theorem}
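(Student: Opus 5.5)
The proof is finite-dimensional calculus under the integral sign, with the coercivity bound supplying all the integrability. For (i), fix $J$. Coercivity and $|J\cdot q|\le\norm J\norm q$ give
\[
S_N(q)-J\cdot q\ge a\norm q^p-\norm J\norm q-b .
\]
Because $p>1$, the right-hand side is bounded below by $\tfrac a2\norm q^p-C_J$ for a constant $C_J$ that stays locally bounded in $J$. The integrand is therefore dominated by $\e^{C_J/\hbar}\e^{-a\norm q^p/(2\hbar)}$, which is integrable on $Q_N\cong\R^{d_N}$. The same bound holds after multiplication by any polynomial in $q$, and this is the domination lemma used repeatedly below. Positivity of the integrand also gives $Z_N(J)>0$, so $W_N=\hbar\log Z_N$ is well defined.

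For (ii), differentiate under the integral in $J$. This is justified by dominated convergence, uniformly for $J$ in a compact set, using the domination lemma with the polynomial weights $\norm q$ and $\norm q^2$. One derivative gives $\nabla_J Z_N=\hbar^{-1}Z_N\langle q\rangle_J$, hence $\nabla_JW_N=\langle q\rangle_J$. A second derivative gives
\[
\nabla_J^2W_N=\hbar^{-1}\bigl(\langle qq^{\mathsf T}\rangle_J-\langle q\rangle_J\langle q\rangle_J^{\mathsf T}\bigr)=\hbar^{-1}\operatorname{Cov}_J(q).
\]
In fact $Z_N$ is smooth in $J$, indeed real-analytic, by the same argument.

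For (iii), integrate by parts. For each coordinate $i$, $\partial_{q_i}$ of the integrand equals $-\hbar^{-1}(\partial_iS_N-J_i)$ times the integrand. Integrate this over a ball of radius $\rho$ and apply the divergence theorem; the boundary term is at most $\rho^{d_N-1}\e^{-a\rho^p/(2\hbar)+C_J/\hbar}\to0$. The interior integrand is absolutely integrable because $\nabla S_N$ grows at most polynomially and we have the domination lemma. Letting $\rho\to\infty$ gives $\int(\nabla S_N-J)\e^{-(S_N-J\cdot q)/\hbar}\,\dd q=0$, which after dividing by $Z_N$ is \eqref{eq:finite-schwinger-dyson}. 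This step is the only real analytic obstacle, since it is where the growth hypothesis on $\nabla S_N$ and the vanishing of the boundary flux have to be combined. The growth hypothesis on the second derivatives is not needed for this step.

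For (iv), use the Legendre transform. On a region where $J\mapsto\bar q(J)=\nabla_JW_N$ is locally invertible with $C^1$ inverse, write $J=J(\bar q)$ and set $\Gamma_N(\bar q)=J\cdot\bar q-W_N(J)$. The chain rule gives
\[
\nabla_{\bar q}\Gamma_N=J+(D_{\bar q}J)^{\mathsf T}\bar q-(D_{\bar q}J)^{\mathsf T}\nabla_JW_N=J .
\]
Differentiating again gives $\nabla^2_{\bar q}\Gamma_N=D_{\bar q}J=(D_J\bar q)^{-1}=(\nabla_J^2W_N)^{-1}=\hbar\operatorname{Cov}_J(q)^{-1}$ by (ii). The inverse exists because $\operatorname{Cov}_J(q)$ is positive definite: the measure has a strictly positive density, so no nonzero linear functional of $q$ is almost surely constant. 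This also shows that local invertibility actually holds everywhere, by the inverse function theorem. Setting $J=0$ gives \eqref{eq:zero-source-quantum-equation}.
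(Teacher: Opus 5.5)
Your proposal is correct and follows the same route as the paper's proof. It uses coercive domination both for finiteness and for differentiating in $J$, integration by parts with a vanishing boundary term for the Schwinger--Dyson identity, and the chain rule on the Legendre transform for the effective equations. Your extra observation that $\operatorname{Cov}_J(q)$ is positive definite, because the Gibbs density is strictly positive, is a small genuine bonus: it shows that the local-invertibility hypothesis in (iv) holds automatically for every $J$, whereas the paper leaves it as a standing assumption.
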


\begin{proof}
Coercivity with $p>1$ dominates the linear source term, so the integrand is integrable. Polynomial growth of the derivatives permits differentiation under the integral. Direct differentiation gives the expectation and covariance formulas. For every coordinate $q^i$, the boundary term in
\[
\int_{Q_N}\partial_i
\exp\left[-\frac{S_N(q)-J\cdot q}{\hbar}\right]\mathrm dq
\]
vanishes by coercive decay. Expanding the derivative gives
$\langle\partial_iS_N\rangle_J=J_i$, proving \eqref{eq:finite-schwinger-dyson}. Standard differentiation of the Legendre transform gives
$\nabla_{\bar q}\Gamma_N=J$. Since
$\partial\bar q/\partial J=\operatorname{Cov}_J(q)/\hbar$, inversion gives the Hessian formula. Setting $J=0$ gives \eqref{eq:zero-source-quantum-equation}.
\end{proof}

\begin{corollary}[Regulated quantum gravity and gauge equations]
If the regulator coordinates split as $q=(g_N,A_N,\Phi_N)$, then \eqref{eq:zero-source-quantum-equation} is the exact system
\[
\frac{\partial\Gamma_N}{\partial g_N}=0,
\qquad
\frac{\partial\Gamma_N}{\partial A_N}=0,
\qquad
\frac{\partial\Gamma_N}{\partial\Phi_N}=0.
\]
Thus every finite gauge-fixed coercive Recognition regulator has genuine quantum-corrected metric, gauge, and matter equations. This statement does not assert regulator independence or a continuum limit.
\end{corollary}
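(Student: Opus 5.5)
The corollary is the zero-source specialization of Theorem~\ref{thm:finite-quantum-recognition}(iv), rewritten in block coordinates. I would fix a linear splitting $Q_N=Q_N^{g}\oplus Q_N^{A}\oplus Q_N^{\Phi}$, with coordinates $q=(g_N,A_N,\Phi_N)$, and the dual splitting $Q_N^*=(Q_N^{g})^*\oplus(Q_N^{A})^*\oplus(Q_N^{\Phi})^*$ of sources $J=(J_g,J_A,J_\Phi)$. With respect to these splittings the gradient $\nabla_{\bar q}\Gamma_N$ decomposes into its three partial gradients $\partial\Gamma_N/\partial g_N$, $\partial\Gamma_N/\partial A_N$ and $\partial\Gamma_N/\partial\Phi_N$. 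Equation~\eqref{eq:finite-effective-equation} then reads componentwise $\partial\Gamma_N/\partial g_N=J_g$, and similarly for the other two blocks. Setting $J=0$, so that each block of the source vanishes, gives the displayed system. This is the content of \eqref{eq:zero-source-quantum-equation}.

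The one point needing care is the domain hypothesis. Theorem~\ref{thm:finite-quantum-recognition}(iv) requires $J\mapsto\bar q(J)$ to be locally invertible near $J=0$. I would verify this directly from part (ii): the Jacobian of $\bar q$ is $\nabla_J^2W_N=\hbar^{-1}\operatorname{Cov}_J(q)$. This covariance is positive definite because the Gibbs measure $\exp[-(S_N-J\cdot q)/\hbar]\,\mathrm dq$ has an everywhere positive density on $Q_N$ and finite second moments (coercivity), so it is not supported on any affine hyperplane. The inverse function theorem then gives a local inverse, and $\Gamma_N$ is well defined and $C^1$ near $\bar q(0)=\langle q\rangle_0$. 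In fact $W_N$ is strictly convex, so $\bar q$ is globally injective; the corollary does not need this.

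Reading the three equations as ``quantum-corrected metric, gauge, and matter equations'' is then just labelling: $\Gamma_N$ contains $S_N$ at leading order in $\hbar$, and its blocks are variations in the regulated metric, connection and Recognition-field directions. This interpretive sentence is the only place where one might want a saddle-point expansion. I would not prove that expansion, since the claim is only that these are the exact stationarity equations of the regulated effective action. Finally, the disclaimer about regulator independence and the continuum limit needs no argument: nothing in the proof uses a limit $N\to\infty$. The main, and only mild, obstacle is the positive-definiteness argument for the covariance that justifies the Legendre inversion at zero source.
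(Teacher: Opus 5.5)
Your proposal is correct and follows the paper, which gives no separate proof: the corollary is the block-coordinate reading of \eqref{eq:zero-source-quantum-equation} from Theorem~\ref{thm:finite-quantum-recognition}(iv). Your extra argument is valid and is something the paper leaves implicit. Because the Gibbs density is everywhere positive, $\operatorname{Cov}_J(q)$ is positive definite, so $W_N$ is strictly convex and the ``wherever locally invertible'' proviso of part (iv) holds automatically at $J=0$.
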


\subsection{Continuum effective Recognition gravity}

Assume now that a differentiable renormalized one-particle-irreducible effective action
\[
\Gamma[g,A,\Phi]
=S_{\mathrm{ARK}}[g,A,\Phi]+\Gamma_{\mathrm q}[g,A,\Phi]
\]
has been constructed on an admissible continuum field space. Assume that it is invariant under the declared gauge group and diffeomorphisms and that the corresponding Ward identities are anomaly-free. Define the quantum correction tensors
\begin{align}
T^{\mathrm q}_{\mu\nu}
&:=-\frac{2}{\sqrt{|g|}}
\frac{\delta\Gamma_{\mathrm q}}{\delta g^{\mu\nu}},
\label{eq:quantum-stress}\\
J_{\mathrm q}^{\nu}
&:=-\frac{1}{\sqrt{|g|}}
\frac{\delta\Gamma_{\mathrm q}}{\delta A_\nu},
\label{eq:quantum-current}\\
Q_{\Phi}
&:=-\frac{1}{\sqrt{|g|}}
\frac{\delta\Gamma_{\mathrm q}}{\delta\Phi}.
\label{eq:quantum-field-source}
\end{align}

\begin{theorem}[Quantum-effective Recognition gravity equation]
\label{thm:quantum-effective-recognition}
Every stationary point of the anomaly-free effective action satisfies
\begin{align}
G_{\mu\nu}+\Lambda_{\mathrm{cos}}g_{\mu\nu}
&=\kappa\bigl(
T^A_{\mu\nu}+T^\Phi_{\mu\nu}+T^{\mathrm q}_{\mu\nu}
\bigr),
\label{eq:quantum-recognition-einstein}\\
D_{A\,\mu}F_A^{\mu\nu}
&=g_{\RK}^2\bigl(J_\Phi^\nu+J_{\mathrm q}^\nu\bigr),
\label{eq:quantum-recognition-gauge}\\
D_A^\mu D_{A\,\mu}\Phi-\nabla V_{\RK}(\Phi)
&=Q_\Phi.
\label{eq:quantum-recognition-matter}
\end{align}
The Ward identities imply, on shell,
\begin{align}
D_{A\,\nu}(J_\Phi^\nu+J_{\mathrm q}^\nu)&=0,
\label{eq:quantum-current-ward}\\
\nabla^\mu(T^A_{\mu\nu}+T^\Phi_{\mu\nu}+T^{\mathrm q}_{\mu\nu})&=0.
\label{eq:quantum-stress-ward}
\end{align}
\end{theorem}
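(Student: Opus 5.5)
The plan is to reduce everything to the classical variational computation of Theorem~\ref{thm:classical-recognition-field-equations}, applied to the split $\Gamma=S_{\mathrm{ARK}}+\Gamma_{\mathrm q}$. A stationary point of $\Gamma$ is a configuration with $\delta\Gamma/\delta g^{\mu\nu}=0$, $\delta\Gamma/\delta A_\nu=0$, and $\delta\Gamma/\delta\Phi=0$ for compactly supported variations. By linearity of the functional derivative, each condition splits into the classical derivative of $S_{\mathrm{ARK}}$ plus the corresponding derivative of $\Gamma_{\mathrm q}$.

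The classical parts are already computed in the proof of Theorem~\ref{thm:classical-recognition-field-equations}.
\begin{enumerate}[label=(\roman*)]
\item For the metric, $\delta S_{\mathrm{ARK}}/\delta g^{\mu\nu}=\sqrt{|g|}\bigl[\tfrac{1}{2\kappa}(G_{\mu\nu}+\Lambda_{\mathrm{cos}}g_{\mu\nu})-\tfrac12(T^A_{\mu\nu}+T^\Phi_{\mu\nu})\bigr]$.
\item For the connection, $\delta S_{\mathrm{ARK}}/\delta A_\nu=\sqrt{|g|}\bigl[g_{\RK}^{-2}D_{A\,\mu}F_A^{\mu\nu}-J_\Phi^\nu\bigr]$.
\item For the field, $\delta S_{\mathrm{ARK}}/\delta\Phi=\sqrt{|g|}\bigl[D_A^\mu D_{A\,\mu}\Phi-\nabla V_{\RK}(\Phi)\bigr]$.
\end{enumerate}
Adding the quantum pieces written through the definitions \eqref{eq:quantum-stress}--\eqref{eq:quantum-field-source} gives $-\tfrac12\sqrt{|g|}T^{\mathrm q}_{\mu\nu}$, $-\sqrt{|g|}J^\nu_{\mathrm q}$, and $-\sqrt{|g|}Q_\Phi$. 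Setting each total to zero and multiplying by $2\kappa$, $g_{\RK}^2$, and $1$ respectively yields \eqref{eq:quantum-recognition-einstein}--\eqref{eq:quantum-recognition-matter}. The normalizations in \eqref{eq:quantum-stress}--\eqref{eq:quantum-field-source} were chosen to match the classical sign conventions, so I would check this against \eqref{eq:gauge-stress}--\eqref{eq:recognition-current} once, keeping the real-part pairing for the complex field consistent.

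For the Ward identities, I would use the assumed anomaly-free invariance of $\Gamma_{\mathrm q}$ by itself. This is legitimate because $S_{\mathrm{ARK}}$ is invariant, so invariance of $\Gamma$ gives invariance of the difference.
\begin{enumerate}[label=(\roman*)]
\item \emph{Gauge.} Take an infinitesimal gauge transformation $\delta A=D_A\xi$, $\delta\Phi=-\rho_*(\xi)\Phi$ with compactly supported $\xi$. Integrating by parts in $0=\delta_\xi\Gamma_{\mathrm q}$ gives the off-shell identity $D_{A\,\nu}J^\nu_{\mathrm q}=-\langle Q_\Phi,\rho_*(\cdot)\Phi\rangle$, understood as a $\mathfrak g$-valued pairing. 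The same computation for the matter part of $S_{\mathrm{ARK}}$ (the Noether identity of Corollary~\ref{cor:recognition-conservation}) gives $D_{A\,\nu}J^\nu_\Phi=\operatorname{Re}\langle D^\mu_AD_{A\,\mu}\Phi-\nabla V_{\RK},\rho_*(\cdot)\Phi\rangle$. On shell, \eqref{eq:quantum-recognition-matter} makes the two right-hand sides cancel, which gives \eqref{eq:quantum-current-ward}. Equivalently, I can take the covariant divergence of \eqref{eq:quantum-recognition-gauge} and use $D_{A\,\nu}D_{A\,\mu}F_A^{\mu\nu}=0$, which holds by antisymmetry plus the Ad-invariance of $[F,F]$ contractions.
\item \emph{Diffeomorphisms.} Take the Lie derivative along a compactly supported vector field. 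Diffeomorphism invariance of $\Gamma_{\mathrm q}$ gives $\nabla^\mu T^{\mathrm q}_{\mu\nu}$ as a combination of $F_{\nu\alpha}J^\alpha_{\mathrm q}$ and $\langle Q_\Phi,D_{A\,\nu}\Phi\rangle$ terms. The classical matter stress gives the analogous terms with $J_\Phi$ and the field equation. On shell these cancel pairwise by \eqref{eq:quantum-recognition-gauge}--\eqref{eq:quantum-recognition-matter}. The shortcut is to apply the contracted Bianchi identity $\nabla^\mu G_{\mu\nu}=0$ and $\nabla g=0$ to \eqref{eq:quantum-recognition-einstein}, which gives \eqref{eq:quantum-stress-ward} directly.
\end{enumerate}

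The main obstacle is not the algebra but the legitimacy of the functional derivatives and integrations by parts in the continuum. I would handle this by invoking the stated differentiability hypothesis: $\Gamma$ has local functional derivatives represented by the densities above, tested against compactly supported smooth variations. I would also make explicit that "anomaly-free Ward identities" means exactly the invariance identities used above. The stress-conservation statement then follows from the Bianchi identity alone, without any additional quantum input.
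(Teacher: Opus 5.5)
Your proposal is correct and follows the same route as the paper. You split $\Gamma=S_{\mathrm{ARK}}+\Gamma_{\mathrm q}$, reuse the classical variational derivatives from Theorem~\ref{thm:classical-recognition-field-equations}, and substitute the definitions \eqref{eq:quantum-stress}--\eqref{eq:quantum-field-source}; your signs and the factors $2\kappa$ and $g_{\RK}^2$ check out. You then obtain the conservation laws from gauge and diffeomorphism invariance of $\Gamma$, as the paper does.

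Your extra observation is also correct: both on-shell laws follow directly from the stationarity equations via $\nabla^\mu G_{\mu\nu}=0$ and $D_{A\,\nu}D_{A\,\mu}F_A^{\mu\nu}=0$. One small correction: the latter identity holds because the Ricci and $[F_{\mu\nu},F^{\mu\nu}]$ contractions vanish by symmetry, not by Ad-invariance.
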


\begin{proof}
Stationarity means that the three functional derivatives of $\Gamma$ vanish. The classical derivatives are those computed in Theorem~\ref{thm:classical-recognition-field-equations}; substituting \eqref{eq:quantum-stress}--\eqref{eq:quantum-field-source} gives \eqref{eq:quantum-recognition-einstein}--\eqref{eq:quantum-recognition-matter}. Gauge and diffeomorphism invariance of $\Gamma$ give the Ward identities. Their on-shell forms are \eqref{eq:quantum-current-ward}--\eqref{eq:quantum-stress-ward}.
\end{proof}

\begin{corollary}[Semiclassical Recognition gravity]
If the metric is retained as a classical background while the Recognition gauge and matter sectors are integrated into $\Gamma_{\mathrm q}$, then \eqref{eq:quantum-recognition-einstein} is the semiclassical metric equation sourced by the renormalized quantum Recognition stress tensor.
\end{corollary}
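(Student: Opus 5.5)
The corollary follows directly from the quantum-effective Recognition gravity equation, Theorem~\ref{thm:quantum-effective-recognition}; the only work is in how the effective action is split. I would first declare the effective action that is held at a stationary point in the semiclassical reading:
\[
\Gamma[g,A,\Phi]=S_{\mathrm{ARK}}[g,A,\Phi]+\Gamma_{\mathrm q}[g,A,\Phi],
\]
with $g$ a classical background. The Recognition gauge and matter fluctuations are integrated out, for instance as the Gaussian or regulated integral of Theorem~\ref{thm:finite-quantum-recognition} on the gauge and matter coordinates only, followed by renormalization. The standing hypotheses of the continuum section are assumed for this partial effective action: existence, differentiability in $g$, diffeomorphism and gauge invariance, and absence of anomalies.

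Step two is to vary only the metric. Stationarity in $g^{\mu\nu}$ combines the Einstein--Hilbert variation computed in Theorem~\ref{thm:classical-recognition-field-equations} with the classical gauge and matter stresses \eqref{eq:gauge-stress}--\eqref{eq:recognition-stress} evaluated on the mean fields. Adding the definition \eqref{eq:quantum-stress} reproduces \eqref{eq:quantum-recognition-einstein} verbatim. I would then identify $T^A_{\mu\nu}+T^\Phi_{\mu\nu}+T^{\mathrm q}_{\mu\nu}$ with $-2|g|^{-1/2}\,\delta(\Gamma-S_{\mathrm{EH}})/\delta g^{\mu\nu}$. In the semiclassical setting this object is by definition the renormalized expectation value $\langle T_{\mu\nu}\rangle_{\mathrm{ren}}$, since the metric derivative of $W[g]$ inserts the stress operator. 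The conservation law \eqref{eq:quantum-stress-ward} then comes from the diffeomorphism Ward identity and is consistent with $\nabla^\mu G_{\mu\nu}=0$.

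The main obstacle is justifying that $-2|g|^{-1/2}\delta\Gamma_{\mathrm q}/\delta g$ really is the renormalized stress tensor. With $g$ not integrated, there is no Legendre transform in $g$, so $\Gamma_{\mathrm q}$ equals $-W$ at stationary gauge and matter means. Differentiating $W$ with respect to $g$ at fixed sources gives $\tfrac12\sqrt{|g|}\langle T\rangle$. The argument needs three things: differentiability of the renormalized $W$ in $g$, invariance of the renormalization scheme under diffeomorphisms, and the envelope argument that the gauge and matter equations \eqref{eq:quantum-recognition-gauge}--\eqref{eq:quantum-recognition-matter} kill the implicit dependence of the mean fields on $g$. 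I would treat these as exactly the stated differentiability and anomaly-free hypotheses rather than proving them.
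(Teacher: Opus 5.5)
Your proposal is correct and follows the paper's route: the paper gives no separate proof and treats the corollary as the metric component of Theorem~\ref{thm:quantum-effective-recognition}, read with $g$ unquantized and $T^{\mathrm q}_{\mu\nu}$ defined by \eqref{eq:quantum-stress}. Your $\delta W/\delta g$ discussion spells out, under the same assumed hypotheses, the standard identification of $T^A_{\mu\nu}+T^\Phi_{\mu\nu}+T^{\mathrm q}_{\mu\nu}$ with $\langle T_{\mu\nu}\rangle_{\mathrm{ren}}$; one small slip is that at zero source it is $\Gamma-S_{\mathrm{EH}}$ (the classical gauge--matter action plus $\Gamma_{\mathrm q}$), not $\Gamma_{\mathrm q}$ alone, that equals $-W$ (and since $\partial_g\Gamma|_{\bar q}=-\partial_gW|_J$ holds identically for a Legendre transform, the envelope step is not actually needed).
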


\begin{remark}[Quantum claim boundary]
The finite-regulator theorem is an actual finite-dimensional quantization result. The continuum theorem is a rigorous variational consequence of the stated existence, differentiability, invariance, and anomaly-free hypotheses on $\Gamma$. It does not by itself construct the continuum measure, prove renormalizability, remove the regulator, establish unitarity, or provide an ultraviolet completion of quantum gravity. Those are separate proof obligations rather than phrases to be stapled onto \eqref{eq:quantum-recognition-einstein}.
\end{remark}

%% file: sections/06d_constructive_continuum_and_identification.tex
\section{Constructive continuum sector, regulator removal, and physical identifiability}
\label{sec:constructive-continuum}

The finite-regulator theorem proves exact quantum identities before a continuum limit is taken.  This section supplies a nonperturbative continuum construction for the positive quadratic physical sector of the action-lifted Recognition theory.  It proves regulator removal, reflection positivity, anomaly freedom of the bosonic measure, and unitary Lorentzian reconstruction under explicit spectral hypotheses.  It then gives a target-faithful criterion for identifying the Recognition metric and connection from propagation, scale, and transport data.

\subsection{Positive physical Hessian datum}

Let $K$ be a real separable Hilbert space representing gauge-fixed, constraint-projected, target-faithful physical fluctuations about a stationary Euclidean background.  Let
\[
\Omega:\mathcal D(\Omega)\subset K\longrightarrow K
\]
be positive and self-adjoint with compact resolvent and a strict gap
\begin{equation}
\Omega\ge m\id,
\qquad m>0.
\label{eq:omega-gap}
\end{equation}
Define the smooth spectral test space
\[
\mathscr S_\Omega:=\bigcap_{r\ge0}\mathcal D(\Omega^r)
\]
with its graph seminorms, and assume that this Fr\'{e}chet space is nuclear.  Set
\[
\mathscr E
:=\mathscr S(\R_\tau)\widehat\otimes\mathscr S_\Omega,
\qquad
\mathscr E'
:=\text{its continuous dual}.
\]
The Euclidean quadratic operator is
\[
L_E=-\partial_\tau^2+\Omega^2.
\]
For $f,g\in\mathscr E$, define the covariance form
\begin{equation}
Q(f,g)
:=\int_{\R}\int_{\R}
\left\langle
f(\tau),
\frac{e^{-\abs{\tau-s}\Omega}}{2\Omega}
 g(s)
\right\rangle_K
\,\dd\tau\,\dd s.
\label{eq:continuum-covariance}
\end{equation}
Functional calculus and \eqref{eq:omega-gap} make $Q$ continuous, symmetric, and nonnegative.

Let
\[
P_N:=\mathbf 1_{[m,N]}(\Omega).
\]
Compact resolvent implies that $P_NK$ is finite dimensional.  The spatial spectral-cut covariance is
\begin{equation}
Q_N(f,g):=Q(P_Nf,P_Ng).
\label{eq:cutoff-covariance}
\end{equation}

\begin{theorem}[Continuum Gaussian Recognition measure and regulator removal]
\label{thm:continuum-gaussian}
There exist unique centered Gaussian probability measures $\mu_N$ and $\mu$ on $\mathscr E'$ with characteristic functionals
\begin{align}
\widehat\mu_N(f)
&=\exp\!\left[-\frac12Q_N(f,f)\right],
\label{eq:cutoff-characteristic}\\
\widehat\mu(f)
&=\exp\!\left[-\frac12Q(f,f)\right].
\label{eq:continuum-characteristic}
\end{align}
As $N\to\infty$,
\begin{equation}
\mu_N\Longrightarrow\mu
\label{eq:measure-convergence}
\end{equation}
weakly on $\mathscr E'$.  Every cylindrical moment converges, and the limiting Schwinger functions are the Wick functions generated by $Q$.  The same limit is obtained from any increasing sequence of finite-rank spectral projections commuting with $\Omega$ and converging strongly to the identity.  No counterterm is required in this positive quadratic sector.
\end{theorem}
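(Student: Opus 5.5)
The plan is to obtain both measures from the Bochner--Minlos theorem on the nuclear space $\mathscr E$, and then to deduce regulator removal from pointwise convergence of the characteristic functionals via Lévy's continuity theorem in its nuclear-space form.

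\emph{Step 1: the covariance is a bounded Hilbert form.} Diagonalize $\Omega$ in an orthonormal eigenbasis $(e_k)$ with eigenvalues $\omega_k\ge m$; compact resolvent makes this possible. For each mode, the kernel $e^{-\omega_k|\tau-s|}/(2\omega_k)$ is the Green function of $-\partial_\tau^2+\omega_k^2$. Its Fourier transform is $(p^2+\omega_k^2)^{-1}\le m^{-2}$, so
\[
0\le Q(f,f)=\sum_k\int_\R\frac{|\hat f_k(p)|^2}{p^2+\omega_k^2}\,\frac{\dd p}{2\pi}\le m^{-2}\norm f^2_{L^2(\R;K)} .
\]
Hence $Q$ is symmetric, nonnegative, and continuous with respect to the $L^2(\R;K)$ norm. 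That norm is one of the seminorms of $\mathscr E$, so $f\mapsto e^{-Q(f,f)/2}$ is positive definite (Gaussian kernel of a nonnegative form), equals $1$ at $f=0$, and is continuous on $\mathscr E$. The same argument applies to $Q_N$, since $P_N$ is an orthogonal projection commuting with $\Omega$ and $Q_N(f,f)=Q(P_Nf,P_Nf)\le Q(f,f)$.

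\emph{Step 2: existence and uniqueness.} The space $\mathscr E$ is nuclear, because it is the completed projective tensor product of the nuclear spaces $\mathscr S(\R)$ and $\mathscr S_\Omega$. Minlos' theorem therefore gives unique Radon probability measures $\mu_N,\mu$ on $\mathscr E'$ with the stated characteristic functionals. They are centered Gaussian because every finite-dimensional marginal $(\langle\phi,f_1\rangle,\dots,\langle\phi,f_n\rangle)$ has characteristic function $\exp(-\tfrac12\sum t_it_jQ(f_i,f_j))$.

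\emph{Step 3: convergence.} I expect this to be the main obstacle, since weak convergence on the nonmetrizable dual $\mathscr E'$ needs more than cylindrical convergence. Since $P_N\to\id$ strongly on $K$, dominated convergence in the mode sum of Step 1 gives $Q_N(f,f)\to Q(f,f)$ for each $f$, so $\widehat\mu_N\to\widehat\mu$ pointwise. To upgrade this, I would invoke the Lévy continuity theorem for nuclear Fréchet spaces (Fernique/Meyer form). Its hypothesis is an equicontinuity condition on $\{\widehat\mu_N\}$ at $0$, and this follows from the uniform bound $1-\widehat\mu_N(f)\le\tfrac12 Q_N(f,f)\le\tfrac12 m^{-2}\norm f^2$. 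The bound supplies uniform tightness on the polar of a neighbourhood, via the Hilbert--Schmidt embedding furnished by nuclearity, and yields $\mu_N\Rightarrow\mu$.

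\emph{Step 4: moments and independence of the regulator.} Gaussian moments are the Wick polynomials in $Q_N(f_i,f_j)$. These converge to the Wick polynomials in $Q(f_i,f_j)$ by polarization of the pointwise convergence, which proves convergence of all cylindrical moments and identifies the limiting Schwinger functions. For an arbitrary increasing family of finite-rank spectral projections $P'_N$ commuting with $\Omega$ and converging strongly to $\id$, Steps 3 and 4 use only strong convergence and the uniform bound, so the limit is again $\mu$, which is unique by Step 2. Finally, no counterterm is needed: the limiting covariance is exactly $Q$, no divergent renormalization enters, and every $Q_N$ is dominated by the finite form $Q$.
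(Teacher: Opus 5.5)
Your proposal is correct and takes essentially the same route as the paper. Both arguments use Bochner--Minlos on the nuclear space $\mathscr E$, then pointwise convergence $Q_N(f,f)\to Q(f,f)$, equicontinuity of $\{\widehat\mu_N\}$ at the origin from $0\le Q_N\le Q$, the continuity theorem on nuclear duals, and Wick expansion with polarization for the moments and for regulator independence. Your Fourier mode bound $Q(f,f)\le m^{-2}\norm{f}^2_{L^2(\R;K)}$ just makes concrete the continuity of $Q$ that the paper obtains from $(2\Omega)^{-1}\le(2m)^{-1}$.
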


\begin{proof}
The forms $Q_N$ and $Q$ are continuous and nonnegative on the nuclear space $\mathscr E$.  Their exponentials in \eqref{eq:cutoff-characteristic}--\eqref{eq:continuum-characteristic} are continuous positive-definite functionals equal to one at the origin.  The Bochner--Minlos theorem therefore gives the unique Gaussian measures \cite{Minlos1959}.

Because $P_N\to\id$ strongly, $P_N$ commutes with every bounded Borel function of $\Omega$, and $(2\Omega)^{-1}$ is bounded by $(2m)^{-1}$, dominated convergence in \eqref{eq:continuum-covariance} gives
\[
Q_N(f,f)\longrightarrow Q(f,f)
\quad\text{for every }f\in\mathscr E.
\]
Thus $\widehat\mu_N(f)\to\widehat\mu(f)$ pointwise.  The continuity at the origin is uniform on a sufficiently small test-space neighbourhood because $0\le Q_N(f,f)\le Q(f,f)$.  The continuity theorem for probability measures on nuclear duals gives \eqref{eq:measure-convergence}.

Gaussian cylindrical moments are finite sums of products of two-point covariances.  Their convergence follows from $Q_N(f,g)\to Q(f,g)$.  The proof uses only strong convergence, commutation with $\Omega$, and finite rank, so any such nested regulator has the same limit.  Since the unmodified covariances converge, no mass, wave-function, or coupling counterterm is introduced in this quadratic sector.
\end{proof}

\begin{corollary}[Gaussian continuum completion]
\label{cor:gaussian-uv}
The positive quadratic physical Recognition sector has a regulator-independent continuum measure and all of its Schwinger functions define continuous generalized fields on the nuclear dual $\mathscr E'$.  This is a nonperturbative continuum and ultraviolet completion of the linearized sector in the precise sense that every spectral regulator is removed without counterterms and all cylindrical correlations converge.  It does not assert ultraviolet completion of the nonlinear metric--gauge interaction.
\end{corollary}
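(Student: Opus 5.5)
The statement to prove is the corollary: the positive quadratic sector has a regulator-independent continuum measure, its Schwinger functions define continuous generalized fields on $\mathscr E'$, and every spectral regulator is removed without counterterms. I would obtain it as a repackaging of Theorem~\ref{thm:continuum-gaussian}, making the phrase ``continuous generalized fields'' precise.

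\emph{Step 1: existence and regulator independence.} Theorem~\ref{thm:continuum-gaussian} already gives the Gaussian measure $\mu$ on $\mathscr E'$. It also gives weak convergence $\mu_N\Rightarrow\mu$ for every increasing family of finite-rank spectral projections that commute with $\Omega$ and converge strongly to $\id$. The limit is characterized by $\widehat\mu(f)=\exp[-\frac12Q(f,f)]$, and $Q$ does not depend on the regulator family. Hence any two admissible regulator sequences have the same limit, which is the claimed regulator independence. Because the same theorem removes the cutoff with unmodified covariances $Q_N$, the absence of counterterms is inherited directly.

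\emph{Step 2: Schwinger functions as continuous multilinear forms.} For $f_1,\dots,f_k\in\mathscr E$ I define
\[
S_k(f_1,\dots,f_k):=\int_{\mathscr E'}\prod_{j=1}^k\phi(f_j)\,\dd\mu(\phi).
\]
By Wick's theorem, $S_k$ vanishes for odd $k$. For even $k$ it is the sum over pairings of products of $Q(f_i,f_j)$. Continuity of $S_k$ then reduces to joint continuity of $Q$ on $\mathscr E\times\mathscr E$. To get this, I would bound $Q(f,g)$ by Cauchy--Schwarz in $K$, using $\norm{e^{-|\tau-s|\Omega}(2\Omega)^{-1}}\le(2m)^{-1}$ from \eqref{eq:omega-gap}. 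This gives $|Q(f,g)|\le(2m)^{-1}\norm f_{L^1(\R;K)}\norm g_{L^1(\R;K)}$. The $L^1(\R;K)$ norm is in turn dominated by a Schwartz seminorm of the projective tensor product. So each $S_k$ is a continuous $k$-linear form. By the nuclear kernel theorem, it is then a distribution in $(\mathscr E^{\widehat\otimes k})'$, i.e.\ a continuous generalized field. The map $f\mapsto\phi(f)$ has all $L^p(\mu)$ moments and is linear and continuous in $L^2(\mu)$, since $\norm{\phi(f)}_{L^2}^2=Q(f,f)$. This shows that the field is a continuous generalized random field on $\mathscr E'$.

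\emph{Step 3: cylindrical convergence.} I would cite the moment statement of Theorem~\ref{thm:continuum-gaussian}: each $S_k^{(N)}$ converges to $S_k$ because $Q_N\to Q$ pointwise. Polarization upgrades the convergence on the diagonal $Q_N(f,f)$ to convergence of the off-diagonal values $Q_N(f,g)$.

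The only nontrivial point is the seminorm bound in Step~2, specifically the domination of the $L^1$ norm on $\mathscr E=\mathscr S(\R_\tau)\widehat\otimes\mathscr S_\Omega$ by a continuous seminorm. I would check it first on elementary tensors $h\otimes u$, where it reads $\norm h_{L^1}\norm u_K$, and then extend by the projective-norm definition. The disclaimer about the nonlinear interaction needs no proof, since all the constructions above involve only the quadratic form $Q$.
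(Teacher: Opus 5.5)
Your proposal is correct and follows the paper's route: the corollary is a direct restatement of Theorem~\ref{thm:continuum-gaussian} (Bochner--Minlos existence, a regulator-independent weak limit reached with unmodified covariances, and Wick convergence of the moments). Your Step~2 bound $|Q(f,g)|\le(2m)^{-1}\norm f_{L^1(\R;K)}\norm g_{L^1(\R;K)}$ only makes explicit the continuity of $Q$ that the paper asserts from functional calculus and the gap \eqref{eq:omega-gap}, and the universal property of the completed projective tensor product already suffices there without invoking the kernel theorem.
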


\subsection{Reflection positivity}

Let Euclidean time reflection act by
\[
(\theta f)(\tau):=f(-\tau),
\]
and let $\mathscr E_+$ consist of test functions supported in $\tau>0$.  Define
\begin{equation}
Vf
:=\int_0^\infty
(2\Omega)^{-1/2}e^{-\tau\Omega}f(\tau)\,\dd\tau
\in K.
\label{eq:os-one-particle-map}
\end{equation}

\begin{lemma}[Reflection factorization]
\label{lem:reflection-factorization}
For $f,g\in\mathscr E_+$,
\begin{equation}
Q(\theta f,g)=\langle Vf,Vg\rangle_K.
\label{eq:reflection-factorization}
\end{equation}
The same identity holds with $Q_N$ and $V_N:=P_NV$.
\end{lemma}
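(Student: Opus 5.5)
The plan is a direct computation with the spectral theorem for $\Omega$, reducing the identity to the semigroup law of $e^{-\tau\Omega}$.

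First compute $\theta f$. Since $f$ is supported in $\tau>0$, $(\theta f)(\tau)=f(-\tau)$ is supported in $\tau<0$. Substitute $\tau\mapsto-\tau$ in the outer integral of \eqref{eq:continuum-covariance}. This gives
\[
Q(\theta f,g)=\int_0^\infty\int_0^\infty\Bigl\langle f(\tau),\frac{e^{-\abs{-\tau-s}\Omega}}{2\Omega}g(s)\Bigr\rangle_K\,\dd\tau\,\dd s .
\]
For $\tau,s>0$ we have $\abs{-\tau-s}=\tau+s$. This is the key point: reflection turns the absolute value into a sum, so the kernel factorizes.

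Next, use the bounded functional calculus of the positive operator $\Omega\ge m$. All operators involved, namely $e^{-\tau\Omega}$, $e^{-s\Omega}$ and $(2\Omega)^{-1/2}$, are bounded self-adjoint functions of $\Omega$ and commute with one another. Therefore
\[
\frac{e^{-(\tau+s)\Omega}}{2\Omega}=\bigl((2\Omega)^{-1/2}e^{-\tau\Omega}\bigr)^{*}\bigl((2\Omega)^{-1/2}e^{-s\Omega}\bigr).
\]
Moving the first factor across the real inner product gives
\[
\Bigl\langle f(\tau),\frac{e^{-(\tau+s)\Omega}}{2\Omega}g(s)\Bigr\rangle_K=\Bigl\langle(2\Omega)^{-1/2}e^{-\tau\Omega}f(\tau),\,(2\Omega)^{-1/2}e^{-s\Omega}g(s)\Bigr\rangle_K .
\]
Pulling the two integrals inside the inner product then yields $\langle Vf,Vg\rangle_K$ by \eqref{eq:os-one-particle-map}.

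The only step needing care is justifying this interchange and the existence of $Vf$ as a $K$-valued Bochner integral. I would proceed as follows.
\begin{itemize}
\item The map $\tau\mapsto f(\tau)$ is a Schwartz function into $\mathscr S_\Omega\subset K$, hence continuous and rapidly decaying in $K$-norm.
\item The operator-norm bound $\norm{(2\Omega)^{-1/2}e^{-\tau\Omega}}\le(2m)^{-1/2}e^{-\tau m}$ makes the integrand Bochner integrable on $(0,\infty)$.
\item Continuity of the inner product commutes it with Bochner integrals.
\item Fubini applies because the double integrand is bounded by $(2m)^{-1}\norm{f(\tau)}\norm{g(s)}$, which is integrable.
\end{itemize}

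For the cutoff version, $P_N$ is an orthogonal spectral projection commuting with every function of $\Omega$, and it satisfies $P_N^2=P_N=P_N^*$. Hence
\[
Q_N(\theta f,g)=Q(P_N\theta f,P_Ng)=Q(\theta P_Nf,P_Ng),
\]
since $P_N$ acts pointwise in $\tau$ and commutes with $\theta$. Applying the identity already proved gives $\langle VP_Nf,VP_Ng\rangle_K$. Finally,
\[
VP_Nf=P_NVf=V_Nf,
\]
because $P_N$ commutes with $(2\Omega)^{-1/2}e^{-\tau\Omega}$ and passes through the Bochner integral. This gives the cutoff identity.
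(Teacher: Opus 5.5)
Your proposal is correct and follows the paper's proof essentially step for step: the support condition turns $|\tau-s|$ into $\tau+s$, the functional-calculus factorization $e^{-(\tau+s)\Omega}/(2\Omega)=(2\Omega)^{-1/2}e^{-\tau\Omega}e^{-s\Omega}(2\Omega)^{-1/2}$ splits the kernel, and Fubini then gives $\langle Vf,Vg\rangle_K$, with the cutoff case obtained by inserting $P_N$. Your additions make explicit what the paper leaves implicit, namely the Bochner-integrability and domination bounds and the checks that $P_N$ commutes with $\theta$ and passes through $V$.
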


\begin{proof}
The support assumptions place the reflected variable at negative time and the unreflected variable at positive time.  Hence the absolute value in \eqref{eq:continuum-covariance} becomes the sum of their positive time parameters.  Functional calculus gives
\[
\frac{e^{-(\tau+s)\Omega}}{2\Omega}
=(2\Omega)^{-1/2}e^{-\tau\Omega}
 e^{-s\Omega}(2\Omega)^{-1/2}.
\]
Fubini's theorem yields \eqref{eq:reflection-factorization}.  Inserting $P_N$ gives the cutoff identity.
\end{proof}

For a field $\phi\in\mathscr E'$, write $\phi(f)$ for evaluation.  Let $\mathcal C_+$ be the linear span of the positive-time exponential functionals
\[
F(\phi)=\sum_{j=1}^r c_j e^{i\phi(f_j)},
\qquad f_j\in\mathscr E_+.
\]

\begin{theorem}[Osterwalder--Schrader positivity]
\label{thm:os-positivity}
For every $F\in\mathcal C_+$,
\begin{equation}
\int_{\mathscr E'}
\overline{F(\theta\phi)}F(\phi)\,\dd\mu(\phi)
\ge0.
\label{eq:os-positivity}
\end{equation}
Every cutoff measure $\mu_N$ satisfies the same inequality, and reflection positivity is preserved by the regulator limit.
\end{theorem}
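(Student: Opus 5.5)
The plan is to reduce \eqref{eq:os-positivity} to positive semidefiniteness of a matrix built from Gaussian characteristic functionals, and then to use Lemma~\ref{lem:reflection-factorization}.

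Write $F=\sum_j c_je^{i\phi(f_j)}$ with $f_j\in\mathscr E_+$. Evaluation at the reflected field gives $F(\theta\phi)=\sum_j c_je^{i\phi(\theta f_j)}$. Expanding the integrand produces
\[
\int\overline{F(\theta\phi)}F(\phi)\,\dd\mu
=\sum_{j,k}\bar c_jc_k\,\widehat\mu(f_k-\theta f_j)
=\sum_{j,k}\bar c_jc_k
\exp\!\left[-\tfrac12Q(f_k-\theta f_j,f_k-\theta f_j)\right].
\]

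Next, expand the quadratic form by bilinearity and symmetry of $Q$. Because $\theta$ is an isometry for $Q$ (the kernel depends only on $|\tau-s|$), we have $Q(\theta f_j,\theta f_j)=Q(f_j,f_j)$. Hence the exponent equals
\[
-\tfrac12Q(f_k,f_k)-\tfrac12Q(f_j,f_j)+Q(\theta f_j,f_k).
\]
We absorb the diagonal factors into new coefficients $d_k:=c_k e^{-Q(f_k,f_k)/2}$. The sum then becomes $\sum_{j,k}\bar d_jd_k\,e^{M_{jk}}$ with $M_{jk}:=Q(\theta f_j,f_k)$.

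By Lemma~\ref{lem:reflection-factorization}, $M_{jk}=\langle Vf_j,Vf_k\rangle_K$. This is a real Gram matrix, so it is positive semidefinite. By the Schur product theorem, every Hadamard power $M^{\circ n}$ is also positive semidefinite. The entrywise exponential $e^{\circ M}=\sum_n M^{\circ n}/n!$ is therefore positive semidefinite, and this gives \eqref{eq:os-positivity}.

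The cutoff measures are handled identically. We replace $Q$ by $Q_N$ and $V$ by $V_N=P_NV$, using the cutoff version of the lemma. We also need $\theta$-invariance of $Q_N$, which holds because $P_N$ acts only on the $K$ factor and commutes with $\Omega$.

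For preservation under the limit, note that each integrand $\overline{F(\theta\phi)}F(\phi)$ is a finite combination of functions $e^{i\phi(h)}$. Its integral against $\mu_N$ is therefore a finite combination of $\widehat\mu_N(h)$. By the pointwise convergence $\widehat\mu_N\to\widehat\mu$ established in Theorem~\ref{thm:continuum-gaussian}, these integrals converge to the $\mu$-integrals. Nonnegativity passes to the limit, which gives an independent second proof of the continuum inequality.

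I expect the main obstacle to be justifying the mechanics rather than the positivity itself. Two points need care. First, one must check that $\theta$ maps $\mathscr E$ continuously into itself and preserves $Q$; this follows from the change of variables $\tau\mapsto-\tau$ in \eqref{eq:continuum-covariance}. Second, one must check that $Vf\in K$ is well defined for $f\in\mathscr E_+$; this holds because $\|(2\Omega)^{-1/2}e^{-\tau\Omega}\|\le(2m)^{-1/2}$ and $f$ is Schwartz in $\tau$. With those two facts in place, the Schur-product step is routine.
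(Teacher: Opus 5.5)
Your proposal is correct and follows the paper's proof essentially step for step: Gaussian evaluation of the reflected product, absorption of the diagonal factors into $d_j=c_je^{-Q(f_j,f_j)/2}$, Lemma~\ref{lem:reflection-factorization} to identify $Q(\theta f_j,f_k)=\langle Vf_j,Vf_k\rangle_K$, the cutoff case with $V_N$, and convergence of the finite matrices for the regulator limit. Your Schur-product step is the matrix form of the paper's exponential-vector argument, because the Gram matrix of the tensor powers $(Vf_j)^{\otimes n}$ is exactly the Hadamard power $M^{\circ n}$.
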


\begin{proof}
For $F=\sum_jc_je^{i\phi(f_j)}$, Gaussian evaluation and reflection invariance give
\begin{align*}
&\int\overline{F(\theta\phi)}F(\phi)\,\dd\mu(\phi)\\
&\quad=\sum_{j,k}\overline{c_j}c_k
\exp\!\left[-\frac12Q(f_j,f_j)-\frac12Q(f_k,f_k)
+Q(\theta f_j,f_k)\right].
\end{align*}
Set $d_j=c_j\exp[-Q(f_j,f_j)/2]$.  By Lemma~\ref{lem:reflection-factorization}, the remaining matrix is
\[
\exp\langle Vf_j,Vf_k\rangle_K.
\]
It is positive semidefinite because
\[
\exp\langle u,v\rangle
=\sum_{n=0}^\infty\frac1{n!}
\langle u^{\otimes n},v^{\otimes n}\rangle,
\]
so it is the Gram kernel of bosonic exponential vectors.  This proves \eqref{eq:os-positivity}.  The cutoff proof is identical with $V_N$, and the continuum statement also follows from convergence of the finite matrices.  This is the free-field reflection-positive construction underlying the Osterwalder--Schrader reconstruction theorem \cite{OS1973,OS1975}.
\end{proof}

\subsection{Lorentzian unitary reconstruction}

Let $\mathcal N$ be the null space of the sesquilinear form in \eqref{eq:os-positivity}.  Define
\[
\mathcal H_{\mathrm{OS}}
:=\overline{\mathcal C_+/\mathcal N}.
\]
For $t\ge0$, translate a positive-time test function away from the reflection plane by
\[
(\tau_tf)(s):=f(s-t),
\]
whenever the translated support remains positive.

\begin{theorem}[Positive Hamiltonian and Lorentzian unitarity]
\label{thm:os-unitarity}
The Osterwalder--Schrader Hilbert space is canonically isomorphic to the symmetric Fock space $\Gamma_s(K_{\mathrm{phys}}^{\mathbb C})$ over the complexification of
\[
K_{\mathrm{phys}}:=\overline{\Ran V}\subseteq K.
\]
More precisely, the class of $e^{i\phi(f)}$ is mapped to
\[
\exp\!\left[-\frac12Q(f,f)\right]\operatorname{Exp}(Vf),
\]
where $\operatorname{Exp}$ is the bosonic exponential vector.  Under this identification, positive Euclidean time translation is
\begin{equation}
T_t=\Gamma_s(e^{-t\Omega})=e^{-tH},
\qquad
H=\dd\Gamma(\Omega)\ge0.
\label{eq:os-semigroup}
\end{equation}
Consequently,
\begin{equation}
U_t=e^{-itH}
\label{eq:lorentzian-unitary}
\end{equation}
is a strongly continuous unitary Lorentzian time evolution.  The same reconstruction is obtained as the limit of the finite-mode theories.
\end{theorem}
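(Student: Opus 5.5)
The plan is to build the unitary identification directly from the Gram structure already isolated in Lemma~\ref{lem:reflection-factorization} and the proof of Theorem~\ref{thm:os-positivity}.

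\emph{Step 1: define the map.} Define $\mathcal W$ on exponential functionals by
\[
\mathcal W\Bigl(\sum_jc_je^{i\phi(f_j)}\Bigr)
:=\sum_jc_j\exp\!\left[-\tfrac12Q(f_j,f_j)\right]\operatorname{Exp}(Vf_j)\in\Gamma_s(K^{\mathbb C}_{\mathrm{phys}}).
\]
The computation in the proof of Theorem~\ref{thm:os-positivity} shows that the OS form of two such functionals equals the Fock inner product of their images. This uses $\langle\operatorname{Exp}(u),\operatorname{Exp}(v)\rangle=\exp\langle u,v\rangle$ and Lemma~\ref{lem:reflection-factorization}. Hence $\mathcal W$ is well defined on $\mathcal C_+/\mathcal N$, since null vectors map to zero, and it is isometric. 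It therefore extends to an isometry $\mathcal H_{\mathrm{OS}}\to\Gamma_s(K^{\mathbb C}_{\mathrm{phys}})$.

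\emph{Step 2: surjectivity.} I expect this to be the main obstacle. Exponential vectors $\operatorname{Exp}(u)$ with $u$ ranging over a dense subset of $K_{\mathrm{phys}}$ are total in the Fock space over the complexification, because the real span suffices by analyticity of $z\mapsto\operatorname{Exp}(zu)$ and polarization. By definition $\Ran V$ is dense in $K_{\mathrm{phys}}$. The subtlety is that we need $\operatorname{Exp}(Vf)$ itself, without the scalar prefactor, and also multiples $\operatorname{Exp}(sVf)$ for real $s$. Both are available because $sf\in\mathscr E_+$ and the prefactor is a nonzero scalar. Derivatives in $s$ at $0$ generate all finite-particle vectors $(Vf)^{\otimes n}$, and polarization then gives totality.

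\emph{Step 3: the semigroup.} For $f\in\mathscr E_+$ and $t\ge0$, substituting in \eqref{eq:os-one-particle-map} gives $V\tau_tf=e^{-t\Omega}Vf$. Also $Q(\tau_tf,\tau_tf)=Q(f,f)$ by translation invariance of the kernel in \eqref{eq:continuum-covariance}. Hence
\[
\mathcal W(e^{i\phi(\tau_tf)})=\exp\!\left[-\tfrac12Q(f,f)\right]\operatorname{Exp}(e^{-t\Omega}Vf)=\Gamma_s(e^{-t\Omega})\,\mathcal W(e^{i\phi(f)}).
\]
This identifies $T_t$ with $\Gamma_s(e^{-t\Omega})$ on a total set. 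The second quantization of a positive self-adjoint contraction semigroup is $e^{-t\,\dd\Gamma(\Omega)}$, and $\dd\Gamma(\Omega)\ge0$. Note that $e^{-t\Omega}$ preserves $K_{\mathrm{phys}}$ because it maps $\Ran V$ into itself. Stone's theorem then gives the strongly continuous unitary group $U_t=e^{-itH}$.

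\emph{Step 4: finite-mode limit.} Repeat the construction with $V_N=P_NV$ and $Q_N$. Since $P_N$ commutes with $\Omega$, the $N$-th theory sits inside the full Fock space as $\Gamma_s(P_NK^{\mathbb C}_{\mathrm{phys}})$, and $\Gamma_s(P_N)\to\id$ strongly. The exponential vectors, semigroups, and resolvents of $\dd\Gamma(P_N\Omega)$ then converge strongly. Trotter--Kato then gives $e^{-itH_N}\to e^{-itH}$ strongly.
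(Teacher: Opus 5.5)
Your proposal is correct and follows essentially the same route as the paper. The paper's steps are: the Gram-matrix identity from the reflection-positivity proof gives an isometry onto the weighted exponential vectors; density of those vectors makes it unitary; $V(\tau_tf)=e^{-t\Omega}Vf$ second-quantizes to $e^{-tH}$ with $H=\dd\Gamma(\Omega)\ge0$; Stone's theorem and $P_N\to\id$ then give the unitary group and the finite-mode limit. Your version proves points the paper only asserts: totality of $\operatorname{Exp}(Vf)$ over the real dense subspace $\Ran V$ in the complexified Fock space, invariance of $K_{\mathrm{phys}}$ under $e^{-t\Omega}$, and the Trotter--Kato step giving strong convergence of $e^{-itH_N}$.
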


\begin{proof}
The proof of Theorem~\ref{thm:os-positivity} shows that the displayed weighted exponential vectors have exactly the Osterwalder--Schrader Gram matrix.  Their span is dense in the symmetric Fock space over $K_{\mathrm{phys}}^{\mathbb C}$, so the map extends by completion to a unitary identification.  From \eqref{eq:os-one-particle-map},
\[
V(\tau_tf)=e^{-t\Omega}Vf.
\]
Second quantization therefore gives \eqref{eq:os-semigroup}.  Since $\Omega\ge0$, $H=\dd\Gamma(\Omega)$ is self-adjoint and nonnegative.  Stone's theorem then gives the unitary group \eqref{eq:lorentzian-unitary}.  Strong convergence of $P_N$ gives strong convergence of the one-particle semigroups and hence convergence on the dense set of finite-particle vectors.
\end{proof}

\subsection{Bosonic anomaly cancellation}

Let a compact group $G$ act orthogonally on $K$ through $U:G\to\mathcal O(K)$.  Assume
\begin{equation}
U_g\Omega=\Omega U_g
\quad\text{for every }g\in G.
\label{eq:symmetry-commutes-omega}
\end{equation}
The action extends pointwise to $\mathscr E$ and dually to $\mathscr E'$.

\begin{theorem}[Exact bosonic anomaly freedom]
\label{thm:bosonic-anomaly-free}
Under \eqref{eq:symmetry-commutes-omega},
\begin{equation}
(U_g)_*\mu_N=\mu_N,
\qquad
(U_g)_*\mu=\mu
\label{eq:measure-invariance}
\end{equation}
for every $g\in G$.  Hence the bosonic spectral regulators and their continuum limit have no measure Jacobian anomaly.  The reconstructed symmetry is unitary on $\mathcal H_{\mathrm{OS}}$ and commutes with $H$.  For every differentiable cylindrical functional $F$ and infinitesimal generator $X$,
\begin{equation}
\int \mathcal L_XF\,\dd\mu=0.
\label{eq:bosonic-ward}
\end{equation}
\end{theorem}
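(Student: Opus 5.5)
The plan is to show invariance at the level of characteristic functionals, then obtain the remaining claims as consequences. The group acts on test functions by $(U_gf)(\tau):=U_gf(\tau)$. This preserves $\mathscr E$: since $U_g$ is orthogonal and commutes with $\Omega$, it commutes with every $\Omega^r$, so it preserves each $\mathcal D(\Omega^r)$ and each graph seminorm. Hence it extends continuously to the completed tensor product. The dual action is $(U_g\phi)(f):=\phi(U_g^{-1}f)$. For the pushforward,
\[
\widehat{(U_g)_*\mu}(f)=\int e^{i\phi(U_g^{-1}f)}\,\dd\mu(\phi)=\exp\!\left[-\tfrac12Q(U_g^{-1}f,U_g^{-1}f)\right].
\]
By \eqref{eq:symmetry-commutes-omega}, $U_g$ commutes with every bounded Borel function of $\Omega$, in particular with $e^{-\abs{\tau-s}\Omega}(2\Omega)^{-1}$. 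Orthogonality then gives $Q(U_g^{-1}f,U_g^{-1}f)=Q(f,f)$. The same commutation holds for $P_N=\mathbf 1_{[m,N]}(\Omega)$, which gives invariance of $Q_N$. Uniqueness in Bochner--Minlos (Theorem~\ref{thm:continuum-gaussian}) then yields \eqref{eq:measure-invariance}. A pushforward that leaves the measure unchanged has trivial Radon--Nikodym factor, and this is the claimed absence of a Jacobian anomaly.

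For the reconstructed symmetry, $U_g$ commutes with the reflection $\theta$, since it acts only on the $K$ factor, and it preserves $\mathscr E_+$, since it leaves supports unchanged. So $F\mapsto F\circ U_g^{-1}$ maps $\mathcal C_+$ to itself. By measure invariance it preserves the OS sesquilinear form, and so it descends to an isometry of $\mathcal H_{\mathrm{OS}}$. This isometry is unitary because its inverse is given by $g^{-1}$. For commutation with $H$, use the Fock identification of Theorem~\ref{thm:os-unitarity}. From $VU_gf=U_gVf$, which follows from the commutation with $\Omega$, the induced operator is $\Gamma_s(U_g|_{K_{\mathrm{phys}}})$. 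This operator commutes with $\Gamma_s(e^{-t\Omega})$ and hence with $H=\dd\Gamma(\Omega)$. One point must be checked: $K_{\mathrm{phys}}=\overline{\Ran V}$ is $U_g$-invariant. This follows from $U_gV=VU_g$ and the invariance of $\mathscr E_+$.

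For the Ward identity \eqref{eq:bosonic-ward}, take a one-parameter subgroup $g_s=\exp(sX)$ and a differentiable cylindrical $F(\phi)=f(\phi(h_1),\dots,\phi(h_k))$. Invariance gives $\int F\circ U_{g_s}\,\dd\mu=\int F\,\dd\mu$ for all $s$, and we differentiate at $s=0$. This step is the main obstacle. Differentiating under the integral requires the generator $\dd U(X)$ to act on the test functions $h_j$ with a derivative that stays in $\mathscr E$ (or at least in its $Q$-completion). It also requires a domination bound.

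I would first restrict to the class where this is automatic. For a compact group acting orthogonally and commuting with $\Omega$, $K$ decomposes into finite-dimensional isotypic pieces inside each spectral subspace of $\Omega$. On vectors of the algebraic span of these pieces, $s\mapsto U_{g_s}h$ is smooth in every graph seminorm. For such $h_j$, the cylindrical derivative is dominated by $\sup\abs{\nabla f}$ times Gaussian moments $\abs{\phi(\dot h_j)}$, which are integrable. Dominated convergence then gives $\int\mathcal L_XF\,\dd\mu=0$. If needed, I would take ``differentiable cylindrical functional'' to mean exactly this smooth class, or extend by density using $L^2(\mu)$-continuity. The cutoff case $\mu_N$ is identical, and in fact finite dimensional.
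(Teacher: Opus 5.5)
Your proposal is correct and follows the paper's proof. Commutation with $\Omega$ and with $P_N$ gives invariance of $Q$ and $Q_N$, Bochner--Minlos uniqueness then identifies the pushforward measures, and second quantization of the one-particle action on the $U_g$-invariant space $K_{\mathrm{phys}}$ gives a unitary commuting with $H$. The paper only asserts the Ward identity. Your restriction to cylindrical functionals whose test functions involve finitely many $\Omega$-eigenvectors, with Gaussian-moment domination, justifies differentiating under the integral, and this care is warranted because a general vector of $\mathscr S_\Omega$ need not be a smooth vector for $G$.
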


\begin{proof}
Commutation with $\Omega$ gives $Q(U_gf,U_gg)=Q(f,g)$ and also makes $U_g$ commute with $P_N$.  Therefore the characteristic functionals of the pushforward measures equal those in \eqref{eq:cutoff-characteristic}--\eqref{eq:continuum-characteristic}; uniqueness of Gaussian measures proves \eqref{eq:measure-invariance}.  Differentiating the exact invariance at the identity gives \eqref{eq:bosonic-ward}.  The one-particle action preserves $K_{\mathrm{phys}}$, commutes with $e^{-t\Omega}$, and second quantizes to a unitary symmetry commuting with $H$.
\end{proof}

\begin{remark}[Scope of anomaly cancellation]
The theorem closes the anomaly question for the bosonic positive quadratic sector and for symmetry-preserving spectral regulators.  It does not decide chiral-fermion anomalies, nonlinear ghost determinants, or gravitational anomalies of a larger field content.
\end{remark}

\subsection{Application to linearized Recognition gravity}

\begin{corollary}[Constructive physical-mode Recognition sector]
\label{cor:constructive-physical-sector}
Let $(g_0,A_0,\Phi_0)$ be a stationary Euclidean background of the action-lifted Recognition equations.  Suppose that gauge fixing, constraint reduction, and target-faithful projection produce a real physical Hilbert space $K$ on which the quadratic Hessian has the ultrastatic form
\[
L_E=-\partial_\tau^2+\Omega^2
\]
with the hypotheses above.  Then the linearized physical metric, gauge, and Recognition-field modes have:
\begin{enumerate}[label=(\roman*)]
\item a regulator-independent continuum Gaussian measure;
\item spectral-cutoff removal with convergence of all cylindrical Schwinger functions;
\item Osterwalder--Schrader reflection positivity;
\item a positive reconstructed Hamiltonian and unitary Lorentzian evolution;
\item exact bosonic anomaly freedom for every compact symmetry commuting with the physical Hessian;
\item a nonperturbative Gaussian ultraviolet completion.
\end{enumerate}
\end{corollary}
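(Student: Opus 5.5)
The plan is to obtain Corollary~\ref{cor:constructive-physical-sector} by checking that the background data satisfy the standing hypotheses of this section, and then reading off each item from the theorems already proved. First I would record the input. At the stationary background $(g_0,A_0,\Phi_0)$, gauge fixing, constraint reduction and target-faithful projection give a real separable Hilbert space $K$. On $K$ the quadratic Hessian has the ultrastatic form $L_E=-\partial_\tau^2+\Omega^2$, where $\Omega$ is positive self-adjoint with compact resolvent, a gap $\Omega\ge m\id$, and a nuclear smooth spectral test space $\mathscr S_\Omega$. These are exactly the assumptions fixed at the start of the section, so the covariance $Q$ in \eqref{eq:continuum-covariance} and its cutoffs $Q_N$ in \eqref{eq:cutoff-covariance} are defined. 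I would also check that $Q$ is the Green kernel of $L_E$. Fourier transforming in $\tau$ gives $(k^2+\Omega^2)^{-1}$, and transforming back yields $e^{-|\tau-s|\Omega}/(2\Omega)$. Hence the Gaussian measure built from $Q$ is the quadratic physical sector of the action-lifted theory, not some unrelated field.

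Next I would match the items to earlier results. Items (i) and (ii) follow from Theorem~\ref{thm:continuum-gaussian}, which gives existence and uniqueness via Bochner--Minlos, weak convergence $\mu_N\Rightarrow\mu$, convergence of cylindrical moments, and independence from the choice of nested spectral regulator. Item (iii) follows from Lemma~\ref{lem:reflection-factorization} and Theorem~\ref{thm:os-positivity}. Item (iv) is Theorem~\ref{thm:os-unitarity}: the OS space is identified with $\Gamma_s(K_{\mathrm{phys}}^{\mathbb C})$, with $H=\dd\Gamma(\Omega)\ge0$ and $U_t=e^{-itH}$. Item (v) is Theorem~\ref{thm:bosonic-anomaly-free}, applied to any compact group acting orthogonally on $K$ and satisfying \eqref{eq:symmetry-commutes-omega}. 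Item (vi) is Corollary~\ref{cor:gaussian-uv}, read with its stated scope: only the linearized sector is completed.

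The step I expect to be the main obstacle is the first reduction. The corollary assumes it rather than proving it, but I would state explicitly where each hypothesis enters. Lorentzian metric fluctuations and gauge modes are not positive before reduction: the conformal mode has the wrong sign and gauge directions have zero modes. So positivity and the gap must come from the gauge-fixed, constraint-projected physical space. I would verify that the symmetry group in (v) preserves $K$, which holds because it is assumed to act orthogonally on the reduced space. I would also verify that the ultrastatic splitting leaves $\Omega$ independent of $\tau$, since the proof of Theorem~\ref{thm:os-unitarity} uses time-translation covariance, namely $V(\tau_tf)=e^{-t\Omega}Vf$.

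Once those points are stated, the proof is a direct assembly of results. No new estimates are needed, and no claim goes beyond the positive quadratic sector.
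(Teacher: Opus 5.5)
Your proposal is correct and matches the paper, which gives no separate proof and treats this corollary as a direct assembly of Theorems~\ref{thm:continuum-gaussian}, \ref{thm:os-positivity}, \ref{thm:os-unitarity}, \ref{thm:bosonic-anomaly-free} and Corollary~\ref{cor:gaussian-uv}; all the physical content sits in the assumed reduction to a gapped ultrastatic Hessian, as you say. Your Green-kernel check identifying $Q$ with $L_E^{-1}$ is a worthwhile addition. For item (v), note that a symmetry commuting with $L_E$ commutes with $\Omega^2$, and hence with $\Omega$ by uniqueness of the positive square root; that is the form of \eqref{eq:symmetry-commutes-omega} the theorem actually uses.
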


\begin{remark}[Why positivity is load-bearing]
The corollary applies after projection to a positive physical Hessian.  It does not conceal a negative conformal mode, an unremoved gauge direction, or a ghost instability inside the word ``regulator.''  If the reduced Hessian is not positive, reflection positivity and the probability-measure construction must be re-established by a different argument.
\end{remark}

\section{Target-faithful identification of the Recognition spacetime geometry}
\label{sec:metric-identification}

A mathematical theory cannot manufacture experimental data.  It can, however, prove that a declared set of observations is sufficient to identify the geometric target up to its lawful symmetries.  The Recognition Kernel language makes this distinction exact.

\subsection{Metric identification from propagation and scale}

For a Lorentzian metric $g$, let $\mathcal N_g(x)$ denote its null cone in $T_xM$.  Let $\mathrm{vol}_g$ be its positive volume density.

\begin{theorem}[Null-cone and scale identification]
\label{thm:null-cone-identification}
Let $M$ be connected of dimension $n\ge3$, and let $g$ and $\widetilde g$ be time-oriented Lorentzian metrics.  If
\begin{equation}
\mathcal N_g(x)=\mathcal N_{\widetilde g}(x)
\quad\text{for every }x\in M,
\label{eq:same-null-cones}
\end{equation}
then there is a smooth positive function $\Omega_c$ such that
\begin{equation}
\widetilde g=\Omega_c^2g.
\label{eq:conformal-identification}
\end{equation}
The conformal factor is fixed, and hence $\widetilde g=g$, under either of the following pointwise scale calibrations:
\begin{enumerate}[label=(\alph*)]
\item $\mathrm{vol}_{\widetilde g}=\mathrm{vol}_g$;
\item there is a smooth nowhere-null vector field $T$ such that $\widetilde g(T,T)=g(T,T)$.
\end{enumerate}
More generally, if the observations are related by a diffeomorphism $\psi$, the conclusion is $g=\psi^*\widetilde g$.
\end{theorem}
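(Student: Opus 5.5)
Three steps are needed. First, the pointwise algebraic statement that two Lorentzian quadratic forms with the same null cone are proportional. Second, globalization to a smooth positive factor. Third, fixing the factor by calibration.

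\emph{Pointwise step.} At fixed $x$, choose a $g_x$-orthonormal basis so that $g_x=-e_0^2+\sum_{i\ge1}e_i^2$. Let $\widetilde g_x$ have the same null set. For each unit $\omega\in S^{n-2}$, the vectors $e_0\pm\omega$ are $g_x$-null, hence $\widetilde g_x$-null. Adding the two conditions $\widetilde g_x(e_0\pm\omega,e_0\pm\omega)=0$ gives
\[
\widetilde g_x(e_0,e_0)+\widetilde g_x(\omega,\omega)=0
\]
for every unit $\omega$. Subtracting them gives $\widetilde g_x(e_0,\omega)=0$. So $\widetilde g_x$ restricted to $\operatorname{span}\{e_1,\dots,e_{n-1}\}$ is a quadratic form that is constant on the unit sphere. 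By polarization it is a multiple $c\,\delta$; this uses $n-1\ge2$, which is where the hypothesis $n\ge3$ enters. Hence $\widetilde g_x=c\,g_x$ with $c=-\widetilde g_x(e_0,e_0)$. Lorentzian signature of $\widetilde g_x$ forces $c\neq0$, and $c<0$ would give signature $(n-1,1)$ reversed, which is excluded when $n\ge3$ and signature $(1,n-1)$ is fixed. So $c>0$; set $\Omega_c(x)^2=c$.

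\emph{Smoothness.} Rather than using the frame, I will write the factor invariantly. For any smooth $g$-timelike vector field $T$ (one exists on a time-oriented manifold), $\Omega_c^2=\widetilde g(T,T)/g(T,T)$. This is a quotient of smooth functions with nonvanishing denominator, so it is smooth and positive. Connectedness is not even needed here. The time-orientation hypothesis serves only to provide $T$; any locally chosen timelike field would also do.

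\emph{Calibration.} (a) Since $\mathrm{vol}_{\widetilde g}=\Omega_c^{\,n}\mathrm{vol}_g$, equality of the volumes gives $\Omega_c=1$. (b) Since $g(T,T)\ne0$, equality $\widetilde g(T,T)=g(T,T)$ gives $\Omega_c^2=1$. For the diffeomorphism version, I apply the argument to $g$ and $\psi^*\widetilde g$. Pulling back preserves the null cone relation and the calibrations once they are transported by $\psi$, and this yields $g=\psi^*\widetilde g$.

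The main obstacle is the pointwise step: the cone-determines-conformal-class lemma, especially the care with signature to exclude $c<0$. Everything else is bookkeeping.
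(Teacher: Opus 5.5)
Your proof is correct and follows essentially the paper's route: proportionality from evaluating $\widetilde g$ on the null vectors $e_0\pm\omega$ in a $g$-orthonormal frame, then the $\Omega_c^n$ volume calibration and the nowhere-null calibration, with the diffeomorphism case reduced by pullback. One small correction: constancy on the unit sphere already forces $c\,\delta$ when $n-1=1$, so $n\ge3$ is used only where you exclude $c<0$ (in dimension two, $-g$ is again Lorentzian with the same null lines); that signature argument, together with your formula $\Omega_c^2=\widetilde g(T,T)/g(T,T)$, gives a cleaner account of positivity and smoothness than the paper's appeal to time orientation.
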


\begin{proof}
At each point, two nondegenerate Lorentzian quadratic forms in dimension at least three with the same null cone are positive scalar multiples of one another.  To see this, choose a $g$-orthonormal basis with quadratic form $-t^2+\abs{x}^2$.  Evaluating $\widetilde g$ on all vectors $(1,u)$ with $\abs u=1$ forces the mixed coefficients to vanish and the spatial block to be a common scalar multiple of the identity; the time coefficient is then the negative of the same scalar.  Smoothness and time orientation make the scalar a smooth positive function, giving \eqref{eq:conformal-identification}.  Volume densities transform as
\[
\mathrm{vol}_{\widetilde g}=\Omega_c^n\mathrm{vol}_g,
\]
so calibration (a) and positivity imply $\Omega_c=1$.  Under calibration (b),
\[
\widetilde g(T,T)=\Omega_c^2g(T,T)=g(T,T),
\]
and the nowhere-null hypothesis again gives $\Omega_c=1$.  Pulling back by $\psi$ reduces the diffeomorphic statement to the preceding case.  The causal determination of conformal structure is the geometric principle formalized in \cite{Malament1977}.
\end{proof}

\begin{corollary}[Principal-symbol identification]
If two normally hyperbolic Recognition operators have the same characteristic covectors and either the same calibrated volume density or one common pointwise proper-time/length scale along a nowhere-null field, then their Lorentzian metrics agree up to diffeomorphism.  Thus high-frequency propagation determines the conformal metric, while one independent scale field fixes the conformal factor.
\end{corollary}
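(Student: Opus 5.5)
The statement to prove is the Principal-symbol identification corollary.  The plan is to reduce it directly to Theorem~\ref{thm:principal-symbol-recognition-metric} and Theorem~\ref{thm:null-cone-identification}.

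First, let the two normally hyperbolic Recognition operators have scalar principal symbols $p_x(\xi)\id$ and $\widetilde p_x(\xi)\id$.  By Theorem~\ref{thm:principal-symbol-recognition-metric}, each symbol determines a unique Lorentzian co-metric by polarization, $g^{-1}$ and $\widetilde g^{-1}$.  The characteristic sets are the null cones of these co-metrics in $T_x^*M$.

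Second, I transfer the cotangent null-cone equality to the tangent bundle.  The musical isomorphism of $g$ maps the co-null cone of $g^{-1}$ onto $\mathcal N_g(x)$, but it is metric-dependent, so I would not use it naively.  Instead I apply the pointwise algebraic argument from the proof of Theorem~\ref{thm:null-cone-identification} directly to the quadratic forms $g_x^{-1}$ and $\widetilde g_x^{-1}$ on $T_x^*M$; that argument uses only linear algebra of Lorentzian forms in dimension $n\ge3$.  It gives $\widetilde g^{-1}=c\,g^{-1}$ with $c$ smooth and nonvanishing.  Matching signature conventions (time orientation, or normalization of $p$) makes $c>0$.  Inverting gives $\widetilde g=\Omega_c^2 g$ with $\Omega_c=c^{-1/2}$.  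This establishes the claim that high-frequency propagation determines the conformal class.

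Third, the scale calibration.  If the volume densities agree, then $\Omega_c^n=1$ and $\Omega_c=1$.  If $\widetilde g(T,T)=g(T,T)$ along a nowhere-null field $T$, then $\Omega_c^2=1$.  Either case yields $\widetilde g=g$, exactly as in calibrations (a) and (b) of Theorem~\ref{thm:null-cone-identification}.  When the two operators live on observation charts related by a diffeomorphism $\psi$, I first pull back the second operator by $\psi$.  Principal symbols are natural under pullback, so the hypotheses transfer and the conclusion becomes $g=\psi^*\widetilde g$.

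The main obstacle is the sign and positivity of the conformal factor $c$.  Equal null sets alone only give $c\neq0$, and a negative $c$ would flip the signature convention.  I would rule this out by continuity and connectedness of $M$ together with the declared common signature convention for the normalized symbols.  Failing that, I would note that a constant sign $-1$ is absorbed by the normalization declared in the corollary's ``calibrated'' hypotheses.
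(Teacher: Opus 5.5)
Your proposal is correct and follows essentially the paper's route: apply the pointwise cone argument directly to the symbols $p_x$ on $T_x^*M$, as in Theorem~\ref{thm:principal-symbol-recognition-metric}, and then use the scale calibrations (a)/(b) and the diffeomorphism pullback of Theorem~\ref{thm:null-cone-identification}. Your sign concern closes more simply than you suggest: for $n\ge3$, a negative factor would turn signature $(1,n-1)$ into $(n-1,1)$, so the common Lorentzian convention alone forces $c>0$ at each point, with no need for continuity, connectedness, or the normalization fallback.
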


\subsection{Connection identification from parallel transport}

\begin{theorem}[Parallel-transport identification of the Recognition connection]
\label{thm:connection-identification}
Let $P\to M$ be a principal $G$-bundle over a connected manifold, and let $A$ and $\widetilde A$ be two smooth connections.  Fix a base point $x_0$ and an identification of the two fibres over $x_0$.  Suppose that, under this identification, the holonomies of $A$ and $\widetilde A$ agree for every piecewise smooth loop based at $x_0$.  Then there exists a smooth gauge transformation $u$ such that
\begin{equation}
\widetilde A=uAu^{-1}-(\dd u)u^{-1}.
\label{eq:gauge-identification}
\end{equation}
Equivalently, complete parallel-transport data identify a connection up to gauge.
\end{theorem}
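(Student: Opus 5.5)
The plan is the classical holonomy-to-bundle-isomorphism construction: build a bundle automorphism $\Psi:P\to P$ by parallel transport and show that it intertwines $A$ and $\widetilde A$.

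Let $p_0\in P_{x_0}$ be a base frame, with the fixed fibre identification implemented by sending $p_0$ to $p_0$. The agreement hypothesis then reads $\mathrm{Hol}_A(\gamma)=\mathrm{Hol}_{\widetilde A}(\gamma)$ in $G$ for every loop $\gamma$ at $x_0$. For $x\in M$, choose a piecewise smooth path $c$ from $x_0$ to $x$, which exists because $M$ is connected. Let $h_c(p_0)$ and $\widetilde h_c(p_0)$ be the horizontal lifts of $c$ starting at $p_0$, taken for $A$ and $\widetilde A$ respectively. Define
\[
\Psi\bigl(h_c(p_0)g\bigr):=\widetilde h_c(p_0)g,\qquad g\in G.
\]

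\textbf{Well-definedness.} Suppose $c'$ is another path from $x_0$ to $x$ and $h_{c'}(p_0)g'=h_c(p_0)g$. The loop $c'^{-1}\ast c$ has $A$-holonomy $g'g^{-1}$. By hypothesis its $\widetilde A$-holonomy is the same element, so $\widetilde h_{c'}(p_0)g'=\widetilde h_c(p_0)g$. Thus $\Psi$ is a well-defined, $G$-equivariant, fibre-preserving bijection, with inverse obtained by swapping the roles of $A$ and $\widetilde A$.

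\textbf{Main obstacle: smoothness of $\Psi$.} Fix $x_1$ and a path $c_1$ to it, and work on a contractible coordinate ball $B$ around $x_1$. For $x\in B$, extend $c_1$ by the radial segment from $x_1$ to $x$. The $A$-horizontal lifts along these radial segments give a local section $\sigma_A:B\to P$, and the $\widetilde A$-lifts give $\sigma_{\widetilde A}$. Each is smooth, since solutions of the parallel-transport ODE depend smoothly on parameters. In this chart $\Psi(\sigma_A(x)g)=\sigma_{\widetilde A}(x)g$. Writing $\sigma_{\widetilde A}=\sigma_A\,w$ for a smooth $w:B\to G$ shows that $\Psi$ is smooth near $P_{x_1}$. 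Well-definedness, already established, makes these local descriptions agree on overlaps.

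\textbf{Pullback of the connection.} Next I will show $\Psi^*\widetilde A=A$. Take an $A$-horizontal curve in $P$; it has the form $h_c(p_0)g$ along a path $c$. Its image under $\Psi$ is $\widetilde h_c(p_0)g$, which is $\widetilde A$-horizontal. So $\Psi$ carries $A$-horizontal spaces onto $\widetilde A$-horizontal spaces. An equivariant vertical automorphism also preserves fundamental vector fields, so the two connection forms agree on both horizontal and vertical vectors.

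\textbf{Gauge formula.} Write $\Psi(p)=p\,\hat u(p)$ with $\hat u:P\to G$ equivariant, $\hat u(pg)=g^{-1}\hat u(p)g$; this is the same data as a section $u$ of $\mathrm{Ad}\,P$. Differentiate the relation $\Psi^*\widetilde A=A$ and evaluate it in a local trivialization. This is the standard computation giving
\[
A=\hat u^{-1}\widetilde A\hat u+\hat u^{-1}\dd\hat u .
\]
Solving for $\widetilde A$ gives \eqref{eq:gauge-identification} with $u=\hat u^{-1}$ in the local sign convention. I will take care to match the paper's convention $\widetilde A=uAu^{-1}-(\dd u)u^{-1}$ by choosing $u$ or $u^{-1}$ appropriately.
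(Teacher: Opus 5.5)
Your proposal is correct and follows the paper's argument: your $\Psi$ restricted to $P_x$ is the paper's $u_x$, obtained by transporting the fibre identification along $\gamma_x$ with $A$ one way and $\widetilde A$ the other. In both, path independence comes from holonomy agreement on the based loop $c'^{-1}\ast c$, smoothness from smooth parameter dependence of parallel transport, and the gauge formula from differentiating the intertwining, which you make explicit as $\Psi^*\widetilde A=A$.

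Two small bookkeeping fixes. First, solving your own relation $A=\hat u^{-1}\widetilde A\hat u+\hat u^{-1}\dd\hat u$ gives $\widetilde A=\hat uA\hat u^{-1}-(\dd\hat u)\hat u^{-1}$, so the paper's formula holds with $u=\hat u$ (pulled back by a local section), not $u=\hat u^{-1}$. Second, a right-equivariant self-map of $P_{x_0}$ with a fixed point is the identity, so the given fibre identification generally cannot send $p_0$ to $p_0$; instead start the $\widetilde A$-lifts at the image $\widetilde p_0$ of $p_0$, which changes nothing else in the argument.
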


\begin{proof}
For $x\in M$, choose a path $\gamma_x$ from $x_0$ to $x$ and define $u_x$ by transporting the fixed fibre identification along $\gamma_x$ using $A$ in one direction and $\widetilde A$ in the other.  If another path is chosen, the two definitions differ by the relative holonomy around the resulting based loop, which is the identity by hypothesis.  Thus $u_x$ is path independent.  Smooth dependence of parallel transport on endpoints makes $u$ smooth.  By construction, $u$ intertwines parallel transport along every path.  Differentiating this intertwining along infinitesimal paths gives \eqref{eq:gauge-identification}; see also the connection--transport correspondence in \cite{KobayashiNomizu}.
\end{proof}

\subsection{Recognition Kernel interpretation}

Define the geometric observation
\[
\mathcal O_{\mathrm{geom}}(g,A)
:=\bigl(\mathcal N_g,\mathrm{scale}_g,\operatorname{Hol}_A\bigr),
\]
where $\mathrm{scale}_g$ denotes either the volume density or a pointwise calibration along a declared nowhere-null field.  Let the target be the pair $(g,A)$ modulo diffeomorphisms and gauge transformations.

\begin{corollary}[Target-faithful spacetime identification]
\label{cor:target-faithful-spacetime}
On the class of smooth time-oriented Lorentzian metrics in dimension $n\ge3$ and smooth Recognition connections, equality of $\mathcal O_{\mathrm{geom}}$ implies equality of the target class.  Equivalently, every difference invisible to the observation lies entirely inside a diffeomorphism--gauge orbit.  Consequently, matching characteristic propagation, one scale calibration, and complete transport data identifies the Recognition metric and connection with the observed spacetime geometry up to diffeomorphism and gauge.
\end{corollary}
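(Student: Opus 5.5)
The statement is Corollary~\ref{cor:target-faithful-spacetime}. The plan is to combine Theorem~\ref{thm:null-cone-identification} and Theorem~\ref{thm:connection-identification} componentwise. Then I would recast the conclusion in the Recognition Kernel language of the blind quotient. That is, I would show that the fibre of $\mathcal O_{\mathrm{geom}}$ through any $(g,A)$ is contained in the diffeomorphism--gauge orbit of $(g,A)$.

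Concretely, I would take two data $(g,A)$ and $(\widetilde g,\widetilde A)$ with $\mathcal O_{\mathrm{geom}}(g,A)=\mathcal O_{\mathrm{geom}}(\widetilde g,\widetilde A)$, up to the declared identification $\psi$ of observation sites.

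First, I would pull back by $\psi$, which is allowed by the last clause of Theorem~\ref{thm:null-cone-identification}. This reduces to the case $\mathcal N_g=\mathcal N_{\widetilde g}$ pointwise. That theorem, using $n\ge3$ and time orientation, then gives $\widetilde g=\Omega_c^2g$. The equal scale datum, either calibration (a) or (b), forces $\Omega_c=1$, so $g=\psi^*\widetilde g$.

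Second, I would transport the connection by the bundle map covering $\psi$. Equality of $\operatorname{Hol}_A$ and $\operatorname{Hol}_{\widetilde A}$, with respect to the fixed base-fibre identification included in the holonomy datum, then lets me apply Theorem~\ref{thm:connection-identification}. This yields a gauge transformation $u$ with $\widetilde A=uAu^{-1}-(\dd u)u^{-1}$. The two steps together place $(\widetilde g,\widetilde A)$ in the diffeomorphism--gauge orbit of $(g,A)$, which is the first sentence of the corollary.

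For the equivalent formulation, I would note that the target map $\Pi$ is the quotient onto orbit classes. The statement ``every difference invisible to the observation lies in an orbit'' is then exactly $\Ker\mathcal O_{\mathrm{geom}}\subseteq\Ker\Pi$ in the nonlinear, fibrewise sense. This is the analogue of the target-faithfulness criterion in the Remark after the decoder-burden theorem.

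The converse also needs a check: diffeomorphism and gauge transformations preserve the observation. Null cones and volume densities are natural under pullback. Holonomy changes only by conjugation, and that conjugation is absorbed into the base-fibre identification. The final sentence of the corollary is then a restatement.

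I expect the main obstacle to be bookkeeping rather than analysis. The difficulty is stating precisely how the diffeomorphism $\psi$ acts jointly on the metric, the principal bundle and the holonomy base point, so that the two theorems compose. In particular, Theorem~\ref{thm:connection-identification} is stated on a fixed bundle. I would handle this by first lifting $\psi$ to a bundle isomorphism, whose existence I take as part of the declared observation identification. I would then apply the theorem to $A$ and the pulled-back connection on the same bundle.
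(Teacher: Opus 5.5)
Your proposal is correct and follows the paper's own argument: the paper applies Theorem~\ref{thm:null-cone-identification} (null cone plus one scale calibration fixes the metric, up to the diffeomorphism relating the observations) and Theorem~\ref{thm:connection-identification} (complete holonomy fixes the connection up to gauge), and reads their product as injectivity of $\mathcal O_{\mathrm{geom}}$ on the diffeomorphism--gauge quotient. Your extra bookkeeping, in which you use the same $\psi$ for both components by lifting it to a bundle isomorphism that also moves the holonomy base point, makes explicit a step the paper leaves implicit; your converse check is harmless but is not needed for the one-directional implication stated.
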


\begin{proof}
The metric statement is Theorem~\ref{thm:null-cone-identification}; the connection statement is Theorem~\ref{thm:connection-identification}.  Their product says that the observation map is injective on the quotient by the declared target symmetries.
\end{proof}

\begin{remark}[Experimental boundary after identifiability]
The corollary proves the sufficiency and uniqueness of the observation package.  Whether physical measurements actually satisfy those hypotheses is an empirical question.  The remaining experimental task is therefore sharply falsifiable: compare observed wavefront cones, calibrated scale, and parallel transport with the Recognition predictions.  A mismatch is an open residue, not a licence to redefine the target after the experiment.
\end{remark}

%% file: sections/06e_noise_stable_spacetime_recognition.tex
\section{Noise-stable spacetime Recognition and experimental reconstruction}
\label{sec:noise-stable-spacetime}

The exact identification theorem in Section~\ref{sec:metric-identification} determines the metric and connection from exact propagation, scale, and transport data.  Real observations are finite and noisy.  Recognition Kernel theory supplies the quantitative extension: after diffeomorphism and gauge directions are removed, a uniformly valid decoder converts finite data error into a bound on the physical quotient geometry.

Let $u_0=(g_0,A_0)$ be a smooth background and let $\mathscr S$ be a local Hilbert slice transverse to the diffeomorphism--gauge orbit through $u_0$.  A tangent vector in $T_u\mathscr S$ therefore represents a physical metric--connection perturbation rather than a pure lawful symmetry.  Let $\mathscr D$ be a Hilbert data space containing finite-resolution encodings of characteristic propagation, pointwise scale, and parallel transport, and let
\[
\cO_{\mathrm{geom}}:\mathscr S\longrightarrow\mathscr D
\]
be the corresponding observation map.

\begin{theorem}[Fixed-decoder spacetime Recognition theorem]
\label{thm:fixed-decoder-spacetime}
Let $B\subset\mathscr S$ be convex and suppose that $\cO_{\mathrm{geom}}$ is continuously Fr\'{e}chet differentiable on $B$.  Assume that there is a bounded linear decoder
\[
L:\mathscr D\longrightarrow\mathscr S
\]
and a number $0\le\delta<1$ such that, after identifying tangent spaces with the model Hilbert space of the slice,
\begin{equation}
\sup_{u\in B}
\norm{\id_{\mathscr S}-L D\cO_{\mathrm{geom}}(u)}
\le\delta.
\label{eq:uniform-decoder-defect}
\end{equation}
Then $\cO_{\mathrm{geom}}$ is injective on $B$ and, for all $u_1,u_2\in B$,
\begin{equation}
\norm{u_1-u_2}_{\mathscr S}
\le
\frac{\norm L}{1-\delta}
\norm{\cO_{\mathrm{geom}}(u_1)-\cO_{\mathrm{geom}}(u_2)}_{\mathscr D}.
\label{eq:fixed-decoder-stability}
\end{equation}
Thus a single decoder that remains an approximate left inverse throughout the neighbourhood gives a quantitative target-faithful reconstruction theorem.
\end{theorem}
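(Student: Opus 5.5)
The plan is a mean-value argument along segments, followed by inserting the decoder as an approximate left inverse. Fix $u_1,u_2\in B$ and put $h:=u_1-u_2$. By convexity the segment $u_s:=u_2+s h$, $0\le s\le1$, stays in $B$. Since $\cO_{\mathrm{geom}}$ is $C^1$ on $B$, the fundamental theorem of calculus for Banach-valued maps (a Bochner or Riemann integral of the continuous map $s\mapsto D\cO_{\mathrm{geom}}(u_s)h$) gives
\[
\cO_{\mathrm{geom}}(u_1)-\cO_{\mathrm{geom}}(u_2)
=\int_0^1 D\cO_{\mathrm{geom}}(u_s)h\,\dd s .
\]
Applying the bounded linear decoder $L$, which commutes with the integral, yields
\[
L\bigl(\cO_{\mathrm{geom}}(u_1)-\cO_{\mathrm{geom}}(u_2)\bigr)
=h-\int_0^1\bigl(\id_{\mathscr S}-LD\cO_{\mathrm{geom}}(u_s)\bigr)h\,\dd s .
\]

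Next I take norms and use \eqref{eq:uniform-decoder-defect} pointwise in $s$. The integral term has norm at most $\delta\norm h$. The reverse triangle inequality then gives
\[
(1-\delta)\norm h\le\norm L\,\norm{\cO_{\mathrm{geom}}(u_1)-\cO_{\mathrm{geom}}(u_2)}_{\mathscr D}.
\]
Because $\delta<1$, dividing by $1-\delta$ gives \eqref{eq:fixed-decoder-stability}. Injectivity follows immediately: equal observations force $\norm h=0$.

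The main point needing care is the identification of tangent spaces. I would treat the slice $\mathscr S$ as an open convex subset of its model Hilbert space, or work in a slice chart, so that $u_1-u_2$ and $D\cO_{\mathrm{geom}}(u)$ are all defined on a single Hilbert space and $L$ maps into that same space. This is exactly the identification the statement stipulates. I would also check that continuity of $s\mapsto D\cO_{\mathrm{geom}}(u_s)h$ justifies the integral representation. If one prefers to avoid vector integrals, an alternative is to pair with an arbitrary functional $\ell$ of norm one on $\mathscr S$ and apply the scalar mean value theorem to $s\mapsto\ell\bigl(L\cO_{\mathrm{geom}}(u_s)-u_s\bigr)$. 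This gives $\norm{L\Delta\cO_{\mathrm{geom}}-h}\le\delta\norm h$ by Hahn--Banach, which leads to the same estimate.
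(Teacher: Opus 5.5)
Your proposal is correct and follows essentially the same argument as the paper: the Banach-space fundamental theorem of calculus along the convex segment, applying $L$ and subtracting $u_1-u_2$, bounding the remainder by $\delta\norm{u_1-u_2}$ via the uniform decoder defect, and closing with the reverse triangle inequality, with injectivity as the equal-data case. Your Hahn--Banach/scalar mean-value variant is an equivalent way to avoid vector-valued integrals and is not needed.
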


\begin{proof}
Set $v=u_1-u_2$ and $u_t=u_2+tv$.  Convexity gives $u_t\in B$.  The fundamental theorem of calculus in Banach spaces yields
\[
\cO_{\mathrm{geom}}(u_1)-\cO_{\mathrm{geom}}(u_2)
=
\int_0^1D\cO_{\mathrm{geom}}(u_t)v\,\dd t.
\]
Applying $L$ and subtracting $v$ gives
\[
L\bigl(\cO_{\mathrm{geom}}(u_1)-\cO_{\mathrm{geom}}(u_2)\bigr)-v
=
\int_0^1
\bigl(LD\cO_{\mathrm{geom}}(u_t)-\id\bigr)v\,\dd t.
\]
Therefore \eqref{eq:uniform-decoder-defect} implies
\[
\norm{
L\bigl(\cO_{\mathrm{geom}}(u_1)-\cO_{\mathrm{geom}}(u_2)\bigr)-v
}
\le\delta\norm v.
\]
The reverse triangle inequality gives
\[
(1-\delta)\norm v
\le
\norm L\,
\norm{\cO_{\mathrm{geom}}(u_1)-\cO_{\mathrm{geom}}(u_2)},
\]
which is \eqref{eq:fixed-decoder-stability}.  If the two observations agree, the right-hand side vanishes and $u_1=u_2$.
\end{proof}

\begin{corollary}[Background decoder and small-ball reconstruction]
\label{cor:background-decoder-spacetime}
Let
\[
T_0:=D\cO_{\mathrm{geom}}(u_0):\mathscr S\to\mathscr D
\]
be bounded below:
\begin{equation}
\norm{T_0h}_{\mathscr D}
\ge\alpha_0\norm h_{\mathscr S},
\qquad \alpha_0>0.
\label{eq:background-observation-lower-bound}
\end{equation}
Then $T_0^*T_0\ge\alpha_0^2\id$ and the minimum-norm left inverse
\begin{equation}
L_0:=(T_0^*T_0)^{-1}T_0^*
\label{eq:background-minimum-decoder}
\end{equation}
is bounded with
\begin{equation}
L_0T_0=\id,
\qquad
\norm{L_0}\le\frac1{\alpha_0}.
\label{eq:background-decoder-norm}
\end{equation}
Suppose that on a convex neighbourhood $B$ one has
\begin{equation}
\sup_{u\in B}
\norm{D\cO_{\mathrm{geom}}(u)-T_0}
\le\eta
<\alpha_0.
\label{eq:derivative-closeness}
\end{equation}
Then
\begin{equation}
\norm{u_1-u_2}_{\mathscr S}
\le
\frac1{\alpha_0-\eta}
\norm{\cO_{\mathrm{geom}}(u_1)-\cO_{\mathrm{geom}}(u_2)}_{\mathscr D}
\label{eq:small-ball-geometric-stability}
\end{equation}
for all $u_1,u_2\in B$.
\end{corollary}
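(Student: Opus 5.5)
The plan is to reduce the corollary to the Fixed-decoder spacetime Recognition theorem (Theorem~\ref{thm:fixed-decoder-spacetime}) with the choice $L:=L_0$. The two ingredients needed are the norm bound $\norm{L_0}\le1/\alpha_0$ and a uniform defect bound $\delta=\eta/\alpha_0<1$. Substituting these into \eqref{eq:fixed-decoder-stability} gives the constant
\[
\frac{\norm{L_0}}{1-\delta}\le\frac{1/\alpha_0}{1-\eta/\alpha_0}=\frac1{\alpha_0-\eta},
\]
which is exactly \eqref{eq:small-ball-geometric-stability}.

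First, I establish the properties of $L_0$. From \eqref{eq:background-observation-lower-bound},
\[
\ip{T_0^*T_0h}{h}=\norm{T_0h}^2\ge\alpha_0^2\norm h^2,
\]
so the bounded self-adjoint operator $T_0^*T_0$ satisfies $T_0^*T_0\ge\alpha_0^2\id$. Hence it is invertible, with $\norm{(T_0^*T_0)^{-1}}\le\alpha_0^{-2}$. Then $L_0T_0=\id$ holds by construction.

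For the sharp bound $\norm{L_0}\le1/\alpha_0$, I compute
\[
L_0L_0^*=(T_0^*T_0)^{-1}T_0^*T_0(T_0^*T_0)^{-1}=(T_0^*T_0)^{-1}.
\]
This gives $\norm{L_0}^2=\norm{L_0L_0^*}\le\alpha_0^{-2}$. The crude estimate $\norm{(T_0^*T_0)^{-1}}\norm{T_0^*}$ would not suffice here, which is why I use the $C^*$-identity instead. For the minimum-norm claim, I note that $L_0$ vanishes on $(\Ran T_0)^\perp=\Ker T_0^*$ and inverts $T_0$ on $\Ran T_0$, which is closed by the lower bound. Any other left inverse agrees with $L_0$ on $\Ran T_0$, so $L_0$ has the least norm among left inverses.

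Second, I verify the defect hypothesis \eqref{eq:uniform-decoder-defect}. For $u\in B$, using $L_0T_0=\id$,
\[
\id-L_0D\cO_{\mathrm{geom}}(u)=L_0\bigl(T_0-D\cO_{\mathrm{geom}}(u)\bigr).
\]
Its norm is at most $\norm{L_0}\,\eta\le\eta/\alpha_0$ by \eqref{eq:derivative-closeness}. So I set $\delta:=\eta/\alpha_0$, which lies in $[0,1)$ because $\eta<\alpha_0$. Convexity of $B$ and continuous Fréchet differentiability of $\cO_{\mathrm{geom}}$ on $B$ are inherited from the standing hypotheses of the theorem.

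Third, Theorem~\ref{thm:fixed-decoder-spacetime} applies and yields \eqref{eq:small-ball-geometric-stability}, via the constant computed above. It also gives injectivity of $\cO_{\mathrm{geom}}$ on $B$ as a by-product.

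The only step needing care is the norm bound on $L_0$. The rest is direct substitution. The tangent-space identification with the model Hilbert space of the slice is used exactly as in the parent theorem.
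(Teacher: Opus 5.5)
Your proposal is correct and follows the paper's proof. Both arguments take $L=L_0$, use $\id-L_0D\cO_{\mathrm{geom}}(u)=L_0\bigl(T_0-D\cO_{\mathrm{geom}}(u)\bigr)$ to get $\delta=\eta/\alpha_0$, and invoke the fixed-decoder theorem to reach the constant $1/(\alpha_0-\eta)$. The one difference is that you justify $\norm{L_0}\le1/\alpha_0$ through the identity $L_0L_0^*=(T_0^*T_0)^{-1}$, whereas the paper only asserts this bound from the Moore--Penrose description.
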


\begin{proof}
Inequality \eqref{eq:background-observation-lower-bound} implies
\[
\langle T_0^*T_0h,h\rangle
=\norm{T_0h}^2
\ge\alpha_0^2\norm h^2,
\]
so $T_0^*T_0$ is boundedly invertible and \eqref{eq:background-minimum-decoder} is well defined.  It is the Moore--Penrose left inverse on $\Ran T_0$, extended by zero on $(\Ran T_0)^\perp$, and its norm is at most $1/\alpha_0$.

For $u\in B$,
\[
\norm{\id-L_0D\cO_{\mathrm{geom}}(u)}
=
\norm{L_0(T_0-D\cO_{\mathrm{geom}}(u))}
\le\frac{\eta}{\alpha_0}.
\]
Thus Theorem~\ref{thm:fixed-decoder-spacetime} applies with
$\delta=\eta/\alpha_0$ and $\norm{L_0}\le1/\alpha_0$, giving
\[
\frac{\norm{L_0}}{1-\delta}
\le
\frac{1/\alpha_0}{1-\eta/\alpha_0}
=
\frac1{\alpha_0-\eta}.
\]
\end{proof}

\begin{corollary}[Lipschitz derivative criterion]
\label{cor:lipschitz-derivative-spacetime}
If
\begin{equation}
\norm{D\cO_{\mathrm{geom}}(u)-D\cO_{\mathrm{geom}}(v)}
\le L_D\norm{u-v}_{\mathscr S}
\label{eq:geometric-derivative-lipschitz}
\end{equation}
near $u_0$, then on every convex ball $B_r(u_0)$ with
\[
L_Dr<\alpha_0
\]
one may take $\eta=L_Dr$ in Corollary~\ref{cor:background-decoder-spacetime}.  Hence
\begin{equation}
\norm{u_1-u_2}_{\mathscr S}
\le
\frac1{\alpha_0-L_Dr}
\norm{\cO_{\mathrm{geom}}(u_1)-\cO_{\mathrm{geom}}(u_2)}_{\mathscr D}.
\label{eq:lipschitz-small-ball-stability}
\end{equation}
\end{corollary}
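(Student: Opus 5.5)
The plan is to reduce the statement to Corollary~\ref{cor:background-decoder-spacetime}, taking $\eta=L_Dr$. The only thing to verify is the derivative-closeness hypothesis \eqref{eq:derivative-closeness} on the ball $B_r(u_0)$. The exact interpretation of ``near $u_0$'' has to be fixed first. I read \eqref{eq:geometric-derivative-lipschitz} as holding for all $u,v$ in a neighbourhood containing $B_r(u_0)$, and I would restrict $r$ so that the ball sits inside the region where both the Lipschitz bound and Fr\'echet differentiability hold. A ball in a Hilbert slice is convex, so the convexity hypothesis of Theorem~\ref{thm:fixed-decoder-spacetime} is automatic.

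For $u\in B_r(u_0)$, apply \eqref{eq:geometric-derivative-lipschitz} with $v=u_0$ and recall $T_0=D\cO_{\mathrm{geom}}(u_0)$. This gives
\[
\norm{D\cO_{\mathrm{geom}}(u)-T_0}\le L_D\norm{u-u_0}_{\mathscr S}\le L_Dr .
\]
Taking the supremum over $u\in B_r(u_0)$ yields \eqref{eq:derivative-closeness} with $\eta=L_Dr$. The assumption $L_Dr<\alpha_0$ is exactly the requirement $\eta<\alpha_0$. The lower bound \eqref{eq:background-observation-lower-bound} is inherited from the corollary's standing hypothesis.

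Corollary~\ref{cor:background-decoder-spacetime} then gives \eqref{eq:small-ball-geometric-stability} on $B_r(u_0)$, and substituting $\eta=L_Dr$ produces \eqref{eq:lipschitz-small-ball-stability} directly. Continuity of $D\cO_{\mathrm{geom}}$, needed for Theorem~\ref{thm:fixed-decoder-spacetime}, follows from the Lipschitz bound itself.

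There is no substantive obstacle. The one point needing care is the open-versus-closed ball issue. If $B_r$ is taken closed, the bound $\norm{u-u_0}\le r$ still gives $\eta=L_Dr$, so the argument is unchanged. The other point is that the Lipschitz neighbourhood must actually contain $B_r(u_0)$, and I would state this explicitly as the meaning of ``near $u_0$''.
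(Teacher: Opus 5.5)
Your proof is correct and matches the paper: it states this corollary without a separate proof, and its content is exactly that the Lipschitz bound with $v=u_0$ gives $\sup_{u\in B_r(u_0)}\norm{D\cO_{\mathrm{geom}}(u)-T_0}\le L_Dr$, so Corollary~\ref{cor:background-decoder-spacetime} applies with $\eta=L_Dr<\alpha_0$. Your points about continuity of the derivative and requiring the Lipschitz neighbourhood to contain the ball are the right minor bookkeeping.
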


\begin{corollary}[Finite-error geometric reconstruction]
\label{cor:finite-error-spacetime}
Under the hypotheses of Corollary~\ref{cor:background-decoder-spacetime}, suppose measured data $d_i\in\mathscr D$ satisfy
\[
\norm{d_i-\cO_{\mathrm{geom}}(u_i)}_{\mathscr D}
\le\varepsilon_i.
\]
Then
\begin{equation}
\norm{u_1-u_2}_{\mathscr S}
\le
\frac{
\norm{d_1-d_2}_{\mathscr D}
+\varepsilon_1+\varepsilon_2
}{\alpha_0-\eta}.
\label{eq:noisy-geometric-reconstruction}
\end{equation}
In particular, if two models fit the same measured data to accuracy $\varepsilon$, then
\begin{equation}
\norm{u_1-u_2}_{\mathscr S}
\le
\frac{2\varepsilon}{\alpha_0-\eta}.
\label{eq:same-data-geometric-bound}
\end{equation}
\end{corollary}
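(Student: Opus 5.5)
The bound follows from Corollary~\ref{cor:background-decoder-spacetime} and the triangle inequality in $\mathscr D$. Since the hypotheses of that corollary are assumed, the small-ball stability estimate \eqref{eq:small-ball-geometric-stability} holds for every pair $u_1,u_2\in B$:
\[
\norm{u_1-u_2}_{\mathscr S}
\le
\frac{1}{\alpha_0-\eta}
\norm{\cO_{\mathrm{geom}}(u_1)-\cO_{\mathrm{geom}}(u_2)}_{\mathscr D}.
\]
The only remaining task is to bound the exact-data difference on the right by measured quantities.

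Insert the measured data $d_1$ and $d_2$ and write
\[
\cO_{\mathrm{geom}}(u_1)-\cO_{\mathrm{geom}}(u_2)
=\bigl(\cO_{\mathrm{geom}}(u_1)-d_1\bigr)+(d_1-d_2)+\bigl(d_2-\cO_{\mathrm{geom}}(u_2)\bigr).
\]
The triangle inequality, together with the error hypotheses $\norm{d_i-\cO_{\mathrm{geom}}(u_i)}\le\varepsilon_i$, gives
\[
\norm{\cO_{\mathrm{geom}}(u_1)-\cO_{\mathrm{geom}}(u_2)}_{\mathscr D}
\le\norm{d_1-d_2}_{\mathscr D}+\varepsilon_1+\varepsilon_2.
\]
Substituting this into the stability estimate yields \eqref{eq:noisy-geometric-reconstruction}.

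For the special case, two models fitting the same measured data means $d_1=d_2=d$ with $\varepsilon_1=\varepsilon_2=\varepsilon$. Then $\norm{d_1-d_2}_{\mathscr D}=0$, and \eqref{eq:noisy-geometric-reconstruction} reduces to \eqref{eq:same-data-geometric-bound}.

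There is no genuine obstacle. The one point to check is that $u_1$ and $u_2$ lie in the convex neighbourhood $B$ on which \eqref{eq:derivative-closeness} holds. This is implicit in ``under the hypotheses of Corollary~\ref{cor:background-decoder-spacetime}'', and I would state it explicitly as the standing assumption $u_1,u_2\in B$.
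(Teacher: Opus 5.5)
Your proof is correct and matches the paper's argument: the triangle inequality bounds $\norm{\cO_{\mathrm{geom}}(u_1)-\cO_{\mathrm{geom}}(u_2)}_{\mathscr D}$ by $\norm{d_1-d_2}_{\mathscr D}+\varepsilon_1+\varepsilon_2$, this is inserted into \eqref{eq:small-ball-geometric-stability}, and the special case is $d_1=d_2$ with $\varepsilon_1=\varepsilon_2=\varepsilon$. Stating $u_1,u_2\in B$ explicitly is a sensible clarification, since the paper leaves that standing assumption implicit.
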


\begin{proof}
The data errors give
\[
\norm{\cO_{\mathrm{geom}}(u_1)-\cO_{\mathrm{geom}}(u_2)}
\le
\norm{d_1-d_2}+\varepsilon_1+\varepsilon_2.
\]
Insert this estimate into \eqref{eq:small-ball-geometric-stability}.  The final statement is the case $d_1=d_2$ and $\varepsilon_1=\varepsilon_2=\varepsilon$.
\end{proof}

\begin{corollary}[Experimental Recognition Kernel criterion]
At a background for which
\[
\Ker D\cO_{\mathrm{geom}}(u_0)
=
\text{the diffeomorphism--gauge tangent space}
\]
and the induced observation on a transverse slice satisfies the positive lower bound \eqref{eq:background-observation-lower-bound}, the geometric target is locally target-faithful and stably decodable.  A residual larger than the calibrated right-hand side of \eqref{eq:noisy-geometric-reconstruction} rejects at least one declared propagation, scale, transport, model-class, gauge-slice, or uncertainty assumption.
\end{corollary}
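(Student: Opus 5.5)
The plan is to reduce the statement to the local stability machinery of Corollary~\ref{cor:background-decoder-spacetime}, and the kernel hypothesis to the exact identification results of Section~\ref{sec:metric-identification}. The argument has two halves: an algebraic part (target-faithfulness) and an analytic part (stable decodability). The falsification sentence is then read off from the contrapositive of Corollary~\ref{cor:finite-error-spacetime}.

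\emph{First, target-faithfulness at the linearized level.} Let $\mathfrak o$ denote the diffeomorphism--gauge tangent space at $u_0$. The hypothesis $\Ker D\cO_{\mathrm{geom}}(u_0)=\mathfrak o$ says the blind kernel of the linearized observation is exactly the symmetry directions. The target is the pair $(g,A)$ modulo diffeomorphism and gauge, so its linearization $\Pi$ has $\Ker\Pi=\mathfrak o$. Hence $\Ker\cA\cap\Ker\Pi=\Ker\cA$, and the target-relevant blind quotient $\cB(\cA;\Pi)$ is zero. By the canonical target-faithful completion theorem, no supplemental observation is needed; the minimum repair rank is $0$. Restricting to a transverse slice $\mathscr S$ with $T_{u_0}\mathscr S\cap\mathfrak o=\{0\}$ makes $T_0=D\cO_{\mathrm{geom}}(u_0)|_{T\mathscr S}$ injective. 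This is consistent with the exact nonlinear statement of Corollary~\ref{cor:target-faithful-spacetime}.

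\emph{Second, stable decodability.} Injectivity alone is not stability; this is the lesson of the compact-sensing proposition in Section~\ref{sec:measurement-operator-context}. I will therefore use the assumed lower bound \eqref{eq:background-observation-lower-bound} on the slice. Corollary~\ref{cor:background-decoder-spacetime} then produces the bounded decoder $L_0=(T_0^*T_0)^{-1}T_0^*$ with $\norm{L_0}\le1/\alpha_0$. Continuity of $D\cO_{\mathrm{geom}}$ gives a convex ball $B$ around $u_0$ on which \eqref{eq:derivative-closeness} holds with some $\eta<\alpha_0$; for instance, one can take $\eta=\alpha_0/2$ by continuity, or use Corollary~\ref{cor:lipschitz-derivative-spacetime}. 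The fixed-decoder theorem then yields injectivity of $\cO_{\mathrm{geom}}$ on $B$ together with the Lipschitz inverse bound \eqref{eq:small-ball-geometric-stability}. This is the meaning of ``locally target-faithful and stably decodable.'' It also matches the decoder-burden interpretation: the burden is at most $\norm{L_0}^2\le\alpha_0^{-2}<\infty$.

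\emph{Third, the rejection criterion.} Suppose all declared propagation, scale, transport, model-class, slice and uncertainty assumptions hold. Then Corollary~\ref{cor:finite-error-spacetime} guarantees \eqref{eq:noisy-geometric-reconstruction} for any two model geometries $u_1,u_2\in B$ fitting data $d_1,d_2$ within $\varepsilon_1,\varepsilon_2$. A measured residual exceeding the right-hand side therefore contradicts the conjunction of these hypotheses, so at least one of them fails.

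The main obstacle is the first step's passage from the kernel hypothesis to the slice. One must check that the lower bound is genuinely posed on $T\mathscr S$ after identification with the model Hilbert space. One must also check that a closed slice complement to $\mathfrak o$ exists, which requires $\mathfrak o$ to be closed. I would simply take this as part of the declared slice datum rather than prove a slice theorem.
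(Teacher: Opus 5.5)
Your proposal is correct and follows the paper's implicit argument. The paper states this corollary without a separate proof, as an immediate consequence of two results: Corollary~\ref{cor:background-decoder-spacetime}, which gives stable local decodability (with the small ball obtained, as you do, from continuity of $D\cO_{\mathrm{geom}}$ at $u_0$), and the contrapositive of Corollary~\ref{cor:finite-error-spacetime}, which gives the rejection clause; your linearized blind-quotient reading of the kernel hypothesis (zero target-relevant quotient, hence repair rank zero) is a faithful gloss on the ``Recognition Kernel'' label rather than a different route.
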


\begin{remark}[Experimental boundary]
The theorem supplies a finite-error reconstruction contract and identifies the stability constants $\alpha_0$ and $\eta$ that an experiment or numerical protocol must estimate.  It does not claim that a particular instrument already measures complete characteristic or holonomy data, nor that the transverse lower bound is automatically positive.  Those are empirical and protocol-design obligations, now expressed as testable quantities rather than perfect-information prose.
\end{remark}

%% file: sections/05_conclusion_and_ledger.tex
\section{Conclusion}

The equilibrium foundation of the programme is the Legendre submanifold generated by $U(S,V)$.  The compass organizes its charts.  The response coordinates are exact constrained derivatives and obey the proved conjugate products and caloric--mechanical closure.  Antisymmetry belongs intrinsically to the exterior two-form and Jacobian bracket.  From that bracket, any four nondimensional response channels produce a canonical skew $4\times4$ matrix of rank at most two with an exact Pluecker constraint and a finite-error falsification bound.

Recognition Kernel theory determines which state directions an observation cannot see, which blind directions matter to a target, the canonical faithful completion, the minimum repair rank, and the exact decoder burden.  The loop version separates raw cyclic response, lawful transport, memory, and open obstruction.  Once a positive constitutive curvature calibration is declared, the resulting force--flux production is a nonnegative square and is covariant under dual frame changes.

The Cut--Flow--Jet theorem organizes recursive response derivatives, cut parity, adjoint parity, and commutator curvature on a controlled graded carrier.  Its infinite extension is proved for a closed unbounded covariant raiser on analytic response seeds.  The remaining operator question is whether a chosen unbounded cut--flow sum generates a global $C_0$-semigroup on the entire completed carrier without analytic-vector restriction.

The thermodynamic Recognition square supplies the previously missing bridge to spacetime.  A local-equilibrium map pulls the finite response-jet bundle to $M$.  Constant-rank observation and target maps then form the smooth quotient
\[
E_{\RK}=\Ker\cA/(\Ker\cA\cap\Ker\Pi),
\]
and the distinguished section
\[
\Phi_{\mathrm{th}}=qP_{\Ker\cA}\Theta^*(j^N\Lambda)
\]
is exactly the target-relevant part of the thermodynamic response jet hidden from the original observation.  Compatible response transport descends to a connection on $E_{\RK}$, the canonical completed observation forms an exact commuting transport square, and the quotient curvature is the leading small-loop failure of path-independent target reconstruction.  A smooth minimum decoder gives uniform source domination and strict target no-blindness.

The action is no longer introduced solely as a familiar benchmark.  Within the declared parity-even local class that is linear in metric curvature, quadratic in Recognition curvature and field derivatives, and free of mixed or nonminimal couplings, the action density is classified up to boundary terms, normalizations, couplings, and an invariant potential.  The Einstein--Hilbert plus quotient-gauge--field action used here is the corresponding minimal representative.  Variation yields the Einstein-type metric equation, the Recognition gauge equation, and the Recognition-field equation; gauge and diffeomorphism symmetry yield the Bianchi and conservation laws.  Thermodynamic Recognition data determine the field carrier, distinguished response-generated section, connection, and curvature.  The numerical couplings and nonlinear invariant potential remain constitutive inputs.

At the quantum level, the coercive finite regulator gives an exact partition function, Schwinger--Dyson identity, covariance Hessian, and zero-source effective equation.  The continuum quantum-effective metric, gauge, and matter equations follow under the separately stated existence, differentiability, symmetry, and anomaly-free hypotheses for a renormalized effective action.

The constructive continuum theorem goes further in the positive physical quadratic sector.  After gauge fixing, constraint reduction, and target-faithful projection, a gapped ultrastatic Hessian produces a unique regulator-independent Gaussian measure.  Spectral cutoffs converge with all cylindrical Schwinger functions, reflection positivity holds, the Osterwalder--Schrader quotient reconstructs a positive Hamiltonian and unitary Lorentzian evolution, and every compact bosonic symmetry commuting with the Hessian is exactly anomaly free.  This closes regulator removal, measure construction, reflection positivity, Lorentzian unitarity, bosonic anomaly cancellation, and ultraviolet completion for the linearized physical sector.  The unresolved ultraviolet problem is the nonlinear interacting metric--gauge theory.

Exact characteristic propagation, one pointwise scale, and complete parallel transport identify the metric and connection up to diffeomorphism and gauge.  The fixed-decoder theorem adds the experimental version: on a transverse Hilbert slice, a positive background observation bound and stable derivative give an explicit local reconstruction constant.  Finite data error therefore yields a finite target error, and an excessive residual falsifies at least one declared model, gauge-slice, observation, or uncertainty hypothesis.  Mathematics supplies the reconstruction contract; an experiment must still verify its premises, because even ambitious geometry cannot operate the apparatus by telepathy.

\appendix

\section{Compact proof ledger}

\begin{center}
\small
\begin{tabular}{p{0.36\textwidth}p{0.20\textwidth}p{0.33\textwidth}}
\toprule
Claim & Status & Load-bearing reason\\
\midrule
Four thermodynamic potential differentials & Proved & Legendre transforms of $U(S,V)$\\
Maxwell relations & Proved & $\iota^*\dd\vartheta=0$ / mixed partials\\
Six response formulas & Proved & Chain rule in regular charts\\
Four conjugate product identities & Proved & Reciprocal constrained derivatives\\
$\Gamma_c\Gamma_m=1$ & Proved & $C_P/C_V=K_S/K_T$\\
Ordinary skew $4\times4$ derivative Jacobian & Disproved in general & Dependent variables and ideal-gas counterexample\\
Canonical skew response-bracket matrix & Proved & Area bracket and decomposable two-form\\
Response Pluecker identity and noise bound & Proved & Rank-two Pfaffian and perturbation estimate\\
Target-faithful completion & Proved & Hilbert kernel quotient\\
Minimum finite repair rank & Proved & Rank--nullity on blind space\\
Sharp scalar decoder burden & Proved & Riesz representation\\
Loop residue equals curvature flux & Proved & Stokes after prior ledger\\
Calibrated curvature-generated production & Proved with calibration & Positive-square factorization\\
Finite jet exponential & Proved & Nilpotent raiser and commutation\\
Analytic-seed unbounded infinite jet & Proved & Closed raiser and factorial convergence\\
Whole-carrier $C_0$ generation & Open & Generator-domain theorem required\\
\bottomrule
\end{tabular}
\end{center}

\medskip

\begin{center}
\small
\begin{tabular}{p{0.36\textwidth}p{0.20\textwidth}p{0.33\textwidth}}
\toprule
Claim & Status & Load-bearing reason\\
\midrule
Thermodynamic Recognition quotient bundle & Proved & Constant-rank kernel and quotient bundles\\
Response-generated field $\Phi_{\mathrm{th}}$ & Proved & Canonical projection to target-relevant blind quotient\\
Target-faithful transport square & Proved & Parallel observation, target, and quotient transport\\
Recognition curvature square defect & Proved & Quotient curvature and small-loop holonomy\\
Bundle decoder / source domination & Proved & Constant-rank factorization and sharp norm\\
Minimal invariant action class & Proved in declared class & Invariant contraction classification\\
Residual coupling and potential selection & Open constitutive data & Not fixed by carrier construction\\
Recognition metric equation & Proved from action & Metric variation\\
Recognition gauge equation & Proved from action & Connection variation\\
Recognition matter equation & Proved from action & Field variation\\
Bianchi and conservation laws & Proved & Gauge and diffeomorphism Noether identities\\
Vacuum / Yang--Mills / Einstein--Maxwell limits & Proved & Exact specialization of fields and group\\
\bottomrule
\end{tabular}
\end{center}

\medskip

\begin{center}
\small
\begin{tabular}{p{0.36\textwidth}p{0.20\textwidth}p{0.33\textwidth}}
\toprule
Claim & Status & Load-bearing reason\\
\midrule
Finite-regulator quantum equations & Proved & Coercivity, integration by parts, Legendre duality\\
Continuum quantum-effective equations & Conditional theorem & Renormalized anomaly-free effective action\\
Positive Gaussian continuum measure & Proved & Bochner--Minlos on nuclear test space\\
Spectral regulator removal & Proved & Covariance and characteristic-functional convergence\\
Reflection positivity & Proved & Semigroup factorization of reflected covariance\\
Lorentzian unitary reconstruction & Proved & Normalized Fock identification and positive Hamiltonian\\
Bosonic anomaly cancellation & Proved in sector & Invariant Gaussian characteristic functional\\
Gaussian ultraviolet completion & Proved in sector & Regulator-independent Schwinger functions\\
Nonlinear interacting UV completion & Open & Non-Gaussian metric--gauge control required\\
Exact metric/connection identifiability & Proved & Null cone, pointwise scale, and transport\\
Noise-stable local reconstruction & Proved conditionally & Uniform approximate left decoder\\
Empirical identity with physical gravity & Open validation & Observation and stability hypotheses must be tested\\
\bottomrule
\end{tabular}
\end{center}

\section{Revision record for arXiv}

This version is a substantial revision of arXiv:2603.20773.  It retains the thermodynamic compass and response-coordinate programme, replaces the nonintrinsic four-variable derivative-Jacobian claim by the equilibrium contact two-form and a proved skew four-response bracket matrix, proves the caloric--mechanical closure, introduces the Recognition Kernel theorem openly, proves a calibrated curvature-to-production theorem, and extends the finite cut--flow construction to a closed unbounded infinite-jet raiser on analytic seeds.  It constructs the spacetime Recognition bundle as the target-relevant quotient of the pulled-back thermodynamic response-jet bundle, defines the response-generated Recognition field, descends compatible thermodynamic transport to a quotient connection, identifies its curvature with the target-faithful square defect, and proves a uniform bundle decoder.  It classifies the minimal parity-even local action in the declared low-derivative class and derives the classical metric, gauge, and Recognition-field equations.  It proves exact finite-regulator quantum effective equations and states the precise hypotheses required for the continuum quantum-effective gravity corollary.  For a positive target-faithful physical Hessian it constructs the continuum Gaussian measure, removes spectral regulators, proves reflection positivity, bosonic anomaly freedom, positive-Hamiltonian Lorentzian unitarity, and Gaussian ultraviolet completion.  Finally, it proves exact geometric identifiability and a separate fixed-decoder finite-error reconstruction theorem for noisy observations.  Nonlinear ultraviolet completion, constitutive coupling selection, and empirical validation remain separately typed.